\def\submarking{\withkl{\kl[\submarking]}{\mathrel{\cmdkl{\sqsubseteq}}}}
\def\mkminus{\withkl{\kl[\mkminus]}{\mathbin{\cmdkl{\setminus}}}}
\def\mkimg#1{\withkl{\kl[\mkimg]}{\mathbin{\cmdkl{\textsf{img}(#1)}}}}
\def\mknu{\withkl{\kl[\mknu]}{\cmdkl{\nu}}\xspace}
\def\mset#1{\withkl{\kl[\mset]}{\cmdkl{\{\!\!\{}#1\cmdkl{\}\!\!\}}}}
\def\supp{\withkl{\kl[\supp]}{\cmdkl{\textsf{supp}}}}
\def\MSsize#1{\withkl{\kl[\MSsize]}{\cmdkl{|}#1\cmdkl{|}}}
\def\submultiset{\withkl{\kl[\submultiset]}{\mathrel{\cmdkl{\sqsubseteq}}}}
\def\submultisetneq{\withkl{\kl[\submultisetneq]}{\mathrel{\cmdkl{\sqsubset}}}}
\def\mscup{\withkl{\kl[\mscup]}{\mathbin{\cmdkl{\uplus}}}}
\def\msminus{\withkl{\kl[\msminus]}{\mathbin{\cmdkl{\setminus}}}}
\def\proj{\withkl{\kl[\proj]}{\cmdkl{\textsf{proj}}}}
\def\Aword#1{\withkl{\kl[word]}{\cmdkl{$#1$-word}}\xspace}
\def\prefix#1#2{\withkl{\kl[prefix]}{\cmdkl{\ensuremath{#1_{\langle #2 )}}}}\xspace}
\def\suffix#1#2{\withkl{\kl[suffix]}{\cmdkl{\ensuremath{#1_{[ #2 \rangle}}}}\xspace}
\def\EXP#1#2{\withkl{\kl[\EXP]}{\cmdkl{\ensuremath{#1\textsf{-exp}(#2)}}\xspace}}
\def\emptyw{\withkl{\kl[\emptyw]}{\cmdkl{\varepsilon}}}
\def\Wsize#1{\withkl{\kl[\Wsize]}{\cmdkl{\left|#1\right|}}}
\def\Dtree#1{\withkl{\kl[tree]}{\cmdkl{$#1$-tree}}\xspace}
\def\Stree#1{\withkl{\kl[Ltree]}{\cmdkl{$#1$-labelled tree}}\xspace}
\def\Strees#1{\withkl{\kl[Ltree]}{\cmdkl{$#1$-labelled trees}}\xspace}
\def\SDtree#1#2{\withkl{\kl[Ltree]}{\cmdkl{$#1$-labelled $#2$-tree}}\xspace}
\def\SBtree#1{\withkl{\kl[Btree]}{\cmdkl{$#1$-labelled  binary tree}}\xspace}
\def\troot{\withkl{\kl[\troot]}{\cmdkl{\varepsilon}}}
\def\succ{\withkl{\kl[\succ]}{\cmdkl{\textsf{succ}}}}
\def\Bword#1{\withkl{\kl[\Bword]}{\cmdkl{w(#1)}}}
\def\EUpr{\withkl{\kl[EUpair]}{\cmdkl{\EU-pair}}\xspace}
\def\EUprs{\withkl{\kl[EUpair]}{\cmdkl{\EU-pairs}}\xspace}
\def\EUconstr{\withkl{\kl[EUconstraint]}{\cmdkl{\EU-constraint}}\xspace}
\def\EUconstrs{\withkl{\kl[EUconstraint]}{\cmdkl{\EU-constraints}}\xspace}
\def\PBF#1{\withkl{\kl[\PBF]}{\cmdkl{\textsf{PBF}}(#1)}\xspace}
\def\DBF#1{\withkl{\kl[\DBF]}{\cmdkl{\textsf{DBF}}(#1)}\xspace}
\def\EUset(#1){\withkl{\kl[\EUset]}{\cmdkl{\EU}(#1)}\xspace}
\def\prio{\omega}
\newcommand\priomax[1][\prio]{\withkl{\kl[\priomax]}{\cmdkl{|}#1\cmdkl{|}}}
\newrobustcmd\MSO{\withkl{\kl[MSO]}{\cmdkl{\textsf{MSO}}}\xspace}
\def\@QnumCTL#1{\expandafter\newcommand\csname #1CTL\endcsname[1][]{%
  \if\relax##1\relax
    \withkl{\kl[{\csname #1CTL\endcsname[]}]}{\cmdkl{\textsf{#1CTL}}}%
  \else
    \withkl{\kl[{\csname #1CTL\endcsname[k]}]}{\cmdkl{\textsf{#1\textsuperscript{##1}CTL}}}%
  \fi
  \xspace}
  \knowledge{\csname #1CTL\endcsname[]}{notion}
  \knowledge{\csname #1CTL\endcsname[k]}{notion}%
  \expandafter\newcommand\csname #1CTLs\endcsname[1][]{%
  \if\relax##1\relax
    \withkl{\kl[{\csname #1CTLs\endcsname[]}]}{\cmdkl{\textsf{#1CTL\textsuperscript*}}}%
  \else
    \withkl{\kl[{\csname #1CTLs\endcsname[k]}]}{\cmdkl{\textsf{#1\textsuperscript{##1}CTL\textsuperscript*}}}%
  \fi
  \xspace}
  \knowledge{\csname #1CTLs\endcsname[]}{notion}
  \knowledge{\csname #1CTLs\endcsname[k]}{notion}%
  \expandafter\newcommand\csname #1CTLp\endcsname[1][]{%
  \if\relax##1\relax
    {\textsf{#1CTL\textsuperscript+}}
  \else
    {\textsf{#1\textsuperscript{##1}CTL\textsuperscript+}}
  \fi
  \xspace}
  \knowledge{\csname #1CTLp\endcsname[]}{notion}
  \knowledge{\csname #1CTLp\endcsname[k]}{notion}}
\def\equivF{\withkl{\kl[\equivF]}{\mathrel{\cmdkl{\equiv}}}}
\newrobustcmd\FATAutomata{\withkl{\kl[\FATAutomata]}{\cmdkl{Fixed-arity tree automata}}\xspace}
\newrobustcmd\faTAutomata{\withkl{\kl[\FATAutomata]}{\cmdkl{fixed-arity tree automata}}\xspace}
\newrobustcmd\faTAutomaton{\withkl{\kl[\FATAutomata]}{\cmdkl{fixed-arity tree automaton}}\xspace}
\newrobustcmd\AmTAutomata{\withkl{\kl[\AmTAutomata]}{\cmdkl{Amorphous tree automata}}\xspace}
\newrobustcmd\amTAutomata{\withkl{\kl[\AmTAutomata]}{\cmdkl{amorphous tree automata}}\xspace}
\newrobustcmd\amTAutomaton{\withkl{\kl[\AmTAutomata]}{\cmdkl{amorphous tree automaton}}\xspace}
\newrobustcmd\MSOAutomata{\withkl{\kl[\MSOAutomata]}{\cmdkl{\MSOnokl-automata}}\xspace}
\newrobustcmd\MSOAutomaton{\withkl{\kl[\MSOAutomata]}{\cmdkl{\MSOnokl-automaton}}\xspace}
\newrobustcmd\SymBTAutomata{\withkl{\kl[\SymBTAutomata]}{\cmdkl{Symmetric B\"uchi tree automata}}\xspace}
\newrobustcmd\symBTAutomata{\withkl{\kl[\SymBTAutomata]}{\cmdkl{symmetric B\"uchi tree automata}}\xspace}
\newrobustcmd\symBTAutomaton{\withkl{\kl[\SymBTAutomata]}{\cmdkl{symmetric B\"uchi tree automaton}}\xspace}
\newrobustcmd\SymNBT{\withkl{\kl[\SymBTAutomata]}{\cmdkl{\textsf{SymNBT}}}\xspace}
\newrobustcmd\symNBT{\withkl{\kl[\SymBTAutomata]}{\cmdkl{\textsf{symNBT}}}\xspace}
\newrobustcmd\BDAutomata{\withkl{\kl[\BDAutomata]}{\cmdkl{${\{\Box,\Diamond\}}$-automata}}\xspace}
\newrobustcmd\BDAutomaton{\withkl{\kl[\BDAutomata]}{\cmdkl{${\{\Box,\Diamond\}}$-automaton}}\xspace}
\newrobustcmd\AATA{\withkl{\kl[\AATA]}{\cmdkl{\textsf{A\EU{}TA}}}\xspace}
\newrobustcmd\AATAs{\withkl{\kl[\AATA]}{\cmdkl{\textsf{A\EU{}TAs}}}\xspace}
\newrobustcmd\powAATA{\withkl{\kl[\powAATA]}{\cmdkl{$\Aut$-powerset \textsf{A\EU{}TA}}}\xspace}
\newrobustcmd\powAATAs{\withkl{\kl[\powAATA]}{\cmdkl{$\Aut$-powerset \textsf{A\EU{}TAs}}}\xspace}
\newrobustcmd\powAuta{\withkl{\kl[\powAATA]}{\cmdkl{$\Aut$-powerset automata}}\xspace}
\newrobustcmd\powAuton{\withkl{\kl[\powAATA]}{\cmdkl{$\Aut$-powerset automaton}}\xspace}
\newrobustcmd\nAATA{\withkl{\kl[\nAATA]}{\cmdkl{\textsf{\EU{}TA}}}\xspace}
\newrobustcmd\nAATAs{\withkl{\kl[\nAATA]}{\cmdkl{\textsf{\EU{}TAs}}}\xspace}
\newrobustcmd\AAPTA{\withkl{\kl[\AAPTA]}{\cmdkl{\textsf{A\EU{}PTA}}}\xspace}
\newrobustcmd\AAPTAs{\withkl{\kl[\AAPTA]}{\cmdkl{\textsf{A\EU{}PTAs}}}\xspace}
\newrobustcmd\nAAPTA{\withkl{\kl[\nAAPTA]}{\cmdkl{\textsf{\EU{}PTA}}}\xspace}
\newrobustcmd\nAAPTAs{\withkl{\kl[\nAAPTA]}{\cmdkl{\textsf{\EU{}PTAs}}}\xspace}
\newrobustcmd\nAPWA{\withkl{\kl[\nAPWA]}{\cmdkl{\textsf{PWA}}}\xspace}
\def\Fsize#1{\withkl{\kl[\Fsize]}{\cmdkl{\left|#1\right|}}}
\def\Asize#1{\withkl{\kl[\Asize]}{\cmdkl{\left|#1\right|}}}
\def\sizeB#1{\withkl{\kl[\sizeB]}{\cmdkl{|}#1\cmdkl{|_{\textsf{bool}}}}}
\def\sizeU#1{\withkl{\kl[\sizeU]}{\cmdkl{|}#1\cmdkl{|_{\textsf{U}}}}}
\def\sizeE#1{\withkl{\kl[\sizeE]}{\cmdkl{|}#1\cmdkl{|_{\textsf{E}}}}}
\def\modelsbar{\withkl{\kl[\models|]}{\mathrel{\cmdkl{\parallel\joinrel=}}}}
\def\modelsplus{\withkl{\kl[\models+]}{\mathrel{\cmdkl{\mid\joinrel\equiv}}}}
\def\modelsstar{\withkl{\kl[\models*]}{\mathrel{\cmdkl{\parallel\joinrel\equiv}}}}
\let\savemodels\models
\def\models{\@ifnextchar+{\modelsplus\@gobble}{%
    \@ifnextchar*{\modelsstar\@gobble}{%
    \@ifnextchar|{\modelsbar\@gobble}{%
       \withkl{\kl[\models]}{\mathrel{\cmdkl{\savemodels}}}}}}}
\def\slabequiv#1{\withkl{\kl[\labequiv]}{\cmdkl{\ensuremath{\equiv_{#1}}}}}
\def\labequiv#1{\withkl{\kl[\labequiv]}{\cmdkl{$#1$-equivalent}}\xspace}
\def\labequivce#1{\withkl{\kl[\labequiv]}{\cmdkl{$#1$-equivalence}}\xspace}
\def\Lang(#1){\withkl{\kl[language]}{\cmdkl{\calL(}#1\cmdkl{)}}\xspace}
\def\last#1{\withkl{\kl[\last]}{\cmdkl{\textsf{last}}(#1)}\xspace}
\def\first#1{\withkl{\kl[\first]}{\cmdkl{\textsf{first}}(#1)}\xspace}
\def\alae#1{\mathrel{\withkl{\kl[at least as expressive]}{\cmdkl{\succeq}_{#1}}}}
\def\eqex#1{\mathrel{\withkl{\kl[equally expressive]}{\cmdkl{\approxeq}_{#1}}}}
\def\smex#1{\mathrel{\withkl{\kl[strictly more expressive]}{\cmdkl{\succstrict}_{#1}}}}
\def\caplog{\mathop{\withkl{\kl[\caplog]}{\cmdkl{\sqcap}}}}
\begin{document}
\title{Arbitrary-arity Tree Automata and \QCTL}

\author[F. Laroussinie]{Fran\c cois Laroussinie \lmcsorcid{0009-0002-1353-7942}}[a]
\address{IRIF, Universit\'e Paris Cit\'e, France}
\email{\texttt{francoisl@irif.fr}}

\author[N. Markey]{Nicolas Markey \lmcsorcid{0000-0003-1977-7525}}[b]
\address{IRISA -- Inria, CNRS, Univ. Rennes, France}
\email{\texttt{nicolas.markey@cnrs.fr}}

\thanks{We~thank Igor Walukiewicz for answering our questions on \MSOAutomata, and the reviewers for their careful reading and constructive comments on short versions of this paper~\cite{concur2016-DLM,concur2025-LM}.}

\maketitle

\begin{abstract}

We
introduce a new class of automata (which we coin \kl{\EU-automata}) running
on infinite \kl{trees} of arbitrary (finite) arity.
We~develop and study several algorithms to perform classical
operations (union, intersection,
complement, \kl{projection}, \kl{alternation removal}) for those
automata, and precisely characterise their complexities. We~also
develop algorithms for solving membership and emptiness for the
languages of trees accepted by
\kl{\EU-automata}.

We~then use \kl{\EU-automata} to obtain several algorithmic and
expressiveness results for the temporal logic \QCTL (which
extends \CTL with quantification over atomic propositions) and
for~\MSO. On the one hand, we~obtain decision procedures with optimal
complexity for \QCTL satisfiability and model checking; on~the other
hand, we~obtain an algorithm for translating any \QCTL formula with
$k$~quantifier alternations to formulas with at most one quantifier
alternation, at the expense of a $(k+1)$-exponential blow-up in the
size of the formulas. Using the same techniques, we prove that
any \MSO formula can be translated into a formula with at most four
quantifier alternations (and only one second-order-quantifier
alternation), this time with a $(k+2)$-exponential blow-up in the size of
the formula.

\end{abstract}

\section{Introduction}

\paragraph{Logics and automata.}
The very tight links between logics and automata on infinite \kl{words} and
\kl{trees} date back to the early 1960's with the seminal works of B\"uchi,
Elgot, Trakhtenbrot, McNaughton and
Rabin~\cite{Buc62,Elg61,Tra62,McN66,Rab69}. These early results were
mainly concerned with the \kl{Monadic Second-Order Logic}~(\MSO), and have
been further extended to many other logical formalisms such as modal,
temporal and fix-point logics~\cite{SVW85,VW86b,BVW94,JW95,Wil01}.
Those tight links are embodied as translations back and forth between
various logical languages and corresponding classes of automata;
translations from logics to automata have allowed to derive efficient
algorithms for satisfiability or model checking on the one
hand~\cite{VW86a,EJ91,BVW94};
with additional translations from automata to
logics, we~get effective ways for proving expressiveness or succinctness
results for some of those logics~\cite{Wal96b,Wil99a,LMS02b,KV03b,Zan12}.
In this paper, we~investigate such links between
Quantified \CTL~(\QCTL)~\cite{Kup95a,KMTV00,Fre01,DLM12} and symmetric
tree automata~\cite{Wal96b,Wil99a,KV03b}, and derive algorithmic
and expressiveness
results for \QCTL and its fragments.

\paragraph{\QCTL}
\QCTL extends the classical temporal logic~\CTL with quantification on
atomic propositions. For~instance, formula~$\exists p. \phi$, where
$\phi$ is a \CTL formula, states that there exists a labelling of the
model under scrutiny with atomic proposition~$p$ under which
$\phi$~holds.
\QCTL~is (much) \kl{more expressive} than~\CTL: as~an example, formula
\[
\exists p.\ (\EF(\phi\et p) \et \EF(\phi\et\neg p))
\]
expresses the fact that there are at least two reachable states where~$\phi$~holds.
The~extension of~\CTL with \emph{existential} quantification was first
studied in~\cite{ES84a,Kup95a}:
contrary to~\CTL, the~resulting logic (only allowing formulas in prenex form), which
we call~\EQCTL[1]
hereafter, is sensitive to unwinding and
duplication of transitions; the~semantics thus depends on whether the
extra labelling refers to the \kl{Kripke structure} under scrutiny, or on
its \kl{computation tree}. Our~sample formula above expresses that there are
at least two \emph{different} reachable control states satisfying~$\phi$ in the former case
(which we call the \emph{structure semantics}), while it only requires that two different paths
lead to some $\phi$-states (possibly two copies of the same control state)
in the latter semantics (called the \kl{tree semantics} hereafter).

Universal quantification on atomic propositions can also be added:
\AQCTL[1] is the logic obtained from~\CTL by adding universal
quantification (in~prenex form). Mixing existential and universal quantification
defines an infinite hierarchy of temporal logics, which we~name
\EQCTL[k] and \AQCTL[k], where $k$~is the number of quantifier
alternations allowed in formulas (still assuming prenex form).
\QCTL~allows unrestricted use of both existential and universal quantifications,
and thus contains~\EQCTL[k] and~\AQCTL[k] for all~$k\geq 0$.
It~turns out that \QCTL is \kl{as expressive} as \MSO~\cite{LM14}.

In~this paper, we present several results for \QCTL with the \kl{tree
semantics}. In~particular, we~show that any \QCTL formula with
$k$~quantifier alternations can be translated in~\EQCTL[2], with a
$(k+1)$-exponential blow-up in the size of the formula. Such a result
is known to exist also in~\MSO on \kl{trees}~\cite{Rab69,Tho97b}: any~\MSO
formula can be expressed with two alternations of \emph{second-order}
quantifiers. While \MSO is known to be as expressive as~\QCTL,
this~does not directly entail our result
because \emph{first-order} quantifiers in~\MSO involve extra propositional
quantifiers when translated in~\QCTL.  The~key point of our results is
the introduction of a new class of tree automata that are particularly
well-suited for characterising models of a \QCTL formula, but also
of \QCTLs or~\MSO.

\paragraph{Tree automata.}
We develop (top-down\footnote{There are several families of tree automata:
top-down tree automata explore (finite or infinite) trees starting
from the root; bottom-up tree automata explore finite trees from the
leaves up to the root; tree-walking automata are a kind of two-way
automata for trees. We~refer to~\cite{TATA2008,Boj08b} for more
details.}) tree-automata techniques to study \QCTL. Several results
already exist on this topic~\cite{ES84a,Kup95a,LM14}, but they all
rely on \faTAutomata.

The~limitation %
has several drawbacks. When
dealing with model checking, it~implies that the compilation of the
formula being checked into a tree automaton depends on
the (size of~the) structure under scrutiny.
In~particular, it~cannot be used directly
for evaluating the \emph{program complexity}
of \QCTL model checking, as~it requires bounding the size of the
structures that the automaton can handle. An~indirect solution to this
problem is given in~\cite{LM14}, by~replacing nodes of arbitrary
(finite) arity with \kl{binary-tree} gadgets.
A~similar problem occurs when dealing with satisfiability: one~has to
use additional results to ensure that looking for a structure with
bounded size is sufficient.  More importantly, when deriving
expressiveness results, using fixed-arity tree automata again
restricts the results to \kl{trees} or structures with bounded
branching.

In~order to handle \kl{trees} of
arbitrary branching degree, tree automata must have a \emph{symbolic} way of
expressing transitions, with a finite representation that can cope
with any arity. We~highlight two existing approaches: 
\begin{itemize}
\item Janin and Walukiewicz introduce \MSOAutomata~\cite{JW95,Wal96b}, in~which
  transitions are defined as first-order formulas: quantification is
  over the \kl{successors} of the current node, and predicates indicate in
  which states of the automaton those \kl{successors} must be
  explored. These automata are shown to be as expressive as~\MSO, and
  several expressiveness results have been obtained from this
  construction~\cite{JW95,Wal96b,Wal02,Zan12}. However, to the best of our knowledge, the~exact
  complexity of the operations for manipulating those automata has not
  been studied, so that only qualitative expressiveness results can be
  obtained, and no bounds on the size and complexity of the
  translations can be derived without a more careful study.
\item Wilke introduces \BDAutomata~\cite{Wil99a},
  which are \kl{alternating} tree automata with $\Box q$ and $\Diamond q$ as
  basic blocs for expressing transitions: the~former requires that all
  \kl{successors} be explored in state~$q$, while the latter asks that some
  \kl{successor} be explored in state~$q$. Any~\CTL formula can be turned
  into an equivalent \BDAutomata of linear size; this
  is used to prove that the extension~\CTLp of~\CTL is exponentially
  more succinct that~\CTL. However, \BDAutomata are
  not expressive enough to capture \MSO or~\QCTL.
\end{itemize}

\paragraph{Our contribution.}
In~this paper, we~define a new class of arbitrary-arity \kl{alternating}
tree automata, develop effective operations for their manipulation,
and study the complexity of those operations and the size of the resulting automata. 
Instead of using pairs~$(k,q)$ in the transition function to specify
 that the~$k$-th successor of the current node has to be accepted by
 the automaton in state~$q$, transitions of our automata are defined
 with pairs~$\EUpair(E;U)$, where~$E$~is a \kl{multiset} of states
 that have to occur among the set of states involved in the
 exploration of the \kl{successors} of the current node, while~$U$ is
 a set of states indicating which states are allowed for
 exploring \kl{successor nodes} that are not explored by states
 of~$E$. For~example, $\EUpair(E=\mset{q,q,q'};U=\{q''\})$ requires
 the presence of at least three
\kl{successors} nodes; two~\kl{successors} will be explored in state~$q$, one in
state~$q'$, and the remaining ones (if~any) in state~$q''$. We~name those
automata \kl{\EU-automata}\footnote{In~\cite{KV03b}, Kupferman and Vardi
define another variant of arbitrary-arity \kl{alternating} tree automata in which
transitions are based on pairs~$(U,E)$. Those automata are equivalent
to Wilke's \BDAutomata. We~give an overview of those
automata in our Section~\ref{sec-relW} on related work.}.

\smallskip
It~is not hard to prove that such automata are closed under
conjunction and disjunction, thanks to \kl{alternation}. Closure under
negation is harder to prove: while $\Box$ and~$\Diamond$ are dual to
each other, which provides an easy \kl{complementation} procedure for
\BDAutomata, there is no obvious way of expressing
the negation of \EUprs in terms of \EUprs. We~develop such a
translation, and obtain an exponential \kl{complementation} procedure for
\kl{\EU-automata}.

\kl{Non-alternating} \kl{\EU-automata} are also closed under \kl{projection}, which
is the operation we need to encode quantification over atomic
propositions of~\QCTL,
and first- and second-order quantification in~\MSO.
Finally we~prove that any \kl{alternating} \kl{\EU-automaton} can
be turned into an equivalent \kl{non-alternating} \kl{\EU-automaton}. For~this
operation, we~adapt the \kl{simulation} procedure developed
in~\cite{Wal96b,Zan12} to our setting, and evaluate its exact complexity. 

\smallskip

Putting all the pieces together, we~prove that any \QCTL formula
$\vfi$ can be turned into an
equivalent \kl{\EU-automaton}~$\Aut_\vfi$. The~size of the automaton
is $k$-exponential in the size of~$\vfi$, where~$k$~is the number of
quantifier alternations in~$\vfi$. This~construction then yields
optimal algorithms for model-checking and satisfiability for~\QCTL.
Conversely, we~prove that acceptance by any
\kl{\EU-automaton}   can be expressed as an~\EQCTL[2] formula.
We~obtain similar results for~\MSO.
Therefore \kl{\EU-automata}, \QCTL (and~\EQCTL[2]), and \MSO (even
when restricted to two second-order quantifier alternations)
all characterise exactly the same \kl{tree} languages.

\setcounter{tocdepth}{1}
\tableofcontents

\section{Definitions}

\label{sec-defs}
\label{sec-AATA}

\subsection{Sets and \kl{multisets}}
\label{ssec-multisets}
\AP Let $\Set$ be a countable set.
A~\intro{multiset} over~$\Set$ is a mapping $\mu\colon \Set \to \bbN$.  Sets
are seen as special cases of \kl{multisets} taking values in~$\{0,1\}$.
We~use double-brace notation to distinguish between sets and \kl{multisets}:
$\{a,a,a\}$ is the same as the set~$\{a\}$ with one element, while $\intro*\mset{a,a,a}$ is the three-element \kl{multiset} $a\mapsto 3$.
The~\intro{empty multiset} is the \kl{multiset} mapping all
elements of~$\Set$ to~zero; we~denote~it with~$\emptyset$.

\AP The~\intro{support of a multiset}~$\mu$ is the set
$\intro*{\supp}(\mu)=\{s\in\Set \mid \mu(s)>0\}$.
We~write $s\in\mu$ for $s\in\supp(\mu)$.
The~\intro(MS){size}~$\intro*{\MSsize{\mu}}$
of~$\mu$ is the sum $\sum_{s\in\Set} \mu(s)$;
the~\kl{multiset}~$\mu$ is finite whenever $\MSsize\mu$~is.
\AP For~two \kl{multisets}~$\mu$ and~$\mu'$, we~write $\mu\intro*\submultiset \mu'$,
and say that $\mu$~is a \kl{submultiset}
of~$\mu'$, whenever $\mu(s)\leq \mu'(s)$ for all~$s\in\Set$.
This~defines a partial ordering over \kl{multisets}.
We~write $\mu\intro*\submultisetneq \mu'$ when $\mu\submultiset \mu'$ and $\mu\not=\mu'$.
We~define the following operations on \kl{multisets}:
\[
  \mu\intro*\mscup\mu'\colon s \in\Set \mapsto \mu(s)+\mu'(s) \quad \mbox{and} \quad
  \mu'\intro*\msminus\mu\colon s\in\Set \mapsto \max(0,\mu'(s)-\mu(s))
  \]

\AP Fix a second countable set~$\Set'$. For any~$c=(s,s')\in
\Set\times\Set'$, we~define $\intro*\proj_1(c)=s$ and $\reintro*\proj_2(c)=s'$.

\subsection{\kl{Markings}}
\label{ssec-markings}

\AP Let~$\Set$ and~$\Set'$ be two countable sets. A~\intro{marking} of~$\Set'$
by~$\Set$ is a mapping ${\nu\colon \Set' \to 2^{\Set}\setminus\{\emptyset\}}$
decorating each
element of~$\Set'$ with a (non-empty) subset of~$\Set$. A~\kl{marking}~$\nu$ is a
\intro{submarking} of a \kl{marking}~$\nu'$, denoted~$\nu\intro*\submarking\nu'$, whenever
$\nu(s')\subseteq \nu'(s')$ for all~$s'\in \Set'$.

\AP
A~\kl{marking}  $\nu$ is \intro{unitary} when $\size{\nu(s')}=1$ for
all~$s'\in \Set'$; \kl{unitary markings} can be seen as mappings from~$\Set'$
to~$\Set$.
For a \kl{unitary marking}~$\nu$ and a subset~$T$ of~$\Set'$,
we~write $\nu(T)$ for the \kl{multiset}~$\mu$
over~$\Set$ defined as $\mu(s)=\#\{t\in T \mid \nu(t)=s\}$,
which we may also write as $\mset{\nu(t) \mid t\in T}$.
We~write $\intro*\mkimg\nu$ for the \kl{multiset}~$\nu(\Set')$.

\subsection{\kl{Words} and \kl{trees}}
\label{ssec-trees}

\AP Let~$\Alp$ be a finite set.
A~\intro{word} over~$\Alp$ (or~\reintro*\Aword{\Alp}) is a sequence $w=(w_i)_{0\leq i<k}$ of elements
of~$\Alp$, with $k\in\bbN\cup\{+\infty\}$.
The~\intro(W){length} (or~\reintro(W){size}) of~$w$, denoted with~$\intro*\Wsize w$, is~$k$.
We~write $\Alp*$ for the set of finite \kl{words} over~$\Alp$, and $\Alp~$
for the set of infinite \kl{words} over~$\Alp$.  We~write $\intro*\emptyw$ for the
\intro{empty word} (the~only \kl{word} of \kl(W){size}~$0$).

\AP For a \kl{word}~$w=(w_i)_{0\leq i<k}$ of length~$k\in \bbN\cup\{+\infty\}$, and an
integer~$j\in\bbN$ such that $0\leq j\leq k$,
the~\intro{prefix} of~$w$ of length~$j$ (also~referred to as its $j$-th prefix)
is the \kl{word}~$\reintro*\prefix wj=(w_i)_{0\leq i<j}$. For~$0\leq j<k+1$, the~$j$-th \intro{suffix}
of~$w$ is the \kl{word} $\reintro*\suffix wj=(w_{j+i})_{0\leq i<k-j}$. Given a
finite \kl{word}~$w$ and a (possibly infinite) \kl{word}~$w'$, their
concatenation~$w\cdot w'$ is the \kl{word}~$x$ whose $\length
w$-th prefix is~$w$ and whose $\length w$-th suffix is~$w'$.
We~identify \kl{words} of length~$1$ with their constituent letter, and
write $\intro*\first w$ for the first prefix of~$w$, and, in~case $w$ is
finite, $\intro*\last w$ for its $(\length w-1)$-th suffix.

\medskip
\AP Let~$\Dir$ be a finite set.
A~\intro{tree structure} over~$\Dir$ (or~\reintro*\Dtree{\Dir})
is a subset $\tree\subseteq \Dir*$ that is closed
under prefix.
In~particular, any non-empty \kl{tree} contains the \kl{empty
word}~$\intro*\troot$, which is called its \intro{root} (and~sometimes
denoted with $\reintro*\troot_\tree$ when we need to distinguish
between the roots of different \kl{trees}). The~elements of a \kl{tree} are
called nodes.  A~node~$m$ in~$\tree$ is a \intro{successor} of a node~$n$ if
$m=n\cdot d$ for some $d\in \Dir$. In~that~case, $n$~is the (unique)
predecessor of~$m$.
We~write $\intro*\succ(n)$ for the set of \kl{successors} of node~$n$.
Notice that
in a \Dtree{\Dir}, any node may have at~most~$\size{\Dir}$ \kl{successors};
this~integer~$\size{\Dir}$ is the \intro{arity} of the \kl{tree};
as~a special case, a~\intro{binary tree} is a \Dtree{\Dir} with $\size{\Dir}=2$. 
Notice that not all nodes have to have $\size\Dir$ \kl{successors} in a
tree of arity~$\size{\Dir}$. In~particular, any~\kl{tree} may contain
\intro{leaves}, which are node with no \kl{successors}.
A~\intro{branch} of a \kl{tree} is a (finite or infinite)
sequence~$b=(n_i)_{0\leq i<k}$ of nodes of the \kl{tree} such that
$n_0=\troot$, $n_{i+1}$~is a \kl{successor} of~$n_i$ for all $0\leq i<k-1$,
and if $k$~is finite, $n_{k-1}$~is a~\kl{leaf} (in~other terms, a~\kl{branch} must be \emph{maximal}).
The~value of~$k\in\bbN\cup\{+\infty\}$ is the \kl{length}
of~$b$, denoted with~$\Wsize b$.
In~the sequel, we~may identify a branch~$b=(n_i)_{0\leq i<k}$, which is a sequence
of nodes where each $n_i$~is a finite word~$(d_j)_{0\leq j<i}$, with the sequence
of directions~$(d_j)_{0\leq j<k}$.
A~\kl{tree}~is finite when it contains only finitely many nodes.
By~K\H{o}nig's lemma, since~$\Dir$ is finite, a~\kl{tree} is finite if,
and only~if, all~its \kl{branches} are finite.

\AP A~\intro*\SDtree\Alp\Dir
is a pair~$\Tree=\tuple{\tree,\lab}$
where $\tree$ is a \Dtree{\Dir} and $\lab\colon \tree \to \Alp$ labels
each node of~$\tree$ with a letter in~$\Alp$.
With~any \kl{branch}~$b=(n_i)_{0\leq i<k}$ of~$\tree$ in a \SDtree\Alp\Dir~$\Tree=\tuple{\tree,\lab}$, we~associate its \intro(B){word}~$\Bword{b}$
over~$\Alp$ as the \kl{word}~$(w_i)_{0\leq i<k}$ defined as $w_i=\lab(n_i)$
for all $0\leq i<k$.

\medskip
\AP Let $\AtP$ be a finite set of atomic propositions.  
A~\intro{Kripke structure} over~$\AtP$ is a tuple $\calK=\tuple{V,E,\ell}$ where $V$~is a
finite set of vertices, $E\subseteq V\times V$ is a set of edges
(requiring that for any~$v\in V$, there exists $v'\in V$ s.t. $(v,v')\in E$)
and $\ell\colon V\to 2^{\AtP}$ is a labelling function.

\AP
A \intro(KS){path} in a \kl{Kripke structure} is a finite or infinite
\kl{word}~$w$ over~$V$ such that $(w_i,w_{i+1})\in E$ for
all~$i<\Wsize w$. We~write $\Path^*_{\calK}$
for the
sets of finite
\kl(KS){paths} of~$\calK$.
Given a vertex~$v\in V$, the~\intro{computation tree} of~$\calK$
from~$v$ is the \SDtree{2^{\AtP}}{V}
$\calT_{\calK,v}=\tuple{T_{\calK,v},\hat\ell}$ with $T_{\calK,v}=\{w\in V^* \mid
v\cdot w\in \Path^*_\calK\}$ and $\hat\ell(v\cdot w)=\ell(\last{v\cdot w})$. 
Notice that two nodes~$w$ and~$w'$ of~$\calT_{\calK,v}$ for which $\last
w=\last{w'}$ give rise to the same subtrees. A~\kl{tree} is said  \intro{regular}
when it corresponds to the \kl{computation tree} of some finite
\kl{Kripke structure}.

\subsection{Automata over \kl{trees} of arbitrary arity}
\label{ssec-automata}

\AP
In this section, we introduce our automata running over \kl{trees} of
arbitrary \kl{arity}. The~core element of their transition functions are
\EUprs and \EUconstrs.

\AP Let~$\Set$ be a countable set.
An~\intro*\EUpr
over~$\Set$ is a pair $\EUpair(E;U)\in \bbN^\Set\times
2^\Set$, where $E$ is a \kl{multiset} over~$\Set$ and $U$~is a subset of~$\Set$.
A~\kl{multiset}~$\mu$ over~$\Set$ satisfies the
\EUpr~$\EUpair(E;U)$, denoted $\mu\intro*\models| \EUpair(E;U)$, whenever
$E\submultiset \mu$ and $\supp(\mu\msminus E)\subseteq U$.  We~write
$\intro*\EUset(\Set)=\bbN^\Set\times 2^\Set$ for the set of \EUprs
over~$\Set$.

\begin{example}
Consider a set~$\Set=\{q_1,q_2,q_3,q_4\}$.  The~\EUpr
$\EUpair(q_1\mapsto 3, q_2\mapsto 1; \{q_1,q_3\})$ characterises all
\kl{multisets} containing \emph{at~least} three occurrences
of~$q_1$, \emph{exactly} one occurrence of~$q_2$, an arbitrary number
of occurrences of~$q_3$, and no occurrences of~$q_4$.
\end{example}

\smallskip

\AP
For~a finite set~$\BV$ of boolean variables, we~write $\PBF\BV$ for the
set of \intro{positive boolean combinations} over~$\BV$:
\[
\reintro*\PBF\BV \ni \phi\coloncolonequals \top \mid \bot \mid v  \mid \phi\et\phi \mid \phi\ou\phi
\]
where $v$ ranges over~$\BV$.
The~set of \intro{disjunctions} over~$\BV$ is the subset of~$\PBF\BV$
defined as
\[
\reintro*\DBF\BV \ni \phi\coloncolonequals   \top \mid \bot \mid v \mid \phi\ou\phi
\]
where again $v$ ranges over~$\BV$.

\medskip

\AP
An~\intro*{\EUconstr} is a \kl{positive boolean formula} over~\EUprs.
We can now define our class of automata:
\begin{definition}
\AP  Let~$\Alp$ be a finite alphabet.
  An \intro{alternating} \intro{\EU tree automaton} (\intro*\AATA for~short)
  over~$\Alp$
  is a
  $4$-tuple $\Aut=\tuple{\State,\initstate,\Trans,\Accept}$ with
  \begin{itemize}
  \item $\State$ is a finite set of states, and $\initstate\in\State$ is
    the initial state;
  \item $\Trans\colon \State\times\Alp \to
    \PBF{\EUset(\State)}$ is the set of transitions;
  \item $\Accept \colon \State~\to \{0,1\}$ is an
    acceptance condition.
  \end{itemize}

\AP  An \AATA is \intro{non-alternating} (and is thus an \reintro{\EU
  tree automaton}, \intro*\nAATA for~short) if~$\Trans$~takes values in
  $\DBF{\EUset(\State)}$.
\end{definition}

The following relation will be the central relation for defining the
semantics of \AATA: it~will be used to lift the satisfaction relation of
\EUconstrs to execution trees. 
\begin{definition}\label{def-models+}
Let $\Set$ and~$\Set'$ be two countable sets.
Let $\EUpair(E;U)$ be an \EUpr over~$\Set$,
and~$\nu$~be a \kl{marking} of~$\Set'$ by~$\Set$. Then $\nu$
satisfies~$\EUpair(E;U)$, denoted $\nu\intro*\models+ \EUpair(E;U)$, if there
exists a \kl{unitary marking}~$\nu'\submarking \nu$, such that
$\mkimg{\nu'}\models| \EUpair(E;U)$.

This definition extends to \EUconstrs inductively as follows:
\begin{itemize}
\item for any marking~$\nu$, $\nu\models+\top$ and $\nu\not\models+\bot$;
\item $\nu\models+ \phi_1 \ou \phi_2$ if, and only~if, $\nu\models+ \phi_1$ or $\nu\models+ \phi_2$;
\item $\nu\models+ \phi_1 \et \phi_2$ if, and only~if, $\nu\models+ \phi_1$ and $\nu\models+ \phi_2$.
\end{itemize}
\end{definition}
Notice that $\nu\models+ \phi$ is \emph{not} equivalent to having an  
\kl{unitary marking}~$\nu'  \submarking \nu$ satisfy~$\phi$, since
different \kl{submarkings}~$\nu'$ may be needed for different \EUprs.
However, the~equivalence holds if $\phi$~is a disjunction of \EUprs,
since in that case a single \EUpr has to be fulfilled.

\AP
We~can now define the notion of \kl{execution tree} of an~\AATA:
\begin{definition}\label{def-semEU}
\AP Let $\Aut=\tuple{\State,\initstate,\Trans,\Accept}$ be an \AATA
over~$\Alp$ and $\Tree=\tuple{\tree,\lab}$ be a
\SDtree{\Alp}{\Dir},
for some finite set~$\Dir$.
An~\intro{execution tree} of~$\Aut$ over~$\Tree$ is a
\SDtree{(\tree\times\State)}{(\Dir\times\State)}~$\ExTree=\tuple{\extree,\exlab}$ such that
\begin{itemize}
\item the \kl{root}~$\troot_{\extree}$ of~$\extree$ is labelled
  with~$\troot_{\tree}$ and~$\initstate$ (formally,
  $\exlab(\troot_{\extree})=(\troot_{\tree},\initstate)$);

\item any non-\kl{root} node~$n_{\extree}=(\dir_i,\state_i)_{0\leq i<\size{n_{\extree}}}$
  of~$\extree$ is labelled with $\exlab(n_{\extree}) =
  ((\dir_i)_{0\leq i<\size{n_{\extree}}}, \state_{\size{n_{\extree}}-1})$;

\itemAP for any node $n_{\extree}$ of the form $(\dir_i,\state_i)_{0\leq
  i<\size{n_{\extree}}}$ of~$\extree$
  with $\exlab(n_{\extree})=(m_{\tree},\state)$,
  letting~$\mknu_{n_\extree}$ be the \kl{marking} of~$\succ(m_\tree)$ by~$\State$
  such that $\mknu_{n_\extree}(m_\tree\cdot d)=\{\state'\in\State\mid 
   n_{\extree}\cdot (d,\state') \in\succ(n_\extree)\}$,
   we~have
   $\mknu_{n_\extree}\models+ \Trans(\state, \lab(m_{\tree}))$.
  We~name this \kl{marking}~$\intro*\mknu_{n_\extree}$ the \kl{marking} of~$\succ(m_\tree)$
  \intro{induced} by~$\ExTree$.

\end{itemize}

\AP The~tree~$\Tree$ is \intro{accepted} by~$\Aut$
  if there exists an \kl{execution tree}~$\ExTree$ of~$\Aut$
  over~$\Tree$ such that all infinite \kl{branches}
  are \intro(B){accepting}, i.e., for any
  infinite \kl{branch}~$b=(b_i)_{i\geq 0}$ in~$\ExTree$, it~holds
  ${\Accept((\proj_2(\exlab(b_i)))_{i\geq 0})=1}$.
  Such~an \kl{execution tree} is said to be \intro(ET){accepting}.
  Notice that an \kl{accepting} \kl{execution tree} may contain finite
  branches, in case those branches end in a node~$n_{\extree}$ with
  $\exlab(n_{\extree})=(m_{\tree},\state)$ such that
  $\Trans(\state, \lab(m_{\tree}))=\top$.

  The~\intro{language} of~$\Aut$, denoted by $\Lang(\Aut)$, is the set of all \kl{trees} \kl{accepted} by~$\Aut$.
The~tree~$\Tree$ is \intro{rejected} if it is not \kl{accepted}, i.e.,
if there are no \kl(ET){accepting} \kl{execution trees} of~$\Aut$ over~$\Tree$.
\end{definition}

\AP
Notice that if a \kl{marking}~$\nu$ satisfies some \EUconstr, then any
\kl{marking}~$\nu'$ containing~$\nu$ also~does. Similarly, any~\kl{execution
tree} can be extended with extra subtrees, provided that all their
\kl{branches} are \kl(B){accepting}. An~\kl{execution tree}
is said \intro{minimal} if it does not contain
dispensable subtrees.  We~may always consider that
the \kl{execution trees} we consider are \kl{minimal}.

\AP
Let us illustrate \kl{execution trees} with an example (more~examples are presented in Section~\ref{ssec-ex-aut}):
:
\begin{example}\label{ex-EUsat}
Consider a node $m$ of some input \kl{tree}~$\Tree$, with three \kl{successors}
${m\cdot d_1}$, ${m\cdot d_2}$ and ${m\cdot d_3}$ (as depicted on
Fig.~\ref{fig-sem-aut}); assume that this node~$m$ is labelled with
some letter~$\sigma$. Consider an \AATA~$\Aut$ visiting node~$m$ in
state~$q$, giving rise to a node~$n$ in an \kl{execution tree}~$\ExTree$
with $\exlab(n)=(m,q)$.  The~\kl{successors} of~$n$ in the \kl{execution tree}
give rise to the \kl{marking}~$\nu_n$ such that $\nu_n(m\cdot
d_1)=\{\state_1,\state_3\}$, $\nu_n(m\cdot d_2)=\{\state_2,\state_4\}$,
and $\nu_n(m\cdot d_3)=\{\state_1,\state_4\}$, as depicted to the left
of Fig.~\ref{fig-sem-aut}.

Assume that $\delta(\state,\sigma)$ is satisfied by the following set~$W$
of \EUprs:
\[
W=\Bigl\{
\langle\underbrace{q_1\mapsto 2; \{q_2\}}_{\EUpair(E_1;U_1)}\rangle,\
\langle\underbrace{q_1\mapsto 1, q_2\mapsto 1, q_3\mapsto 1;
   \{q_4\}}_{\EUpair(E_2;U_2)}\rangle,\
\langle\underbrace{q_3\mapsto 1; \{q_4\}}_{\EUpair(E_3;U_3)}\rangle
\Bigr\}.
\]
Figure~\ref{fig-sem-aut}
displays a possible set of \kl{successors} $\succ(n)$ of~$n$ in the
\kl{execution tree}~$\ExTree$.
Using the following three \kl{submarkings}~$\nu_i$ of~$\nu_n$,
we~are able to fulfill all three \EUprs of~$W$:
\begin{xalignat*}3
\nu_1\colon &n\cdot d_1 \mapsto q_1 &
  \nu_2\colon &n\cdot d_1 \mapsto q_3 &
  \nu_3\colon &n\cdot d_1 \mapsto q_3 \\
            &n\cdot d_2 \mapsto q_2 &
              &n\cdot d_2 \mapsto q_2 &
              &n\cdot d_2 \mapsto q_4 \\
            &n\cdot d_3 \mapsto q_1 &
              &n\cdot d_3 \mapsto q_1 &
              &n\cdot d_3 \mapsto q_4 \tagqex
\end{xalignat*}
\noqex
\end{example}

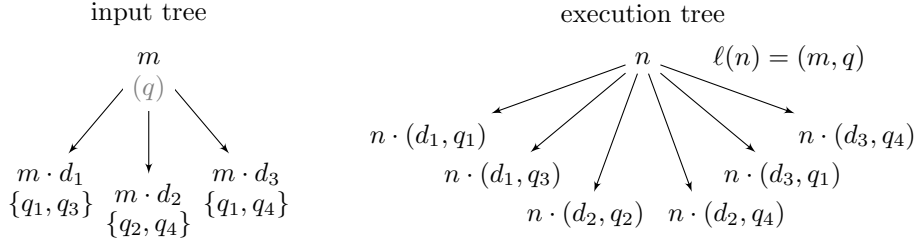
\begin{figure}[t]
\centering
\begin{tikzpicture}
  \begin{scope}
    \path (0,.6) node {input \kl{tree}};
    \draw (0,0) node (mu) {$m$}; %
    \draw (-130:2cm) node (t1q1) { $m\cdot d_1$}
      node[below=1mm,opacity=.5] {$\{q_1,q_3\}$};
    \draw (-90:1.8cm) node (t2q2) {$m\cdot d_2$}
      node[below=1mm,opacity=.5] {$\{q_2,q_4\}$};
    \draw (-50:2cm) node (t3q1) {$m\cdot d_3$}
      node[below=1mm,opacity=.5] {$\{q_1,q_4\}$};
 \draw[-latex'] (mu) -- (t1q1);
 \draw[-latex'] (mu) -- (t2q2);
 \draw[-latex'] (mu) -- (t3q1);
 \path (mu) node[below=1mm,fill=white] {\color{black!50!white}$(q)$};
  \end{scope}
  \begin{scope}[xshift=6.4cm]
    \path (0,.6) node {\kl{execution tree}};

    \draw (0,0) node (nu) {$n$} node[right=8mm] {$\ell(n)=(m,q)$};
    \draw (-160:3cm) node (t1q1) { $n\cdot(d_1,q_1)$};
    \draw (-140:2.4cm) node (t1q3) {$n\cdot(d_1,q_3)$};
    \draw (-110:2.2cm) node (t2q2) {$n\cdot(d_2,q_2)$\qquad};
    \draw (-70:2.2cm) node (t2q4) {\qquad$n\cdot(d_2,q_4)$};
    \draw (-40:2.4cm) node (t3q1) {$n\cdot(d_3,q_1)$};
    \draw (-20:3cm) node (t3q4) {$n\cdot(d_3,q_4)$};
    \draw[-latex'] (nu) -- (t1q1);
    \draw[-latex'] (nu) -- (t2q2);
    \draw[-latex'] (nu) -- (t3q1);
    \draw[-latex'] (nu) -- (t1q3);
    \draw[-latex'] (nu) -- (t2q4);
    \draw[-latex'] (nu) -- (t3q4);
  \end{scope}
\end{tikzpicture}
\caption{Example of a transition of the automaton when exploring
  node~$m$ of an input \kl{tree}~$\Tree$ in state~$q$: node~$m$ has three
  \kl{successors} $m\cdot d_1$, $m\cdot d_2$ and $m\cdot d_3$ in~$\Tree$;
  if~the automaton explores node~$m\cdot d_1$ in states~$q_1$
  and~$q_3$, node~$m\cdot d_2$ in states~$q_2$ and~$q_4$, and
  node~$m\cdot d_3$ in states~$q_1$ and~$q_4$ (as in the partial
  \kl{execution tree} on the right), we~get the \kl{marking} of the \kl{successors}
  of node~$m$ as given on the left, from which we can extract \kl{submarkings} 
  satisfying the \EUconstr~$W$ of Example~\ref{ex-EUsat}.}\label{fig-sem-aut}
\end{figure}

Notice that $\Dir$ is not constrained by the definition of~\AATAs, so
that \AATAs may accept \kl{trees} of arbitrary (finite) \kl{arity}.  However,
the~\kl{multisets} in the ``existential part'' of \EUprs can be
used to impose a lower bound on the number of \kl{successors} for
the \EUpr to be satisfied, and an upper bound can be imposed
by letting the universal part be~empty. We~develop such examples in
Section~\ref{ssec-ex-aut} below.

\begin{remark}\label{rk-nonalt}
We~do not define a notion of being \emph{deterministic} for~\nAATA:
even if the transition function~$\delta$ returns a
single \EUpr for each pair~$(q,\sigma)$, there may be many
different valid \kl{execution trees}, since there may be different ways of
satisfying a single \EUpr.

Notice that for (\kl{non-alternating}) \nAATAs, the definition of \kl{execution
trees} can be simplified:
in~\kl{non-alternating} automata,
$\delta(q,\sigma)$~is a disjunction of \EUprs, and
a single \kl{unitary marking}, hence a single
state for each \kl{successor node}, is~sufficient to fulfill~it.

It~follows that, for each input
\kl{tree}~$\Tree=\tuple{\tree,\lab}$ accepted by an~\nAATA, there~is an
\kl(ET){accepting} \kl{execution tree} of the
form~$\ExTree=\tuple{\tree', \exlab}$
where $\tree'$ is isomorphic~to a subset of~$\tree$ (some~subtrees of~$\Tree$
may be pruned when the transition function returns~$\top$).
By~identifying each node of~$\tree'$ with the corresponding node of~$\tree$,
we~may assume that any~node~$n$ of~$\tree'$ 
has $\exlab(n)=(n,\state)$ for some~$\state\in\State$.
\end{remark}

\AP
In the sequel, we mainly consider \intro{parity acceptance}:
$\Omega$~can then be defined through a mapping
$\prio\colon \State \to \bbN$.
The~integer~$\prio(\state)$ is called the \intro{priority} of
state~$\state$.  Each~\kl{branch}~$b=(b_i)_{0\leq i<\size b}$ in~$\ExTree$
thus gives rise through~$\prio$ to a sequence of priorities
$(\prio(\proj_2(\exlab(b_i))))_{0\leq i<\size b}$. \kl{Parity acceptance}
for such a sequence is defined as follows:
for~an~infinite \kl{branch}~$b$,  let~$\prio_{\min}(b)$ be
the least \kl{priority} appearing infinitely many times along~$b$;
then the~infinite \kl{branch}~$b$ is
\kl(B){accepting} if, and only~if, 
$\prio_{\min}(b)$~is even.
Notice that we impose no conditions on finite \kl{branches} (\ie, any finite \kl{branch} is \kl(B){accepting}).
\AATA (resp.~\nAATA) equipped with a \kl{parity acceptance} condition are called \intro*\AAPTA
(resp.~\intro*\nAAPTA).

\subsection{Size of an \AAPTA}

\AP
The \intro{size of an \AAPTA}, denoted with~$\reintro*\Asize{\Aut}$, is a $5$-tuple
$\tuple{\size\State, \sizeB\Trans, \sizeE\Trans,\sizeU\Trans,\priomax}$, where
$\intro*\sizeB\Trans$ is the maximum
size of the boolean formulas in~$\Trans$, 
$\intro*\sizeE\Trans$ (resp.\ $\intro*\sizeU\Trans$) is the size of the largest existential part $E$ (resp.\ universal part $U$) in some 
\EUpr in~$\Trans$,
and $\intro*\priomax= \size{\{\prio(\state) \mid \state\in\State\}}$ is the number of priorities used in the automaton.
Note that contrary to classical, \faTAutomata, we explicitly consider
the size of the transition function~$\Trans$ in the \kl(A){size} of an \AAPTA;
this is motivated by the fact that \EUconstrs may succinctly
encode very complex transitions, regardless of the size of~$\State$.

In~the sequel, we say that the \kl(A){size} of an \AAPTA is at most~$(s_Q,s_B,\allowbreak s_E,\allowbreak s_U,s_\omega)$ when $\size\State\leq s_Q$, $\sizeB\Trans\leq s_B$, $\sizeE\Trans\leq s_E$, $\sizeU\Trans\leq s_U$ and $\size\prio\leq s_\omega$.
The~fact that we use five different parameters in the size of \AAPTAs
will allow us to have more precise bounds on the complexities of
our operations for manipulating them.

\subsection{Bounding the size of the universal part of \EU-pairs}
\label{remUsingleton}

In this section, we~explain how we can modify our automata so that the
universal part of their \EUprs have size at most~$1$. While this is
not crucial for the subsequent developments, it~will be interesting to
notice that this property (of~having small universal parts) will be
preserved by all the constructions we develop for manipulating automata.

Let $\Aut = \tuple{\State,\initstate,\Trans,\prio}$ be an \AAPTA. 
From~$\Aut$, we~build an automaton $\Aut'$  in such a way that every \EUpr
it~involves
uses only a singleton or the empty set as the universal parts of its \EUconstrs
(\ie, with~${\sizeU\Trans\leq 1}$). The~construction is based on the following two~ideas. First, any~set~$U$ (with $|U|>1$) occurring in some \EUpr in~$\Trans$  is replaced by a~(new) state~$\state_U$, whose transition function is a disjunction of the transitions for every state in~$U$. In~the following, we~use~$\widetilde\State$
to denote the  set containing~$\State$ and all states~$\state_U$ for pairs~$\EUpair(E;U)$  in~$\Trans$. We~write~$\widetilde\Trans$ for the transition function obtained from~$\Trans$ by replacing each \EUpr~$\EUpair(E;U)$ with the corresponding~$\EUpair(E;\{\state_U\})$.

Replacing the states of~$U$ with a single state~$\state_U$ may impact the satisfaction of the parity conditions along infinite branches of the execution tree, since the~states in~$U$ may have different priorities. The~second part of the transformation consists in 
keeping track of the states actually used to satisfy the $U$~part:
for~this, we~use pairs $(\state,\state') \in \State\times \widetilde\State$ as
states of~$\Aut'$: a~state~$(\state,\state')$ encodes the fact that
the current state is~$\state'\in\widetilde\State$, and that the previous state
was~$\state$. If~$\state'\in\State$,
the~transition function from~$(\state,\state')$ is then
obtained from~$\widetilde\Trans(\state',\sigma)$ by replacing
every state~$r$ with~$(\state',r)$;
otherwise, if~$\state'$ is a state~$\state_U\in\widetilde\State\setminus\State$,
the~transition function from~$(\state,\state_U)$ is the disjunction
over all states~$\state''\in U$ of $\widetilde\Trans(\state'',\sigma)$
where each state~$r$ is replaced with~$(\state'',r)$.

Formally, the construction of~$\Aut'$ is as follows. 
For~any state~$\state'\in\State$,
we~define the mapping $\phi_{\state'}\colon \State \to \State^2$ as
$\phi_{\state'}(\state)=(\state',\state)$, and~extend~it to
(multi-)sets of states, \EUprs and \EUconstrs in the natural~way.
Now we can define the \AAPTA  $\Aut'=\tuple{\State\times\widetilde\State, (\initstate,\initstate), \Trans', \prio'}$ with:
\begin{itemize}
\item $\Trans'((\state,\state'),\sigma) = \begin{cases} \phi_{\state'}\bigl(\widetilde\Trans(\state',\sigma)\bigr) & \text{if } \state'\in\State \\[1.2ex]
 {\displaystyle \OU_{r\in U}  \phi_r \bigl(\widetilde\Trans(r,\alp)\bigr) } & \text{if }  \state'=\state_U \end{cases}$
\item $\prio'((\state,\state')) = \prio(\state)$.
\end{itemize}

The correctness of the construction is based on the fact that there is
direct correspondance between the execution trees of~$\Aut$
and~$\Aut'$: both have the same underlying tree structure, and the
only difference is in the labelling (by~$(t,\State)$ for~$\Aut$ and
by~$(t,\State\times\widetilde\State)$ for~$\Aut'$).  Any infinite branch
has the same least infinitely-repeated priority
in two both execution trees.

The~\kl(A){size} of the resulting automaton~$\Aut'$ is bounded by
$\tuple{{{\size\State^2(1+\sizeB\Trans\cdot\size\Sigma)}},\allowbreak
\sizeB\Trans\cdot\sizeU\Trans,\sizeE\Trans,1,\size\prio}$.
Moreover, as~we will see in the sequel, all the constructions we
develop for manipulating \AAPTAs preserve this property of having only
singleton or empty sets as the universal part of~\EUprs.
Finally note that if~$\Aut$~is non-alternating, so~is~$\Aut'$.

\subsection{Examples}
\label{ssec-ex-aut}

\AP
We~illustrate our definitions with a few examples.  Before presenting
examples of \kl{\EU-automata}, we~begin with examples~of~(special
cases~of) \EUconstrs.

\medskip

First, the \EUpr~$\EUpair(\emptyset;\emptyset)$ characterises
\kl{leaves} of the input \kl{tree}: indeed, in~order to have
$\mknu_{n_\extree} \models+\EUpair(\emptyset;\emptyset)$ (using~the
notations of Def.~\ref{def-semEU}), the~\kl{marking}~$\mknu_{n_\extree}$ must
contain a \kl{unitary} \kl{submarking} with empty image, hence its domain must be empty.

\kl{Positive boolean formulas} also allow the special cases of~$\top$
and~$\bot$. Formula~$\top$  does not impose any constraints
on the node~$n_\extree$ of the \kl{execution tree} where it is evaluated: this node can be a \kl{leaf}
even if the corresponding node~$m_\tree$ in the input \kl{tree} is not a
\kl{leaf}. The~resulting (finite) \kl{branch} in the \kl{execution tree} is \kl(B){accepting}.
On~the other hand, no~nodes of any \kl{execution tree} can satisfy~$\bot$,
and such a transition can only lead to rejection.

Finally, \kl{\EU-automata} may include a special state~$\state_\top$ for which
$\Trans(\state_\top,\alp)=\top$
for any~$\alp\in\Alp$:
any~subtree visited in this state is accepted.
Similarly, we~could have a state~$\state_\bot$ with
$\Trans(\state_\bot,\alp)=\bot$ for any~$\alp\in\Alp$,
which would reject any tree where it appears.

\medskip
We~now present some examples of \kl{\EU-automata}.

\renewcommand\dbltopfraction{.9}
\renewcommand\topfraction{.9}

\newcommand{\qbinf}{\state_{b}^{\infty}}
\newcommand{\rbinf}{r_{b}^{\infty}}
\newcommand{\qaainf}{\state_{a}^{2i}}
\newcommand{\qainf}{\state_{a}^{1i}}
\newcommand{\rainf}{r_{a}^{i}}
\newcommand{\qafin}{\state_{a}^f}
\newcommand{\rafin}{r_{a}^f}

\begin{example}\label{ex-aapta}
To illustrate the use of \AAPTA, we display an example of an automaton
accepting all trees satisfying the following two conditions: exactly
two infinite branches contain infinitely many occurrences of~$a$, and at least one branch is fully labelled with~$b$.

Let $\Sigma=\{a,b,c\}$.  We~let
$\Aut=\tuple{\State,\initstate,\Trans,\prio}$, where the set~$\State$
of states is
$\{\initstate,\qaainf,\qainf,\allowbreak\rainf,\allowbreak\qbinf,
\allowbreak\qafin,\rafin,\state_\top\}$:
the automaton uses state~$\qaainf$ to visit a prefix of a branch having
exactly two infinite sub-branches where $a$~occurs infinitely many times;
it~uses states~$\qainf$ and~$\rainf$ to visit subtrees in
which exactly one branch has infinitely many occurrences of~$a$:
state~$\rainf$ is used when letter~$a$ is read, and the acceptance
condition requires that it be visited infinitely many times along some branch;
similarly, states~$\qafin$ and~$\rafin$ are used to explore
subtrees in which all branches have finitely many occurrences of~$a$.
Finally, we~use~$\qbinf$ to explore branches fully labelled with~$b$.
Following this intuition, the transition function is defined as:
\begin{xalignat*}1
  \Trans(\initstate,\sigma) &= \begin{cases}  \bigl[\EUpair(\qaainf \mapsto 1;\{\qafin\}) \ou \EUpair( \qainf\mapsto 2;\{\qafin\})\bigr] \et \EUpair(\qbinf\mapsto 1;\{\state_\top\}) & \text{if }\sigma=b \\
\bot & \text{otherwise} \end{cases}\\[2mm]
  \Trans(\qaainf,\sigma) &= \EUpair(\qaainf \mapsto 1;\{\qafin\}) \ou \EUpair( \qainf\mapsto 2;\{\qafin\}) \quad \text{ for any } \sigma\in\Sigma\\[2mm]
  \Trans(\qainf,\sigma) &= \Trans(\rainf,\sigma) = \begin{cases} \EUpair(\rainf \mapsto 1;\{\qafin\}) & \text{ if } \sigma=a\\
  \EUpair(\qainf \mapsto 1;\{\qafin\}) & \text{ otherwise} 
\end{cases} \\[2mm]
  \Trans(\qafin,\sigma) &= \Trans(\rafin,\sigma) = \begin{cases} \EUpair(\emptyset;\{\rafin\}) & \text{if } \sigma=a\\
  \EUpair(\emptyset;\{\qafin\}) & \text{ otherwise} \end{cases} \\[2mm]
 \Trans(\qbinf,\sigma) &= \begin{cases} \EUpair(\qbinf \mapsto 1;\{\state_\top\}) & \text{ if } \sigma=b\\
  \bot & \text{ otherwise} \end{cases} 
\end{xalignat*}
Finally, $\Trans(\state_\top,\sigma)=\top$ for any~$\sigma\in\Sigma$,
so that any subtree is accepted when explored in~$\state_\top$.
Notice that $\Trans(\initstate,\sigma)$ contains a conjunction, so
that the automaton is \kl{alternating}.  Notice also how the disjunctive \EUconstr
in~$\Trans(\initstate,\sigma)$ and~$\Trans(\qaainf,\sigma)$
will either look for a single
successor from which two \kl{branches} will have infinitely many
occurrences of~$a$, or for two \kl{successors} from each of which there
will be such a  \kl{branch}. All other
\kl{branches} will be checked to have finitely many occurrences of~$a$ by
exploring them in state~$\qafin$.
The~\kl{priority function} is defined so as to check that $a$ occurs
infinitely many times along branches where it has to:
\begin{xalignat*}3
  \prio(\rainf)=\prio(\qbinf)&=0 &
  \prio(\qainf)=\prio(\qaainf)=\prio(\rafin) &= 1 &
  \prio(\qafin)&=2
\end{xalignat*}
It can be checked that the size of $\Aut$ is $\tuple{8,5,2,1,3}$.

\begin{figure}[t]
  \centering
  \def\noeud#1#2#3#4#5{\path(#1,#2) node[fill=white,draw,circle,inner sep=0pt,minimum size=5mm] (#5a) {\hbox to 0pt{\hss $#3$\hss}}
    node[below=1.5mm] (#5b) {\vphantom{fg}$#4$};}
  \def\noeudshift#1#2#3#4#5{\path(#1,#2) node[fill=white,draw,circle,inner sep=0pt,minimum size=5mm] (#5a) {\vphantom{fg}\hbox to 0pt{\hss $#3$\hss}}
    node[below=1.5mm,xshift=2mm] (#5b) {\vphantom{fg}$#4$};}
  \def\noeudoval#1#2#3#4#5{\path(#1,#2) node[fill=white,draw,rounded corners=2mm,inner sep=0pt,minimum height=5mm] (#5a) {\vphantom{fg}\hbox{\hskip-0mm $#3$\hskip-0mm}}
    node[below=1.5mm] (#5b) {\vphantom{fg}$#4$};}
  \def\noeudovalshift#1#2#3#4#5{\path(#1,#2) node[fill=white,draw,rounded corners=2mm,inner sep=0pt,minimum height=5mm] (#5a) {\vphantom{fg}\hbox{\hskip-0mm $#3$\hskip-0mm}}
    node[below=1.5mm,xshift=2mm] (#5b) {\vphantom{fg}$#4$};}

  \begin{tikzpicture}[xscale=1.4,yscale=1]

    \draw[line width=2mm,black!20] (0,0) -- (-2,-1) -- (-3,-2) -- (-3.5,-3) -- (-3.5,-4) -- (-3.5,-4.7);
    \draw[line width=2mm,black!20] (0,0) -- (0,-1) -- (1,-2) -- (1.75,-3) -- (1,-4) -- (1,-4.7);
    \draw[line width=2mm,black!50] (0,0) -- (0,-1) -- (-1,-2) -- (-.25,-3) -- (-.25,-4) -- (-.25,-4.7);

    \noeudshift 00{\epsilon}{b}{de}
    \noeud {-2}{-1}{d_1}{a}{de1}
    \noeud {0}{-1}{d_2}{b}{de2}
    \noeud {2}{-1}{d_3}{c}{de3}
    \noeudoval {-3}{-2}{d_1 d_1}{b}{de11}
    \noeudoval {-1}{-2}{d_2 d_1}{b}{de21}
    \noeudoval {1}{-2}{d_2 d_2}{a}{de22}
    \noeudoval {3}{-2}{d_3 d_1}{b}{de31}
    \noeudovalshift {-3.5}{-3}{d_1 d_1 d_1}{a}{de111}
    \noeudovalshift {-1.75}{-3}{d_2 d_1 d_1}{c}{de211}
    \noeudovalshift {-0.25}{-3}{d_2 d_1 d_2}{b}{de212}
    \noeudoval {1.75}{-3}{d_2 d_2 d_1}{c}{de221}
    \noeudoval {3.5}{-3}{d_3 d_1 d_1}{a}{de311}
    \noeudovalshift {-3.5}{-4}{d_1 d_1 d_1 d_1}{a}{de1111}
    \noeudovalshift {-1.75}{-4}{d_2 d_1 d_1 d_1}{a}{de2111}
    \noeudovalshift {-0.25}{-4}{d_2 d_1 d_2 d_1}{b}{de2121}
    \noeudovalshift {1}{-4}{d_2 d_2 d_1 d_1}{a}{de2211}
    \noeudovalshift {2.5}{-4}{d_2 d_2 d_1 d_2}{c}{de2212}
    \noeudovalshift {3.75}{-4}{d_3 d_1 d_1 d_1}{c}{de3111}

    \foreach \i/\d in {e/1,e/2,e/3,e1/1,e2/1,e2/2,e3/1,e11/1,e21/1,e21/2,e22/1,e111/1,e211/1,e212/1,e221/1,e221/2,e31/1,e311/1}{%
      \draw (d\i a) -- (d\i\d a);}
    \foreach \i in {de3111,de1111,de2111,de2121,de2211,de2212}{%
      \draw[dashed] (\i a) -- +(0,-.7);}
  \end{tikzpicture}

  \caption{An input tree with exactly two branches (light grey) having
    infinitely many occurrences of~$a$ and at least one branch (dark grey)
    fully labelled with~$b$.}\label{fig-input}
\end{figure}

\begin{figure}[t]
  \centering
  \def\noeud#1#2#3#4#5#6{\path(#1,#2) node[fill=white,draw,circle,inner sep=0pt,minimum size=5mm] (#5a) {\hbox to 0pt{\hss $#3$\hss}}
    node[below=1.5mm] (#5b) {\vphantom{fg}$#4$}
    node[above=1.5mm] (#5q) {\vphantom{fg}$#6$}
    ;}
  \def\noeudshift#1#2#3#4#5#6{\path(#1,#2) node[fill=white,draw,circle,inner sep=0pt,minimum size=5mm] (#5a) {\vphantom{fg}\hbox to 0pt{\hss $#3$\hss}}
    node[below=1.5mm,xshift=2mm] (#5b) {\vphantom{fg}$#4$}
    node[above=1.5mm,xshift=2mm] (#5q) {\vphantom{fg}$#6$}
    ;}
  \def\noeudoval#1#2#3#4#5#6{\path(#1,#2) node[fill=white,draw,rounded corners=2mm,inner sep=0pt,minimum height=5mm] (#5a) {\vphantom{fg}\hbox{\hskip-0mm $#3$\hskip-0mm}}
    node[below=1.5mm] (#5b) {\vphantom{fg}$#4$}
    node[above=1.5mm] (#5q) {\vphantom{fg}$#6$}
    ;}
  \def\noeudovalshift#1#2#3#4#5#6{\path(#1,#2) node[fill=white,draw,rounded corners=2mm,inner sep=0pt,minimum height=5mm] (#5a) {\vphantom{fg}\hbox{\hskip-0mm $#3$\hskip-0mm}}
    node[below=1.5mm,xshift=2mm] (#5b) {\vphantom{fg}$#4$}
    node[above=1.5mm,xshift=2mm] (#5q) {\vphantom{fg}$#6$}
    ;}
  \def\noeudovalshiftb#1#2#3#4#5#6{\path(#1,#2) node[fill=white,draw,rounded corners=2mm,inner sep=0pt,minimum height=5mm] (#5a) {\vphantom{fg}\hbox{\hskip-0mm $#3$\hskip-0mm}}
    node[below=1.5mm,xshift=2mm] (#5b) {\vphantom{fg}$#4$}
    node[above=1.5mm] (#5q) {\vphantom{fg}$#6$}
    ;}
  \def\noeudovalshiftq#1#2#3#4#5#6{\path(#1,#2) node[fill=white,draw,rounded corners=2mm,inner sep=0pt,minimum height=5mm] (#5a) {\vphantom{fg}\hbox{\hskip-0mm $#3$\hskip-0mm}}
    node[below=1.5mm] (#5b) {\vphantom{fg}$#4$}
    node[above=1.5mm,xshift=2mm] (#5q) {\vphantom{fg}$#6$}
    ;}
  \begin{tikzpicture}[xscale=1.6,yscale=1.3]

    \noeud 0{0}{\epsilon}{b}{de}{\initstate}
    \noeud {-3.5}{-1}{d_1}{a}{ae1}{\qainf}
    \noeud {-2}{-1}{d_2}{b}{ae2}{\qainf}
    \noeud {-.5}{-1}{d_3}{c}{ae3}{\qafin}
    \noeud {1.5}{-1}{d_1}{a}{be1}{\state_\top}
    \noeud {2.5}{-1}{d_2}{b}{be2}{\qbinf}
    \noeud {3.5}{-1}{d_3}{c}{be3}{\state_\top}
    \noeudoval {2}{-2}{d_2 d_1}{b}{be21}{\qbinf}
    \noeudoval {3}{-2}{d_2 d_2}{a}{be22}{\state_\top}
    \noeudoval {-4}{-2}{d_1 d_1}{b}{ae11}{\qainf}
    \noeudoval {-3}{-2}{d_2 d_1}{b}{ae21}{\qafin}
    \noeudovalshiftb {-1}{-2}{d_2 d_2}{a}{ae22}{\qainf}
    \noeudoval {0}{-2}{d_3 d_1}{b}{ae31}{\qafin}
    \noeudoval {1.5}{-3}{d_2 d_1 d_1}{c}{be211}{\state_\top}
    \noeudovalshift {2.5}{-3}{d_2 d_1 d_2}{b}{be212}{\qbinf}
    \noeudovalshiftb {-4.5}{-3}{d_1 d_1 d_1}{a}{ae111}{\qainf}
    \noeudovalshiftb {-3.5}{-3}{d_2 d_1 d_1}{c}{ae211}{\qafin}
    \noeudovalshiftb {-2.5}{-3}{d_2 d_1 d_2}{b}{ae212}{\qafin}
    \noeudovalshiftq {-1}{-3}{d_2 d_2 d_1}{c}{ae221}{\qainf}
    \noeudovalshiftb {.5}{-3}{d_3 d_1 d_1}{a}{ae311}{\qafin}
    \noeudovalshift {2.5}{-4}{d_2 d_1 d_2 d_1}{b}{be2121}{\qbinf}
    \noeudovalshift {-4.5}{-4}{d_1 d_1 d_1 d_1}{a}{ae1111}{\qainf}
    \noeudovalshift {-3.5}{-4}{d_2 d_1 d_1 d_1}{a}{ae2111}{\qafin}
    \noeudovalshift {-2.5}{-4}{d_2 d_1 d_2 d_1}{b}{ae2121}{\qafin}
    \noeudovalshiftb {-1.5}{-4}{d_2 d_2 d_1 d_1}{a}{ae2211}{\qainf}
    \noeudovalshiftb {-.5}{-4}{d_2 d_2 d_1 d_2}{c}{ae2212}{\qafin}
    \noeudovalshift {.5}{-4}{d_3 d_1 d_1 d_1}{c}{ae3111}{\qafin}

    \foreach \i/\d in {e/1,e/2,e/3}{%
      \draw (d\i a) -- (b\i\d a);
      \draw (d\i a) -- (a\i\d a);
    }
    \foreach \i/\d in {ae1/1,ae2/1,ae2/2,ae3/1,be2/1,be2/2,be21/1,be21/2,ae11/1,ae21/1,ae21/2,ae22/1,ae31/1,be212/1,ae111/1,ae211/1,ae212/1,ae221/1,ae221/2,ae311/1}{%
      \draw (\i a) -- (\i\d a);
    }

    \foreach \i in {ae1111,ae2111,ae2121,ae2211,ae2212,ae3111,be2121}{%
      \draw[dashed] (\i a) -- +(0,-.5);}
  \end{tikzpicture}
  \caption{An execution tree for our automaton~$\Aut$ of
    Example~\ref{ex-aapta} on the input tree of Fig.~\ref{fig-input}
    (notice that, for the sake of readability, we keep the names of
    the nodes as in the input tree, using words over~$\protect\Dir$ instead of
    words over $\protect\Dir\times\protect\State$).  Because of the conjunction
    in~$\Trans(\initstate,b)$, the~automaton explores the input tree
    twice, so~that the execution tree contains two copies of the input
    tree: the~subtree to the left of the root corresponds to the part
    looking for two branches with infinitely many occurrences of~$a$,
    while the subtree to the right looks for a branch fully labelled
    with~$b$.}\label{fig-exec}
\end{figure}

\smallskip

Figures~\ref{fig-input} and~\ref{fig-exec} display (part~of) an input
tree and a corresponding \kl{execution tree} for the automaton built above
The \kl{root} of the \kl{execution tree} is labelled with
$(\varepsilon,\initstate)$ and the \kl{marking}~$\nu$ \kl{induced} by the
\kl{execution tree} for the successors of the root of the input tree
is $\{d_1 \mapsto \{\state_\top,\qainf\}, d_2 \mapsto
\{\qainf,\qbinf\}, d_3 \mapsto \{\state_\top,\qafin\}\}$, which satisfies
$\Trans(\initstate,b)$:
indeed, 
the \kl{unitary marking} $\{d_1 \mapsto \{\qainf\}, d_2 \mapsto
\{\qainf\}, d_3 \mapsto \{\qafin\}\}$ fulfills $\EUpair( \qainf\mapsto
2;\{\qafin\})$, and the \kl{unitary marking} $\{d_1 \mapsto
\{\state_\top\}, d_2 \mapsto \{\qbinf\},$ $d_3 \mapsto
\{\state_\top\}\}$ fulfills $\EUpair(\qbinf\mapsto
1;\{\state_\top\})$.
\end{example}

\begin{example}\label{example-binary}
  Consider the \nAATA $\Aut$ over $\Alp=\{a\}$ with
  $\State=\{\initstate\}$,  and
\begin{xalignat*}2
  \Trans(\initstate,a) &= \EUpair(\initstate\mapsto 2; \emptyset)
  \end{xalignat*}
  and where $\Accept$ accepts all \kl{branches} (for~example, $\Accept$~is a \kl{parity condition} with
  $\prio(\initstate)=0$).
  This~automaton accepts a single \kl{tree}, namely the \SBtree{\{a\}}
  in which every node has exactly two successors.
  Letting
  \begin{xalignat*}2
  \Trans(\initstate,a) &= \EUpair(\initstate\mapsto 2; \emptyset)  \ou
     \EUpair(\emptyset; \emptyset)
  \end{xalignat*}
  would accept all \kl{binary trees} possibly containing finite \kl{branches} (i.e.,
  each node has either~$0$ or~$2$ \kl{successors}) if  $\prio(\initstate)=0$, but it would accept only finite \kl{binary trees} if~${\prio(\initstate)=1}$.
\end{example}

\begin{example}\label{ex-WtoT}
\AP
  \kl{Parity automata on words} can be seen as \AAPTA running on \kl{trees} of
  \kl{arity}~$1$. Formally, a~\kl{non-alternating} \intro{parity word automaton}
  (\nAPWA for~short) over~$\Alp$ is a $4$-tuple
  $\calB=\tuple{\State,\initstate,\Trans,\prio}$ where
  $\Trans\colon \State\times\Alp \to \DBF{\State}$; an~execution
  of~$\calB$ over a \Aword{\Alp} $w=(w_i)_{0\leq i<\size w}$ is a
  \Aword{\State}~$s=(s_i)_{0\leq i<\size w+1}$ such that $s_0=\initstate$
  and $s_{i+1}\in \delta(s_i,w_i)$ for all~$0\leq i<\size
  w$. The~\kl{word}~$w$ is accepted by~$\calB$ if some execution of~$\calB$
  on~$w$ is accepted by~$\prio$.%

  Given a \nAPWA
  $\calB=\tuple{\State,\initstate,\Trans,\prio}$, we~can easily build
  an \nAAPTA accepting all \kl{trees} having at least one \kl{branch}~$b$
  whose \kl(B){word}~$\Bword b$
  is accepted by~$\calB$~\cite{KSV06}: we~let $\calA=\tuple{\State\cup\{\state_{\top}\},
    \initstate, \Trans', \prio}$, where
  for all $\alp\in\Alp$, we~define
      $\delta(\state_\top,\alp)=\top$,
      and
      for all~$\state\not=\state_\top$,
    we~let $\Trans'(\state,\alp)=
    \OU_{\state' \in \Trans(\state,\alp)}
    \EUpair(\state'\mapsto 1; \{\state_{\top}\})$;
  It~is easily seen that $\calA$ precisely
  accepts those \kl{trees} containing at least one \kl{branch} whose \kl(B){word} is
  accepted by~$\calB$:
  automaton~$\calA$ can mimic the behaviour of~$\calB$ along that
  \kl{branch}, and accept the rest of the \kl{tree}.
  This~entails the following result:
  \begin{proposition}\label{prop-autEx}
    Let~$\calB=\tuple{\State,\initstate,\Trans,\prio}$ be a \nAPWA. There exists a \nAAPTA~$\calA$ that
    accepts exactly all \kl{trees} containing at least one \kl{branch}~$b$ whose
    \kl(B){word}~$\Bword b$ is accepted by~$\calB$.
    The~\kl(A){size} of~$\calA$ is $(\size\State+1, \size\Trans, 1,1, \priomax)$.
  \end{proposition}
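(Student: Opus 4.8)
The plan is to establish correctness of the construction already described just before the statement, and then read off the size. Recall that construction: $\calA = \tuple{\State\cup\{\state_\top\}, \initstate, \Trans', \prio}$, with $\Trans'(\state_\top,\alp)=\top$ for every $\alp$, and, for $\state\neq\state_\top$,
\[
\Trans'(\state,\alp) = \OU_{\state'\in\Trans(\state,\alp)} \EUpair(\state'\mapsto 1; \{\state_\top\}).
\]
Each $\Trans'(\state,\alp)$ is a disjunction of \EUprs (and $\top\in\DBF{\EUset(\State)}$ as well), so $\calA$ is indeed a \nAAPTA. I would then prove the two inclusions separately.

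First I would treat the easy direction. Suppose $\Tree=\tuple{\tree,\lab}$ has a branch $b=(n_i)_{0\leq i<\size b}$ with $\Bword b$ accepted by $\calB$, and fix an accepting run $s=(s_i)$ of $\calB$ on $\Bword b$, so $s_0=\initstate$ and $s_{i+1}\in\Trans(s_i,\lab(n_i))$. I build an execution tree whose \emph{spine} follows $b$: at the node reached in state $s_i$ over $n_i$, I satisfy the disjunct $\EUpair(s_{i+1}\mapsto 1;\{\state_\top\})$ by sending the successor $n_{i+1}$ to state $s_{i+1}$ and every other successor of $n_i$ to $\state_\top$. Each $\state_\top$-node may then be taken to be a leaf of the execution tree (its transition is $\top$), so every off-spine branch is finite and hence \kl(B){accepting}; the spine itself projects onto $(s_i)$, whose priorities meet the \kl{parity condition} since $s$ is accepting. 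Thus $\calA$ accepts $\Tree$. Note that a main state can never label a leaf of an execution tree, since its transition is a disjunction of \EUprs all with nonempty existential part; this is why only infinite branches are relevant, consistent with reading $\calB$ as a parity automaton on infinite words.

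The converse is the part requiring the most care, and I expect it to be the main obstacle. Given an \kl(ET){accepting} \kl{execution tree} $\ExTree$ of $\calA$ over $\Tree$, I may assume it \kl{minimal}. At any node labelled $(m,\state)$ with $\state\neq\state_\top$, the \kl{induced} \kl{marking} satisfies $\Trans'(\state,\lab(m))$; since this is a disjunction of \EUprs, satisfying it reduces (as noted after Definition~\ref{def-models+}) to fulfilling a single \EUpr $\EUpair(\state'\mapsto 1;\{\state_\top\})$ by one \kl{unitary} \kl{submarking}. This forces exactly one successor to carry a state $\state'\in\Trans(\state,\lab(m))$, with all remaining successors carrying $\state_\top$; by \kl{minimality} the $\state_\top$-successors are dropped, so each main-state node has a \emph{unique} main-state successor. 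This determines a spine, tracing a branch $b$ of $\tree$ whose labels form a valid run of $\calB$ on $\Bword b$; as a main state cannot label a leaf, the spine is infinite, and since $\ExTree$ is accepting its priority sequence satisfies the parity condition, so the run is accepting and $\calB$ accepts $\Bword b$. The delicate point is precisely to combine the semantics of $\models+$ on disjunctions of \EUprs with minimality to collapse the main-state part of $\ExTree$ to a single branch carrying a genuine run of $\calB$.

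Finally the size bound is immediate. $\calA$ has $\size\State+1$ states; each $\Trans'(\state,\alp)$ is a disjunction with one \EUpr per disjunct of $\Trans(\state,\alp)$, whence $\sizeB{\Trans'}\leq\size\Trans$; every \EUpr has existential part $\mset{\state'}$ and universal part $\{\state_\top\}$, giving $\sizeE{\Trans'}=\sizeU{\Trans'}=1$. Moreover, because $\state_\top$ only ever labels \kl{leaves} of execution trees, its \kl{priority} is irrelevant to acceptance, so I may set $\prio(\state_\top)$ to any priority already used by $\calB$, leaving the number of priorities equal to $\priomax$. This yields $\Asize\calA = (\size\State+1,\ \size\Trans,\ 1,\ 1,\ \priomax)$.
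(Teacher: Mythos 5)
Your proposal is correct and takes essentially the same route as the paper: the paper's own justification of this proposition consists of nothing more than the construction you recall, together with the informal remark that $\calA$ can mimic $\calB$ along the distinguished branch and accept the rest of the tree, so your two-direction argument (spine construction one way, spine extraction via minimality the other way) and your size accounting are exactly the intended elaboration.

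One local error should be corrected, though it does not damage the argument. In the converse direction you write that ``by minimality the $\state_\top$-successors are dropped''. They cannot be: a marking of $\succ(m)$ assigns a \emph{non-empty} set of states to \emph{every} successor, and a unitary submarking has all of $\succ(m)$ as its domain, so the universal part $\{\state_\top\}$ forces every successor not used for the existential part to be explored in state $\state_\top$. Removing those nodes would leave the induced marking empty on some successors, and the object would no longer be an execution tree; hence the $\state_\top$-nodes are mandatory, and what minimality yields instead is that they are leaves of the execution tree and that the induced marking at each main-state node is exactly a unitary marking satisfying a single disjunct. This still gives what you actually use --- exactly one successor carries a main state, hence the spine is well defined and infinite --- so the extraction of an accepting run of $\calB$, and your observation that $\prio(\state_\top)$ is irrelevant (the $\state_\top$-branches being finite), go through unchanged.
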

  
  Now assume that we want to build an \AAPTA accepting all \kl{trees} in
  which \emph{all} \kl{branches} are accepted by~$\calB$. The~construction
  above cannot easily be adapted: a~natural attempt consists in
  letting %
  $\Trans''(\state,\alp)= \EUpair(\emptyset; \Trans(\state,\alp))$,
  thereby allowing to choose a different state with which to explore
  each \kl{successor node}; however, this is not correct, because the
  automaton would have to make the same non-deterministic choices on
  the common prefix of two different \kl{branches}. This approach works
  if~$\calB$ is required to be deterministic
  (\ie,~if~$\Trans(q,\sigma)$~is a singleton for all~$\state\in\State$
  and all~$\alp\in\Alp$), and more generally, if it
  is \emph{history-deterministic}~\cite{KSV06,BL23}.
  In the deterministic case:
  \begin{proposition}\label{prop-autAll}
    Let~$\calB$ be a \emph{deterministic} \nAPWA.
    There exists a \nAAPTA~$\calA$ that
    accepts exactly all \kl{trees} in which the \kl(B){word}~$\Bword b$ of any \kl{branch}~$b$
    is accepted by~$\calB$.
    The~\kl(A){size} of~$\calA$ is $(\size\State, \size\Trans, 0,1, \priomax)$.
\forceqex
\end{proposition}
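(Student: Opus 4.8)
The plan is to give an explicit construction mirroring the one that precedes the statement, but exploiting determinism of~$\calB$ to make the single non-deterministic choice forced along each branch a \emph{deterministic} one, so that the ``common prefix'' problem disappears. Concretely, I~take $\calA=\tuple{\State,\initstate,\Trans',\prio}$ on the \emph{same} state set (no~$\state_\top$ is needed now, which is why the existential-part size drops to~$0$), and for each $\state\in\State$ and $\alp\in\Alp$ I~let
\[
\Trans'(\state,\alp)=\EUpair(\emptyset;\{\delta(\state,\alp)\}),
\]
where $\delta(\state,\alp)$ is read as the unique state in the singleton $\Trans(\state,\alp)$. The~universal part is a singleton and the existential multiset is empty, giving size parameters $(\size\State,\size\Trans,0,1,\priomax)$ as claimed; note $\sizeB{\Trans'}\le\size\Trans$ since each transition is a single \EUpr.

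The~first step is to unfold what $\mknu_{n_\extree}\models+\EUpair(\emptyset;\{\delta(\state,\alp)\})$ means using Definition~\ref{def-models+}: since $E=\emptyset$, the condition is simply that $\supp(\mknu_{n_\extree})\subseteq\{\delta(\state,\alp)\}$, i.e.\ every \kl{successor node} of~$m_\tree$ that is explored at~all must be explored in the single state~$\delta(\state,\alp)$. Because~$E$ is empty no \kl{successor} is \emph{forced}, so in a \kl{minimal} \kl{execution tree} the unique valid choice is to explore \emph{each} \kl{successor} of~$m_\tree$ exactly once in state~$\delta(\state,\alp)$ (exploring fewer would leave some branches of~$\Tree$ uncovered, exploring in a different state would violate the constraint). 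This~determinism is what forces $\calA$ to make identical, compatible choices on shared prefixes of different branches, which is exactly where the general nondeterministic construction failed.

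The~second step is the equivalence $\Tree\in\Lang(\calA)\iff$ every \kl{branch}~$b$ of~$\Tree$ has $\Bword b$ accepted by~$\calB$. For~the forward direction I~take an \kl(ET){accepting} \kl{execution tree}~$\ExTree$ and observe, by the analysis above, that its underlying tree structure coincides with~$\tree$ and that the state labelling $n\mapsto\proj_2(\exlab(n))$ is forced to follow~$\delta$ deterministically; hence for each \kl{branch}~$b=(n_i)$ of~$\tree$ the sequence of states read along~$b$ is precisely the run of~$\calB$ on~$\Bword b$, and since every branch of~$\ExTree$ is \kl(B){accepting} with the same priority function~$\prio$, that run is accepting. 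Conversely, if every branch of~$\Tree$ is accepted by~$\calB$, I~build~$\ExTree$ on the tree structure~$\tree$ by labelling each node with the state that $\calB$'s (deterministic, hence well-defined) run assigns to it; one~checks each transition constraint $\mknu_{n_\extree}\models+\Trans'(\state,\alp)$ holds by the computation above, and every branch is \kl(B){accepting} because it reproduces an accepting run of~$\calB$ under the shared~$\prio$.

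The~main obstacle is not the construction itself but verifying that determinism is genuinely \emph{necessary and sufficient} here: I~must argue carefully that a single \kl{unitary marking} suffices (this is the \nAAPTA simplification from Remark~\ref{rk-nonalt}, applicable since each $\Trans'(\state,\alp)$ is a single \EUpr, hence a disjunction), and that the well-definedness of the node labelling relies on $\delta(\state,\alp)$ being a genuine function—precisely the determinism hypothesis. The~soundness of the forward direction hinges on ruling out spurious executions that stop early or deviate in state; the \kl{minimality} convention together with the empty existential part and singleton universal part pins the execution tree down uniquely, so the bijection between execution trees of~$\calA$ over~$\Tree$ and runs of~$\calB$ along the branches of~$\Tree$ is exact.
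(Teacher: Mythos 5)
Your construction is exactly the one the paper sketches in the text immediately preceding the proposition, namely $\Trans''(\state,\alp)=\EUpair(\emptyset;\Trans(\state,\alp))$, which under determinism is your $\EUpair(\emptyset;\{\delta(\state,\alp)\})$, and your verification (the empty existential part together with the singleton universal part forces the unique minimal execution tree to mirror the deterministic run of~$\calB$ along every branch, with the shared priority function transferring acceptance in both directions) is precisely the argument the paper leaves implicit. The proposal is correct and takes essentially the same approach, only spelling out the routine details.
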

\end{example}

\subsection{Related formalisms}
\label{sec-relW}
We briefly review several related classes of tree automata that have been defined in the literature.
\paragraph{\intro*\FATAutomata~\cite{Rab69,Tho90,Lod21}.}
\AP
Many papers on tree automata assume \kl{trees} of fixed
\kl{arity}. The~transition functions are then defined as positive boolean
combinations of atoms of the form~$(d,q)$, such an atom specifying
that the $d$-\kl{successor} of the current node has to be visited by
the automaton in the state~$q$. This in particular requires to have
transition functions that depend on the \kl{arity} in the
input~tree. Such \kl{tree} automata clearly have a different
expressive power compared to~\AATAs, since one can distinguish the
first and second \kl{successors} in a \kl{binary tree}, by~using directions
in the transition functions; on~the other~hand,
\AATA~can
accept \kl{trees} of arbitrary \kl{arity}. Note also that \AATA are often
much more succinct, for example by allowing constraints of the form
$\EUpair(q \mapsto k;\{q_\top\})$ that require to enumerate all
possible subset of~$k$~\kl{successors} (i.e.,~directions) in the fixed-\kl{arity}
setting.

Finally, note also that with \kl{alternating} \faTAutomata, one can
use formula $(d,q) \et (d,q')$ in transitions to have the
$d$-\kl{successor} visited by both~$q$ and~$q'$. Such a formula is not
directly possible in the syntax of \AATA transitions, because
two \EUprs cannot assign some specific state to some
specific \kl{successor}; however, exploring a single \kl{successor} in
two different states can be achieved by using an extra
state\footnote{This also requires some technicalities to take care of
the acceptance condition; we omit those details here.}~$r_{q\et
q'}$, whose transition function is defined as the conjunction of those
of~$q$ and~$q'$.

\paragraph{\intro*\AmTAutomata~\cite{BG93}.}
\AP
\AmTAutomata are, to our knowledge, the first class of tree
automata that can handle \kl{trees} of arbitrary, varying branching
degree. Transitions in an \amTAutomaton are defined through
a \emph{stretch} function, which takes as input an integer~$d$ (the
\kl{arity} of the node being visited) and an identifier and returning a
$d$-tuple of states indicating, for each \kl{successor node}, the~state of
the automaton in which it will be explored.
\AmTAutomata also have a kind of
\kl{alternation} mechanism, which allows to explore (copies~of) the~same
node of the input \kl{tree} in different states. 

When used for encoding \CTL, \amTAutomata only rely on two
stretch functions: (roughly) one~that amounts to visiting all
\kl{successors} in the same state~$q$, and one that amounts to visiting all
\kl{successors} in the same state~$q$ \emph{but} one of them, which is
explored in state~$q'$. The~stretch functions can thus be assumed to
be fixed, so that any \CTL formula~$\phi$ can be turned into an equivalent
\amTAutomata of size linear in~$\size\phi$.

\paragraph{\intro*\BDAutomata~\cite{Wil99a}.}
\AP
In~\cite{Wil99a}, a different class of \kl{alternating} automata running on
tree of arbitrary branching degree is introduced: there, the
transition function is defined as a positive boolean combination
over~$\{\Box,\Diamond\}\times\State$, where $(\Box, \state)$ requires
that the execution explores all \kl{successors} of the current node in
state~$\state$, and $(\Diamond,\state)$ requires the execution to
explore one \kl{successor node} in state~$q$.  These automata run over
Kripke structures in~\cite{Wil99a}, but their semantics could
equivalently be defined over \kl{trees} of arbitrary \kl{arity} (which is a
special case), as in our setting.
Then $(\Box, \state)$ would correspond to constraint $(\emptyset;
\{\state\})$ in our formalism (``\emph{explore all \kl{successors} of
current node in state~$\state$}''), while $(\Diamond,\state)$ would
correspond to~$(\{\state\mapsto 1\}, \{\state_\top\})$ (``\emph{select
one \kl{successor} and explore it in state~$\state$}'').  Clearly the
class of \BDAutomata corresponds exactly to the
subclass of \AATA where the transition function uses only
\EUprs of the form $\EUpair(\state\mapsto 1;\{\state_\top\})$
or $\EUpair(\emptyset;\{\state\})$.
\begin{lemma}
  \BDAutomata are less expressive than \kl{\EU-automata}.
\end{lemma}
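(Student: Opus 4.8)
The plan is to split the statement into its two halves. The ``at most as expressive'' direction is immediate from the syntactic characterisation established just above the lemma: a $\BDAutomaton$ is exactly an \kl{\EU-automaton} whose transitions use only \EUprs of the form $\EUpair(\state\mapsto 1; \{\state_{\top}\})$ and $\EUpair(\emptyset; \{\state\})$, so every \kl{language} recognised by a $\BDAutomaton$ is recognised by some \kl{\EU-automaton}. All the content is therefore in the strictness, for which I would exhibit a single \kl{tree} \kl{language} that is recognised by an \kl{\EU-automaton} but by no $\BDAutomaton$.

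For the separating \kl{language} I would exploit the counting power of \EUprs. Let $T_2$ be the complete binary \kl{tree} over the singleton alphabet $\{a\}$ (every node labelled $a$ and having exactly two \kl{successors}) and let $T_3$ be the complete ternary \kl{tree} over $\{a\}$. By Example~\ref{example-binary}, the \kl{language} $\{T_2\}$ is recognised by a \kl{non-alternating} \kl{\EU-automaton}, since the transition $\EUpair(\initstate\mapsto 2; \emptyset)$ forces every node to have exactly two \kl{successors}, both explored in $\initstate$; in particular $T_3$ is \kl{rejected} by that automaton, as its empty universal part forbids a third \kl{successor}. It then suffices to show that no $\BDAutomaton$ can tell $T_2$ and $T_3$ apart.

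The core step is a transfer lemma: if $T$ and $T'$ are bisimilar and a $\BDAutomaton$ $\calA$ \kl{accepts} $T$, then $\calA$ also \kl{accepts} $T'$. First, $T_2$ and $T_3$ are bisimilar: both carry the constant label $a$ and every node of either \kl{tree} has a \kl{successor}, so the total relation $R=T_2\times T_3$ is a bisimulation relating the two \kl{roots}, the forth and back conditions reducing to the mere existence of a matching \kl{successor}. For the transfer itself I would build an \kl(ET){accepting} \kl{execution tree} over $T'$ by mirroring an \kl(ET){accepting} one over $T$ along $R$, matching each \kl{execution tree} node over $T'$ to one over $T$ carrying the same state. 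The two primitives transfer locally: a $\Box$-constraint $\EUpair(\emptyset; \{\state\})$ used over a node $m$ is reproduced over a bisimilar node $m'$ by exploring \emph{all} \kl{successors} of $m'$ in state $\state$, each matched by the back condition to some \kl{successor} of $m$ explored in $\state$; a $\Diamond$-constraint $\EUpair(\state\mapsto 1; \{\state_{\top}\})$ is reproduced by using the forth condition to pick a \kl{successor} of $m'$ bisimilar to the one chosen over $m$, exploring it in $\state$ and the remaining \kl{successors} in the sink $\state_{\top}$. Every \kl{branch} of the new \kl{execution tree} thus either mirrors a \kl{branch} of the old one with the same state sequence, hence satisfies the same \kl{parity condition}, or ends in the accepting sink $\state_{\top}$; so the new \kl{execution tree} is \kl(ET){accepting}. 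Applying this in both directions shows that any $\BDAutomaton$ \kl{accepts} $T_2$ if, and only if, it \kl{accepts} $T_3$; since $T_3\neq T_2$, none recognises $\{T_2\}$, and strictness follows.

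I expect the main obstacle to be the transfer lemma, specifically making the mirrored \kl{execution tree} globally well defined under \kl{alternation}. A single input node may be visited in several states, and the transition is a positive boolean combination of $\Box$/$\Diamond$-constraints whose conjuncts may be witnessed by \emph{different} \kl{submarkings}; the clean way around this is to fix, at each node, a set of atoms that witnesses the boolean formula, reproduce over $m'$ the union of the explorations demanded by those atoms, and check that each new exploration is matched to an old one carrying the same state at a bisimilar node. This is exactly the standard bisimulation-invariance argument for modal, $\{\Box,\Diamond\}$-style automata, which one may alternatively invoke from the literature on Wilke's automata~\cite{Wil99a}.
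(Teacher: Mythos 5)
Your proof is correct, and its first half (the inclusion) is exactly the paper's: \BDAutomata are, syntactically, the \AATA whose transitions only use \EUprs of the form $\EUpair(\state\mapsto 1;\{\state_\top\})$ or $\EUpair(\emptyset;\{\state\})$. For strictness, however, you rest on a different key lemma than the paper. The paper's argument is three lines long and asserts (without proof) a more specific invariance property: if a \BDAutomaton accepts a tree $\Tree$, it also accepts the tree obtained from $\Tree$ by duplicating one branch; hence \BDAutomata cannot impose an upper bound on the number of successors, the separating language being left implicit (Example~\ref{example-binary} supplies one). You instead prove invariance of acceptance by \BDAutomata under bisimulation, and apply it to the explicitly bisimilar pair $T_2$, $T_3$. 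Both routes exploit the same phenomenon --- $\Box$/$\Diamond$ constraints are blind to successor multiplicities, while $\EUpair(\initstate\mapsto 2;\emptyset)$ pins the arity to exactly two --- and your lemma in fact subsumes the paper's claim, since a tree and its branch-duplication are bisimilar. What your version buys: a fully explicit separating language and witness pair, and a standard, reusable invariance property; you also correctly identify and resolve the one genuine subtlety that the paper's terse assertion glosses over, namely that under alternation different atoms of a transition formula may be witnessed by different submarkings, so the mirrored execution tree must be built from a fixed witnessing set of atoms whose induced explorations are unioned. What the paper's version buys is brevity.
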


\begin{proof} Clearly,
  \BDAutomata can be turned into \kl{\EU-automata}. The
  converse is not true, in particular because
  \BDAutomata cannot impose an upper bound on the
  number of \kl{successors} of a node: if~a \kl{tree}~$\Tree$ is accepted by
  some \BDAutomaton, then the \kl{tree} obtained
  from~$\Tree$ by duplicating one \kl{branch} is also accepted.
\end{proof}

\AP
\BDAutomata (on~\kl{trees}) impose no restrictions on
the \kl{arity} of the \kl{trees} they take as input, and do not distinguish
between the different \kl{successors} of any node of the input
tree. As~such, they are often named \intro{symmetric} tree
automata. Other variants of \kl{symmetric} tree automata have been studied
in the literature.

\paragraph{\intro*\SymBTAutomata~\cite{KV03b}.}
\AP
In~\cite{KV03b}, a class of \emph{symmetric}
non-deterministic\footnote{With our terminology, we would name them
\kl{non-alternating}.} tree automata is
defined. The~general aim of this class is to have a \kl{non-alternating}
equivalent to \BDAutomata of~\cite{Wil99a}. \SymBTAutomata
(\reintro*\symNBT for short)
are automata obtained from \BDAutomata by applying
a powerset construction: the~set~$Q$ of states of a \symNBT is of the
form~$2^S$ (where $S$ is the set of states of the \BDAutomata, whose elements are coined \emph{micro-states} to distinguish them with the \emph{macro-states} of~$Q$), and transitions return sets of pairs~$[U;E]\in
2^S\times 2^S$, where $[U;E]$ intuitively means $\Box U \et
\Diamond E$: all micro-states in~$U$ must be present in \emph{all}
macro-states visiting the \kl{successors} of the current node, and all
micro-states in~$E$ must be present in \emph{some} macro-state
visiting the \kl{successors} of the current node.

Any \BDAutomaton can be turned into 
a \symNBT~\cite{KV03b}.  In~a sense, our~Theorem~\ref{thm-simu} is an
extension of this result to a richer class of tree automata.

\paragraph{\intro*\MSOAutomata~\cite{JW95,Wal96b,Wal02,BB02,JL04,Zan12}}
\AP
      In~\cite{JW95,Wal96b,Wal02,BB02,JL04,Zan12},
      arbitrary-\kl{arity} tree automata are defined where the transition
      functions are defined using first-order formulas over the
      \kl{successor} nodes; these automata are named \MSOAutomata.
      In~\MSOAutomata, transitions are given as first-order formulas,
      with quantification over the \kl{successors} of the current node and
      predicates corresponding to states of the automaton.  For
      instance, formula $\exists x.\ q(x)$ corresponds to $\Diamond q$
      in~\BDAutomata: some \kl{successor} must be
      visited by the automaton in state~$q$.

   Using first-order logic in transition function provides great
      flexibility. However, in order to facilitate the manipulation of those
      automata, the first-order formulas defining the
      transitions can be turned into a disjunction of formulas in
      \emph{basic form}:
      \[
      \exists (x_i)_{1\leq i\leq k} \diff((x_i)_{1\leq i\leq k})
      \et
      \ET_{1\leq i\leq k} q_i(x_i)
      \et
      \forall y.\ y\notin (x_i)_{1\leq i\leq k} \impl \OU_{q\in U} q(y).
      \]
      Such formula can be seen to correspond to our $\EUpair(E;U)$
      formulas, so that \MSOAutomata have the same expressive power
      as our \kl{\EU-automata}.  However, the transformation of a
      first-order formula into such a disjunction is based on
      Ehrenfeucht-Fra\"\i ss\'e games, and is not explicited
      in~\cite{Wal02,BB02,Zan12}. In~the sequel, we~develop operations
      (union, intersection, projection, complementation and
      simulation) with explicit constructions and precise evaluation
      of the \kl(A){size} of the resulting automata, which cannot be directly
      obtained from the current results about \MSOAutomata.

\section{\kl{Game-based semantics}}
\label{sec-gamesem}

The acceptance of a \kl{tree} by an \AAPTA can be expressed as the existence
of a winning strategy in a \kl{two-player turn-based parity
game}.  We first briefly recall the definition of \kl{parity
games}, and then explain how they can be used to encode the semantics
of \kl{alternating} tree automata.

\subsection{\kl{Parity games}}

A \intro{two-player turn-based parity game} is a $4$-tuple
$\Game=\tuple{\GState_0, \GState_1,R,\gprio}$ where
$\GState_0$ and~$\GState_1$ are disjoint  sets of states and,
writing~$\GState=\GState_0\cup\GState_1$, 
$R\subseteq \GState^2$ is a set of transitions, and $\gprio\colon
\GState \to \bbN$ assigns a \kl{priority} to each state of
the~game.

In such a game, two~players (which we name \Pl0 and~\Pl1) select
transitions so as to form a \intro(G){path} in the
graph~$\tuple{\GState,R}$: from some~$\gst\in \GState_i$
(with~$i\in\{0,1\}$), \Pl i selects a transition~$(\gst,\gst')\in R$,
and the game proceeds to~$\gst'$.  A~\kl(G){path}
is \intro(P){maximal} if it is infinite, or if its last state has no
outgoing transitions.  A~finite \kl(P){maximal} \kl(G){path}
is \intro(P){winning} for \Pl0 if, and only if, its last state belongs
to $\GState_1$ (the~blocked player loses).  For an
infinite \kl(G){path}~$\pi$, we~write $\gprio_{\min}(\pi)$ for the
least integer $k$ s.t.\ there are infinitely many $i\geq 0$ with
$\gprio(\pi(i))=k$.  The infinite \kl(G){path}~$\pi$
is \reintro(P){winning} for \Pl0 if, and only~if, $\gprio_{\min}(\pi)$
is even; otherwise it is \reintro(P){winning} for~\Pl1.

A~\intro{strategy} for~\Pl i in a \kl{parity game} is a partial
function $\strat i\colon \GState^*\times \GState_i \to R$ such that
for any finite \kl(G){path}~$\pi\cdot
\gst$ with~$\gst\in \GState_i$, the~function~$\strat i$ is defined at $\pi\cdot \gst$ if, and only~if,
there exists~$\gst'$ such that $(\gst,\gst')\in R$; in that case,
we~must have $(\gst,\strat i(\pi\cdot \gst)) \in R$. A~strategy
is \intro{memoryless} if for any two paths~$\pi\cdot\gst$
and~$\pi'\cdot\gst$, it~holds $\strat i(\pi\cdot \gst)=\strat
i(\pi'\cdot \gst)$.

A~\kl(G){path}~$\pi=(\gst_j)_{0\leq j\leq\size\pi}$ is
\intro{compatible} with a \kl{strategy}~$\strat i$ of \Pl i if for any $0\leq
j<\size\pi-1$, if $\gst_j\in \GState_i$, then $\gst_{j+1}=\strat i((\gst_k)_{0\leq
k\leq j})$. A~\kl{strategy}~$\strat i$ is \intro(S){winning} for \Pl i from~$\gst$
if all \kl(P){maximal}
\kl(G){paths} starting from~$\gst$ that are \kl{compatible} with~$\strat i$ are \kl(P){winning} for \Pl i.

The following classical property about infinite \kl{parity games} will be useful in the sequel:

\begin{proposition}[\cite{tcs200(1-2)-Zie}]
\label{prop-determined}
\kl{Two-player turn-based parity games} are positionally determined: from any state of such games,
one of the two players has a \kl{memoryless} \kl(S){winning} strategy.
\end{proposition}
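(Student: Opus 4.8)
The plan is to reproduce the classical attractor-based induction of Zielonka, which proves \emph{positional} determinacy directly (rather than deriving memorylessness on top of plain determinacy) and which applies to arbitrary, possibly infinite, arenas as long as only finitely many \kl{priorities} occur --- exactly our situation, since each $\gprio$ takes finitely many values. The one tool I~would set up first is the \emph{attractor}: for a player $i$ and a set $X\subseteq\GState$, the set $\mathsf{Attr}_i(X)$ of states from which \Pl i can force the \kl(G){path} into $X$ in finitely many steps. It~comes with a \kl{memoryless} strategy realising this forcing, and it has two closure properties I~would use repeatedly: no \Pl i-owned state outside $\mathsf{Attr}_i(X)$ has an edge into $\mathsf{Attr}_i(X)$ (so the complement of an $i$-attractor is a trap for \Pl i), and the winner's region in any game is a trap for the loser.

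The induction is on the number of distinct \kl{priorities} occurring in $\Game$. In the base case a single \kl{priority} is used, so all infinite \kl(P){maximal} \kl(G){paths} are \kl(P){winning} for the same player, and only the blocking rule on finite \kl(P){maximal} \kl(G){paths} separates states; a single \kl{memoryless} attractor toward ``the opponent is blocked'' settles the partition. For the inductive step, let $p$ be the \emph{least} \kl{priority} occurring in $\Game$ and let $i$ be the player who wins every infinite play whose least-infinitely-often \kl{priority} is $p$ (so $i=0$ iff $p$ is even). With $P=\{\gst\in\GState\mid\gprio(\gst)=p\}$ I~set $A=\mathsf{Attr}_i(P)$; since $P\subseteq A$, the subgame on $\GState\setminus A$ uses strictly fewer \kl{priorities} and is positionally determined by the induction hypothesis, giving winning regions $W'_0,W'_1$ with \kl{memoryless} witnesses.

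I~then peel off the opponent's region. If $W'_{1-i}=\emptyset$, then \Pl i wins all of $\Game$ with the \kl{memoryless} strategy that plays the inductive strategy on $\GState\setminus A$ and the attractor strategy toward $P$ on $A$: any \kl(G){path} \kl{compatible} with it either remains in $\GState\setminus A$ forever (won by the induction hypothesis) or returns to $A$ infinitely often and hence visits $P$ infinitely often, so its least-infinitely-often \kl{priority} is $p$. If instead $W'_{1-i}\neq\emptyset$, I~set $B=\mathsf{Attr}_{1-i}(W'_{1-i})$ and argue that \Pl{1-i} wins from $B$: the attractor part forces the play into $W'_{1-i}$, and there the inductive \kl{memoryless} strategy keeps it inside $W'_{1-i}$ and wins --- the key point being that \Pl i cannot escape $W'_{1-i}$ into $A$, since an $i$-owned state outside $A$ with an edge into $A$ would, by attractor closure, already lie in $A$. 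I~recurse on $\Game\setminus B$ to obtain $W''_0,W''_1$, set $W_{1-i}=B\cup W''_{1-i}$ and $W_i=W''_i$, and combine the \kl{memoryless} strategies; that \Pl i still wins from $W''_i$ in the full game uses the dual closure property, namely that a \Pl{1-i}-owned state of $W''_i$ has no edge into $B$.

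The main obstacle is the \emph{infinite-arena} case: the recursion on $\Game\setminus B$ does not decrease the number of \kl{priorities}, only the state space, so on an infinite arena it need not terminate after finitely many rounds. I~would therefore organise this peeling as a transfinite induction (equivalently, as a greatest-fixpoint computation of \Pl{1-i}'s region), and then verify that the \kl{memoryless} strategies produced on the successive, pairwise disjoint layers glue into a single \kl{memoryless} strategy on their union --- which is sound precisely because the layers are disjoint and each strategy confines \kl{compatible} \kl(G){paths} to the layer where it is defined. A~secondary point, to be threaded uniformly through every step, is the handling of finite \kl(P){maximal} \kl(G){paths}, where the blocked player loses, and of dead-ends created by removing attractors from subgames; both are absorbed cleanly by the attractor construction. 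As this is exactly Zielonka's theorem, in the paper I~would simply invoke~\cite{tcs200(1-2)-Zie} instead of carrying out the full argument.
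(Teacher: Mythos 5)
The paper contains no proof of this proposition at all: it is invoked as a classical result with a citation to Zielonka, which is exactly what you say you would do in your closing sentence, so on that level your approach and the paper's coincide. What you add is a sketch of the argument behind the citation, and it is the right one for this setting: the attractor-based induction on the number of priorities, peeling the least priority first, the case split on whether the opponent's region in the subgame is empty, and---crucially---the transfinite iteration in the second case, which is genuinely needed here because the acceptance games of the paper are built over infinite trees and therefore have infinite state spaces, while the priority function still takes only finitely many values; you also correctly thread through the paper's convention that the blocked player loses finite maximal plays. One imprecision worth flagging in the gluing step: it is not true that each layer's memoryless strategy confines compatible plays to that layer in the \emph{full} game, since Player~$i$ may deviate from a layer into an earlier attractor layer (attractor closure only forbids escaping to states outside the union of the layers peeled so far). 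The gluing is nevertheless sound, for the reason you half-state: along any compatible play the index of the current layer can only decrease, hence stabilises by well-foundedness of the ordinals, and the parity objective is prefix-independent, so winning on the eventual suffix confined to a single layer suffices. With that repair your sketch is a complete and correct proof of the cited theorem.
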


\subsection{\intro{Game semantics} for tree automata}
\label{sec:sem-game}

We now explain how to define a \kl{two-player turn-based parity
game}~$\GAT=\tuple{\GState_0, \GState_1,R,\gprio}$ encoding
the \kl{acceptance} of a \kl{tree}~$\Tree=\tuple{\tree,\lab}$ by
an \AAPTA~$\Aut=\tuple{\State,\initstate,\Trans,\prio}$.

States of~$\GAT$ are of three kinds: the~\intro{main states} of the
game are of the form $(n,\state)$ where $n \in t$ and
$\state\in\State$; the~\intro{auxiliary states} are of the form
$(n,q,\vfi)$ where $\vfi$ is a subformula of~$\Trans(\state,l(n))$,
and $(n,q,\nu_n)$ where $\nu_n$~is a \kl{unitary marking} of~$\succ(n)$
with states of~$\Aut$.  It~remains to partition them into~$Y_0$
and~$Y_1$. The~set~$Y_0$ contains the states of the following form:
\begin{itemize}
\item the main states $(n,\state)$:
\item the state $(n,\state,\bot)$;%
\item the states of the form $(n,\state,\OU_{1\leq i\leq k} \phi_i)$ (with $k>1$); %
\item the states of the form $(n,\state,\EUpair(E;U))$,
  which correspond to positions where \Pl0 has to assign states to the
  \kl{successors} of~$n$ so as to satisfy~$\EUpair(E;U)$; 
\end{itemize}
All other states belong to~$\GState_1$ (\ie, states of the form
$(n,\state,\top)$, $(n,\state,\ET_{1\leq i\leq k} \phi_i)$, and~$(n,q,
\nu_n)$).

Transitions are defined as follows:
\begin{itemize}
\item
  for any state~$(n,\state)$ with $n\in\tree$ and $\state\in\State$,
  there is a single transition from
  $(n,\state)$ to~$(n,\Trans(\state,\lab(n)))$;
\item from each of $(n,\state,\OU_{1\leq i\leq k} \phi_i)$ and $(n,\state,\ET_{1\leq i\leq
  k} \phi_i)$, there are transitions to $(n, \state,\phi_i)$, for each~$1\leq
  i\leq k$;
\item from $(n,\state,\EUpair(E;U))$, for each \kl{unitary marking}~$\nu_n$
  of~$\succ(n)$ 
  with states in~$\State$ such that $\nu_n\models+\EUpair(E;U)$, there
  is a transition to~$(n,\state,\nu_n)$. Notice that for~$\nu_n$ to
  fulfill~$\EUpair(E;U)$, we~must have
  $\supp(\mkimg{\nu_n})\subseteq \supp(E)\cup U$.
\item from $(n,\state,\nu_n)$ where $\nu_n$ is a \kl{unitary marking}
  of~$\succ(n)$
  with states of~$\State$, for each~$n_i\in\succ(n)$, there is a
  transition to~$(n_i,\nu_n(n_i))$.
\end{itemize}

Finally, \kl{priorities} are defined as $\gprio(n,\state)=\prio(\state)$
for the \kl{main states}, and to~$\priomax+1$
for all other states of the game. Since there are infinitely many main states
along infinite runs, only \kl{priorities} of the \kl{main states} are useful.

\begin{example}
Consider an automaton with $\Trans(q_0,a) = \EUpair(q_0\mapsto
1; \{q_1\}) \et \EUpair(q_0 \mapsto 2;\{q_\top\})$,
$\Trans(q_0,b)=\top$, and $\Trans(q_1,a) =\Trans(q_1,b)
= \EUpair(q_\top \mapsto 1; \emptyset)$.
Figure~\ref{fig-ex-game} shows the first levels of an input \kl{tree}
and the beginning of the corresponding \kl{parity game} for this
transition function (dotted nodes belong to Player~1). To simplify the
figure, we have omitted $q_0$ in states of the form
$(\varepsilon,q_0,\vfi)$ and $(\varepsilon,q_0,\nu_n)$.
\end{example}

\begin{figure}[t]
\centering
\begin{tikzpicture}
  \begin{scope}
    \path (0,.6) node {input \kl{tree}};
    \draw (0,0) node (eps) {$\varepsilon (a)$}; %
    \draw (-10mm,-16mm) node (t0) {$0 (a)$};
    \draw (0,-16mm) node (t1) {$ 1(b)$};
    \draw (10mm,-16mm) node (t2) {$2 (a)$};
 \draw[-latex'] (eps) -- (t0);
 \draw[-latex'] (eps) -- (t1);
 \draw[-latex'] (eps) -- (t2);
  \foreach \n/\a in {t0/-110,t0/-70,t1/-90,t2/-110,t2/-90,t2/-70}
    {\draw[dashed,-latex'] (\n) -- +(\a:7mm);}

  \end{scope}
  \begin{scope}[xshift=6.5cm]
    \path (0,.6) node {partial view of~$\GAT$};

    \draw (0,0) node[draw] (nu) {$\varepsilon,q_0$} ;
    \draw (0,-1cm) node[draw,dotted,rounded corners=2mm] (nu2) {$\varepsilon,\EUpair(q_0\mapsto 1; \{q_1\}) \et \EUpair(q_0 \mapsto 2;\{q_\top\})$} ;
    \draw[-latex'] (nu) -- (nu2);
    \draw (-1.5cm,-2cm) node[draw] (nu3) {$\varepsilon,\EUpair(q_0\mapsto 1; \{q_1\})$} ;
    \draw (1.5,-2cm) node[draw] (nu4) {$\varepsilon,\EUpair(q_0 \mapsto 2;\{q_\top\})$} ;
    \draw[-latex'] (nu2) -- (nu3);
    \draw[-latex'] (nu2) -- (nu4);
    \draw (2.8cm,-3cm) node[draw,dotted,rounded corners=2mm] (nu10) {${\scriptscriptstyle \varepsilon,(0:q_\top,1:q_0,2:q_0)}$} ;
    \draw (1.6cm,-3.5cm) node[draw,dotted,rounded corners=2mm] (nu9) {${\scriptscriptstyle \varepsilon,(0:q_0,1:q_\top,2:q_0)}$} ;
    \draw (0,-4cm) node[draw,dotted,rounded corners=2mm] (nu8) {${\scriptscriptstyle \varepsilon,(0:q_0,1:q_0,2:q_\top)}$} ;
    \draw[-latex'] (nu4) -- (nu10);
    \draw[-latex'] (nu4) -- (nu9);
    \draw[-latex'] (nu4) -- (nu8);
   \draw (-1cm,-3cm) node[draw,dotted,rounded corners=2mm] (nu7) {${\scriptscriptstyle \varepsilon,(0:q_1,1:q_1,2:q_0)}$} ;
    \draw (-2.5cm,-3.5cm) node[draw,dotted,rounded corners=2mm] (nu6) {${\scriptscriptstyle \varepsilon,(0:q_1,1:q_0,2:q_1)}$} ;
    \draw (-4.8cm,-4.2cm) node[draw,dotted,rounded corners=2mm] (nu5) {${\scriptscriptstyle \varepsilon,(0:q_0,1:q_1,2:q_1)}$} ;
    \draw[-latex'] (nu3) -- (nu7);
    \draw[-latex'] (nu3) -- (nu6);
    \draw[-latex'] (nu3) -- (nu5);
    \draw (-5.7cm,-5cm) node[draw] (nu11) {${\scriptstyle 0,q_0}$} ;
    \draw (-4.8cm,-5cm) node[draw] (nu12) {${\scriptstyle1,q_1}$} ;
    \draw (-3.9cm,-5cm) node[draw] (nu13) {${\scriptstyle2,q_1}$} ;
    \draw[-latex'] (nu5) -- (nu11);
    \draw[-latex'] (nu5) -- (nu12);
    \draw[-latex'] (nu5) -- (nu13);
    \draw (-1.7cm,-4.5cm) node[draw] (nu14) {${\scriptstyle 2,q_1}$} ;
    \draw (-2.5cm,-4.5cm) node[draw] (nu15) {${\scriptstyle1,q_0}$} ;
    \draw (-3.3cm,-4.5cm) node[draw] (nu16) {${\scriptstyle0,q_1}$} ;
    \draw[-latex'] (nu6) -- (nu14);
    \draw[-latex'] (nu6) -- (nu15);
    \draw[-latex'] (nu6) -- (nu16);
    \draw (-.8cm,-4.7cm) node[draw] (nu18) {${\scriptstyle 0,q_0}$} ;
    \draw (0cm,-4.7cm) node[draw] (nu19) {${\scriptstyle1,q_0}$} ;
    \draw (.9cm,-4.7cm) node[draw] (nu20) {${\scriptstyle2,q_\top}$} ;
    \draw[-latex'] (nu8) -- (nu18);
    \draw[-latex'] (nu8) -- (nu19);
    \draw[-latex'] (nu8) -- (nu20);
    \foreach \a in {-80,-60,-40} {\draw[dashed,-latex'] (nu10) -- +(\a:6mm);}
    \foreach \a in {-100,-80,-60} {\draw[dashed,-latex'] (nu9) -- +(\a:6mm);}
    \foreach \a in {-110,-90,-70} {\draw[dashed,-latex'] (nu7) -- +(\a:6mm);}
    \foreach \n in {11,12,13,14,15,16,18,19,20} {\draw[dashed,-latex'] (nu\n) -- +(-90:6mm);}
    
  \end{scope}
\end{tikzpicture}
\caption{Example of game~$\GAT$}
\label{fig-ex-game}
\end{figure}

The resulting \kl{parity game} encodes the \kl{acceptance} of a \kl{tree} by an~\AAPTA:
\begin{proposition}\label{prop-gamesem}
The \SDtree{\Alp}{\Dir}~$\Tree$
is accepted
by the \AAPTA~$\Aut$ if, and only~if, \Pl0 has a \kl(S){winning} \kl{strategy}
from state~$(\emptyw_\tree, \initstate)$
in the associated \kl{parity game}~$\GAT$.
\end{proposition}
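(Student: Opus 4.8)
The plan is to prove the two implications separately, passing back and forth between \kl(ET){accepting} \kl{execution trees} of~$\Aut$ and winning strategies of~\Pl0 in~$\GAT$. The structural fact I would record first is that the auxiliary states lying between two consecutive \kl{main states} encode exactly one unfolding of a transition formula followed by one choice of a \kl{successor}: from $(m,q)$ a play runs through the finite boolean structure of $\Trans(q,\lab(m))$, reaches an \EUpr leaf, then a marking state $(m,q,\nu)$, and finally a main state at a \kl{successor} of~$m$. Hence every infinite play visits infinitely many \kl{main states}, whose tree components trace a \kl{branch} of the input \kl{tree}; moreover, since every non-main state is given a priority larger than all priorities of~$\Aut$ and main states recur infinitely often, the value $\gprio_{\min}$ of an infinite play equals $\prio_{\min}$ of the sequence of state components read along that branch.

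For ``accepted $\Rightarrow$ \Pl0 wins'' I would fix a \kl(ET){accepting} \kl{minimal} \kl{execution tree}~$\ExTree$ and let \Pl0 play so as to maintain, along every play, a node~$x$ of~$\ExTree$ with $\exlab(x)=(m,q)$ equal to the current main state, together with the invariant $\mknu_{x}\models+\vfi$ at every auxiliary position $(m,q,\vfi)$. This invariant propagates through the formula: at a disjunction \Pl0 selects a disjunct satisfied by~$\mknu_{x}$ (one exists by Def.~\ref{def-models+}), while at a conjunction both conjuncts are satisfied whatever \Pl1 does. At an \EUpr leaf, $\mknu_{x}\models+\EUpair(E;U)$ furnishes, by definition of~$\models+$, a \kl{unitary} \kl{submarking} $\nu'\submarking\mknu_{x}$ with $\mkimg{\nu'}\models|\EUpair(E;U)$, which \Pl0 plays; thus \Pl0 is never blocked and never reaches~$\bot$ (impossible under the invariant, as no \kl{marking} satisfies~$\bot$). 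When \Pl1 then moves to $(m_i,\nu'(m_i))$, the node $x\cdot(d_i,\nu'(m_i))$ belongs to~$\ExTree$ because $\nu'\submarking\mknu_{x}$, and the invariant is restored there since $\ExTree$ is an \kl{execution tree}. Every finite maximal play therefore ends at a~$\top$-position or an empty marking state (both owned by~\Pl1, and a blocked player loses), and every infinite play traces a \kl{branch} of~$\ExTree$, which is \kl(B){accepting}; so \Pl0 wins in all cases.

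For the converse I would first invoke Proposition~\ref{prop-determined} to get a \kl{memoryless} winning strategy~$\strat 0$; memorylessness is what makes the construction of a single execution tree consistent and lets me later splice play fragments into one infinite play. I build~$\ExTree$ top-down: to a node~$x$ labelled $(m,q)$ corresponding to a $\strat 0$-reachable main state I attach, as children, all nodes $x\cdot(d_i,\nu(m_i))$ where $(m,q,\nu)$ ranges over the marking states reachable from $(m,q)$ under~$\strat 0$. The \kl{induced} \kl{marking}~$\mknu_{x}$ then assigns to each \kl{successor}~$m_i$ the set of states $\nu(m_i)$ arising from those marking states, and the heart of the proof is the claim $\mknu_{x}\models+\Trans(q,\lab(m))$. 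I would establish it by structural induction on the subformulas~$\vfi$ with $(m,q,\vfi)$ reachable under~$\strat 0$: the case~$\bot$ is vacuous, as such a position is a dead-end owned by~\Pl0 and cannot be reached under a winning strategy; $\top$ is immediate; a disjunction is handled through the disjunct fixed by~$\strat 0$ (induction hypothesis); a conjunction through the induction hypothesis on both conjuncts, which both stay reachable; and at an \EUpr leaf the move prescribed by~$\strat 0$ yields a \kl{unitary} \kl{marking}~$\nu^\ast$ with $\mkimg{\nu^\ast}\models|\EUpair(E;U)$ that is, by construction, a \kl{submarking} of~$\mknu_{x}$, whence $\mknu_{x}\models+\EUpair(E;U)$.

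This \EUpr step is where I expect the real difficulty: as noted after Def.~\ref{def-models+}, $\models+$ may require different \kl{submarkings} for different \EUprs, so the induction must recombine the per-play witnesses supplied by~$\strat 0$ into the single marking~$\mknu_{x}$ and rely on monotonicity of~$\models+$ (a larger \kl{marking} still satisfies an already satisfied \EUconstr). Once this induced-marking condition is secured, $\ExTree$ is a genuine \kl{execution tree}; its finite \kl{branches} are \kl(B){accepting} by definition, and for an infinite \kl{branch} I would concatenate the $\strat 0$-consistent fragments between consecutive \kl{main states} into an infinite play consistent with~$\strat 0$, hence \kl(P){winning} for~\Pl0. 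By the preliminary observation its parity equals $\prio_{\min}$ of the branch, which is therefore even, so all \kl{branches} are \kl(B){accepting} and~$\Tree$ is accepted.
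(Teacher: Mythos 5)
Your proof is correct and follows essentially the same route as the paper's: both directions translate between accepting execution trees and winning strategies of Player~0, using the satisfaction of induced markings at \EU-pair positions and the observation that infinite plays visit main states infinitely often, so that the game's parity condition coincides with the acceptance condition on the corresponding branch. The only notable difference is that you invoke positional determinacy to obtain a memoryless strategy in the converse direction, whereas the paper works with an arbitrary winning strategy; this is a harmless convenience rather than a necessity, since execution-tree nodes are full histories, so an arbitrary strategy already induces a well-defined execution tree.
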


\begin{proof}
We prove a slightly stronger result:
\Pl0 has a winning strategy from a \kl{main state}~$(n,\state)$
in~$\GAT$ if, and only if, the input \kl{tree} rooted at~$n$ is
\kl{accepted} by the automaton~$\Aut$ with $\state$ considered as the
initial state (\ie, there exists an \kl(ET){accepting} \kl{execution
  subtree} rooted at a node labelled with~$(n,\state)$).

First assume that Player~$0$ has
a \kl(S){winning} \kl{strategy}~$\alpha_0$: by~pruning all subtrees
that are not selected by~$\alpha_0$, and removing non-main states,
we~get a tree which we can prove is
an \kl(ET){accepting} \kl{execution tree}: boolean operators of the
transition function are handled correctly, and from the states of the
form~$(n,\state,\EUpair(E;U))$, \kl{strategy}~$\alpha_0$ selects a
valid way of exploring the \kl{successor nodes}, so~as to fulfill
the \EUconstr, therefore the marking of $\succ(n)$ induced by this tree satisfies $(n,\Trans(\state,\lab(n)))$.
  By~definition of the priorities of the states
of~$\GAT$, all~infinite \kl{branches} are \kl(B){accepting} since
infinite paths are \kl(P){winning} for
Player~$0$. Finite \kl{branches} in the tree may only originates
from \kl{auxiliary states} of the form~$(n,\state,\bot)$
and~$(n,\state,\top)$, and of the form $(n,\state,\nu_n)$ for
which~$n$~has no \kl{successor nodes}. The~first two cases are
correctly handled by construction of the game (the~blocked player
loses). Since states of the form~$(n,\state,\nu_n)$ belong to
Player~$1$, they~are winning for Player~$0$ in case~$n$~has
no \kl{successor nodes}, \ie, in case $\nu_n$ is the \kl{marking} of the
empty set; but there may only be transitions
from~$(n,\state,\EUpair(E;U))$ to $(n,\state,\nu_n)$ with the empty
\kl{marking} when $\EUpair(E;U)$ is of the form~$\EUpair(\emptyset;U)$:
a~finite \kl{branch} of the \kl{execution tree} ending in a node
labelled~$(n,\EUpair(\emptyset;U)$ is indeed \kl(B){accepting}.

The converse is similar: given an \kl(ET){accepting} \kl{execution
tree}, we~can build a \kl(S){winning} \kl{strategy} for
Player~$0$. For~this, it~suffices to check which parts of disjunctive
formulas in transitions are satisfied, and how \EUconstrs are
fulfilled. The~winningness for infinite and finite paths then again
corresponds to the acceptance status of the corresponding \kl{branches} of the
\kl{execution tree}.
\end{proof}

Note that the subgames issued from $(n,\state,\vfi)$ and
$(n,\state',\vfi)$ (respectively, ~from~$(n,\state,\nu_n)$ and
$(n,\state',\nu_n)$) are isomorphic and admit the same wining
memoryless strategies. Therefore we do not distinguish them in the
following and consider only nodes of the form~$(n,\vfi)$
or~$(n,\nu_n)$.

When $\Tree$ is regular and corresponds to the \kl{execution tree} of some
\kl{Kripke structure}~$\calK=\tuple{V,E,\ell}$, we~can build
a \emph{finite} game
$\GAK=\tuple{\GState_0, \GState_1,R,\gprio}$ defined
exactly as above, but replacing nodes~$n$ of~$\tree$ with vertices~$v$
of~$V$.  The~sizes of~$\GState$ and of the transition relation~$R$ are
both in
$O(\size{V}\cdot(\size\State\cdot(1 + \sizeB{\Trans})+\size{\State}^{\textsf{arity}(\calK)}))$,
hence in $O(\size{V}\cdot( \size\State\cdot \sizeB{\Trans}+\size{\State}^{\size{V}}))$.

Moreover it is worth noticing that the complexity blow-up in the size
of the game is due to the treatment of \EUconstrs. For~automata
using only simple constraints of the form $\EUpair(\state\mapsto
1;\{\state_\top\})$ or $\EUpair(\emptyset;\{\state\})$ (which
correspond to \BDAutomata), the~number of \kl{unitary
markings} involved in reachable states of the form~$(n,\nu_n)$ in~$\GAK$ 
is~$O(\size V\cdot \size\State)$, so~that
the~sizes of~$Y$ and~$R$
are
in 
$O(\size{V}\cdot\size{\State}\cdot (\sizeB{\Trans}+\size V))$.

\section{Operations on \AATAs}
\label{sec-ops}

This section is the main technical part of our paper: we~develop
algorithms for performing various operations on \AAPTAs (namely union
and intersection, projection, complementation and alternation
removal), and carefully study the size of the \AAPTAs we obtain.

The~rest of this section gives detailed algorithms,
explanation of their correctness, and justifications for the sizes of
the resulting automata.

Section~\ref{ssec-summary} displays a table containing the sizes of
the automata obtained by our transformations.

\subsection{Union and intersection}
\label{ssec-union}\label{ssec-inters}

Union and intersection are straightforward for \AATAs, thanks to
\kl{alternation}. 
\begin{theorem}
  \label{thm-union}\label{thm-inters}
  Let $\Aut=\tuple{\State,\initstate,\Trans,\omega}$ and
  $\Aut'=\tuple{\State',\initstate',\Trans',\omega'}$ be two \AAPTAs.
  There exist \AAPTAs $\Aut_\cup$ and
  $\Aut_\cap$, respectively accepting the union and the intersection
  of the \kl{language} of~$\Aut$ and~$\Aut'$,
  and having size at most
  $\tuple{{{\size \State+\size{\State'}+1}},\penalty0\relax
  \sizeB\Trans+\sizeB{\Trans'}+1, \max(\sizeE\Trans,\sizeE{\Trans'}),
\max(\sizeU\Trans,\sizeU{\Trans'}),  \max(\priomax[\omega],\priomax[\omega'])+1}$.
\end{theorem}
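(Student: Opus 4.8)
The plan is to add a single fresh initial state on top of (disjoint copies of) the two given automata, and let it mimic $\initstate$ and $\initstate'$ combined by a disjunction (for the union) or a conjunction (for the intersection); closure is then immediate from \kl{alternation}. Concretely, assuming without loss of generality that $\State\cap\State'=\emptyset$, I set $\State''=\State\cup\State'\cup\{\initstate''\}$ with $\initstate''$ fresh, keep $\Trans$ and $\Trans'$ unchanged on the old states, and put, for every $a\in\Alp$,
\[
\Trans_\cup(\initstate'',a)=\Trans(\initstate,a)\ou\Trans'(\initstate',a)
\quad\mbox{and}\quad
\Trans_\cap(\initstate'',a)=\Trans(\initstate,a)\et\Trans'(\initstate',a).
\]
Since $\Trans(\initstate,a)$ only mentions \EUprs over~$\State$ and $\Trans'(\initstate',a)$ only \EUprs over~$\State'$, both formulas are well-formed \EUconstrs over~$\State''$. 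This directly yields the announced bounds: $\size{\State''}=\size\State+\size{\State'}+1$; the new transition adds a single boolean connective, so $\sizeB{\Trans''}\leq\sizeB\Trans+\sizeB{\Trans'}+1$, while all other transitions are unchanged; and no new \EUpr is introduced, so the existential and universal part sizes stay within $\max(\sizeE\Trans,\sizeE{\Trans'})$ and $\max(\sizeU\Trans,\sizeU{\Trans'})$.

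For correctness, fix an input \kl{tree}~$\Tree$. By the disjunction clause of $\models+$, a \kl{marking} of the successors of the \kl{root} satisfies $\Trans_\cup(\initstate'',\lab(\troot))$ iff it satisfies $\Trans(\initstate,\lab(\troot))$ or $\Trans'(\initstate',\lab(\troot))$; in the first case every root-successor carries only $\State$-states, and relabelling the root from $\initstate''$ to~$\initstate$ turns the \kl{execution tree} into a genuine execution tree of~$\Aut$ (symmetrically for~$\Aut'$). Hence $\Lang(\Aut_\cup)=\Lang(\Aut)\cup\Lang(\Aut')$. For $\Aut_\cap$, the conjunction clause forces the root marking to satisfy both $\Trans(\initstate,\cdot)$ and $\Trans'(\initstate',\cdot)$ simultaneously — note here the subtlety recalled after Def.~\ref{def-models+}, that the two conjuncts may be witnessed by \emph{different} unitary submarkings of the same root marking. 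Restricting to the $\State$-labelled successors gives an accepting $\Aut$-execution and restricting to the $\State'$-labelled ones an accepting $\Aut'$-execution; conversely two such executions merge at a common root~$\initstate''$. Thus $\Lang(\Aut_\cap)=\Lang(\Aut)\cap\Lang(\Aut')$.

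The only point needing genuine care — and the source of the ``$+1$'' on the number of \kl{priorities} — is the \kl{parity} condition. The key observation is that there is no transition from $\State$ to~$\State'$ nor back to~$\initstate''$, so every infinite \kl{branch} of any execution tree visits~$\initstate''$ exactly once (at the root) and thereafter stays entirely inside~$\State$ or entirely inside~$\State'$; its $\prio_{\min}$ is therefore computed within a single one of the two original automata, and the value assigned to $\prio(\initstate'')$ is irrelevant. It thus suffices to retain $\omega$ on~$\State$ and $\omega'$ on~$\State'$, give $\initstate''$ any already-used value, and then relabel the two priority functions by an order- and parity-preserving map so that they share as many values as possible. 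Such a relabelling never changes a min-parity verdict, so the combined condition is equivalent; and since the smallest priorities of the two automata may have opposite parities, aligning them can cost one extra value, giving $\priomax[\omega'']\leq\max(\priomax[\omega],\priomax[\omega'])+1$. This bookkeeping on priorities is really the whole substance of the argument, the rest being routine.
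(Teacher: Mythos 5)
Your construction is exactly the paper's: a fresh initial state $\initstate''$ on top of disjoint copies of $\Aut$ and~$\Aut'$, with the transition from $\initstate''$ being the disjunction (for union) or conjunction (for intersection) of the two initial transition formulas, and the same accounting for all five size parameters --- including the parity normalisation responsible for the ``$+1$'' on the number of priorities, which the paper dismisses as ``straightforward arguments'' and which your order- and parity-preserving relabelling spells out (it also matches Remark~\ref{rem-sizeomegaUI}). Where you genuinely differ is the correctness verification: the paper goes through the game-based semantics (Prop.~\ref{prop-gamesem}), arguing that a winning strategy for Player~$0$ from $(\troot,\initstate'')$ yields winning strategies from both $(\troot,\initstate)$ and $(\troot,\initstate')$ and conversely, whereas you manipulate execution trees directly, restricting an execution tree of $\Aut_\cap$ to its $\State$-labelled (resp.\ $\State'$-labelled) part to extract executions of the component automata, and merging two executions at a common root for the converse. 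Your route is more elementary --- it does not rely on the acceptance/games equivalence or on determinacy --- at the cost of handling the marking bookkeeping by hand; the paper's route outsources exactly that bookkeeping to the game construction. One detail you gloss over in the union case: the claim that ``every root-successor carries only $\State$-states'' is not automatic for an arbitrary accepting execution tree of $\Aut_\cup$, since dispensable $\State'$-labelled subtrees may be present. Either invoke the paper's convention that execution trees can be taken minimal, or argue as you correctly do for intersection: any unitary submarking witnessing an \EUpr over $\State$ assigns only $\State$-states, so every root-successor carries \emph{at least one} $\State$-state and the restriction to $\State$-labelled nodes is still a valid execution tree. This is a presentational slip, not a gap.
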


\begin{proof}
  We~consider intersection (the~case of union is similar):
  automaton~$\Aut_{\cap}$ is defined as $\tuple{\State'',
    \initstate'',\Trans'', \omega''}$ with
  \begin{itemize}
  \item $\State''=\State\cup\State'\cup\{\initstate''\}$ (assuming
    w.l.o.g. that all three states are pairwise disjoint);
  \item $\Trans''(\initstate'',\alp)=\Trans(\initstate,\alp) \et
    \Trans'(\initstate',\alp)$, and $\Trans''$ coincides with $\Trans$
    on $\State\times\Alp$ and with~$\Trans'$ on $\State'\times\Alp$;
  \item $\omega''$ coincides with $\omega$ on~$\State$ and with~$\omega'$
    on~$\State'$; its value in~$\initstate''$ is irrelevant since $\initstate''$ will be visited only once. Notice that, using straightforward arguments, we~may assume that $\omega(\State)$ and $\omega'(\State')$ are
    subintervals
    of~$\llbracket 0;\priomax[\omega]\rrbracket$ and~$\llbracket 0;\priomax[\omega']\rrbracket$, so that
    $\omega''(\State'')$ is a
    subinterval
    of $\llbracket 0; \max(\priomax[\omega],\priomax[\omega'])\rrbracket$.
  \end{itemize}

  The correctness of this construction is not hard to prove, using the \kl{game semantics}.
  Consider~a~tree~$\Tree=\tuple{\tree,\lab}$  accepted
  by~$\Aut_{\cap}$. By~Prop.~\ref{prop-gamesem}, Player~$0$ has a
  \kl(S){winning} \kl{strategy} from~$(\emptyw_\tree,
  \initstate'')$ in the corresponding game~$\GATcap$.  Since~$\Trans''(\initstate'',\allowbreak
  \lab(\emptyw_\tree))=\Trans(\initstate, \lab(\emptyw_\tree)) \et
  \Trans'(\initstate', \lab(\emptyw_\tree))$, there is a unique
  transition from~$(\emptyw_\tree, \initstate'')$ to the Player-1
  state $(\emptyw_\tree, \Trans''(\initstate'',
  \lab(\emptyw_\tree)))$; from~there, Player~1 can decide to move
  either to $(\emptyw_\tree, \Trans(\initstate, \lab(\emptyw_\tree)))$
  or to $(\emptyw_\tree, \Trans'(\initstate',
  \lab(\emptyw_\tree)))$. Since Player~$0$ has a \kl(S){winning} \kl{strategy}
  from~$(\emptyw_\tree, \initstate'')$, she also has \kl(S){winning}
  \kl{strategies} from both~$(\emptyw_\tree, \Trans(\initstate,
  \allowbreak \lab(\emptyw_\tree)))$ and $(\emptyw_\tree, \Trans'(\initstate',
  \lab(\emptyw_\tree)))$ in~$\GATcap$. Since $\Trans''$ coincides
  with~$\Trans$ on~$\State\times\Alp$ and with~$\Trans'$
  on~$\State'\times\Alp$, Player~$0$ has \kl(S){winning} \kl{strategies}
  from~$(\emptyw_\tree, \initstate)$ in~$\GAT$ and
  from~$(\emptyw_\tree, \initstate')$ in~$\GATp$. Hence $\Tree$ is \kl{accepted} by both~$\Aut$ and~$\Aut'$.

  The converse implication follows the same arguments.
\end{proof}

\begin{remark}\label{rem-sizeomegaUI}
  Note
  that if the minimum priority of~$\prio$ and~$\prio'$ are
  equal, then
  the number of priorities in~$\Aut_\cup$ and $\Aut_\cap$ can be
  bounded by $\max(\size\prio,\size{\prio'})$ instead of
  $\max(\size\prio,\size{\prio'})+1$.
\end{remark}

\subsection{Projection}
\label{ssec-proj}

Given an \AAPTA~$\Aut$ over alphabet~$\Alp_1\times\Alp_2$, \intro{projection}
consists in building another \AAPTA~$\Aut_1$, over alphabet~$\Alp_1$,
accepting all
\Strees{\Alp_1}
whose labelling can be extended
on~$\Alp_1\times\Alp_2$ to make the \kl{tree} \kl{accepted} by~$\Aut$.  This is a
classical construction, and it can be performed easily
on \kl{non-alternating} automata~\cite{MS85}.

Formally, two
\Strees{\Alp_1\times\Alp_2}~$\calT=\tuple{t,l}$
and~$\calT'=\tuple{t',l'}$ are said
\intro*\labequiv{\Alp_1},
denoted
$\calT\reintro*\slabequiv{\Alp_1} \calT'$, whenever $t=t'$ and for any node~$n$ of
these trees,
it~holds $\proj_1(l(n))=\proj_1(l'(n))$.
\labequivce{\Alp_2}~is defined analogously.

\begin{theorem}%
  \label{thm-proj}
  Let~$\Aut=\tuple{\State,\initstate,\Trans,\prio}$
  be an \nAAPTA over $\Alp=\Alp_1\times\Alp_2$. For
  each~$i\in\{1,2\}$, we~can build an \nAAPTA $\Aut_i$ over $\Alp$ such
  that, for any \Stree{\Alp}~$\calT$, it~holds:
  $\calT \in \Lang(\Aut_i)$ if, and only~if, there is a
  $\Alp$-labelled tree~$\calT'$ in~$\Lang(\Aut)$ such that
  $\calT \slabequiv{\Alp_i} \calT'$.  The~size of~$\Aut_i$ is
     at most
  $\tuple{\size\State, \size{\Alp_{3-i}}\cdot \sizeB\Trans, \sizeE\Trans,\sizeU\Trans,\priomax}$.
\end{theorem}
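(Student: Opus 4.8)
The plan is to leave the set of states, the initial state and the \kl{priority} function of~$\Aut$ untouched, and to modify only the transition function so that~$\Aut_i$ reads the $\Alp_i$-component of each node while \emph{guessing} an arbitrary $\Alp_{3-i}$-component. For $i=1$ I~would set
\[
  \Trans_1(\state,(a_1,a_2)) = \OU_{b\in\Alp_2} \Trans(\state,(a_1,b)),
\]
and symmetrically for $i=2$; note that the right-hand side does not depend on~$a_2$, which is exactly what it means to project the $\Alp_{3-i}$-component away. Because $\Aut$~is \kl{non-alternating}, each $\Trans(\state,\cdot)$ belongs to $\DBF{\EUset(\State)}$, and a disjunction of disjunctions of \EUprs is again a disjunction of \EUprs; hence $\Aut_i$~is again an \nAAPTA. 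The~claimed size bounds are then immediate: no state and no \kl{priority} is added, the \EUprs are reused verbatim (so $\sizeE{\Trans_i}=\sizeE\Trans$ and $\sizeU{\Trans_i}=\sizeU\Trans$), and each new transition is a disjunction of at most $\size{\Alp_{3-i}}$ formulas of boolean size at most $\sizeB\Trans$, giving $\sizeB{\Trans_i}\leq\size{\Alp_{3-i}}\cdot\sizeB\Trans$.

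For correctness I~would argue both implications directly on \kl{execution trees}, using the simplification recalled in Remark~\ref{rk-nonalt}: for a \kl{non-alternating} automaton every \kl(ET){accepting} \kl{execution tree} may be taken of the form $\tuple{\tree,\exlab}$ over the \kl{tree structure}~$\tree$ of the input, hence determined by a state-labelling $\rho\colon\tree\to\State$. For the ``if'' direction, suppose $\calT'=\tuple{\tree,\lab'}$ is \kl{accepted} by~$\Aut$ with $\calT=\tuple{\tree,\lab}\slabequiv{\Alp_i}\calT'$, and fix such an accepting~$\rho$. I~reuse the same~$\rho$ over~$\calT$: at each node~$n$, the \kl{marking} \kl{induced} on~$\succ(n)$ satisfies $\Trans(\rho(n),\lab'(n))$, and since $\proj_i(\lab'(n))=\proj_i(\lab(n))$, this formula is precisely the disjunct of $\Trans_i(\rho(n),\lab(n))$ indexed by $b=\proj_{3-i}(\lab'(n))$. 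As $\models+$ distributes over disjunction (Def.~\ref{def-models+}), the same \kl{marking} satisfies $\Trans_i(\rho(n),\lab(n))$; since $\rho$~and the \kl{priorities} are unchanged, \kl{parity acceptance} of every \kl{branch} is preserved, and $\calT\in\Lang(\Aut_i)$.

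The ``only if'' direction is where I~would concentrate the argument. Suppose $\calT=\tuple{\tree,\lab}$ is \kl{accepted} by~$\Aut_i$ with accepting state-labelling~$\rho$. At each node~$n$, the \kl{induced} \kl{marking}~$\mknu_{n}$ satisfies $\Trans_i(\rho(n),\lab(n))$, which is a disjunction of \EUprs; by the observation made right after Definition~\ref{def-models+}, satisfaction of such a disjunction amounts to satisfaction of a \emph{single} \EUpr, so there is a concrete $b(n)\in\Alp_{3-i}$ for which $\mknu_{n}\models+\Trans(\rho(n),c(n))$, where $c(n)$~is obtained from $\lab(n)$ by replacing its $(3-i)$-component with~$b(n)$. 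Setting $\calT'=\tuple{\tree,\lab'}$ with $\lab'(n)=c(n)$ yields a tree with $\calT\slabequiv{\Alp_i}\calT'$, and the very same~$\rho$ is an \kl(ET){accepting} \kl{execution tree} of~$\Aut$ over~$\calT'$, whence $\calT'\in\Lang(\Aut)$.

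The construction is entirely classical and, with Remark~\ref{rk-nonalt} in hand, the bookkeeping is routine; the only point that deserves care is that the choice of~$b(n)$ is made \emph{independently at each node}, which is legitimate precisely because $\Aut$~is \kl{non-alternating}: for a disjunction of \EUprs the relation $\models+$ reduces to fulfilling one \EUpr locally, so no global consistency between the guesses at different nodes is required, and the \kl{parity} condition is never touched since neither states nor \kl{priorities} change. This local reasoning would fail for genuinely \kl{alternating} transitions, where a single \kl{submarking} need not satisfy the whole \EUconstr, which is exactly why \kl{projection} is performed on \kl{non-alternating} automata.
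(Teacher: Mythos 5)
Your proposal is correct and follows essentially the same route as the paper: the identical transition function $\Trans_i(\state,\cdot)=\OU_{b\in\Alp_{3-i}}\Trans(\state,\cdot[b])$, the same size bookkeeping, and the same two-way argument on \kl{execution trees} over the input's \kl{tree structure}, relabelling each node with the locally witnessed letter~$b(n)$ in one direction and reusing the execution tree verbatim in the other. Your added remarks on why \kl{non-alternation} is essential (local, independent choices of~$b(n)$ via the disjunctive form of~$\models+$) match the paper's subsequent counterexample-based remark and are a nice explicit touch, but the core proof is the same.
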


\begin{proof}
We~define $\Aut_1$ over~$\Alp$
as $\tuple{\State,\initstate,\Trans_1,\prio}$ with:
\[
\Trans_1(q,(\alp_1,\alp_2)) = \OU_{\alp'_{2} \in \Alp_{2}}
  \Trans(q, (\alp_1, \alp'_2)).
\]

Take a tree~$\Tree=\tuple{\tree,\lab}$  accepted
by~$\Aut_1$, and pick an accepting execution
tree~$\ExTree=\tuple{\tree,\exlab}$ (automaton~$\calA_1$ is
non-alternating, so~we~can assume that the execution tree has the same
structure as the input~tree). Consider any node~$n$ of~$\tree$,
labelled with~$\lab(n)=(\sigma_1,\sigma_2)$ in~$\Tree$ and
with~$\exlab(n)=(n, \state)$ in~$\ExTree$. By~definition of~$\delta_1$,
there exists $\sigma^n_2\in\Sigma_2$ such that the successors of~$n$
in~$\ExTree$ satisfy $\delta(\state,(\sigma_1,\sigma^n_2))$. This~holds for
all nodes of~$\tree$, meaning that each node~$n$ can be relabelled
with~$(\sigma_1,\sigma_2^n)$ in such a way that this new tree~$\Tree'$ is
accepted by~$\Aut$.

Conversely, assume that tree~$\calT$ admits an
\labequiv{\Alp_1}
\kl{tree}~$\calT'$ that is accepted by
the~\nAAPTA~$\Aut$, and take an accepting execution
tree~\ExTree. By~construction of~$\delta_1$, this~execution tree is
also an accepting execution tree for~$\Aut_1$ on~$\calT$.
\end{proof}

\begin{remark}
This construction does not directly extend to alternating automata: 
indeed, let ${\Sigma_1 = \{a_1\}}$ and
${\Sigma_2=\{a_2, a_2'\}}$, and consider the \AAPTA
$\Aut = \tuple{\State, \initstate, \Trans, \prio}$ on $\Sigma_1\times\Sigma_2$ with:
\begin{itemize}
\item $\State=\{\initstate, \state_1, \state_2\}$, 
\item The transition  function is defined as follows (for any~$\sigma\in\Sigma_1\times\Sigma_2$):
\begin{xalignat*}3
\delta(\initstate,(a_1,a_2)) &= \delta(\initstate,(a_1,a'_2)) = \EUpair(\state_1 \mapsto 1;\emptyset) \et \EUpair(\state_2 \mapsto 1;\emptyset)   \\
\delta(\state_1,(a_1,a_2)) &= \delta(\state_2,(a_1,a'_2)) = \top  \\
 \delta(\state_1,(a_1,a'_2)) &= \delta(\state_2,(a_1,a_2)) = \bot 
\end{xalignat*}
\end{itemize}
We~let $\omega(\state)=0$ for
all  states.  Now,
for a \kl{tree} to be \kl{accepted}, its~\kl{root} must be labelled with~$(a_1,a_2)$ or $(a_1,a'_2)$ 
and have a single \kl{successor node}. That node will be explored both in
state~$\state_1$ and~$\state_2$, and for any~$\sigma$, either $\delta(\state_1,\sigma)$ or $\delta(\state_2,\sigma)$ is equal to~$\bot$. 
It~follows that $\calL(\Aut)=\emptyset$, hence also $\calL(\Aut_i)=\emptyset$ for $i\in\{1,2\}$. 

Now, applying our construction to this automaton
on the first component provides us with
an automaton~$\Aut_1$ with the following transition function: 

\begin{xalignat*}2
\delta_1(\initstate,\sigma) &=  \EUpair(\state_1 \mapsto 1;\emptyset) \et \EUpair(\state_2 \mapsto 1;\emptyset) & \forall\alp\in\Alp_1\times\Alp_2  \\
\delta_1(\state_1,\sigma) &= \delta(\state_1,(a_1,a_2)) \ou  \delta(\state_1,(a_1,a'_2)) =  \top \ou \bot = \top  \\
 \delta_1(\state_2,\sigma) &= \delta(\state_2,(a_1,a_2)) \ou \delta(\state_2,(a_1,a'_2))=  \bot \ou \top = \top 
\end{xalignat*}
This automaton accepts any \kl{tree} on~$\Sigma_1\times\Sigma_2$ whose
root has a single \kl{successor}.
This shows that Theorem~\ref{thm-proj} does not hold for alternating automata.

For similar reasons, this theorem does not extend
to \emph{universal} projection, where $\Aut_i$~would accept the trees
for which \emph{all} $\Alp_i$-equivalent trees would be accepted
by~$\Aut$.
\end{remark}

\subsection{Complementation}
\label{ssec-compl}

\intro{Complementation} is the operation of building an automaton
accepting the complement of the language accepted by some given automaton.
It~is usually easy for \kl{alternating} \kl{parity}
automata: it~suffices to dualise the transition function (swapping
disjunctions and conjunctions) and shifting the \kl{priorities}.
Such~a~construction is given in~\cite{Kir02}
for~\BDAutomata: in~that setting, $\Box$~and~$\Diamond$
are dual to each other, and the construction is
straightforward. The~same is true for \MSOAutomata~\cite{Wal02,Zan12}.
For~our~\AATA however, 
we~need to express the negation of any \EUpr~$\EUpair(E;U)$ as
an \EUconstr.

The question then is to characterise those nodes that \emph{fail to
satisfy} an \EUpr~${\EUpair(E;U)}$.
There can be two reasons for this, which we develop in the sequel:
\begin{enumerate}
\item either we cannot find $\MSsize E$ successors of the current node~$n$
  to associate with the $\MSsize E$ states of the existential part,
\item or for every way to satisfy~$E$ with nodes in~$\succ(n)$, 
  there~remain \kl{successors} that are \kl{accepted} by no states
  in~$U$.
  Equivalently, no~sets of nodes in~$\succ(n)$ that contain all the nodes
  that are not \kl{accepted} by any state in~$U$, can exactly match~$E$.
  This includes as a special case
  the situations where we have more than $\MSsize E$ successors that are
  \kl{accepted} by no states in~$U$.
\end{enumerate}

\subsubsection{Failing to satisfy the existential part of $\protect\EUpair(E;U)$.}
We first address the former situation, which is easier and
already contains most of the technicalities we need for solving the
general case.

Fix an \AAPTA $\Aut=\tuple{\State, \initstate,\Trans,\prio}$
over~$\Alp$.
We~assume w.l.o.g.\ that~$\Aut$ has a state~$\state_\top$ from which all
\kl{trees} are \kl{accepted}.
For a state~$\state\in\State$, we~write~$\Aut_\state$ for the
\AAPTA $\tuple{\State, \state,\Trans,\prio}$, obtained from~$\Aut$
by taking~$\state$ as the initial state.
For~a \kl{multiset}~$E$ over~$\State$,
we~define the \AAPTA $\Aut_{\suc E}=\tuple{\State\cup\{\state_{E}\},
  \state_{E}, \Trans_{E}, \prio_{E}}$ such that
\begin{itemize}
\item $\state_{E}$ is a new state, not in~$\State$; 
\item for any~$\alp\in\Alp$,
  $\Trans_{E}(\state_{E},\alp)=\EUpair(E;\{\state_\top\})$ and
  $\Trans_{E}(\state,\alp)=\Trans(\state,\alp)$ for
  all~$\state\not=\state_{E}$;
\item the priority function~$\prio_E$ coincides with~$\prio$ on~$\State$,
  and $\prio_E(\state_{E})=0$.
\end{itemize}

Automaton~$\Aut_{\suc E}$ visits some~of the \kl{successors} of the \kl{root}
in each of the states of~$E$.
Notice that if~$E=\emptyset$, then $\Aut_{\suc E}$ \kl{accepts} any \kl{tree}.
Notice also that the automaton $\Aut_{\state}$ defined above
is \emph{not} the same automaton as~$\Aut_{\suc E}$ with
$E=\{\state\}$: automaton~$\Aut_\state$ starts exploring the
\kl{root}~$\emptyw_\Tree$ of its input \kl{tree} in state~$\state$, whereas
$\Aut_{\suc \{q\}}$  explores the subtree rooted at some
\kl{successor node} of~$\emptyw_\Tree$ in state~$\state$.

In the sequel, given a tree~$\Tree$ and a node~$n$, we~write $\Tree_n$
for the subtree of~$\Tree$ rooted at~$n$.  We~call \intro{direct
subtree} of~$\Tree$ any subtree $\Tree_n$ where $n\in\succ(\troot_\Tree)$.

\medskip

\AP
Our~aim in this part is to compute an \AAPTA~$\CAut_{E}$
such that
$\Lang(\CAut_{E})$ is the complement of~$\Lang(\Aut_{\suc E})$.
This~construction is the main ingredient for complementing
an \kl{\EU-automaton}.
It~is based on  the notion of \kl{blocking pairs}:
\begin{proposition}\label{prop-bp}

Let $\Tree$ be an input \kl{tree}. If~$\Tree$ is not \kl{accepted} by~$\Aut_{\suc E}$,
then there exists a \kl{submultiset} $F\submultisetneq E$ and a state~$g\in E\msminus
F$ such that $\Tree$ is \kl{accepted} by~$\Aut_{\suc F}$ and is \kl{rejected}
by~$\Aut_{\suc{F\uplus \{g\}}}$.
\end{proposition}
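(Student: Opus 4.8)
The plan is to run a maximality argument over the finite poset of submultisets of~$E$ ordered by~$\submultiset$. Two facts anchor everything. First, as noted just above, $\Aut_{\suc \emptyset}$ accepts every tree, so the empty multiset always yields acceptance of~$\Tree$. Second, by assumption $\Tree$ is rejected by~$\Aut_{\suc E}$, so the full multiset~$E$ does not. The blocking pair we are after lives exactly on the frontier between those submultisets of~$E$ that accept~$\Tree$ and those that reject~it, and the whole point is that such a frontier must exist.

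Concretely, I would consider the set
\[
S = \{ F \submultiset E \mid \Tree \text{ is accepted by } \Aut_{\suc F} \}.
\]
Since the size of~$E$ is bounded (it is at most~$\sizeE\Trans$), $E$~has finitely many submultisets, so $S$~is finite. By the first anchoring fact it contains~$\emptyset$, and by the second it does not contain~$E$. I would then select a~$\submultiset$-maximal element~$F$ of~$S$, which is possible because $S$~is a nonempty finite poset. As $F \in S$ but $E \notin S$, we have $F \neq E$, hence $F \submultisetneq E$; consequently there is a state~$g$ with $F(g) < E(g)$.

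It then remains only to verify that $(F,g)$ is the wanted blocking pair. From $F(g) < E(g)$ we get $(E\msminus F)(g) = E(g) - F(g) > 0$, hence $g \in E \msminus F$, and incrementing the multiplicity of~$g$ by one keeps us below~$E$, so $F \uplus \{g\} \submultiset E$. By the very choice of~$F \in S$, the tree~$\Tree$ is accepted by~$\Aut_{\suc F}$. Finally $F \uplus \{g\}$ is a submultiset of~$E$ strictly above~$F$, so maximality of~$F$ in~$S$ forces $F \uplus \{g\} \notin S$, that is, $\Tree$ is rejected by~$\Aut_{\suc{F\uplus \{g\}}}$. This is precisely the asserted conclusion.

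I do not expect a genuine obstacle here: the combinatorial maximality principle over the finite lattice of submultisets of~$E$ does all the work, and notably the argument needs no monotonicity of acceptance with respect to~$\submultiset$. The only places requiring a moment's care are the two anchoring facts, but the first is immediate from the definition of~$\Aut_{\suc \emptyset}$ and the second is just the hypothesis. The real combinatorial content of this part of the paper---relating acceptance by~$\Aut_{\suc F}$ to matchings of the states of~$F$ against successors of the root whose subtrees are accepted---is deferred to the ensuing construction of the complement automaton~$\CAut_E$, and is not needed for this proposition.
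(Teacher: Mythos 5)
Your proof is correct, and it is essentially the paper's own argument in dual form: the paper inducts on $\MSsize{E}$, removing states from~$E$ one at a time until the remaining submultiset accepts~$\Tree$, whereas you ascend to a $\submultiset$-maximal accepting submultiset~$F$ and step up by a single state~$g$. Both arguments rely on exactly the same two anchors ($\Aut_{\suc \emptyset}$ accepts every tree, $\Aut_{\suc E}$ rejects~$\Tree$) together with the finiteness of the lattice of submultisets of~$E$, and on nothing else.
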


\begin{definition}
\AP A pair $(F,g)$ satisfying the conditions of Prop.~\ref{prop-bp} is
called a \intro{blocking pair} for~$\Tree$ and $E$.
\end{definition}

\begin{proof}[Proof of Prop.~\ref{prop-bp}]
The proof is by induction on~$\MSsize E$: the~result holds vacuously
if~$E$ is empty, and it is trivial if $\MSsize E=1$. It~then suffices to
observe that for any $g\in E$, either $(E\msminus\{g\},g)$ is a
\kl{blocking pair} for~$\Tree$, or such a \kl{blocking pair} can be found
in~$E\msminus \{g\}$ (by~induction).
\end{proof}

\AP
We now focus on \intro(BP){minimal} \kl{blocking pairs} for~$\Tree$, \ie,
pairs~$(F,g)$ such that $(F\msminus\{g'\},g)$ is not a \kl{blocking pair}
of~$\Tree$, for any~$g'\in F$.
We~will prove that if $(F,g)$ is a \kl(BP){minimal} \kl{blocking pair} for~$\Tree$,
then in any \kl(ET){accepting} \kl{execution tree}~$\ExTree$ of~$\Aut_{\suc F}$ on~$\Tree$,
the~\kl{direct subtrees}
that are not used to fulfill~$F$
can be accepted by no states in~$\supp(F)\cup \{g\}$.

\begin{proposition}\label{prop-reduceBP}
Let $(F,g)$ be a \kl{blocking pair} for some \kl{tree}~$\Tree$.
Let~$\ExTree=\tuple{\extree,\exlab}$ be a \kl{minimal} \kl(ET){accepting}
\kl{execution tree}
of~$\Aut_{\suc F}$ on~$\Tree$, $\nu$ be the
corresponding \kl{unitary marking}
of $\succ(\emptyw_\tree)$ by~$\State$,
satisfying $\nu\models+ \EUpair(F;\{\state_\top\})$.
If~there exists a node $y$ in $\succ(\varepsilon_{\tree})$ such that
(1)~$\nu(y)=q_\top$ and (2)~$\Tree_y \in
\Lang(\Aut_{g'})$ for $g'\in F$, then $(F\msminus \{g'\},g)$ is a
\kl{blocking pair} for~$\Tree$.
\end{proposition}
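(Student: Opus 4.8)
The plan is to verify the four conditions that make $(F\msminus\{g'\},g)$ a \kl{blocking pair} for~$\Tree$ and~$E$; write $F'=F\msminus\{g'\}$ and $G=F'\uplus\{g\}$, and record the identity $F\uplus\{g\}=G\uplus\{g'\}$. Two of the conditions are purely combinatorial: from $F'\submultiset F\submultiset E$ we get $F'\submultiset E$, and since $F'\submultiset F$ gives $E\msminus F\submultiset E\msminus F'$, the hypothesis $g\in E\msminus F$ yields $g\in E\msminus F'$. The substance lies in the two remaining conditions, $\Tree\in\Lang(\Aut_{\suc{F'}})$ and $\Tree\notin\Lang(\Aut_{\suc G})$.

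The key tool I would introduce is a reformulation of \kl{acceptance} by the automata $\Aut_{\suc D}$ as a bipartite saturation condition on the \kl{successors} of the \kl{root}. For $z\in\succ(\emptyw_\tree)$, let $\Tree_z$ be the subtree rooted at~$z$ and set $\mathrm{Acc}(z)=\{\state\in\State\mid \Tree_z\in\Lang(\Aut_\state)\}$, so that $\state_\top\in\mathrm{Acc}(z)$ for every~$z$. Since the transition of the initial state of $\Aut_{\suc D}$ is the single \EUpr~$\EUpair(D;\{\state_\top\})$, the \kl{tree}~$\Tree$ is \kl{accepted} by $\Aut_{\suc D}$ if, and only if, one can assign to each occurrence of a state in the \kl{multiset}~$D$ a \emph{distinct} \kl{successor}~$z$ with that state in~$\mathrm{Acc}(z)$, exploring every remaining \kl{successor} in~$\state_\top$; I call such an assignment a \emph{saturation} of~$D$. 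As $\succ(\emptyw_\tree)$ is finite, Hall's theorem states that $D$ is saturable if, and only if, $\sum_{\state\in S}D(\state)\le\size{N(S)}$ for every $S\subseteq\State$, where $N(S)=\{z\in\succ(\emptyw_\tree)\mid \mathrm{Acc}(z)\cap S\neq\emptyset\}$. The hypothesis $\nu\models+\EUpair(F;\{\state_\top\})$ with $\nu$ \kl{unitary} says exactly that the \kl{marking}~$\nu$ supplied by~$\ExTree$ is a saturation of~$F$ (it assigns precisely $F(\state)$ \kl{successors} to each non-$\state_\top$ state~$\state$), and hypothesis~(1) places~$y$ among the \kl{successors} left to~$\state_\top$.

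Condition $\Tree\in\Lang(\Aut_{\suc{F'}})$ is then immediate by monotonicity: restricting~$\nu$ to the occurrences of $F'\submultiset F$ is a saturation of~$F'$. For the last condition I would argue by contradiction. Assume $\Tree\in\Lang(\Aut_{\suc G})$, so that~$G$ is saturable; since $F\uplus\{g\}=G\uplus\{g'\}$ and $\Tree\notin\Lang(\Aut_{\suc{F\uplus\{g\}}})$ by the \kl{blocking pair} hypothesis on~$(F,g)$, it suffices to derive that $F\uplus\{g\}$ is saturable to reach a contradiction. Suppose then that $F\uplus\{g\}$ is \emph{not} saturable. By Hall there is a set~$S$, which we may take disjoint from~$\{\state_\top\}$ (otherwise $N(S)=\succ(\emptyw_\tree)$, whose $\MSsize F+1=\MSsize{F\uplus\{g\}}$ or more elements already meet the demand), with $\sum_{\state\in S}\bigl(F(\state)+[\state=g]\bigr)>\size{N(S)}$. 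Combined with $\sum_{\state\in S}F(\state)\le\size{N(S)}$ (saturability of~$F$), this forces $g\in S$ and the \emph{tightness} $\size{N(S)}=\sum_{\state\in S}F(\state)$. Feeding the same~$S$ into the saturability of~$G$ and using $G(\state)=F(\state)-[\state=g']+[\state=g]$ gives $\sum_{\state\in S}F(\state)+1-[g'\in S]\le\size{N(S)}=\sum_{\state\in S}F(\state)$, hence $g'\in S$.

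Finally, $g'\in\mathrm{Acc}(y)$ by hypothesis~(2), so $g'\in S$ places~$y$ in~$N(S)$; but tightness means that the $\sum_{\state\in S}F(\state)=\size{N(S)}$ \kl{successors} which~$\nu$ assigns to states of~$S$ are distinct, lie in~$N(S)$, and are that many, hence exhaust~$N(S)$, so no element of~$N(S)$ is left to~$\state_\top$ — contradicting that $\nu(y)=\state_\top$. Thus $F\uplus\{g\}$ is saturable, which is the sought contradiction, and the proof is complete. I expect the main obstacle to be the saturation/Hall reformulation itself, and above all the realisation that the single spare \kl{successor}~$y$ able to host~$g'$ is exactly what rules out the would-be deficient set~$S$, which the two saturability assumptions force to contain both~$g$ and~$g'$; the accompanying arithmetic, and the degenerate case $g=g'$ (where the hypotheses already let one extend~$\nu$ by $y\mapsto g$ and contradict the \kl{blocking pair}, so that the statement holds vacuously and is in any case subsumed by the argument above), are routine by comparison.
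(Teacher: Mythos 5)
Your proof is correct, and it reaches the conclusion by a genuinely different route than the paper. The paper stays at the level of execution trees and runs an exchange argument: assuming $(F\msminus\{g'\},g)$ is not blocking, it takes an accepting execution tree of $\Aut_{\suc{F\msminus\{g'\}\uplus\{g\}}}$ with induced marking $\nu'$, fixes witness sets $N$ (with $\nu(N)=F$) and $N'$ (with $\nu'(N')=F\msminus\{g'\}\uplus\{g\}$), all chosen so as to maximise $\size{N\cap N'}$; if $N\subseteq N'$ it appends an execution tree of $\Aut_{g'}$ on $\Tree_y$ to contradict the blocking-pair hypothesis, and otherwise it swaps one root-level subtree of the new execution tree for one of the old, increasing $\size{N\cap N'}$ and contradicting maximality. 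You instead make the combinatorics explicit: root-level acceptance of $\Aut_{\suc D}$ is saturability of $D$ in a bipartite graph, Hall's theorem turns the desired non-saturability of $F\uplus\{g\}$ into a deficient set $S$ (w.l.o.g.\ avoiding $\state_\top$), and arithmetic forces $g\in S$ together with the tightness $\size{N(S)}=\sum_{\state\in S}F(\state)$, then $g'\in S$ via the assumed saturability of $F\msminus\{g'\}\uplus\{g\}$; finally $y\in N(S)$ clashes with tightness, since every node of $N(S)$ is already $\nu$-matched to a state of $S$ while $\nu(y)=\state_\top$. The two arguments are cousins---the paper's swap is exactly the exchange step one would use to prove Hall's theorem by hand---but yours buys rigour at the delicate spot: the paper's claim that some $n'\in N'\setminus N$ with $\nu'(n')=q$ must exist is supported only by a counting remark that is not immediate (the markings $\nu$ and $\nu'$ need not agree on $N\cap N'$, so comparing multiplicities of $q$ in $\nu(N)$ and $\nu'(N')$ does not directly yield it), whereas your tightness argument needs no such step; you also verify the two purely combinatorial blocking-pair conditions ($F\msminus\{g'\}\submultiset E$ and $g\in E\msminus(F\msminus\{g'\})$) that the paper leaves implicit, and you handle the degenerate case $g=g'$ uniformly. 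One caveat, which you share with the paper rather than introduce: both proofs (and, read strictly, the statement itself) implicitly assume $\state_\top\notin\supp(F)$, so that $\nu(y)=\state_\top$ really means $y$ is spare---the paper uses this in its claim ``hence $y\notin N$'', and you use it both for the bound $\size{\succ(\emptyw_\tree)}\geq \MSsize{F}+1$ when ruling $\state_\top$ out of $S$ and in the final contradiction.
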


\begin{proof}
  Assume that such a node~$y$ exists.  The~\kl{execution
  tree}~$\ExTree$ witnesses the fact that $\Tree$ is \kl{accepted}
  by~$\Aut_{\suc{F\msminus \{g'\}}}$. If~$(F\msminus\{g'\},g)$ were
  not a \kl{blocking pair} for~$\Tree$, then $\Tree$~would be \kl{accepted} 
  by~$\Aut_{\suc{F\msminus \{g'\}\uplus\{g\}}}$.  In~that~case,
  let~$\ExTree'=\tuple{\extree',\exlab'}$ be a \kl{minimal} \kl(ET){accepting}
  \kl{execution tree} of~$\Aut_{\suc{F\msminus \{g'\}\uplus \{g\}}}$
  over~$\Tree$,
  and $\nu'$ be a \kl{unitary marking}
  of~$\succ(\emptyw_\tree)$ by~$\State$ induced by~$\ExTree'$, which satisfies
  $\nu'\models+ \EUpair(F\msminus \{g'\}\uplus\{g\};\{\state_\top\})$.
  Let~$N$ and~$N'$ be 
  subsets of~$\succ(\emptyw_\tree)$ such that
  $\nu(N)=F$ (hence~${y\notin N}$) and
  $\nu'(N')=F\msminus \{g'\}\uplus\{g\}$.
  We~pick~$\ExTree'$, $N$ and~$N'$ so as to maximize the size of~$N\cap N'$.

  Then $\size N=\size{N'}$, but $N\not\subseteq N'$: indeed, if
  $N\subseteq N'$, then $N=N'$, and $y\notin N'$; then $\ExTree'$
  could be extended with an \kl{execution tree} of~$\Aut_{g'}$ on the
  subtree~$\Tree_y$ entailing that $\Tree$ would be \kl{accepted}
  by~$\Aut_{\suc{F\uplus\{g\}}}$.
  Pick $n\in N\setminus N'$, and a corresponding
  node~$x\in\succ(\emptyw_{\extree})$ be such that
  $\exlab(x)=(n,\state)$ for some~$\state\in F$; then $\state\not=g'$,
  as otherwise $\ExTree'$ could again be extended into an \kl(ET){accepting}
  \kl{execution tree} of~$\Aut_{\suc{F\uplus\{g\}}}$ over~$\Tree$. Since
  $\nu'(N')=F\msminus\{g'\}\uplus\{g\}$,
  there must exist a
  node~$n'$ in~$N'\setminus N$ and a corresponding
  node~$x'\in\succ(\emptyw_{\extree'})$
  with $\exlab(x')=(n',q)$ (if~this were
  not the case, then $\nu(N)$ would contain more copies of~$q$
  than~$\nu'(N')$~does).
  Replacing the subtree rooted at~$x'$ in~$\ExTree'$
  with the subtree rooted at~$x$ in~$\ExTree$, we~get a \kl{minimal}
  \kl(ET){accepting}
  \kl{execution tree}~$\ExTree''=\tuple{\extree'',\exlab''}$
  of~$\Aut_{\suc{F\msminus\{g'\}\uplus\{g\}}}$
  over~$\Tree$
  with \kl{induced} \kl{unitary marking}~$\nu''$,
  and a set
  $N''\subseteq \succ(\emptyw_\tree)$
  such that $\nu''(N'')=F\msminus\{g'\}\uplus\{g\}$, but
  for which $\size{N\cap N''}$ is larger than~$\size{N\cap N'}$,
  contradicting
  our choice of~$\ExTree'$, $N$ and~$N'$.
\end{proof}

Now when some tree $\Tree$ is not \kl{accepted} by~$\Aut_{\suc E}$, we can ensure
the existence of a \kl{blocking pair}~$(F,g)$ such that for any node~$n$
in~$\succ(\emptyw_\Tree)$
that is not involved in the satisfaction of~$F$, the~subtree rooted at~$n$
is not
\kl{accepted} by any automaton~$\Aut_{g'}$, for $g'\in F \uplus \{g\}$.
Formally, we have:

\begin{proposition}
\label{prop-compl-bpm} %
If $\Tree$ is not \kl{accepted} by~$\Aut_{\suc E}$, then there exists
a \kl{blocking pair}~$(F,g)$ such that for any \kl{minimal}
\kl(ET){accepting}
\kl{execution tree}~$\ExTree$ of~$\Aut_{\suc{F}}$ over~$\Tree$,
writing~$\nu$ for the \kl{unitary marking} of~$\succ(\emptyw_\tree)$
\kl{induced} by~$\ExTree$, it~holds: for~any~$y \in \succ(\emptyw_\tree)$
with $\nu(y)=\state_\top$, the~subtree~$\Tree_y$ is \kl{rejected}
by~$\Aut_{g'}$ for any $g' \in F\uplus\{g\}$.
\end{proposition}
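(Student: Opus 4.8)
The plan is to combine Proposition~\ref{prop-bp}, which yields a \kl{blocking pair}, with the reduction afforded by Proposition~\ref{prop-reduceBP}, and then read off the rejection property from a suitably \kl(BP){minimal} pair. Since $\Tree$ is not \kl{accepted} by~$\Aut_{\suc E}$, Proposition~\ref{prop-bp} produces at least one \kl{blocking pair} for~$\Tree$ and~$E$. Among all such pairs I would select one, say~$(F,g)$, whose existential part~$F$ has minimal size~$\MSsize F$. Then for every state $g'\in F$ the pair $(F\msminus\{g'\},g)$ has a strictly smaller existential part, so by minimality it cannot be a \kl{blocking pair}; hence $(F,g)$ is a \kl(BP){minimal} \kl{blocking pair} in the sense introduced before Proposition~\ref{prop-reduceBP}.

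Now fix an arbitrary \kl{minimal} \kl(ET){accepting} \kl{execution tree}~$\ExTree$ of~$\Aut_{\suc F}$ over~$\Tree$, let $\nu$ be the \kl{unitary marking} of~$\succ(\emptyw_\tree)$ \kl{induced} by~$\ExTree$ (so $\nu\models+\EUpair(F;\{\state_\top\})$), and take any $y\in\succ(\emptyw_\tree)$ with $\nu(y)=\state_\top$. I would show that $\Tree_y$ is \kl{rejected} by~$\Aut_{g'}$ for every $g'\in F\uplus\{g\}$, splitting into two cases. For $g'\in F$: since $(F,g)$ is \kl(BP){minimal}, $(F\msminus\{g'\},g)$ is not a \kl{blocking pair}, so the contrapositive of Proposition~\ref{prop-reduceBP} forbids a node~$y$ with $\nu(y)=\state_\top$ and $\Tree_y\in\Lang(\Aut_{g'})$; thus $\Tree_y$ is \kl{rejected} by~$\Aut_{g'}$. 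Because Proposition~\ref{prop-reduceBP} is stated for an \emph{arbitrary} \kl{minimal} \kl(ET){accepting} \kl{execution tree}, this conclusion holds uniformly over all such~$\ExTree$, which is precisely the universal quantification the statement demands.

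For the remaining case $g'=g$ I would argue directly. Suppose, for contradiction, that $\Tree_y\in\Lang(\Aut_g)$. As $y$ is a leftover \kl{successor} explored only in~$\state_\top$, it is not used to fulfil~$F$, so I can augment~$\ExTree$ by adding one more \kl{successor} to its \kl{root}, exploring~$y$ in state~$g$, and placing under it an \kl(ET){accepting} \kl{execution subtree} of~$\Aut_g$ on~$\Tree_y$. The resulting \kl{marking} of~$\succ(\emptyw_\tree)$ admits the \kl{unitary} \kl{submarking} witnessing~$F$ as before together with $y\mapsto g$, hence satisfies $\EUpair(F\uplus\{g\};\{\state_\top\})$; this yields an \kl(ET){accepting} \kl{execution tree} of~$\Aut_{\suc{F\uplus\{g\}}}$ over~$\Tree$, contradicting that $\Tree$ is \kl{rejected} by~$\Aut_{\suc{F\uplus\{g\}}}$ (which holds because $(F,g)$ is a \kl{blocking pair}). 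Therefore $\Tree_y$ is \kl{rejected} by~$\Aut_g$ as well.

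I expect the main obstacle to be this last case together with the requirement that the property hold for \emph{every} \kl{minimal} \kl(ET){accepting} \kl{execution tree}: one must check that the augmentation by the extra copy of~$y$ in state~$g$ neither disturbs the \kl{submarking} witnessing~$F$ nor relies on a particular~$\ExTree$, so that both the $g'\in F$ reduction and the $g'=g$ extension remain valid whichever minimal \kl(ET){accepting} \kl{execution tree} is considered.
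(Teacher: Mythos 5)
Your proposal is correct and follows essentially the same route as the paper: pick a blocking pair whose existential part is minimal (the paper minimises for inclusion, you for cardinality — both yield \kl(BP){minimal} pairs), then handle $g'\in F$ via (the contrapositive of) Proposition~\ref{prop-reduceBP} and $g'=g$ by directly extending the \kl{execution tree} to contradict rejection by $\Aut_{\suc{F\uplus\{g\}}}$. The paper's proof is exactly this two-case argument, so no further comparison is needed.
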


\begin{proof}
Consider a \kl{blocking pair}~$(F,g)$ for~$\Tree$ where $F\submultiset E$ is
minimal (for~inclusion).
Let~$\ExTree$ be a \kl{minimal} \kl(ET){accepting} \kl{execution tree}
of~$\Aut_{\suc F}$ over~$\Tree$, $\nu$~be~the \kl{induced} \kl{unitary marking}
of~$\succ(\emptyw_\tree)$, and
$N\subseteq \succ(\emptyw_\tree)$ such that $\nu(N)=F$.

For any node~$y\in  \succ(\emptyw_\tree)\setminus N$,
if the subtree~$\Tree_y$ were \kl{accepted} by~$\Aut_g$, then $\Tree$ would be
\kl{accepted} by~$\Aut_{\suc{F\uplus\{g\}}}$.
Similarly, 
if~the subtree~$\Tree_y$
were \kl{accepted} by~$\Aut_{g'}$, for~any~${g'\in F}$, then by
Prop.~\ref{prop-reduceBP}, ${(F\msminus \{g'\},g)}$ would be
a \kl{blocking pair}, contradicting minimality of~$(F,g)$.
\end{proof}

\smallskip

Following this result, we will build the complement automaton~$\CAut_{E}$
of~$\Aut_{\suc E}$ by checking the existence of a \kl{blocking pair} satisfying
the conditions of Prop.~\ref{prop-compl-bpm}: the~transition from the
initial state will be a disjunction, over all pairs~$(F,g)$,
of~\EUprs $\EUpair(F; \{{(\overline{\supp(F)}\cup\{\overline g\} ,\et)} \})$,
where $(\overline{\supp(F)}\cup\{\overline g\},\et )$ denotes a new
state \kl{accepting} all \kl{trees} that do not belong to~$\Lang(\Aut_{g'})$,
for any ${g'\in\supp(F)\cup \allowbreak \{g\}}$.
This~is expressed by the following \EUconstr:
\[
\Phi_{E} = \OU_{F \submultisetneq E\vphantom{g\msminus}} \;
  \OU_{g\in E\msminus F} \EUpair(F; \{ {(\overline{\supp(F)}\cup
    \{\overline g\},\et)} \}).
\]
 Notice that
for the special case where $E$ is empty, we~end up with an empty
disjunction, which is equivalent to false. This~is coherent with the
fact that $\Aut_\emptyset$ accepts any~\kl{tree}.
Full definitions and correctness proofs will be given after
we have explained how we handle
the general case of failing to satisfy~$\EUpair(E;U)$.

\subsubsection{Failing to satisfy $\protect\EUpair(E;U)$.}

\AP
In order to handle the general case, we~extend the definition
of~$\Aut_{\suc E}$ so~as~to also involve the universal part: 
the automaton~$\Aut_{\suc{\EUpair(E;U)}}$ is defined in the same way as~$\Aut_{\suc E}$, but now the initial state is $\state_{\EUpair(E;U)}$, and for any~$\alp\in\Alp$, we~let
$\Trans_{\EUpair(E;U)}(\state_{\EUpair(E;U)},\sigma)=\EUpair(E;U)$. 

We~have the following characterization of \kl{trees} not \kl{accepted}
by~$\Aut_{\suc{\EUpair(E;U)}}$:
\begin{proposition}\label{prop-negEU}
  If $\Tree$ is not \kl{accepted} by~$\Aut_{\suc{\EUpair(E;U)}}$, then
  \begin{itemize}
  \item either it has at least $\MSsize E+1$ \kl{direct subtrees}
    \kl{accepted} by no automata~$\Aut_u$ for~$u\in U$,
  \item or there exists $0\leq k\leq \MSsize E$ such that $\Tree$~has at least $k$
    \kl{direct subtrees} \kl{accepted} by no automata~$\Aut_u$ for any~$u\in
    U$, and it~does not contain $\MSsize E$ \kl{direct
    subtrees} witnessing the fact that it~is \kl{accepted}
    by~$\Aut_{\suc E}$ and of which~$k$~subtrees are \kl{accepted} by no
    automata~$\Aut_u$ for~$u\in U$.
  \end{itemize}
\end{proposition}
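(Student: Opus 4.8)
The plan is to first translate \kl{acceptance} by $\Aut_{\suc{\EUpair(E;U)}}$ into a purely combinatorial condition on the \kl{direct subtrees} of~$\Tree$, and then negate that condition. Recall that the initial transition of $\Aut_{\suc{\EUpair(E;U)}}$ is the single \EUpr~$\EUpair(E;U)$, and that by the remark following Def.~\ref{def-models+} a single \kl{unitary marking} suffices to fulfill a single \EUpr. Hence, reasoning on a \kl{minimal} \kl{execution tree} (or through the \kl{game semantics} of Prop.~\ref{prop-gamesem}), $\Tree$ is \kl{accepted} by $\Aut_{\suc{\EUpair(E;U)}}$ if, and only~if, there is a \kl{unitary marking}~$\nu$ of~$\succ(\emptyw_\Tree)$ with $\mkimg{\nu}\models| \EUpair(E;U)$ and $\Tree_y\in\Lang(\Aut_{\nu(y)})$ for every $y\in\succ(\emptyw_\Tree)$. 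Unfolding $\mkimg{\nu}\models| \EUpair(E;U)$ (that is, $E\submultiset \mkimg{\nu}$ together with $\supp(\mkimg{\nu}\msminus E)\subseteq U$), this amounts to selecting a set~$S$ of exactly $\MSsize E$ \kl{direct subtrees}, together with an assignment realizing the \kl{multiset}~$E$ and \kl{accepting} each selected subtree in its assigned state, in such a way that \emph{every} \kl{direct subtree} outside~$S$ is \kl{accepted} by some~$\Aut_u$ with $u\in U$.

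I would then introduce the quantity $m$, the number of \kl{direct subtrees} of~$\Tree$ that are \kl{accepted} by no~$\Aut_u$ with $u\in U$ (call these the \emph{$U$-uncovered} subtrees). The key observation is that in any accepting selection~$S$ as above, every $U$-uncovered subtree must belong to~$S$: it cannot be left outside, since all subtrees outside~$S$ are required to be $U$-covered. Equivalently, $\Tree$ is \kl{accepted} by $\Aut_{\suc{\EUpair(E;U)}}$ if, and only~if, there exist $\MSsize E$ \kl{direct subtrees} witnessing \kl{acceptance} by~$\Aut_{\suc E}$ (i.e.\ realizing~$E$ with each witness \kl{accepted} in its state, the remaining subtrees being unconstrained as in~$\Aut_{\suc E}$) that together contain all $m$ of the $U$-uncovered subtrees.

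The proof then proceeds by contraposition on this characterization, splitting according to the value of~$m$. If $m\geq \MSsize E+1$, then no selection of $\MSsize E$ \kl{direct subtrees} can contain all $U$-uncovered ones, so $\Tree$ is not \kl{accepted}; this is exactly the first bullet. If $m\leq \MSsize E$, I take $k=m$: there are (at least) $k$ $U$-uncovered \kl{direct subtrees}, and since \kl{acceptance} would require $\MSsize E$ \kl{direct subtrees} witnessing \kl{acceptance} by~$\Aut_{\suc E}$ of which all $k=m$ $U$-uncovered subtrees are part, non-acceptance is precisely the statement that no such family of $\MSsize E$ witnesses exists — which is the second bullet. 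One checks that the boundary cases ($E=\emptyset$, and the empty family when $k=0$) are consistent with this reading.

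The heart of the argument is definitional; the step requiring the most care is the reformulation in the first paragraph, namely arguing that for a single \EUpr \emph{every} \kl{direct subtree} must be assigned a state, so that a subtree which is neither used for~$E$ nor $U$-covered forces rejection. The only genuinely combinatorial point is the choice $k=m$ discharging the existential quantifier over~$k$ in the second bullet, together with the observation that ``$k$ of the $\MSsize E$ witnesses are $U$-uncovered'' with $k=m$ is equivalent to ``all $U$-uncovered subtrees are among the witnesses'', since there are exactly $m$ of them in total.
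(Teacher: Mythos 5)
Your proof is correct and takes essentially the same route as the paper's: both rest on the same combinatorial characterization of acceptance by~$\Aut_{\suc{\EUpair(E;U)}}$ (a~selection of $\MSsize{E}$ direct subtrees realizing~$E$, all remaining direct subtrees being accepted by some~$\Aut_u$ with $u\in U$) and on the same key observation that every direct subtree accepted by no~$\Aut_u$ must be part of that selection. The only difference is organizational: you instantiate~$k$ explicitly as the number~$m$ of uncovered direct subtrees and split on whether $m\leq\MSsize{E}$, whereas the paper states the acceptance condition as an implication quantified over all~$k$ and then negates it ``by duality''---the same argument in substance, since the paper's duality step is precisely driven by the instance $k=m$.
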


\begin{proof}
  We~first rule out the case where $E$~is empty: in~that case, $\Tree$~is not
  \kl{accepted} by~$\Aut_{\suc{\EUpair(E;U)}}$ if, and only~if, at least
  one of its \kl{direct subtrees} is \kl{accepted} by no automata~$\Aut_u$, for
  any~$u\in U$. 

  Now assume that $E$ is not empty; write~$l$ for the size of~$E$, and
  $m$ for the number of \kl{direct subtrees} of~$\Tree$. That~$\Tree$~is
  \kl{accepted} by~$\Aut_{\suc{\EUpair(E;U)}}$ means that there are $l$
  \kl{direct subtrees} each \kl{accepted} by some automaton~$\Aut_e$ for $e$
  ranging over~$E$, and that the remaining $m-l$ \kl{direct subtrees} are
  \kl{accepted} by some automaton~$\Aut_u$, for some~$u\in U$. This means
  that for any~$k$, if there are at~least~$k$~\kl{direct subtrees} not
  \kl{accepted} by~$\Aut_u$ for any~$u\in U$, then there are $l$
  \kl{direct subtrees}, each \kl{accepted} by some automaton~$\Aut_e$ for~$e$
  ranging over~$E$, and of which at least $k$ are \kl{accepted} by no
  automaton~$\Aut_u$ for any~$u\in U$; this simply expresses the fact
  that those \kl{direct subtrees} \kl{accepted} by no automata~$\Aut_u$ for
  any~$u\in U$ must be used to fulfill the $E$-part of~$\EUpair(E;U)$.
 
  By duality, we~get that $\Tree$~is not \kl{accepted}
  by~$\Aut_{\suc{\EUpair(E;U)}}$ if, and only~if, there exists some
  integer~$k$ such that there are at least $k$ \kl{direct subtrees}
  \kl{accepted} by no automata~$\Aut_u$ for any~$u\in U$, and there do
  not exist~$l$~\kl{direct subtrees}, each \kl{accepted} by some
  automaton~$\Aut_e$ for~$e$ ranging over~$E$, and of which at least
  $k$ are \kl{accepted} by no automata~$\Aut_u$ for any~$u\in U$.
\end{proof}

\subsubsection{Composite states and parity conditions.}
The definition of $\Phi_{E}$ gives rise to composite states of the
form~$(P,\et)$ where some elements of~$P$ may be of the
form~$(P',\ou)$. As~their names indicate, these states behave as
conjunctions (resp.~disjunctions) of several states (or their dual) of
the original automaton.

While alternation can naturally be used to encode the conjunctive or
disjunction nature of these states in the transition function,
defining the parity conditions for such groups of states is more
problematic.

We~address this issue in the same way as in
Section~\ref{remUsingleton}: we~replace each state~$\state$ with a
pair of states of the form~$(\state',\state)$, where $\state'$ (in~$\State$ or~$\Statebar$) is
either the parent state (if~it was not composite), or one state in the
(composite) parent state, namely the one whose transition function
involves~$\state$.
The first component~$\state'$
is used for the acceptance condition, while the
second component~$\state$ is the current (possibly composite) state, and is used for the
transition function.

\subsubsection{Construction of the complement automaton.}

We~now define the complement automaton~$\Aut^c$
of~$\Aut=\tuple{\State,\initstate,\Trans,\omega}$; we~let $\Aut^c=
\tuple{\State^c,\initstate^c,\Trans^c,\omega^c}$ be such that:
\begin{itemize}
\item the set of states $\State^c$ is a subset of $\DState \times \Big(\DState \cup
  (2^{\DState}\times\{\et\}) \cup
  (2^{(2^\DState\times\{\ou\})}\times\{\et\})\Big)$, where $\DState =
  \State\cup\Statebar$, and $\Statebar=\{\overline\state \mid
  \state\in \State\}$ is a set of fresh states. This~defines an
  operator~$\state\mapsto \overline \state$ over~$\State$, which we~extend to
  states~$\overline x$ of $\Statebar$ by letting~${\overline{\overline
    x}= x}$, to~subsets~$X$ of~$\DState$ by letting $\overline X =
  \{\overline x\mid x\in X\}$, to~states~$(P,\et)$ 
  of~$2^{\DState}\times\{\et\}$ by letting $\overline{(P,\et)} = 
  (\overline P,\ou)$.
  We~finally extend it to~$\PBF{\State^c}$ by letting
  \begin{xalignat*}4
    \overline{\top} &= \bot &
    \overline{\bot} &= \top &
    \overline{\psi \et \phi} &= \overline{\psi} \ou \overline{\phi} &
    \overline{\psi \ou \phi} &= \overline{\psi} \et \overline{\phi}.
  \end{xalignat*}

  Intuitively, from a state~$(\state',\state)$ with $q \in \State$, automaton~$\Aut^c$ will
  accept the same \kl{language} as from the state~$\state$
    in~$\Aut$, while
  from a state~$(\state',\overline{\state})$, 
  it~will accept its
  complement \kl{language}.
  The~\kl{language accepted} by~$\Aut$ from a state~${(\state,(P,\et))\in
    \DState\times (2^{\DState}\times\{\et\})}$
  will be the intersection of the \kl{languages
  accepted}
  by~$\Aut$ from all states in~$P$, whereas the \kl{language accepted}
  from states of the form~$(\{(\state,(P_i,\ou))\mid 1\leq i\leq
  k\},\et)\in 2^{\DState\times (2^\DState\times\{\ou\})}\times\{\et\}$ 
  will be the intersection (over~$i$) of the
  unions of the \kl{languages accepted} from all states in~$P_i$. Notice that $P_i$ may be a singleton.
  Notice also  that the indications of~$\et$ and~$\ou$ are mainly used for the
  sake of clarity.
  
\item accordingly, $\initstate^c = (\overline{\initstate},\overline{\initstate})$, as we want~$\Aut^c$ to
  \kl{accept} the complement of the \kl{language accepted} by~$\Aut$ from~$\initstate$;

\item $\Trans^c$ is defined as follows:
first, for states in $\DState\times\DState$:
\begin{xalignat*}2
\Trans^c((u,\state),\alp)  &= \phi_\state(\Trans(\state,\alp))   &
\Trans^c((u,\overline{\state}),\alp) &=  \phi_{\overline{\state}}(\overline{\Trans(\state,\alp)}) 
\end{xalignat*}
where $\phi_\state$ is the mapping similar to the one used in Section~\ref{remUsingleton}, defined as
$\phi_\state(\state')=(\state,\state')$ and extended to boolean combinations and \EUprs in the natural way.

The operation $\overline{\Trans(\state,\alp)}$ gives rise to formulas
of the form~$\overline{\EUpair(E;U)}$. Before we explain how we
express these formulas with plain \EUprs, we~define the transitions
for composite states (which will appear in the development of
formulas~$\overline{\EUpair(E;U)}$); composite states represent
conjunctions or disjunctions of other (possibly composite) states, hence their
transition function is defined inductively:
\begin{xalignat*}2
\Trans^c( (u,(P,\et)),\alp)  &= \ET_{r\in P} \Trans^c((u,r),\alp)  &
\Trans^c( (u,(P,\ou)),\alp)  &= \OU_{r\in P} \Trans^c((u,r),\alp).
\end{xalignat*}

Now, for formulas of the form~$\overline{\EUpair(E;U)}$,
following Prop.~\ref{prop-negEU}, we~let:
\[
\overline{\EUpair(E;U)} =  \EUpair((\overline{U},\et) \mapsto \MSsize{E}+1;\{\state_\top\})  \ou {} %
\OU_{0\leq k \leq \MSsize E}
  \Bigl(  \EUpair((\overline{U},\et) \mapsto k;\{\state_\top\}) \: \et\:
  \ET_{\stackrel{m \submultiset E}{\MSsize m=k}}  \Phi_{(E\msminus m) \mscup m^{\overline U}} \Bigr)
\]
where $m^{\overline{U}}$ is the \kl{multiset} defined by $\{(\{x\}\cup
\overline{U},\et)\mapsto m(x)\}_{x\in\supp(m)}$, and $\Phi_{G_m}$ (defined
as~in the paragraph following Prop.~\ref{prop-compl-bpm}) characterises the
failure to satisfy~$\EUpair(G_m;\{q_\top\})$, thus the existence of a \kl(BP){minimal}
\kl{blocking pair}:
\[
\Phi_{G_m} \eqdef \OU_{F \submultisetneq G_m\vphantom{g\msminus}} \;
  \OU_{g\in G_m\msminus F} \EUpair(F; \{ {({\overline{\supp(F)}\cup \{\overline g\}},\et)}\}).
\]

Since $\Phi_{G_m}$ is used in~$\overline{\EUpair(E;U)}$ with $G_m=(E\msminus m)\mscup m^{\overline U}$,
the sets~$F$ and the state~$g$ in the disjunctions of $\Phi_{G_m}$ may involve states 
of the form~$(P, \et)$ with~${P\subseteq \DState}$.
Then~$\overline{\supp(F)}$ and $\overline g$ may give rise to states of the form~$\overline{(P,\et)}$, 
which we~rewrite as $(\overline P,\ou)$.

In~the~end, the (second components of~the) states involved
in~$\Trans^c$ are either states of~$\DState$, or conjunctions of states
of~$\DState$, or conjunctions of disjunctions of states in~$\DState$.
Hence our definition of~$\State^c$.
Finally, observe that, as claimed in Remark~\ref{remUsingleton},
this construction only introduces new \EUprs of the
form~$\EUpair(E;\{u\})$.

\item priorities are based on the first component of the pairs and are defined as follows
  \begin{xalignat*}2
    \prio^c((\state,\state')) &= \prio(\state) &
    \prio^c((\overline{\state},\state')) &=\prio(\state)+1
  \end{xalignat*}
for any two states $(\state,\state')$  and
$(\overline\state,\state')$ with $\state\in\State$.
\end{itemize}

\subsubsection{Examples.}
  As examples to  illustrate this construction, consider the negated \EUprs of the form
  $\overline{\EUpair(q \mapsto 1; \{\state_\top\})}$ and
  $\overline{\EUpair(\emptyset; \{\state\})}$, which corresponds to
  transitions~$\Diamond q$ and~$\Box q$
  in~\BDAutomata~\cite{Wil99a}.
  \smallskip
  
  The~negated \EUpr $\overline{\EUpair(q \mapsto 1; \{\state_\top\})}$ gives
  rise (after applying $\phi_{\state'}$ to obtain pairs of states)
  to the disjunction of formulas of the form 
  \begin{itemize}
  \item $\EUpair((\state',(\overline{\state_\top},\et))\mapsto 2;
    {\{(\state',\state_\top)\}})$; since $\Trans(\state_\top,\alp)=\top$
    (see~Section~\ref{ssec-ex-aut}), we~get
    $\Trans^c((\state',\overline{\state_\top}),\alp)=\bot$, hence this will
    never give rise to accepting execution trees;

  \item two formulas of the form
    \[
      {\EUpair((\state',(\overline{\state_\top},\et))\mapsto k; {\{(\state',\state_\top)\}})} \et
    \ET_{m\submultiset \{\state\}; \MSsize m=k}
    \phi_{\state'}(\Phi_{(\{\state\}\msminus m)\uplus m^{\overline U}}),
    \]
    for~$k\in\{0,1\}$. As~above, no states involving~$\overline{\state_\top}$ can appear in an
    \kl(ET){accepting} \kl{execution tree}, so that we can eliminate the case where $k=1$.
    For~$k=0$, we~end~up with
    $\phi_{\state'}(\Phi_{\{\state\}})$, which simplifies
    to~$\EUpair(\emptyset; {\{(\state',(\overline \state,\et))\}})$. This corresponds to
    formula~$\Box\overline\state$, which intuitively means that all
    \kl{direct subtrees} must be \kl{rejected} by~$\Aut_\state$, as expected.
  \end{itemize}

  Similarly, $\overline{\EUpair(\emptyset; \{\state\})}$ gives rise to the
  disjunction of 
  $\EUpair((\state',(\overline\state,\et)) \mapsto 1; {\{(\state',\state_\top)\}})$ and
  $\EUpair((\state',(\overline\state,\et)) \mapsto 0; {\{(\state',\state_\top)\}}) \et
  \phi_{\state'}(\Phi_{\emptyset})$; as~already seen above, $\Phi_\emptyset$~is equivalent to~false, and
  cannot give rise to an  \kl(ET){accepting} \kl{execution tree}.
  Hence $\overline{\EUpair(\emptyset; \{\state\})}$ gives rise to
  $\EUpair((\state',\overline\state) \mapsto 1; {\{(\state',\state_\top)\}})$, which
  corresponds to $\Diamond\overline\state$.

\subsubsection{Size of $\Aut^c$.}
We now evaluate the \kl(A){size} of the complement automaton:
\begin{itemize}
\item \textit{(over)approximating the number of states:}
  the first components of the states in~$\State^c$ are in~$\DState$; 
  their second components are in $\DState \cup (2^{\DState}\times\{\et\}) \cup
  (2^{(2^\DState\times\{\ou\})}\times\{\et\})$, but not all states of
  this set are reachable.  The~reachable states are either
  in~$\DState$, or they appear in~$\overline{\EUpair(E;U)}$ for some
  \EUpr~${\EUpair(E;U)}$ in~$\Trans$.
  
  Take an \EUpr $\EUpair(E;U)$ in~$\Trans$.
  Besides~$(\overline{U},\et)$ and~$\state_\top$, the~states invovled in
  $\overline{\EUpair(E;U)}$ all come from formulas of the form~$\Phi_{G_m}$, with
  $G_m=(E\msminus m)\uplus m^{\overline U}$,
  for each \kl{submultiset}~$m$ of~$E$.
  Formula~$\Phi_{G_m}$ involves
  \begin{itemize}
  \item states of~$G_m$, which are either in~$E$ (hence already
    in~$\State$) or of the form $(\{x\}\cup \overline U,\et)$
    for~$x\in E$;
  \item and states of the form $(\overline{\supp(F)}\cup\{\overline
    g\},\et)$ where $F$~is a \kl{submultiset} of~$G_m$ and $g\in
    G_m\msminus F$.
  \end{itemize}
  \medskip
  
  In~the~end, using $\size\State\cdot\size\Alp \cdot \sizeB\Trans$ as
  an upper bound on the number of \EUprs,

  \begin{center}
  \newcommand{\tikzmark}[2]{\tikz[overlay,remember picture]
    {\node[inner sep=0pt](tmp) {\hphantom{$#2$}};
    \node[minimum height=6mm,inner sep=0pt] at (tmp.east) (#1) {\hphantom{$#2$}};}#2}
  \(
    \size{\State^c}\leq \tikzmark{First}{2\size\State}\times\Bigl[\tikzmark{2Q}{2\size\State}+ \tikzmark{qtop}{1}
    + \tikzmark{deltaB}{\underbrace{\size\State\cdot \sizeB\Trans\cdot\size\Alp}} \cdot (\tikzmark{Ubar}{1} +
    \tikzmark{xUbar}{2\!\cdot\!\sizeE\Trans} + \tikzmark{2^E}{3^{\sizeE\Trans}})\Bigr]\quad
  \)
    \begin{tikzpicture}[overlay,remember picture]
      \path (2Q) -- node[pos=1,below] (2Q') {$\State\cup\overline\State$}  ++(-1.5,-1);
      \draw (2Q) edge[latex'-,bend left=10] (2Q');
      \draw (First) edge[latex'-,bend left=10] (2Q');
      \path (qtop) -- node[pos=1,below] (qtop') {$\state_\top\vphantom{\overline\State}$}  ++(-1.2,-1) ;
      \draw (qtop.south west) edge[latex'-,bend left=10] (qtop');
      \path (deltaB) -- node[pos=1,below,text width=2cm,align=center] (deltaB') {number of \EU-pairs}  ++(-1.3,-1) ;
      \draw (deltaB.-90) edge[latex'-] (deltaB');
      \path (Ubar) -- node[pos=1,below,text width=1cm,align=center] (Ubar') {state $(\overline U,\et)$}  ++(-1.25,-1) ;
      \draw (Ubar) edge[latex'-] (Ubar');
      \path (xUbar) -- node[pos=1,below,text width=2cm,align=center] (xUbar') {states $(\{x\}\cup\overline U,\et)$, $(\{\overline x\}\cup U,\ou)$}  ++(-0.5,-1) ;
      \draw (xUbar) edge[latex'-] (xUbar');
      \path (2^E) -- node[pos=1,below,text width=3cm,align=center] (2^E') {states $(\overline{\supp(F)}\cup\{\overline g\},\et)$}  ++(1.2,-1) ;
      \draw (2^E) edge[latex'-,bend right=15] (2^E');
    \end{tikzpicture}
  \vskip18mm\vbox{}
  \end{center}

\medskip

  Hence $\size{\State^c} \leq
    2\size\State\times \bigl[1+\size\State\cdot (2+ \size\Alp\cdot\sizeB\Trans\cdot
    (1+2\cdot\sizeE\Trans+3^{\sizeE\Trans}))\bigr]$.  %

\item \textit{bounding the size of transition function:}
in order to evaluate the size of the transition function,
  we~group the transitions according to the type~of (the~second component~of) their source
  states. We~write $\Trans^c_{\scriptscriptstyle\State}$, $\Trans^c_{\scriptscriptstyle\overline{\State}}$,
  $\Trans^c_{\scriptscriptstyle(\overline{U},\et)}$,
  $\Trans^c_{\scriptscriptstyle({\overline{\supp(F)}\cup  \{\overline g\}},\et)}$,
  and
    $\Trans^c_{\scriptscriptstyle F}$
  for the five
  categories, which we describe and bound below:
\begin{itemize}
\item $\Trans^c_{\scriptscriptstyle\State}$ is the set of transitions from states in~$\State$.
  We~have $\sizeB{\Trans^c_{\scriptscriptstyle\State}} = \sizeB{\Trans}$, and
  $\sizeE{\Trans^c_{\scriptscriptstyle\State}} = \sizeE{\Trans}$ and
  $\sizeU{\Trans^c_{\scriptscriptstyle\State}} = \sizeU{\Trans}$,
  since
  $\Trans^c(\state,\alp)=\Trans(\state,\alp)$;

\item $\Trans^c_{\scriptscriptstyle\overline{\State}}$ contains all transitions from
  states in~$\overline\State$;
  writing $\Trans(\state,\alp)=\psi(\EUpair(E_i;U_i)_i)$ for some positive boolean combination~$\psi$,
  we~have 
  $\Trans^c(\overline\state,\alp)=\overline\psi(\overline{\EUpair(E_i;U_i)}_i)$.
  Each $\overline{\EUpair(E_i;U_i)}$ gives rise to at most $1+\size{E_i}(1+2^{\size{E_i}}\cdot 2^{\size{E_i}}\cdot \size{E_i})$ \EUprs.   Hence the boolean size~$\sizeB{\Trans^c_{\scriptscriptstyle{\overline{\State}}}}$
  of~$\Trans^c_{\scriptscriptstyle\overline{\State}}$ is at most
  $\sizeB\Trans\cdot (1+\sizeE\Trans\cdot (1+\sizeE\Trans\cdot2^{2\sizeE\Trans}))$.
  Moreover, we~have $\sizeE{\Trans^c_{\scriptscriptstyle\overline{\State}}} \leq
  \sizeE\Trans+1$ and $\sizeU{\Trans^c_{\scriptscriptstyle\overline{\State}}} \leq
  \max(\sizeU\Trans, 1)$.

\item transitions in $\Trans^c_{\scriptscriptstyle(\overline{U},\et)}$ originate from
  states of the form~$(\overline U,\et)$. Since $U\subseteq \State$,
  those transitions~are conjunctions of at~most~$\size\State$
  transitions in~$\Trans^c_{\scriptscriptstyle\overline\State}$.  Hence
  $\sizeB{\Trans^c_{\scriptscriptstyle(\overline{U},\et)}}\leq
  \size\State\cdot \sizeB{\Trans^c_{\scriptscriptstyle\overline{\State}}}$,
  and
  $\sizeE{\Trans^c_{\scriptscriptstyle(\overline{U},\et)}}\leq
  \sizeE{\Trans^c_{\scriptscriptstyle\overline{\State}}}$ and
  $\sizeU{\Trans^c_{\scriptscriptstyle(\overline{U},\et)}}\leq
  \sizeU{\Trans^c_{\scriptscriptstyle\overline{\State}}}$.

  \item 
  transitions in $\Trans^c_{\scriptscriptstyle(\overline{\supp(F)}\cup \{\overline
    g\},\et)}$ originate from states of the form $(\overline{\supp(F)}\cup \{\overline g\},\et)$.
  From each such
    state, transitions are conjunctions of at most $\size\State+1$
    subformulas, which in the worst case can be disjunctions of
    at most $1+\sizeU\Trans$
    transitions from states
    in~$\overline\State$
    (corresponding to states belonging to $2^{(2^\DState\times\{\ou\})}\times\{\et\}$). Hence we have
    $\sizeB{\Trans^c_{\scriptscriptstyle (\overline{\supp(F)}\cup \{\overline g\},\et)}}\leq
       (\size\State+1)\cdot \sizeE\Trans\cdot
      \sizeB{\Trans^c_{\scriptscriptstyle \overline\State}}$,
    and
    $\sizeE{\Trans^c_{\scriptscriptstyle(\overline{\supp(F)}\cup \{\overline
    g\},\et)}}$
    and
    $\sizeU{\Trans^c_{\scriptscriptstyle(\overline{\supp(F)}\cup \{\overline
    g\},\et)}}$ are bounded by
    $\sizeE{\Trans^c_{\scriptscriptstyle \overline\State}}$
    and
    $\sizeU{\Trans^c_{\scriptscriptstyle \overline\State}}$, respectively.

  \item finally,
  $\Trans^c_{\scriptscriptstyle F}$ contains the
    transitions from states in~$F\submultisetneq G_m$ in~$\Phi_{G_m}$. Since
    $\Phi_{G_m}$ is used with~$G_m$ of the form $E\msminus m\uplus
    m^{\overline U}$, the~transition in~$\Trans^c_{\scriptscriptstyle
      F}$ are either transitions from states in~$E$, or transitions
    from states of the form~$(\{x\}\cup\overline U,\et)$. The~former
    case corresponds to transitions already in~$\Trans$; the~latter case gives rise to
    conjunctions of at most~$\sizeU\Trans$ transitions
    in~$\Trans^c_{\scriptscriptstyle \overline\State}$ and one transition in~$\Trans$.
    Thus $\sizeB{\Trans^c_{\scriptscriptstyle F}} \leq \sizeU\Trans\cdot
    \sizeB{\Trans^c_{\scriptscriptstyle\overline{\State}}}+\sizeB\Trans$, and
    $\sizeE{\Trans^c_{\scriptscriptstyle F}} \leq \max(\sizeE{\Trans},
    \sizeE{\Trans^c_{\scriptscriptstyle \overline\State}})$ and
    $\sizeU{\Trans^c_{\scriptscriptstyle F}} \leq \max(\sizeU{\Trans},
    \sizeU{\Trans^c_{\scriptscriptstyle \overline\State}})$.

\end{itemize}
We end up with
\begin{itemize}
\item $\sizeB{\Trans^c}\leq (1+\size\State)\cdot (1+\sizeU\Trans)\cdot \sizeB\Trans\cdot
  (1+\sizeE\Trans+\sizeE\Trans^2\cdot 2^{2\sizeE\Trans})$,
\item $\sizeE{\Trans^c}\leq \sizeE\Trans+1$,
\item $\sizeU{\Trans^c}\leq \max(\sizeU\Trans,1)$.
\end{itemize}

\end{itemize}

\medskip
Note that the complementation operation over \kl{alternating} \faTAutomata does not induce such a complexity blow-up:
the~construction is performed by using the dual of the transition
function and by incrementing the \kl{priority} of all states. This is an
important difference compared to our construction, which is due to the
expressiveness (and succinctness) of~\AATA: for~example, we~can easily
express that there are at~least $k$ \kl{successors} that are \kl{accepted} by
some state~$q$ with the constraint $\EUpair(q \mapsto k;\{q_\top\})$;
this cannot be expressed with \BDAutomata, and it
requires a much more complex formula in \faTAutomata
(for~each arity~$d$, we~have to consider all possible subsets of
\kl{successors} of size~$k$).

\subsubsection{Correctness proof.}
 We can now state and prove the  correctness of the construction:
\begin{proposition}
For any \SDtree{\Alp}{\Dir} $\Tree=(t,l)$, any node~$n$ in~$t$, any state $u \in \State\cup\overline{\State}$ and any
state $q$ in $\State$, we have:
\[
  \Tree_n \notin \calL(\calA,q)
\quad \Leftrightarrow \quad
  \Tree_n \in\calL(\calA^c,(u,\overline{q}))   
\]
\end{proposition}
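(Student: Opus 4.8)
The plan is to pass to the \kl{game-based semantics} and reduce the statement to a correspondence between \kl(S){winning} \kl{strategies} of the two players, using positional determinacy to switch sides. Write $\Game$ for the \kl{parity games} associated with $\calA$ and~$\Tree$ as in Section~\ref{sec:sem-game}, and $\Game^c$ for the analogous game built from the complement automaton~$\calA^c$ and the same~$\Tree$. By Proposition~\ref{prop-gamesem}, in the strengthened per-\kl{main state} form actually proved there, $\Tree_n\in\calL(\calA,q)$ iff Player~0 wins $\Game$ from~$(n,q)$, and $\Tree_n\in\calL(\calA^c,\overline q)$ iff Player~0 wins $\Game^c$ from~$(n,\overline q)$; since \kl{parity games} are positionally determined (Proposition~\ref{prop-determined}), $\Tree_n\notin\calL(\calA,q)$ is equivalent to Player~1 winning $\Game$ from~$(n,q)$. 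The whole statement thus reduces to the single claim that Player~1 wins $\Game$ from~$(n,q)$ if, and only~if, Player~0 wins $\Game^c$ from~$(n,\overline q)$. I~would prove this by transforming a \kl(S){winning} \kl{strategy} of one player into a \kl(S){winning} \kl{strategy} of the other, building one global \kl{strategy} in~$\Game^c$ and then inspecting all its \kl(P){maximal} plays.

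The transformation follows the duality hard-wired into~$\Trans^c$. On the boolean skeleton of a transition the operator $\overline{\cdot}$ swaps $\et$ with~$\ou$ (and $\top$ with~$\bot$), so every \kl{auxiliary state} at which Player~0 moves in~$\Game$ becomes one at which Player~1 moves in~$\Game^c$, and conversely; a \kl{strategy} of Player~1 in~$\Game$ is therefore mirrored disjunct-for-conjunct into a \kl{strategy} of Player~0 in~$\Game^c$, exactly as in the classical \kl{complementation} of \kl{alternating} automata. The genuinely new case is a position~$(n,\state,\EUpair(E;U))$. Here the key observation is that, again by the strengthened Proposition~\ref{prop-gamesem}, Player~1 wins the subgame of~$\Game$ rooted at~$(n,\state,\EUpair(E;U))$ if, and only~if, $\Tree_n$ is not \kl{accepted} by~$\Aut_{\suc{\EUpair(E;U)}}$. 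Proposition~\ref{prop-negEU}, together with the \kl{blocking pair} analysis of Propositions~\ref{prop-bp}, \ref{prop-reduceBP} and~\ref{prop-compl-bpm}, then says precisely which disjunct of~$\overline{\EUpair(E;U)}$ can be fulfilled (the ``$\MSsize E+1$''-term, or some $k$-indexed term with its factors~$\Phi_{G_m}$) and exhibits a \kl{unitary marking} into states of~$\calA^c$ that Player~0 plays: \kl{direct subtrees} routed to a state of~$\State$ are \kl{accepted} by the corresponding~$\Aut_{\state'}$, while those routed to a complement state $(\supp(\overline F)\cup\{\overline g\},\et)$ or~$(\overline U,\et)$ are \kl{rejected} by every~$\Aut_{\state'}$ that state names. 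The meta-states $(P,\et)$ and~$(P,\ou)$ are handled by their defining transitions, which turn intersections and unions of \kl{languages} into the matching boolean game nodes.

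It~remains to verify that the constructed \kl{strategy} is \kl(S){winning}, which is where the \kl{priority} shift $\prio^c(\overline\state)=\prio(\state)+1$ does its work. Along any \kl(P){maximal} play of~$\Game^c$, once the play enters a state of~$\State$ it stays within~$\State$ (since $\Trans^c$ coincides with~$\Trans$ there), so each play is eventually in \emph{positive} mode (states of~$\State$, with $\prio^c=\prio$) or remains forever in \emph{negative} mode (states of~$\overline\State$ and meta-states). A play entering positive mode does so on a \kl{direct subtree} guaranteed \kl{accepted} by the relevant~$\Aut_{\state'}$, on which Player~0 plays her $\Game$-\kl(S){winning} \kl{strategy}; its suffix is then a \kl(P){winning} play of~$\Game$ and the finite negative prefix does not affect the \kl{parity}. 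A play staying negative forever projects, after contracting the finitely many meta-states between consecutive \kl{main states}, to an infinite play of~$\Game$ in which Player~1 follows her \kl(S){winning} \kl{strategy}; such a play is \kl(P){winning} for Player~1, so its least \kl{priority} seen infinitely often is odd, and by the $+1$ shift the least \kl{priority} of the $\Game^c$-play is even, hence \kl(P){winning} for Player~0. Finite plays ending in $\top$, $\bot$ or an empty \kl{marking} dualise correctly via $\overline\top=\bot$, $\overline\bot=\top$ and the \kl{leaf} analysis already used for~$\EUpair(\emptyset;U)$. The converse implication is entirely symmetric, reading the same \kl{blocking pair} correspondence backwards to turn a \kl(S){winning} \kl{strategy} of Player~0 in~$\Game^c$ into one of Player~1 in~$\Game$.

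I~expect the crux to be precisely this $\EUpr$ step. Unlike the self-dual $\Box/\Diamond$ of \BDAutomata, $\overline{\EUpair(E;U)}$ is itself a non-trivial \EUconstr, so one must check both that the disjunct selected through the \kl{blocking pair} analysis always yields a \kl{marking} whose positive targets are \kl{accepted} and whose negative targets are \kl{rejected}, \emph{and} that these purely local choices, taken consistently with a single \kl{memoryless} \kl(S){winning} \kl{strategy} of Player~1 in~$\Game$, assemble into a \kl(S){winning} \kl{strategy} in~$\Game^c$ once the global \kl{parity} condition is accounted for. It~is exactly this interplay between the one-level negation of~$\EUpair(E;U)$ (Propositions~\ref{prop-negEU}, \ref{prop-bp}, \ref{prop-reduceBP} and~\ref{prop-compl-bpm}) and the \kl{parity} bookkeeping carried by the \kl{priority} shift that holds the real content of the proof; the boolean and meta-state parts are routine dualisation.
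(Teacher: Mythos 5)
Your proposal follows essentially the same route as the paper's own proof: both reduce the statement to the game-based semantics via Propositions~\ref{prop-gamesem} and~\ref{prop-determined}, build a winning strategy for one player in one game out of a memoryless winning strategy of the other player in the other game by dualising the boolean skeleton, handle the crucial $\EUpair(E;U)$ positions through Propositions~\ref{prop-negEU}, \ref{prop-bp} and~\ref{prop-compl-bpm} (blocking pairs selecting the right disjunct of $\overline{\EUpair(E;U)}$ and the corresponding unitary marking), and conclude with the same priority-shift bookkeeping, your positive/negative-mode analysis being a slightly more explicit rendering of the paper's closing paragraph. The only real difference is presentational: the paper works out the converse direction's two cases explicitly, whereas you dismiss it by symmetry.
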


\begin{proof}
  We use the \kl{game semantics} in order
  to prove both implications.  In~order to avoid confusions,
  we~name \Pl0 and~\Pl1 the players in~$\GAT$, and \Plc0 and \Plc1 the
  players in~$\GATc$.

  \medskip

  We take a \kl{tree}~$\Tree=\tuple{\tree,\lab}$, a node~$n\in \tree$,
  and a state~$\state$ of~$\Aut$ such that $\Tree_n$ is not \kl{accepted}
  by~$\Aut_\state$. We~will prove that~$\Tree_n$ is \kl{accepted} by
  $\Aut^c_{(u,\overline\state)}$ for any $u$.

  By~Prop.~\ref{prop-determined} and~\ref{prop-gamesem}, \Pl1 has
  a \kl{memoryless} \kl(S){winning} \kl{strategy}~$\str1$
  from~$(n,\state)$ in~$\GAT$.  Using this strategy,
  we~build a \kl{memoryless} \kl{strategy}~$\strc0$ for~$\Plc0$
  from~$(n,(u,\overline\state))$ in~$\GATc$, which we then prove is winning.
  By~Prop.~\ref{prop-gamesem}, this entails our result.

  From vertex~$(n,(u,\overline\state))$,
  there is only
  one edge, to~$(n,\Trans^c((u,\overline\state),
  \lab(n)))$. By~definition, $\Trans^c((u,\overline\state), \lab(n))$ is
  of the form~$\phi_{\overline\state}(\overline{\psi})$,
  with  $\psi=\Trans(\state,\lab(n))$.
    Moreover, \Pl1 wins from vertex~$(n,\psi)$.
  Using the \kl{memoryless} \kl{strategy}~$\sigma_1$ of \Pl1 from~$(n,\psi)$,
  we~define a \kl{strategy} of \Plc0 from $(n,\phi_{\overline{\state}}(\overline\psi))$, by considering the different possible forms of~$\psi$:
\begin{itemize}
\item if $\psi=\bot$, then $(n,\Trans^c((u,\overline\state),
  \lab(n)))$ is the state $(n,\top)$, which belongs to \Plc1 (and 
  has no outgoing transitions) and then $\Plc0$ wins. Notice that we cannot have $\psi=\top$ since $\Pl1$ does not have a \kl(S){winning} \kl{strategy} from~$(n,\top)$;
\item if $\psi = \OU_i \psi_i$,  
  then $\phi_{\overline{\state}}(\overline\psi)$
  is a conjunction, so that $(n,\phi_{\overline{\state}}(\overline\psi))$
  belongs to~$\Plc1$. For every $i$, $\Pl1$ has a winning strategy from $(n,\psi_i)$; this~ensures that for every choice made by~$\Plc1$, $\Plc0$~has a winning strategy from $(n,\phi_{\overline{q}}(\overline{\psi_i}))$;
\item if $\psi = \ET_i \psi_i$, then $\phi_{\overline{\state}}(\overline\psi)$  is a disjunction,
  and $(n,\phi_{\overline{\state}}(\overline\psi))$ belongs to~$\Plc0$.  Her~\kl{strategy} consists
  in following \Pl1's move from~$(n,\psi)$ in~$\GAT$: letting~$\psi_i$  be such that
  $\str1(n,\psi)=(n,\psi_i)$, we~define $\strc0(n,\phi_{\overline{\state}}(\overline\psi))=(n,\phi_{\overline{\state}}(\overline{\psi_i}))$;
\item if $\psi = \EUpair(E;U)$:  as $\Pl1$ has a \kl(S){winning} \kl{strategy} from~$(n,\EUpair(E;U))$,
  we know that for \emph{any} possible move (chosen by~$\Pl0$) of the form $(n,\nu)$ where $\nu$ is a \kl{marking} of~$\succ(n)$ by states in~$\supp(E)\cup U$ satisfying
  $\mkimg\nu\models+\EUpair(E;U)$, $\Pl1$~can choose a winning
  state $(n',\nu(n'))$. This means that there is no way to
  satisfy $\EUpair(E;U)$ with the \kl{successors} of~$n$, and by Prop.~\ref{prop-negEU},
  this is due to one (or~both) of the following cases:
\begin{itemize}
\item there are (at~least) $\MSsize E+1$ nodes in $\succ(n)$ \kl{rejected} by all states
  in~$U$. Let~$X$ be such a set of $\MSsize E+1$ nodes. Therefore from the node $(n,\phi_{\overline{\state}}(\overline{\EUpair(E;U)}))$, 
   $\Plc0$ may choose the first disjunct in the definition of
  $\overline{\EUpair(E;U)}$, move to $(n,\EUpair( (\overline{\state},(\overline{U},\et)) \mapsto k+1 ;\{\state_\top\}))$, and from there, select the move leading to
  $(n,\nu_c)$, with $\nu_c(n') = (\overline{\state},(\overline{U},\et))$ if $n' \in X$ and
  $\nu_c(n')=(\overline{\state},q_\top)$ otherwise. From~$(n,\nu_c)$,
  any~\kl{successor} (chosen by~$\Plc1$) will
  be of the form $(n',(\overline{\state},(\overline{U},\et)))$ with $n'\in X$ and  $q\in U$, or
  $(n',(\overline{\state},q_\top))$ with $n'\notin X$. If~$\Plc1$ chooses some $(n',(\overline{\state},q_\top))$, then $\Plc0$~wins; if $\Plc1$ chooses some $(n',(\overline{\state},(\overline{U},\et)))$, the~game continues with $(n',\Trans^c((\overline{q},(\overline{U},\et)),\ell(n')))$ which is defined as $(n',\allowbreak \ET_{r\in U} \phi_{\overline{r}}(\overline{\Trans(r,\ell(n'))}))$. For every choice of~$\Plc1$, $\Plc0$~will have a winning strategy, as by~assumption we~know that $n' \in X$, and then that it~is rejected by any~state~${r\in U}$.
  
  Concerning the parity condition, the sequence of visited nodes has the priority of~$u$ for $(n,(u,\overline{q}))$, the~priority of~$\overline{q}$ for $(n',(\overline{q},(\overline{U},\et)))$, the maximal priority  for states associated with formulas of the transition function, and we will also visit the priority of~$\overline{r}$ thanks to the prefixing with~$\phi_{\overline{r}}$.
Thus for this sequence, the~significant states (w.r.t.\ the satisfaction of the parity condition) are~$u$, $\overline{q}$, and~$\overline{r}$. 

\item there exists
  an integer~$k$, with $0 \leq k \leq |E|$, such that there are
  at~least~$k$ \kl{successors} of~$n$ that are \kl{rejected} by all
  states in~$U$, and there is no way to satisfy~$E$ with any set of
  \kl{successors} containing at~least~$k$ nodes \kl{rejected} by all states
  in~$U$.  Take such an integer~$k$, and a set~$X$ of~$k$ \kl{successors}
  of~$n$ \kl{rejected} by all states in~$U$. Then $\Plc0$ can choose the
  subformula corresponding to~$k$ in the second part of the formula defining
  $\phi_{\overline{q}}(\overline{\EUpair(E;U)})$. This~subformula is a conjunction, so that
  $\Plc1$ decides which \kl{successor} to move~to:
  \begin{itemize}
    \item if $\Plc1$ decides to move to state~$(n,\EUpair((\overline{q},(\overline{U},\et))
      \mapsto k;{\{(\overline{q},\state_\top)\}}))$, then $\Plc0$ can move to~$(n,\nu)$
      where $\nu$ is the \kl{marking} mapping the $k$ nodes in~$X$
      to~$(\overline{q},(\overline{U},\et))$, and the other nodes to~$(\overline{q},\state_\top)$. In~both~cases, $\Plc0$ wins (as~in the previous case). 

\item if $\Plc1$ chooses $(n,
      \phi_{\overline{q}}(\Phi_{(E\msminus m) \uplus
      m^{\overline U}}))$ for some set~$m$ of $k$~states, then $\Plc0$
      can  choose a \kl(BP){minimal} \kl{blocking pair}~$(F,g)$
      (for~$(E\msminus m) \uplus m^{\overline
      U}$ by~Prop.~\ref{prop-bp} and~Prop.~\ref{prop-compl-bpm}) in
      the subsequent disjunctive states, such that all \kl{successor
      nodes} not used to satisfy~$F$ are \kl{rejected} by all states
      in~$\supp(F)\cup\{g\}$. $\Plc0$ can then choose a node
      $(n,\nu)$ where $\nu$ is a \kl{unitary
      marking} witnessing the satisfaction of
      $\EUpair(\phi_{\overline{q}}(F);{\{(\overline{q},(\overline{\supp(F)}\cup\{\overline{g}\},\et))\}})$. Looking
      closely at $F$, we see that this multiset may contain states in
      $\State$ or of the form $(\{q'\}\cup \overline{U},\et)$. Every
      successor $(n',\nu(n'))$ is:
      \begin{itemize}
      \item either used to satisfy $\phi_{\overline{q}}(F)$ and then 
        $\nu(n')$ is either of the form  $(\overline{q},q')$ or $(\overline{q},(\{q'\}\cup \overline{U},\et))$. In both case, $\Plc0$ has a winning strategy. Note that from nodes of the form $(n', (\overline{q},(\{q'\}\cup \overline{U},\et)))$, $\Plc1$ will be able to choose which component to follow among $\{q'\}\cup\overline{U}$, in every case $\Plc0$ will keep her winning strategy (and note that the successors  will finally be prefixed by the corresponding state in $\State\cup\overline{\State}$  and then the parity function in $\GATc$ will be based on the correct priorities).
\item or associated with $(\overline{q},(\overline{\supp(F)}\cup\{\overline{g}\},\et))$. In that case, the alternating transition function will integrate the conjunction of the composite state and the  disjunctions of any state of the form $(\overline{q'}\cup U,\ou)$ inside. Again the prefixing will keep track of the actual states in $D$ used to satisfy the composite states, and then the parity function is also correct. 
      \end{itemize}
  \end{itemize}
\end{itemize} 
\end{itemize}
The~resulting \kl{memoryless} \kl{strategy} for
\Plc0 is \kl(S){winning}. Indeed consider an infinite game in $\GAT$: the parity acceptance (winning for $\Pl1$) is based
on the infinite sequence of configurations of the form $(n,q)$, and the corresponding game in $\GATc$ will have an infinite sequence of configurations $(n,(q,-))$ and it will determine the parity condition (winning for $\Plc0$) because other intermediate configurations have greater priorities.

\medskip

Conversely, we now show how to define a \kl(S){winning} \kl{strategy}  for $\Pl1$ in the game $\GAT$ from a \kl{memoryless} \kl(S){winning} \kl{strategy}  for~$\Plc0$ in~$\GATc$. 
The main case remains the correspondence between a node $(n,\EUpair(E;U))$ and a node $(n,\phi_{\overline{q}}(\overline{\EUpair(E;U)}))$. As~$\Plc0$~has a \kl(S){winning} \kl{strategy}, she~can choose some term in the disjunction $\phi_{\overline{q}}(\overline{\EUpair(E;U)})$;
there~are two~cases:
\begin{itemize}
\item if $\Plc0$ chooses the move leading to
  $(n,\EUpair((\overline{q},(\overline{U},\et)) \mapsto
  \MSsize{E}+1;{\{(\overline{q},\state_\top)\}}))$, and then a move $(n,\nu_c)$ where
  $\nu_c$ maps $|E|+1$ nodes in $\succ(n)$ to the state
  $(\overline{q},(\overline{U},\et))$: we~then let~$Y$ be this set of nodes. Note that
  $\Plc0$ has a \kl(S){winning} \kl{strategy} from any node $(y,(\overline{q},\overline{q'}))$ for
  any $q' \in U$ and $y\in Y$. Now consider the node $(n,\EUpair(E;U))$
  in $\GAT$. Every move of $\Pl0$ leads to a node of the form
  $(n,\nu)$ where $\nu$ is a mapping from $\succ(n)$ to $\supp(E)\cup
  U$ in order to satisfy the constraint $\EUpair(E;U)$. For~at~least
  one $y \in Y$, we~have $\nu(y) \in U$ (only~$k$~nodes are used to
  fulfil~$E$); then $\Pl1$ can select the move to $(y,\nu(y))$ in
  order to keep simulating the (\kl(S){winning}) \kl{strategy} of $\Plc0$ from
  $(y,(\overline{q},\overline{\nu(y)}) )$.
\item if $\Plc0$ chooses a term of the disjunction corresponding to some
  $k$ (with $0\leq k \leq |E|$): then $\Plc0$ has a \kl(S){winning} \kl{strategy}
  both from $(n,\EUpair((\overline{q},(\overline{U},\et)) \mapsto
  k;{\{(\overline{q},\state_\top)\}}))$ and from any node of the form $(n,$ $\phi_{\overline{q}}(\Phi_{(E\msminus m) \uplus
    m^{\overline{U}}}))$ with $m \submultiset E$ with
  $\MSsize{m}=k$. Let~$Y$ be the nodes in~$\succ(n)$ that $\Plc0$ can
  select in order to satisfy the existential part of
  ${\EUpair((\overline{q},(\overline{U},\et)) \mapsto k;{\{(\overline{q},\state_\top)\}})}$.
  Now consider a move of $\Pl0$ from~$(n,\EUpair(E;U))$ in~$\GAT$
  leading to some $(n,\nu)$. If a node $y \in Y$ is associated with a
  state in $U$ (\ie\ $\nu(y)\in U$), $\Pl1$ will have a \kl(S){winning} \kl{strategy}
  because $\Plc0$
  is winning from $(y,(\overline{q},\overline{\nu(y)}))$ by definition of~$Y$. 
  Otherwise all nodes in~$Y$
  are associated with states in~$\supp(E)$, but then we know that
  $\Plc0$ has a \kl(S){winning} \kl{strategy}
  from $(n,\phi_{\overline{q}}(\Phi_{(E\msminus m) \uplus
    m^{\overline U}}))$ with $m=\uplus_{y\in Y} \nu(y)$; this
  ensures that the constraint~$\EUpair(E;U)$ is not satisfied
  from~$(n,\nu)$ in~$\GAT$, and that $\Pl1$ has a winning strategy.
\end{itemize}

With arguments similar to the previous case, we can prove that the
resulting \kl{strategy} is \kl(S){winning} for \Pl1.
\end{proof}

To summarise our results:
\begin{theorem}\label{thm-compl}
Given an \AAPTA $\Aut=\tuple{\State,\initstate,\Trans,\omega}$, we~can build an \AAPTA~$\Aut^c$ 
recognising the complement of~$\calL(\Aut)$, with \kl(A){size} bounded by 
$(O({\size\State}^2\cdot \sizeB\Trans \cdot\size\Alp \cdot 3^{\sizeE\Trans},
O(\size\State\cdot (1+\sizeU\Trans)\cdot \sizeB\Trans\cdot (1+\sizeE\Trans^2\cdot 4^{\sizeE\Trans}),
 \sizeE\Trans+1,
  \max(\sizeU\Trans,1),
  \priomax+1)$.
\end{theorem}

\subsection{\kl{Alternation removal} (a.k.a. \kl{simulation})}
\label{ssec-simu}
\label{sec-simul}

Building a non-alternating automaton equivalent to a given alternating
automaton is an important construction, e.g.~in order to perform
\kl{projection}, or for algorithmic purposes. In~this section, we~present an
\intro{alternation-removal} (a.k.a.~\reintro{simulation}) algorithm, based on
ideas developed in~\cite{Wal02,Zan12} for \MSOAutomata.

For the rest of this section, we fix an \AAPTA $\Aut=\tuple{\State,
  \initstate,\Trans,\prio}$.
Intuitively, (conjunctive) \kl{alternation} consists in exploring each
subtree in several states of the automaton. In~order to remove
\kl{alternation}, we~follow the classical approach of building a kind of powerset automaton which, instead of visiting a single node in different
states~$\state_{i_1}$,~...,~$\state_{i_k}$ of~$\State$, 
explores that node in a single \emph{macro-state}, corresponding to
the union of all states~$\state_{i_1}$,~...,~$\state_{i_k}$.

We~illustrate this construction in Example~\ref{ex-simu}, where we
show why we need as a first step to keep track of the \emph{origin} of
each state (of~$\State$) appearing in a macro-state, in~order to be able to evaluate the acceptance condition.

\begin{example}\label{ex-simu}
Let $\Sigma=\{a,b\}$.  Consider an \AAPTA~$\Aut$
with an initial state~$\initstate$
with $\omega(\initstate)=1$, and a state~$\state_1$ with $\omega(\state_1)=0$,
and
\begin{xalignat*}1
\Trans(\initstate,a)&=\EUpair(\initstate\mapsto 1; \emptyset) \et
  \EUpair(\state_1\mapsto 1; \emptyset)
  \\
\Trans(\initstate,b) &=\EUpair(\state_1\mapsto 1; \emptyset)
 \\
\Trans(\state_1,a) = \Trans(\state_1,b) &=\EUpair(\initstate\mapsto 1; \emptyset).
\end{xalignat*}
Notice that this automaton only accepts \kl{trees} with a single \kl{branch}
(i.e.,~words), because all the constraints are of the form $\EUpair(\state \mapsto
1;\emptyset)$. It~is easily seen that the~word~${a^3\cdot b^\omega}$ is \kl{accepted},
while $a^\omega$ is~not. If~we~perform a simple powerset construction,
the~sequence of sets of states along the (unique) computation for both
\kl{words} is $\{\initstate\}\cdot \{\initstate,\state_1\}^\omega$.
This~does not keep enough information to decide if a run is accepting.

Now, if each state is paired with its ancestor (arbitrarily pairing
the initial state with itself), then the sequence of sets of pairs of
states visited along $a\cdot b^\omega$ is
$\{(\initstate,\initstate)\}\cdot \{(\initstate,\initstate),(\initstate,\state_1)\}\cdot
\{(\state_1,\initstate),(\initstate,\state_1)\}^\omega$,
while along $a^\omega$ it is
$\{(\initstate,\initstate)\}\cdot
\{(\initstate,\initstate),\allowbreak(\initstate,\state_1)\}\cdot
\{(\initstate,\initstate),(\initstate,\state_1),(\state_1,\state_1)\}^\omega$.
In~the latter sequence, we~can detect the presence of an infinite
\kl{branch} looping in~$\initstate$. Figure~\ref{fig-exalt} illustrates this
difference.
\end{example}

\begin{figure}[tb]
\centering
\begin{tikzpicture}
\begin{scope}[yshift=1cm]
\path(-2,0) node[text width=1.4cm,align=center] {input tree};
\path(0,0) node[text width=2cm,align=center] {execution tree};
\path(3,0) node[text width=2.7cm,align=center] {powerset of\\states};
\path(6.6,0) node[text width=2.5cm,align=center] {powerset of pairs of states};
\end{scope}
\begin{scope}
\begin{scope}[xshift=-2cm]
\draw (0,-.5) node {$a$};
\draw (0,-1.5) node {$a$};
\draw (0,-2.5) node {$a$};
\draw (0,-3.5) node {$b$};
\draw (0,-4.5) node {$b$};
\end{scope}
\begin{scope}[xscale=.75]
\draw (0,0) node (0q0) {$\initstate$};
\draw (-.75,-1) node (1q0) {$\initstate$};
\draw (1.5,-1) node (1q1) {$\state_1$};
\draw (-1.25,-2) node (2q0) {$\initstate$};
\draw (.25,-2) node (2q1) {$\state_1$};
\draw (1.5,-2) node (2q0') {$\initstate$};
\draw (-1.5,-3) node (3q0) {$\initstate$};
\draw (-.75,-3) node (3q1) {$\state_1$};
\draw (.25,-3) node (3q0') {$\initstate$};
\draw (1.25,-3) node (3q1') {$\initstate$};
\draw (2,-3) node (3q1'') {$\state_1$};
\draw (-1.5,-4) node (4q0) {$\state_1$};
\draw (-.75,-4) node (4q1) {$\initstate$};
\draw (.25,-4) node (4q0') {$\state_1$};
\draw (1.25,-4) node (4q1') {$\state_1$};
\draw (2,-4) node (4q1'') {$\initstate$};
\draw[-latex'] (0q0) -- (1q0);
\draw[-latex'] (0q0) -- (1q1);
\draw[-latex'] (1q0) -- (2q0);
\draw[-latex'] (1q0) -- (2q1);
\draw[-latex'] (1q1) -- (2q0');
\draw[-latex'] (2q0) -- (3q0);
\draw[-latex'] (2q0) -- (3q1);
\draw[-latex'] (2q1) -- (3q0');
\draw[-latex'] (2q0') -- (3q1');
\draw[-latex'] (2q0') -- (3q1'');
\draw[-latex'] (3q0) -- (4q0);
\draw[-latex'] (3q1) -- (4q1);
\draw[-latex'] (3q0') -- (4q0');
\draw[-latex'] (3q1') -- (4q1');
\draw[-latex'] (3q1'') -- (4q1'');
\draw[dashed] (4q0) -- +(-90:8mm);
\draw[dashed] (4q1) -- +(-90:8mm);
\draw[dashed] (4q0') -- +(-90:8mm);
\draw[dashed] (4q1') -- +(-90:8mm);
\draw[dashed] (4q1'') -- +(-90:8mm);
\end{scope}
\begin{scope}[xshift=3cm]
\draw (0,0) node (0q0) {$\initstate$};
\draw (0,-1) node (1q01) {$\{\initstate,\state_1\}$};
\draw (0,-2) node (2q01) {$\{\initstate,\state_1\}$};
\draw (0,-3) node (3q01) {$\{\initstate,\state_1\}$};
\draw (0,-4) node (4q01) {$\{\initstate,\state_1\}$};
\draw[-latex'] (0q0) -- (1q01);
\draw[-latex'] (1q01) -- (2q01);
\draw[-latex'] (2q01) -- (3q01);
\draw[-latex'] (3q01) -- (4q01);
\draw[dashed] (4q01) -- +(-90:8mm);
\end{scope}
\begin{scope}[xshift=6.6cm]
\def\mypair#1#2{(\ifnum#1=0\relax\initstate\else\state_{#1}\fi,
    \ifnum#2=0\relax\initstate\else\state_{#2}\fi)}
\draw (0,0) node (0q0) {$\{\mypair 00\}$};
\draw (0,-1) node (1q01) {$\{\mypair 00, \mypair 01\}$};
\draw (0,-2) node (2q01) {$\{\mypair 00, \mypair 01, \mypair 10\}$};
\draw (0,-3) node (3q01) {$\{\mypair 00, \mypair 01, \mypair 10\}$};
\draw (0,-4) node (4q01) {$\{\mypair 01, \mypair 10\}$};
\draw[-latex'] (0q0) -- (1q01);
\draw[-latex'] (1q01) -- (2q01);
\draw[-latex'] (2q01) -- (3q01);
\draw[-latex'] (3q01) -- (4q01);
\draw[dashed] (4q01) -- +(-90:8mm);
\end{scope}
\end{scope}
\begin{scope}[yshift=-5.7cm]
\begin{scope}[xshift=-2cm]
\draw (0,-.5) node {$a$};
\draw (0,-1.5) node {$a$};
\draw (0,-2.5) node {$a$};
\draw (0,-3.5) node {$a$};
\end{scope}
\begin{scope}[xscale=.75]
\draw (0,0) node (0q0) {$\initstate$};
\draw (-.75,-1) node (1q0) {$\initstate$};
\draw (1.5,-1) node (1q1) {$\state_1$};
\draw (-1.25,-2) node (2q0) {$\initstate$};
\draw (.25,-2) node (2q1) {$\state_1$};
\draw (1.5,-2) node (2q0') {$\initstate$};
\draw (-1.5,-3) node (3q0) {$\initstate$};
\draw (-.75,-3) node (3q1) {$\state_1$};
\draw (.25,-3) node (3q0') {$\initstate$};
\draw (1.25,-3) node (3q1') {$\initstate$};
\draw (2,-3) node (3q1'') {$\state_1$};
\draw[-latex'] (0q0) -- (1q0);
\draw[-latex'] (0q0) -- (1q1);
\draw[-latex'] (1q0) -- (2q0);
\draw[-latex'] (1q0) -- (2q1);
\draw[-latex'] (1q1) -- (2q0');
\draw[-latex'] (2q0) -- (3q0);
\draw[-latex'] (2q0) -- (3q1);
\draw[-latex'] (2q1) -- (3q0');
\draw[-latex'] (2q0') -- (3q1');
\draw[-latex'] (2q0') -- (3q1'');
\draw[dashed] (3q0) -- +(-100:8mm);
\draw[dashed] (3q0) -- +(-70:8mm);
\draw[dashed] (3q1) -- +(-90:8mm);
\draw[dashed] (3q0') -- +(-100:8mm);
\draw[dashed] (3q0') -- +(-70:8mm);
\draw[dashed] (3q1') -- +(-100:8mm);
\draw[dashed] (3q1') -- +(-70:8mm);
\draw[dashed] (3q1'') -- +(-90:8mm);
\end{scope}
\begin{scope}[xshift=3cm]
\draw (0,0) node (0q0) {$\initstate$};
\draw (0,-1) node (1q01) {$\{\initstate,\state_1\}$};
\draw (0,-2) node (2q01) {$\{\initstate,\state_1\}$};
\draw (0,-3) node (3q01) {$\{\initstate,\state_1\}$};
\draw[-latex'] (0q0) -- (1q01);
\draw[-latex'] (1q01) -- (2q01);
\draw[-latex'] (2q01) -- (3q01);
\draw[dashed] (3q01) -- +(-90:8mm);
\end{scope}
\begin{scope}[xshift=6.6cm]
\def\mypair#1#2{(\ifnum#1=0\relax\initstate\else\state_{#1}\fi,
    \ifnum#2=0\relax\initstate\else\state_{#2}\fi)}
\draw (0,0) node (0q0) {$\{\mypair 00\}$};
\draw (0,-1) node (1q01) {$\{\mypair 00, \mypair 01\}$};
\draw (0,-2) node (2q01) {$\{\mypair 00, \mypair 01, \mypair 10\}$};
\draw (0,-3) node (3q01) {$\{\mypair 00, \mypair 01, \mypair 10\}$};
\draw[-latex'] (0q0) -- (1q01);
\draw[-latex'] (1q01) -- (2q01);
\draw[-latex'] (2q01) -- (3q01);
\draw[dashed] (3q01) -- +(-90:8mm);
\end{scope}
\end{scope}
\end{tikzpicture}
\caption{Runs of the automaton~$\Aut$ of Example~\ref{ex-simu}
on $a^3\cdot b^\omega$ and on $a^\omega$}
\label{fig-exalt}
\end{figure}

Our construction follows this intuition. It~consists in four steps,
represented in Fig.~\ref{fig-schema-simu}: the~first step just
consists in pairing states with their predecessors, as we just
illustrated; the~second step builds an (alternating) powerset
automaton, involving a new satisfaction relation~$\models*$ and a new
acceptance condition; the~third step is our main step, where we
(inductively)
remove conjunctions from the transition function, until
it~is \kl{non-alternating}, which allows us to come back to our
original satisfaction relation~$\models+$; the fourth step turns the
acceptance condition back into a \kl{parity condition}, by~taking a
product with an auxiliary parity word automaton~$\MAut_{\prio}$
enforcing acceptance along each branch.

\begin{figure}[ht]
\centering
\begin{tikzpicture}
  \path (0,0) node {$\Aut$: original \AAPTA, using \EU-constraints
    over~$\State$, and relation~$\models+$};
  \path (0,-1) node {$\PAut$: \AAPTA, using \EU-constraints
    over~$\State\times\State$, and relation~$\models+$};
  \path (0,-2) node {$\QAut$: \AATA, using \EU-constraints
    over~$2^{\State\times\State}$, and relation~$\models*$};
  \path (0,-3) node {$\RAut$: \nAATA, using \EU-constraints
    over~$2^{\State\times\State}$, and relation~$\models*$};
  \path (0,-4) node {$\NAut$: \nAAPTA, using \EU-constraints
    over~$\State_{\prio}\times 2^{\State\times\State}$, and relation~$\models+$};
\end{tikzpicture}
\caption{Sequence of transformations for the simulation construction}
\label{fig-schema-simu}
\end{figure}

\subsubsection{Keeping track of ancestor states}

In this section, we~modify automaton~$\Aut$ so as to store, in each
state, its ancestor state. 
For~this, we~reuse the mapping~$\phi_{\state'}$ introduced in Section~\ref{remUsingleton}, defined on~$\State$ as $\phi_{\state'}(\state)=(\state',\state)$ and extended to~\EUconstrs in the natural way.

We~then define the \AAPTA $\PAut=\tuple{\State^2, (\initstate,\initstate),
  \PTrans, \Pprio}$ with $\PTrans((\state,\state'),\alp) =
\phi_{\state'}(\Trans(\state',\alp))$, and
$\Pprio(\state,\state')=\prio(\state')$.  Intuitively,
state~$(\state,\state')$ in~$\PAut$ corresponds to state~$\state'$
in~$\Aut$, with the extra information that this state originates from
state~$\state$. Notice that both $\PTrans$ and $\Pprio$ only depend on
the second state of the pair~$(\state,\state')$.

\begin{proposition}\label{prop-AtoP}
The \kl{languages} of~$\Aut$ and~$\PAut$ are equal.  Moreover, if a
\kl{tree}~$\Tree$ is \kl{accepted} by~$\PAut$, then there is an \kl(ET){accepting}
\kl{execution tree}~$\ExTree=\tuple{\extree,\exlab}$ of~$\PAut$ on~$\Tree$
in which the subtrees rooted at any two nodes~$n_\extree$
and~$n'_\extree$ for which $\exlab(n_\extree)=(m,(\state',\state))$
and $\exlab(n'_\extree)=(m,(\state'',\state))$ are equal.
The size of $\PAut$ is $\tuple{\size\State^2, \sizeB\Trans, \sizeE\Trans,\sizeU\Trans,\priomax}$.
\end{proposition}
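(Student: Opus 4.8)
The plan is to build on the fact that both~$\PTrans$ and~$\Pprio$ depend only on the second (``current'') component of a state pair: one has $\PTrans((\state',\state),\alp)=\phi_{\state}(\Trans(\state,\alp))$ and $\Pprio(\state',\state)=\prio(\state)$. For the equality of \kl{languages} I~would exhibit a structure-preserving correspondence between \kl{execution trees} of~$\Aut$ and of~$\PAut$ over the same input \kl{tree}~$\Tree$. In one direction, given an \kl{execution tree} of~$\Aut$, relabel every non-\kl{root} node carrying $\Aut$-state~$\state$ whose predecessor carries $\Aut$-state~$\state'$ by the pair~$(\state',\state)$, and the \kl{root} by~$(\initstate,\initstate)$. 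Since each relabelling map~$\phi_{\state'}$ is \emph{injective}, it commutes with the satisfaction relation~$\models+$---the witnessing \kl{unitary} \kl{submarking} and the underlying $\submultiset$ and support comparisons of~$\models|$ are preserved---so the \kl{induced} \kl{marking} at each node satisfies~$\PTrans$ precisely when the original one satisfies~$\Trans$; the relabelled tree is thus a valid \kl{execution tree} of~$\PAut$. Conversely, any \kl{execution tree} of~$\PAut$ satisfies the invariant that a node's first component equals its predecessor's second component---forced because $\PTrans((\state',\state),\alp)=\phi_\state(\cdots)$ only mentions states with first component~$\state$---so $\proj_2$ is injective on the states labelling the \kl{successors} of any fixed node, and projecting the second component yields a valid \kl{execution tree} of~$\Aut$. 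As priorities along \kl{branches} coincide under either map, \kl(ET){accepting} trees correspond to \kl(ET){accepting} ones, whence $\Lang(\Aut)=\Lang(\PAut)$.

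For the ``moreover'' part I~would reason through the \kl{game semantics} of Prop.~\ref{prop-gamesem}. In the \kl{parity game} associated with~$\PAut$, the unique successor of a \kl{main state}~$(m,(\state',\state))$ is the \kl{auxiliary state} carrying $\PTrans((\state',\state),\lab(m))=\phi_\state(\Trans(\state,\lab(m)))$, which depends on~$\state$ but not on the origin~$\state'$; and by the observation following the proof of Prop.~\ref{prop-gamesem}, the \kl{auxiliary states} reached below it depend only on~$m$ and on the current subformula or \kl{marking}, not on the automaton state. Hence the subgames rooted at~$(m,(\state',\state))$ and~$(m,(\state'',\state))$ are isomorphic for all origins~$\state',\state''$. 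As $\Tree\in\Lang(\PAut)$, Prop.~\ref{prop-determined} and~\ref{prop-gamesem} give \Pl0 a \kl{memoryless} \kl(S){winning} \kl{strategy}, and through this isomorphism I~can choose it \emph{origin-oblivious}, i.e.\ making identical choices at~$(m,(\state',\state))$ and~$(m,(\state'',\state))$. Extracting an \kl(ET){accepting} \kl{execution tree} from such a \kl{strategy}, as in the proof of Prop.~\ref{prop-gamesem}, produces one in which the subtree below a node labelled~$(m,(\state',\state))$ is determined by~$(m,\state)$ alone; two nodes sharing the same~$(m,\state)$ therefore carry equal subtrees, the only discrepancy being the origin-component of their \kl{roots}, which is inherited from their respective predecessors and is never read below them.

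The size bound is immediate: $\PAut$ has state set~$\State^2$, so $\size{\State^2}=\size\State^2$, and since $\PTrans((\state',\state),\alp)$ is obtained from~$\Trans(\state,\alp)$ by the injective relabelling~$\phi_\state$, the boolean size, the largest existential part and the largest universal part are all unchanged, while $\Pprio$ ranges over the same values as~$\prio$, leaving~$\priomax$ fixed; this gives $\tuple{\size\State^2,\sizeB\Trans,\sizeE\Trans,\sizeU\Trans,\priomax}$. I~expect the delicate step to be the ``moreover'' statement: one must turn an arbitrary winning \kl{strategy} into one insensitive to the origin component---which is exactly what the subgame isomorphism delivers---and one must read ``equal subtrees'' up to the origin-component of their roots, a discrepancy that is both unavoidable and harmless since neither~$\PTrans$ nor~$\Pprio$ ever consults that component.
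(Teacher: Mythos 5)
Your proposal is correct, and for the language equality and the size bound it follows the paper's proof exactly: the same relabelling of an execution tree of~$\Aut$ by (predecessor state, current state) pairs in the forward direction, and the same erasure/projection of the origin component in the converse direction. The only real divergence is in the ``moreover'' clause: the paper first obtains an accepting execution tree of~$\Aut$ in which equally-labelled nodes root equal subtrees (which is what a memoryless winning strategy yields through the game semantics) and then pushes it forward through the pairing construction, whereas you argue directly in the parity game of~$\PAut$, using the isomorphism between the subgames issued from~$(m,(\state',\state))$ and~$(m,(\state'',\state))$ to make a memoryless winning strategy origin-oblivious before extracting the execution tree. Both variants rest on exactly the same two ingredients---positional determinacy (Prop.~\ref{prop-determined}, applied via Prop.~\ref{prop-gamesem}) and the fact that $\PTrans$ and~$\Pprio$ never consult the first component of a pair---so the difference is organisational rather than substantive: yours avoids the final lifting step from~$\Aut$ back to~$\PAut$, while the paper's avoids reasoning about strategies in the larger game of~$\PAut$. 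Two minor cautions: your claim that the invariant (a node's first component equals its predecessor's second component) is \emph{forced} holds only for minimal execution trees, since $\models+$ is monotone and a non-minimal execution tree may carry spurious successors with other origin components---harmless here, given the paper's standing convention that execution trees may be assumed minimal; and your reading of ``equal subtrees'' as equality of the continuations below the two nodes (their own labels necessarily differing in the origin component) is indeed the only reading under which the statement can hold, and it is the one the paper intends.
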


\begin{proof}
Take a \kl{tree}~$\Tree$. Assuming that $\Tree\in\Lang(\Aut)$, take an
\kl(ET){accepting} \kl{execution tree}~$\ExTree=\tuple{\extree,\exlab}$ of~$\Aut$
on~$\Tree$. With~$\ExTree$, we~associate another
\kl{tree}~$\PExTree=\tuple{\extree,\pexlab}$ with the same structure, and
with labelling function~$\pexlab$ defined as follows:
for the \kl{root}, $\pexlab(\emptyw_\extree) = (\emptyw_\tree,
(\initstate,\initstate))$ and, for any non-root
node~$n_\extree=(\dir_i,\state_i)_{0\leq i<\size{n_{\extree}}}$ (hence
having $\exlab(n_\extree)=((\dir_i)_{0\leq i<\size{n_{\extree}}},\state_{\size{n_\extree}-1})$),
\[
\pexlab(n_\extree) = ((\dir_i)_{0\leq i<\size{n_{\extree}}},
(\state_{\size{n_\extree}-2},\state_{\size{n_\extree}-1}))),
\]
where $\state_{\size{n_\extree}-2}$ is~$\initstate$ when $\size{n_\extree}=1$.

It~should be clear that $\PExTree$ is an \kl(ET){accepting} \kl{execution tree}
of~$\PAut$ on~$\Tree$, since the only difference is the addition of
the previous state of the automaton in the labelling, which has no
impact on the transition function nor on the satisfaction of the
acceptance condition.

\smallskip
Conversely, given an \kl(ET){accepting} \kl{execution
tree}~$\PExTree=\tuple{\pextree,\pexlab}$ of~$\PAut$ on~$\Tree$
witnessing the fact that $\Tree$ is \kl{accepted} by~$\PAut$, we obtain an
\kl(ET){accepting} \kl{execution tree} of~$\Aut$ on~$\Tree$ by simply erasing the
first item of the second component of the labelling function. Again,
it~is easily seen that this defines an \kl(ET){accepting} \kl{execution
tree} of~$\Aut$ on~$\Tree$.

\smallskip

Finally, if a tree~$\Tree$ is \kl{accepted} by~$\PAut$, then it is \kl{accepted}
by~$\Aut$, and there exists an \kl(ET){accepting} \kl{execution
tree}~$\ExTree$
of~$\Aut$ on~$\Tree$ such that any two nodes of~$\ExTree$ carrying the
same labels are \kl{roots} of the same subtrees. The~result follows.
\end{proof}

\subsubsection{Building the powerset automaton}\label{ssec-powerset}

In this section, we~perform our powerset construction: we~build an
(\kl{alternating}) \AATA~$\QAut$ whose states are sets of states
of~$\PAut$. This requires two important changes in our setting:
we~will use a modified notion of satisfaction of \EUprs, based on
sets of states, and we will use a new acceptance condition. Notice
that we do \emph{not} remove \kl{alternation} here: this will be done in
the next section, and will allow us to come back to our original
notion of satisfaction of \EUprs.

From the \AAPTA $\PAut=\tuple{\State^2, (\initstate,\initstate), \PTrans,
  \Pprio}$, we~build the powerset \AATA $\QAut=\tuple{\QState,
  \{(\initstate,\initstate)\}, \QTrans, \Accept_{\prio}}$ by letting:
\begin{itemize}
\item $\QState=2^{\State^2}$ contains all the sets of states of~$\PAut$,
  hence all the sets of pairs of states of~$\Aut$;
\item $\QTrans\bigl(\{(\state_i, \state'_i) \mid 1\leq i\leq
  k\},\alp\bigr) = \ET_{1\leq i\leq k}
  \PTrans^s((\state_i,\state'_i),\alp)$,
  where $\PTrans^s((\state_i,\state'_i),\alp)$ is obtained from
  $\PTrans((\state_i,\state'_i),\alp)$ by replacing each pair of
  states~$(\state,\state')$ by the singleton $\{(\state,\state')\}$.
  For the time being, this powerset automaton still is alternating.

  Notice that if we keep our definition of \kl{execution trees}, then
  $\PAut$ and~$\QAut$ would have the same behaviours (and only
  \emph{singleton states} of~$\QAut$ would be used).  Below, we~introduce
  a new notion of \kl{execution trees}, which uses the same tree
  structure as the input tree, and gathers all states of~$\PAut$
  visiting a given node of the input tree into a single state
  of~$\QAut$ visiting that node.

\item
  the acceptance condition $\Accept_{\prio}$ for~$\QAut$ will be based on $\prio$ (and $\Pprio$), but it is \emph{not}
  a parity acceptance condition. We~define it formally below.
\end{itemize}

We~call \intro*\powAATA any automaton of the
form~$\tuple{\QState, \{(\initstate,\initstate)\}, \QTrans', \Accept_\prio}$,
which only differs from~$\QAut$ in its transition
function~$\QTrans'\colon \QState\times\Alp \to \PBF{\EUset(\QState))}$.

\smallskip
We~now define our new notion of
\kl(pow){execution trees} for \powAATA, based on
a new notion of satisfaction for \EUprs.  This~is based on
identifying \kl{markings} of~$\Set'$ by~$\Set$ as \kl{unitary markings}
of~$\Set'$ by~$2^{\Set}$.

\begin{definition}\label{def-models*}
Let $\Set$ and~$\Set'$ be two sets,
$\EUpair(E;U)$ be an \EUpr over~$2^\Set$,
and~$\nu$~be a \kl{marking} of~$\Set'$ by~$\Set$, seen as a unitary marking of~$\Set'$ by~$2^{\Set}$.
Then $\nu$~\intro{set-satisfies}~$\EUpair(E;U)$, denoted $\nu\mathrel{\reintro{\models*}} \EUpair(E;U)$,
if there exists a \kl{submarking}~$\nu'\submarking \nu$,
seen as a \kl{unitary marking} of~$\Set'$ by~$2^{\Set}$, such that
$\mkimg{\nu'}\models \EUpair(E;U)$.
\end{definition}

This relation is extended to \kl{positive boolean combinations}
of \EUprs in the same way as for~$\models+$. 
With this new satisfaction relation, we~define a new notion of
\kl(pow){execution trees} for \powAuta,
whose structure is the same as that of the input \kl{tree}:

\begin{definition}\label{def-exectree2}
Let
$\OAut=\tuple{\QState,\{(\initstate,\initstate)\},\QTrans',\Accept_{\prio}}$
be an  \powAATA over~$\Alp$, and
$\Tree=\tuple{\tree,\lab}$ be a \SDtree{\Alp}{\Dir}, for some
finite set~$\Dir$.  An~\intro(pow){execution tree} of~$\OAut$ over~$\Tree$ is a
\SDtree{\QState}{\Dir}~$\ExTree=\tuple{\extree,\exlab}$ such that $\extree=\tree$
and
\begin{itemize}
\item the \kl{root}~$\emptyw_\extree$ in~$\ExTree$ is labelled
  with~$\{(\initstate,\initstate)\}$;
\item for any node $n_{\extree}=(\dir_i)_{0\leq i<\size{n_{\extree}}}$ of~$\extree$
  (which we can identify with the corresponding node~$m_\tree=(\dir_i)_{0\leq i<\size{m_{\tree}}}$ of the input \kl{tree}),
  letting~$\nu_{n_\extree}$ be the \kl{marking} of~$\succ(m_\tree)$ by~$2^{\State\times\State}$
  such that $\nu_{n_\extree}(m_\tree\cdot d)= \exlab(n_\extree\cdot d)$,
   we~have
   $\nu_{n_\extree}\models* \QTrans'(\exlab(n_{\extree}), \lab(m_{\tree}))$.
\end{itemize}

Whether such an \kl(pow){execution tree} is \kl(powET){accepting} is defined as follows:
  consider an infinite \kl{branch}~$b=(n_{t_i})_{0\leq i<\infty}$
    of the \kl(pow){execution tree} of~$\OAut$ on~$\Tree$,
  with~$\exlab(n_{t_i})=\{(\state_{i,j}, \state'_{i,j}) \mid 0\leq j\leq
  z_i\}$ for each~$i\in\bbN$. A~sequence~$(r_i)_{0\leq i<k}$ of states
  of~$\Aut$ is said to \intro{appear} in \kl{branch}~$b$ if for each~$i\in\bbN$, there
  exists an index~$0\leq j\leq z_i$ such that
  $(r_{i-1},r_i)=(\state_{i,j},\state'_{i,j})$. \kl{Branch}~$b$ is
  \intro(powB){accepting} if all the sequences~$(r_i)_{0\leq i<k}$
  that \kl{appear} in that \kl{branch} satisfy
  the \kl{parity condition} $\prio$ of~$\Aut$; the~\kl(pow){execution
  tree} is \intro(powET){accepting} if all its \kl{branches}~are.
\end{definition}

\makeatletter
\newcounter{save@cptr}
\setcounter{save@cptr}{\value{example}}
\expandafter\ifx\csname r@ex-simu\endcsname\relax
\def\@tmp{{1}{}{}{}{}}
\else
\edef\@tmp{\csname r@ex-simu\endcsname}
\fi
\setcounter{example}{\expandafter\@firstoffive\@tmp}
\addtocounter{example}{-1}
\makeatother
\begin{example}[contd]
  Consider again the automaton~$\Aut$ of Example~\ref{ex-simu}, and
  write~$\QAut$ for the \powAuton obtained from~$\Aut$ by applying the
  transformation above. The~(one-branch) trees to the right of
  Fig.~\ref{fig-exalt} are \kl(pow){execution trees} of~$\QAut$ on
  (one-branch) input trees $a^3\cdot b^\omega$ and $a^\omega$.

  For instance, consider the second node of the \kl(pow){execution tree}
  of~$\QAut$ on~$a^3\cdot b^\omega$, labelled with the state
  $s=\{(\initstate,\initstate), (\initstate,\state_1)\}$ of~$\QAut$.
  By~construction of~$\QAut$, we~have
  \[
  \QTrans(s,a)=
  \bigl(\EUpair({\{(\initstate,\initstate)\}\mapsto 1};\emptyset)
  \et
  \EUpair({\{(\initstate,\state_1)\}\mapsto 1};\emptyset)\bigr)
  \; \et \;  
  \EUpair({\{(\state_1,\initstate)\}}\mapsto 1;\emptyset),
  \]
  where the first term corresponds to~$\Trans(\initstate,a)$ and the
  second term corresponds to~$\Trans(\state_1,a)$.
  And  the third node of the \kl(pow){execution tree} indeed
  \kl{set-satisfies}~$\QTrans(s,a)$.

  On~input~$a^3\cdot b^\omega$, the only~\kl{branch} of the
  \kl(pow){execution tree} is
  \begin{multline*}
    \{(\initstate,\initstate)\}
    \{(\initstate,\initstate),(\initstate,\state_1)\}
    (\{(\initstate,\initstate),(\initstate,\state_1),(\state_1,\initstate)\}) \\
    (\{(\initstate,\initstate),(\initstate,\state_1),(\state_1,\initstate)\})
        (\{(\initstate,\state_1),(\state_1,\initstate)\})^\omega
  \end{multline*}
  There are five sequences of $\State^\omega$ \kl{appearing} in this \kl{branch}:
  $\initstate^3\cdot(\initstate\cdot \state_1)^\omega$,
  $\initstate^2\cdot(\initstate\cdot \state_1)^\omega$,
  $\initstate\cdot(\initstate\cdot \state_1)^\omega$,
  $\initstate\cdot\state_1\cdot\initstate\cdot(\initstate\cdot \state_1)^\omega$, and
  $(\initstate\cdot \state_1)^\omega$.
  All~five of them are accepting w.r.t.~the \kl{parity condition}
  of~$\Aut$ ($\omega(\initstate)=1$ and $\omega(\state_1)=0$), and
  thus this \kl(pow){execution tree} on~$a^3\cdot b^\omega$ is
  \kl(powET){accepting}.

  On the other hand, on input~$a^\omega$, the~unique \kl{branch} of
    the \kl(pow){execution tree}~is:
  \[
  \{(\initstate,\initstate)\} \{(\initstate,\initstate),(\initstate,\state_1)\}
    (\{(\initstate,\initstate),(\initstate,\state_1),(\state_1,\initstate)\})^\omega.
   \]
   The~sequences of $\State^\omega$ \kl{appearing} in this branch are of the form
    $(\initstate^+\cdot\state_1)^+\cdot\initstate^\omega$ and
    $(\initstate^+\cdot\state_1)^\omega$; sequences of the former form
   contain only finitely many occurrences of~$\state_1$, so that
   this branch is not \kl(powET){accepting}.
\end{example}
\makeatletter
\setcounter{example}{\value{save@cptr}}
\makeatother

\begin{proposition}\label{prop-PtoQ}
A \SDtree{\Alp}{\Dir}~$\Tree$ is accepted by~$\PAut$ if, and
only~if, it~is accepted by the \powAATA~$\QAut$.
The size of $\QAut$ is $\tuple{2^{\size\State^2}, \size\State^2\cdot
  \sizeB\Trans, \sizeE\Trans,\sizeU\Trans,\mathord -}$ (remember that $\QAut$ is \emph{not}
a parity automaton).
\end{proposition}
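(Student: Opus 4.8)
The plan is to prove the two implications separately, in each case matching up the \kl{branches} of an \kl{execution tree} of~$\PAut$ with the sequences that \kl{appear} in \kl(pow){execution trees} of~$\QAut$ (Definition~\ref{def-exectree2}). Throughout I write $S_m$ for the macro-state in~$2^{\State^2}$ labelling an input node~$m$ in a \kl(pow){execution tree}, and I repeatedly use the observation recorded in Proposition~\ref{prop-AtoP} that $\PTrans$ and $\Pprio$ depend only on the \emph{second} (current) component of a pair in~$\State^2$, together with the stronger clause of that proposition: we may pick an \kl(ET){accepting} \kl{execution tree} of~$\PAut$ in which any two nodes sharing the same input node and the same current state (but possibly different ancestors) root equal subtrees.

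First I would treat the implication from~$\PAut$ to~$\QAut$. Starting from such an \kl(ET){accepting} \kl{execution tree}~$\PExTree$ of~$\PAut$ on~$\Tree$, I set $S_m$ to be the set of all micro-states $(\state',\state)\in\State^2$ labelling some node of~$\PExTree$ above input node~$m$; since the \kl{root} of~$\PExTree$ is the unique node above the input \kl{root}, we get $S_{\troot}=\{(\initstate,\initstate)\}$ as required. Checking that this defines a valid \kl(pow){execution tree} is routine: for each $(\state',\state)\in S_m$ there is a $\PExTree$-node~$x$ above~$m$ whose children realise $\PTrans((\state',\state),\lab(m))$ under~$\models+$, and every child micro-state belongs to the macro-state of the corresponding \kl{successor}; translating this $\models+$ witness into singleton picks yields $\nu_m\models* \PTrans^s((\state',\state),\lab(m))$, and conjoining over all $(\state',\state)\in S_m$ gives $\nu_m\models* \QTrans(S_m,\lab(m))$.

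The hard part — and the only place where the full strength of Proposition~\ref{prop-AtoP} is needed — is showing this tree is \kl(powET){accepting}. I would take any infinite \kl{branch}~$b=(m_i)_i$ and any sequence $(r_i)_i$ that \kl{appears} in it, so that $(r_{i-1},r_i)\in S_{m_i}$ for all~$i$, and reconstruct a genuine \kl{branch}~$(x_i)_i$ of~$\PExTree$ whose current states are exactly~$(r_i)_i$. Inductively, given~$x_{i-1}$ labelled $(m_{i-1},(r_{i-2},r_{i-1}))$, the membership $(r_{i-1},r_i)\in S_{m_i}$ means some $\PExTree$-node above~$m_i$ carries this label, and its parent sits above~$m_{i-1}$ with current state~$r_{i-1}$. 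That parent and~$x_{i-1}$ share the same input node and the same current state but possibly different ancestors, so by the equal-subtrees property they root equal subtrees; hence $x_{i-1}$ itself has a child~$x_i$ labelled $(m_i,(r_{i-1},r_i))$. This is the crucial step: without the ancestor-insensitivity guaranteed by Proposition~\ref{prop-AtoP}, micro-states collected into~$S_{m_i}$ from unrelated parts of~$\PExTree$ could fail to chain into an actual path. Since~$\PExTree$ is \kl(ET){accepting}, the reconstructed \kl{branch} satisfies~$\Pprio$, i.e.\ $(\prio(r_i))_i$ satisfies the \kl{parity condition}, so $(r_i)$ is \kl(powB){accepting}; finite appearing sequences are vacuously so.

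For the converse, from an \kl(powET){accepting} \kl(pow){execution tree}~$\ExTree$ of~$\QAut$ I would unfold a $\PAut$-\kl{execution tree} whose nodes are pairs $(m,(\state',\state))$ with $(\state',\state)\in S_m$: from such a node, $\nu_m\models* \QTrans(S_m,\lab(m))$ entails $\nu_m\models* \PTrans^s((\state',\state),\lab(m))$, and I use the corresponding $\models*$ witness to select from each \kl{successor}'s macro-state the singleton micro-states it requires, declaring these the children. Those children lie in the \kl{successor} macro-states, so the construction recurses, and the selected \kl{marking} satisfies $\PTrans((\state',\state),\lab(m))$ under~$\models+$, so this is a valid \kl{execution tree} of~$\PAut$. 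Any \kl{branch} of it projects to a \kl{branch}~$b$ of~$\ExTree$ while its sequence of current states~$(r_i)$ satisfies $(r_{i-1},r_i)\in S_{m_i}$, hence \kl{appears} in~$b$; acceptance of~$\ExTree$ forces it to satisfy~$\prio$, so every \kl{branch} is \kl(B){accepting} and $\PAut$ \kl{accepts}~$\Tree$. Finally the size bounds are immediate: $\QState=2^{\State^2}$ gives $2^{\size\State^2}$ states, each $\QTrans(S,\cdot)$ is a conjunction of at most $\size\State^2$ transitions of boolean size~$\sizeB\Trans$, and passing through~$\phi_{\state'}$ and the singleton map~$\PTrans^s$ alters neither the existential nor the universal part sizes, leaving $\sizeE\Trans$ and $\sizeU\Trans$ unchanged and no priority component (as $\QAut$ is not a parity automaton).
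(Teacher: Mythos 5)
Your proposal is correct and takes essentially the same approach as the paper: the same forward construction collecting micro-states of~$\PAut$ into macro-states, with the equal-subtrees guarantee of Proposition~\ref{prop-AtoP} doing exactly the same job of chaining an \kl{appearing} sequence back into a genuine \kl{branch} of the $\PAut$-execution tree (the paper phrases this step as a minimal-counterexample argument where you use a direct induction), and the same backward unfolding of macro-states into a $\PAut$-execution tree whose branches all give appearing sequences. The only cosmetic difference is that the paper maintains an exactness invariant in the backward direction where your containment-only version already suffices.
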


\begin{proof}
  Assuming that $\Tree=\tuple{\tree,\lab}$ is \kl{accepted} by~$\PAut$,
  take an \kl(ET){accepting} \kl{execution tree}~$\ExTree=\tuple{\extree,\exlab}$
  of~$\PAut$ on~$\Tree$. By~Prop.~\ref{prop-AtoP}, we~may assume that
  any two subtrees rooted at any two nodes~$n_\extree$ and~$n'_\extree$
  of~$\ExTree$ such that $\exlab(n_\extree)=(m,(\state',\state))$ and
  $\exlab(n'_\extree)=(m,(\state'',\state))$ are the same.

  Consider the \kl{tree}~$\ExTree'=\tuple{\tree,\exlab'}$ having the
  same \kl{tree structure} as~$\Tree$ and with $\exlab'(n_{\tree})=\{
  (q,q')\in\State\times\State \mid \exists
  n_\extree\in\extree.\ \exlab(n_\extree)=(n_\tree,(q,q'))\}$.
  Notice that the transition function is satisfied at each node:
  consider a node~$n_\tree$ whose labelling by~$\exlab'$ is a set of
  pairs~$(q,q')$. Then by definition, the~successors of~$n_\tree$
  collect all the pairs~$(r,r')$ used to satisfy every \EUpr
  required by~$\PTrans$ for the labels of the form~$(n_\tree,(r,r'))$,
  which allows to \kl{set-satisfy}~$\QTrans$ from~$n_\tree$.
  It~follows that $\ExTree'$ is an \kl(pow){execution tree} of~$\QAut$
  on~$\Tree$.
  
  We~now prove that $\ExTree'$ is \kl(powET){accepting}: take a
  \kl{branch}~$b=(n_{t_i})_{i\in\bbN}$ of~$\ExTree'$, with
  $\exlab'(n_{t_i})=\{(\state_{i,j}, \state'_{i,j}) \mid 0\leq j\leq z_i\}$
  for each~$i\in\bbN$.  Take a~sequence~$(r_i)_{i\in\bbN}$ of states
  that
  \kl{appears} in~$b$. We~claim that $(n_i,(r_{i-1},r_i))_{i\in\bbN}$,
  with $r_{-1}=\initstate$, is a \kl{branch} of~$\ExTree$. If~this~were not
  the case, consider the first index~$i_0$ such that
  $(n_i,(r_{i-1},r_i))_{i\leq i_0}$ is a prefix of a \kl{branch}
  of~$\ExTree$, and $(n_i,(r_{i-1},r_i))_{i\leq i_0+1}$ is~not. 
  
  By the definition of $\ExTree'$, we know that $\exlab'(n_{t_{i_0}})
  \ni (r_{i_0-1},r_{i_0})$, and then there exists a node $n_u$ in
  $\ExTree$
  s.t.\ $\exlab(n_u)=(n_{t_{i_0}},(r_{i_0-1},r_{i_0}))$. Moreover we
  have that there exists $d\in \Dir$ such that
  $\exlab'(n_{t_{i_0}\cdot d}) \ni (r_{i_0},r_{i_0+1})$ and then there
  exists a node $n'_u$ in $\ExTree$
  s.t.\ $\exlab(n'_u)=(n_{t_{i_0}\cdot d},(r_{i_0},r_{i_0+1}))$.
  
  The predecessor of $n'_u$ in $\ExTree$ is then labelled by some
  $(n_{t_{i_0}},(s,r_{i_{0}}))$, and by Prop.~\ref{prop-AtoP}, we~can
  assume that the subtrees rooted from this node and from~$n_u$ are
  the same: this entails that $(n_i,(r_{i-1},r_i))_{i\leq i_0+1}$ is a
  prefix of a \kl{branch} of~$\ExTree$. Therefore $\ExTree'$ is
  \kl(powET){accepting}, and $\Tree=\tuple{\tree,\lab}$ is \kl(powET){accepted}
  by~$\QAut$.

  \smallskip
  
Conversely, assume  that $\Tree=\tuple{\tree,\lab}$ is \kl(powET){accepted}
by~$\QAut$ and consider an \kl(powET){accepting} \kl(pow){execution
tree}~$\ExTree'=\tuple{\tree,\exlab'}$ of~$\QAut$ on~$\Tree$. From
$\ExTree'$, we~build a \SDtree{(\tree\times \State^2)}{(\Dir
\times \State^2)}~$\ExTree = (\extree,\exlab)$ level-by-level, in
such a way that it~is an \kl(ET){accepting} \kl{execution tree} of~$\PAut$
on~$\Tree$.  During the inductive construction, we will maintain the
invariant that for any node~$n_\tree$ at depth~$i$ in~$\Tree$,
$\exlab'(n_\tree)$~is exactly the set of pairs~$(q,q')$ occurring in a
labelling of $\ExTree$-nodes at depth~$i$ of the
form~$(n_\tree,(q,q'))$.

First we define the labelling of the~\kl{root}:
$\exlab(\emptyw_\extree)=(\emptyw_\tree,(\initstate,\initstate))$.
The~invariant property clearly holds true at level~$0$.

Now consider a previously-defined node $n_\extree$ of~$\ExTree$ labelled with
$(n_\tree,(q,q'))$. Then by the invariant, we~have $\exlab'(n_\tree)
\ni (q,q')$, and by~definition of~$\QAut$,
all its successors $\{n_\tree\cdot d\in\Tree \mid d\in\Dir\}$
are labelled by~$\exlab'$ with a set of pairs of the form $(q',r)$ that satisfy the
$\PTrans$-function. We~precisely add successors to~$n_\extree$ in
order to get exactly the same labels $(n_{\tree}\cdot d,(q',r))$ for
all $n_\tree\cdot d$ in~$\tree$.
This maintains the invariant and the transition
function~$\PTrans$ is locally satisfied by the definition of~$\ExTree$.
 
Now we can easily see that this \kl{execution tree} is
\kl(ET){accepting}: consider a \kl{branch}~$b$ of~$\ExTree$; its~labelling
describes a sequence $(r_i)_{i\in\bbN}$ that also \kl{appears} in the
corresponding \kl{branch} in~$\ExTree'$.
\end{proof}

\subsubsection{Removing conjunctions}

We now remove conjunctions from the transition function~$\QTrans$
of~$\QAut$. As~a first step, we~turn each formula~$\QTrans(P,\alp)$ in
disjunctive normal form.
We~can bound the number of different \EUprs appearing in any
given~$\QTrans(P,\alp)$ by $\size\State\cdot \sizeB\Trans$: indeed,
while it~is built as a conjunction of up to~$\State^2$ transition
formulas, any~two pairs~$(\state',\state)$ and~$(\state'',\state)$
give rise to the same \EUprs.
It~follows that $\QTrans(P,\alp)$
can be written as the disjunction of at most
$2^{\size\State\cdot \sizeB\Trans}$ conjunctions of at most
$\size\State\cdot\sizeB\Trans$~\EUprs.

We~now turn those conjunctions into
disjunctions. We~proceed inductively, by replacing any conjunction
$\EUpair(E_1;U_1)\et \EUpair(E_2;U_2)$ of two \EUprs
over~$2^{\State\times\State}$ with an ``equivalent'' disjunction of
\EUprs over~$2^{\State\times\State}$ (in~the sense that the
transformation preserves the language of the automaton).

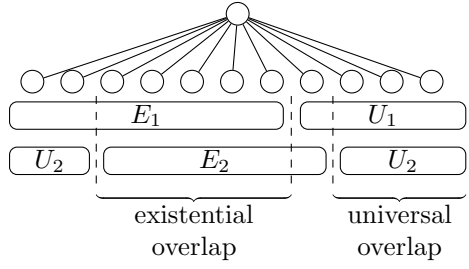
\begin{figure}[t]
  \centering
  \begin{tikzpicture}[scale=1.5]
    \draw (0,-.1) node[draw,rond, minimum size=3mm,inner sep=1.2pt] (m) {};
    \foreach \x in {-1.8,-1.45,...,1.81}
             {\draw (\x,-.7) node[draw,rond, minimum size=3mm,inner sep=1pt] (n) {};
              \draw (m) -- (n);}
    \draw (-.8,-1) node[draw,rounded corners=1mm,inner sep=1pt,
      minimum width=3.6cm] (E1) {$E_1$}; 
    \draw (1.275,-1) node[draw,rounded corners=1mm,inner sep=1pt,
      minimum width=2.175cm] (U1) {$U_1$};
    \draw (-.2,-1.4) node[draw,rounded corners=1mm,inner sep=1pt,
      minimum width=2.925cm] (E2) {$E_2$}; 
    \draw (1.45,-1.4) node[draw,rounded corners=1mm,inner sep=1pt,
      minimum width=1.65cm] (U2) {$U_2$};
    \draw (-1.65,-1.4) node[draw,rounded corners=1mm,inner sep=1pt,
      minimum width=1.05cm] (U3) {$U_2$};
    \draw[decoration={calligraphic brace,mirror},decorate,line width=.6pt]
    (-1.24,-1.7) -- (.47,-1.7) node[below,midway,text width=2cm, align=center]
         {existential overlap};
    \draw[decoration={calligraphic brace,mirror},decorate,line width=.6pt]
    (.835,-1.7) -- (2,-1.7) node[below,midway,text width=2cm, align=center]
         {universal overlap};
    \draw[dashed] (-1.24,-1.6) -- +(0,.8);
    \draw[dashed] (.47,-1.6) -- +(0,.8);
    \draw[dashed] (.835,-1.6) -- +(0,.8);
  \end{tikzpicture}
  \caption{Representation of the overlaps in an execution tree
    when satisfying a conjunction of two \EU-constraints
  \protect\(\protect\EUpair(E_1;U_1)\protect\) and
  \protect\(\protect\EUpair(E_2;U_2)\protect\).}
  \label{fig-overlaps}
\end{figure}

Write $m_1$, $n_1$, $m_2$ and $n_2$ for the sizes of~$E_1$, $U_1$,
$E_2$ and~$U_2$, respectively.  The~disjunction we build ranges over
the possible ways the ``existential'' and ``universal'' parts of
the \EUprs overlap (see~Fig.~\ref{fig-overlaps}).
For~each combination, we~write an \EUpr whose existential part
contains the ``existential'' overlaps and the two ``mixed'' overlaps,
and whose universal part handles the ``universal'' overlap.

The disjunction of \EUprs can then be written as follows:
\begin{multline*}
  C(\EUpair(E_1;U_1),\EUpair(E_2;U_2)) =
  \OU_{\substack{J_1\submultiset E_1, J_2\submultiset E_2 \\ \MSsize{J_1}=\MSsize{J_2}}}
  \
  \OU_{\substack{\tau \text{ permutation }\\\text{of~$[1;\size{J_1}]$}}} 
  \OU_{%
      \substack
      {g_1\colon E_1\msminus J_1 \to U_2 \\
       g_2\colon E_2\msminus J_2 \to U_1}
  } \\
  \left(
  E'= \biguplus
  \begin{array}{l}
    \mset{ j^1_k \cup j^2_{\tau(k)} \mid 1\leq k\leq \size{J_1} } \\
    \mset{ e^1_k \cup g_1(e^1_k) \mid 1\leq k\leq n_1-\size{J_1}} \\
    \mset{ g_2(e^2_k) \cup e^2_k \mid 1\leq k\leq n_2-\size{J_2}}
  \end{array};
  U'=U_1 \otimes U_2
  \right)
\end{multline*}
where we use the notations
\begin{xalignat*}2
J_1&=\mset{ j^1_k \mid 1\leq k\leq o} &
J_2&=\mset{ j^2_k \mid 1\leq k\leq o} \\
E_1\msminus J_1 &= \mset{ e^1_k \mid 1\leq k\leq n_1-o} &
E_2\msminus J_2 &= \mset{ e^2_k \mid 1\leq k\leq n_2-o} \\
U_1 \otimes U_2&= \{ u_1 \cup u_2 \mid u_1 \in U_1, u_2 \in U_2\}. & 
\end{xalignat*}

Note that the sizes of existential and universal parts of any \EUpr
in $C({\EUpair(E_1;U_1)},\allowbreak{\EUpair(E_2;U_2)})$
are bounded by $n_1+n_2$ and $m_1\cdot m_2$, respectively.

\begin{remark}\label{rk-four}
In case~$U_2$ is empty (the case of~$U_1$ being empty would be
symmetric), then the only possible overlaps are between~$E_2$
and~$\EUpair(E_1;U_1)$. This is reflected in our formula by
considering that, when~$U_2$ is empty, there exist no functions
$g_1\colon E_1\setminus J_1\to U_2$ when $E_1\setminus J_1$~is~not
empty, while there is a single one when $E_1\setminus J_1$ is
empty. In~other terms, if~$U_2$ is empty, we~must have $J_1=E_1$.
Notice also that if $U_2$ is empty, then $U_1\otimes U_2$ also~is.

In case both $U_1$ and $U_2$ are empty, then we must have $J_1=E_1$
and $J_2=E_2$, and both must have the same size.  Then
$C(\EUpair(E_1;U_1),\EUpair(E_2;U_2))$ is just a disjunction, over all
permutations of~$E_1$, of constraints of the form~$\EUpair(\mset{ e^1_{\tau(k)} \cup e^2_{k}
\mid 1\leq k\leq \size{E_1} };\emptyset)$.
\end{remark}

\makeatletter
\def\mypair{\@ifnextchar*{\@pairstar}{\@pair}}
\def\@pairstar*#1{\genfrac(){0pt}{1}{\star}{\ifnum#1=0\relax\initstate\else\state_{#1}\fi}}
\def\@pair#1#2{\genfrac(){0pt}{1}{\ifnum#1=0\relax\initstate\else\state_{#1}\fi}{\ifnum#2=0\relax\initstate\else\state_{#2}\fi}}
\makeatother
\newcounter{runningex}
\setcounter{runningex}{\value{example}}

\begin{example}\label{ex-transform}
Consider an \AATA~$\Aut$ with
$\State=\{\state_i \mid 0\leq i\leq 4\}$, and assume that the transition function for $\state_1$ and $\state_2$ is as follows:
\begin{xalignat*}1
\delta(\state_1,\alp) &= \EUpair(\state_1\mapsto 2; \{\state_3\})
  \ou \EUpair(\state_2\mapsto 2; \{\state_2,\state_3\}) \\
\delta(\state_2,\alp) &= \EUpair(\state_3\mapsto 1; \{\state_1,\state_4\}) 
\end{xalignat*}
Now assume that after building the
corresponding automata~$\PAut$ and~$\QAut$, we~have to deal with the state $\{\mypair31,\mypair42\}$. We 
get the following formula
(where, for the sake of readability, brackets are omitted for singleton sets):
\begin{multline*}
\QTrans\left(\{\mypair31,\mypair42\},\alp\right) =
 \left(\EUpair(\mypair11\mapsto 2; \{\mypair13\})
  \ou \EUpair(\mypair12\mapsto 2; \{\mypair12,\mypair13\})\right) \et {} \\
 \EUpair(\mypair23\mapsto 1; \{\mypair21,\mypair24\})
\end{multline*}
Turning this into disjunctive normal form gives
\begin{multline*}
\QTrans\left(\{\mypair31,\mypair42\},\alp\right) =
\left( (\EUpair(\mypair11\mapsto 2; \{\mypair13\}) \et  \EUpair(\mypair23\mapsto 1; \{\mypair21,\mypair24\})\right)
\ou {}\\
\left(\EUpair(\mypair12\mapsto 2; \{\mypair12,\mypair13\})) \et
 \EUpair(\mypair23\mapsto 1; \{\mypair21,\mypair24\})\right)
\end{multline*}

Consider the first disjunct of
this formula:
\[
\EUpair(\mypair11\mapsto 2; \{\mypair13\}) \et
\EUpair(\mypair23\mapsto 1; \{\mypair21,\mypair24\}),
\]
and write $E_1$ for the multiset~$\mypair11\mapsto 2$,
$E_2$~for~$\mypair23\mapsto 1$, $U_1=\{\mypair13\}$ and
$U_2=\{\mypair21,\mypair24\}$.
\begin{itemize}
  \item In case $E_1$ and $E_2$
    do not overlap (i.e., for~$J_1=J_2=\emptyset$), the~state~$\mypair23$ of~$E_2$
    will be paired with the only state~$\mypair13$ of~$U_1$, and the
    two occurrences of~$\mypair11$ required to fulfill~$E_1$
    may be paired with one of the states~$\mypair21$
    and~$\mypair24$ of~$U_2$. For~this case we~obtain a disjunction of
    three \EUprs, each having
    $U'=\{\{\mypair13,\mypair21\},\{\mypair13,\mypair24\}\}$ as their
    second component, and having the following \kl{multisets} as
    their first component:
    \begin{xalignat*}1
    E'_1&= \{\mypair11,\mypair21\}\mapsto 2,
    \{\mypair23,\mypair13\}\mapsto 1 \\ E'_2&=
    \{\mypair11,\mypair21\}\mapsto 1,\{\mypair11,\mypair24\}\mapsto 1,
    \{\mypair23,\mypair13\}\mapsto 1\\ E'_3&=
    \{\mypair11,\mypair24\}\mapsto 2, \{\mypair23,\mypair13\}\mapsto 1.
    \end{xalignat*}
  \item
    otherwise, one of the two occurrences of~$\mypair11$ required by~$E_1$ will
    overlap with the state~$\mypair23$ of~$E_2$,
    the other
    occurrence of $\mypair11$ being paired with an element of~$U_2$.
    We~get a disjunction of
    two \EUprs, again having~$U'$ as their second component,
    and having the following \kl{multisets} as their first
    component: \begin{xalignat*}1
    E'_4&= \{\mypair11,\mypair23\}\mapsto
    1, \{\mypair11,\mypair21\}\mapsto 1 \\
    E'_5&= \{\mypair11,\mypair23\}\mapsto
    1, \{\mypair11,\mypair24\}\mapsto
    1.  \end{xalignat*} \end{itemize} For this example, the~resulting
    formula then is a disjunction of 5 \EUconstrs.
  \end{example}

In~order to prove correctness of this construction, we~establish
a correspondence between a
conjunction~$\EUpair(E_1;U_1) \et \EUpair(E_2; U_2)$ and its resulting
formula $C(\EUpair(E_1;U_1),\EUpair(E_2;U_2))$:
\begin{lemma}\label{lemma-QtoR}
Let $\Set$ and~$\Set'$ be two finite sets, and
$\EUpair(E_1;U_1) \et \EUpair(E_2; U_2)$ be a conjunction of two
\EUprs on~$2^\Set$. For any \kl{unitary marking}~$\nu$ of~$\Set'$ by~$2^\Set$, it~holds
$\nu\models* \EUpair(E_1;U_1) \et \EUpair(E_2; U_2)$ if, and only~if,
$\nu\models* C(\EUpair(E_1;U_1),\EUpair(E_2;U_2))$.
\end{lemma}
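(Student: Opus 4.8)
The plan is to prove both implications by first translating the abstract relation $\models*$ into a concrete combinatorial statement, and then transferring witnesses explicitly in each direction. First I would unfold the definitions: for a single \EUpr, $\nu\models*\EUpair(E;U)$ means that we can choose for every node $s'\in\Set'$ a non-empty subset $\nu'(s')\subseteq\nu(s')$ such that $\mkimg{\nu'}\models|\EUpair(E;U)$; equivalently, there is a set $A\subseteq\Set'$ with $\mset{\nu'(s')\mid s'\in A}=E$ and $\nu'(s')\in U$ for every $s'\notin A$. Since $C(\EUpair(E_1;U_1),\EUpair(E_2;U_2))$ is a disjunction of \EUprs, $\nu\models* C$ is equivalent to the existence of a single submarking $\nu'\submarking\nu$ fulfilling one of its disjuncts (as observed after Def.~\ref{def-models+} for $\models+$, and identically for $\models*$).

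For the implication $\nu\models*\EUpair(E_1;U_1)\et\EUpair(E_2;U_2)\Rightarrow\nu\models* C$, I would take witnessing submarkings $\nu_1,\nu_2$ with associated ``existential'' node sets $A_1,A_2$, and merge them into $\nu'$ defined by $\nu'(s')=\nu_1(s')\cup\nu_2(s')$; this is again a submarking of $\nu$, being subset-wise contained in $\nu$ and non-empty. The partition of $\Set'$ according to membership in $A_1$ and $A_2$ produces exactly the four overlap regions of Fig.~\ref{fig-overlaps}: $A_1\cap A_2$ (existential overlap), $A_1\setminus A_2$ and $A_2\setminus A_1$ (the two mixed overlaps), and the complement (universal overlap). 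Reading off the data of the matching disjunct is then mechanical: $J_1=\mset{\nu_1(s')\mid s'\in A_1\cap A_2}\submultiset E_1$ and $J_2=\mset{\nu_2(s')\mid s'\in A_1\cap A_2}\submultiset E_2$, the permutation $\tau$ records, copy by copy, which $E_1$-value is paired with which $E_2$-value on $A_1\cap A_2$, and $g_1,g_2$ record the $U_2$- resp.\ $U_1$-partner of each copy on the mixed regions. One then checks that the existential part $E'$ of this disjunct equals $\mset{\nu'(s')\mid s'\in A_1\cup A_2}$ and that every leftover value $\nu'(s')$ (for $s'\notin A_1\cup A_2$) lies in $U'=U_1\otimes U_2$, which yields $\mkimg{\nu'}\models|\EUpair(E';U')$.

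For the converse I would start from a disjunct of $C$ (fixed data $J_1,J_2,\tau,g_1,g_2$) together with a submarking $\nu'$ and a node set $B$ witnessing $\mkimg{\nu'}\models|\EUpair(E';U')$, and split $\nu'$ back into two submarkings $\nu_1,\nu_2$. The essential point is that the splitting is dictated by the labelling used to build $E'$ (and the matching of $B$ to the copies of elements of $E'$), not by the set $\nu'(s')$ itself: a node matched to the copy $j^1_k\cup j^2_{\tau(k)}$ is split as $\nu_1(s')=j^1_k$, $\nu_2(s')=j^2_{\tau(k)}$; a node matched to $e^1_k\cup g_1(e^1_k)$ as $\nu_1(s')=e^1_k$, $\nu_2(s')=g_1(e^1_k)\in U_2$; symmetrically for the other mixed type; and a leftover node with $\nu'(s')=u_1\cup u_2\in U_1\otimes U_2$ as $\nu_1(s')=u_1$, $\nu_2(s')=u_2$. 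A direct multiset count then shows that the $E_1$-covering nodes (the existential-overlap nodes together with the mixed $E_1/U_2$ nodes) carry $\nu_1$-values forming $J_1\uplus(E_1\msminus J_1)=E_1$, while all remaining $\nu_1$-values lie in $U_1$; symmetrically for $\nu_2$. This gives $\nu\models*\EUpair(E_1;U_1)$ and $\nu\models*\EUpair(E_2;U_2)$, hence $\nu\models*\EUpair(E_1;U_1)\et\EUpair(E_2;U_2)$.

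I expect the main obstacle to be the multiset bookkeeping rather than any conceptual difficulty. In particular one must treat $J_1,J_2$, $g_1,g_2$ and $\tau$ as data on \emph{indexed copies} (so that distinct copies of an equal $E_i$-element may receive distinct partners, which is exactly why $g_1,g_2$ are functions on $E_1\msminus J_1$ and $E_2\msminus J_2$ rather than on supports), and one must be careful that in the backward direction the decomposition of each $\nu'(s')$ is read from the construction of $E'$ and the chosen $B$-to-$E'$ matching, since the set-union $\nu'(s')$ may itself admit several unrelated decompositions. Finally, non-emptiness of all extracted values $\nu_1(s'),\nu_2(s')$ is inherited from the fact that the elements of $E_1,U_1,E_2,U_2$ are themselves non-empty (they arise as singletons, or unions of singletons, over $\State\times\State$ in the construction), so the $\nu_1,\nu_2$ produced are genuine submarkings.
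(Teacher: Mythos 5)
Your proof is correct and takes essentially the same route as the paper's: for the forward direction, merge the two witnessing submarkings pointwise ($s'\mapsto \nu_1(s')\cup\nu_2(s')$) and read off $J_1$, $J_2$, $\tau$, $g_1$, $g_2$ from the four overlap regions; for the converse, split a witness of one disjunct back into two submarkings along the fixed indexed matching. The two subtleties you flag --- that the backward splitting must be dictated by the matching of nodes to indexed copies of $E'$ rather than by the set $\nu'(s')$ itself, and that non-emptiness of the split values relies on the elements of $E_i$ and $U_i$ being non-empty subsets of $\Set$ (guaranteed in the powerset construction, though not by the lemma's bare statement over $2^\Set$) --- are passed over silently in the paper's proof, and the second one is genuinely needed for the backward direction.
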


\begin{proof}
Assume that $\nu\models* \EUpair(E_1;U_1) \et \EUpair(E_2; U_2)$. Then
there exists two \kl{unitary submarkings}~$\nu_1$ and~$\nu_2$ such that
$\nu_1\models*\EUpair(E_1;U_1)$ and $\nu_2\models*\EUpair(E_2;U_2)$.
We~let $\nu_1\Cup\nu_2$ be the \kl{marking} such that $\nu_1\Cup\nu_2(s')
= \nu_1(s') \cup \nu_2(s')$. Then $\nu_1\Cup\nu_2$ is a \kl{unitary
submarking} of~$\nu$ by~$2^{\Set}$; we~now prove that $\nu_1\Cup\nu_2\models*
C(\EUpair(E_1;U_1),\EUpair(E_2;U_2))$.

For~$i\in\{1,2\}$, $\nu_i\models*\EUpair(E_i;U_i)$ means that
$\mkimg{\nu_i}\models \EUpair(E_i;U_i)$, which in turn means that there
exists a subset~$S'_i$ of~$\Set'$ such that $\nu_i(S'_i)=E_i$ and
${\supp(\nu_i(\Set'\setminus S'_i))\subseteq U_i}$. We~let $O=S'_1\cap
S'_2$ be the overlap between~$S'_1$ and~$S'_2$, $o=\size O$ be the
size of this overlap, and $\mset{ j^i_k \mid 1\leq k\leq o} = J_i
=\nu_i(O)$ be (multiset) images of~$O$ by~$\nu_i$. Then the multiset
$\nu_1\Cup\nu_2(O)$ corresponds to~$\mset{ j^1_k\cup j^2_{\tau(k)}\mid
1\leq k\leq o}$ for some permutation~$\tau$. Similarly,
letting~$H_i=S'_i\cap (\Set'\setminus S'_{3-i})$, the~elements
of~$\nu_1\Cup \nu_2(H_i)$ are unions of one set of~$E_i$ and one set
of~$U_{3-i}$, of the form $\mset{ e^i_k\cup g_i(e^i_k) \mid 1\leq
k\leq \size{E_i}-o}$ for some functions $g_i\colon E_i\msminus
J_i \to U_{3-i}$. Finally, any~$s'\in (\Set'\setminus S_1)\cap (\Set'\setminus S_2)$,
$\nu_1\Cup \nu_2(s')$ is the union of two sets in~$U_1$ and~$U_2$, respectively.
This shows that $\nu_1\Cup\nu_2\models* C(\EUpair(E_1;U_1),\EUpair(E_2;U_2))$. 

\medskip
The converse direction is similar: assuming that $\nu\models*
C(\EUpair(E_1;U_1),\EUpair(E_2;U_2))$, we~pick a
\kl{unitary submarking}~$\nu'$ such that
\[
\mkimg{\nu'} \models 
  \left(
  E'= \biguplus
  \begin{array}{l}
    \mset{ j^1_k \cup j^2_{\tau(k)} \mid 1\leq k\leq \size{J_1}} \\
    \mset{ e^1_k \cup g_1(e^1_k) \mid 1\leq k\leq n_1-\size{J_1}} \\
    \mset{ g_2(e^2_k) \cup e^2_k \mid 1\leq k\leq n_2-\size{J_1}}
  \end{array};
  U'=U_1 \otimes U_2
  \right)
\]
for some~$J_1=\mset{ j^i_k\mid 1\leq k\leq o} \submultiset E_1$ and
$J_2=\mset{ j^i_k\mid 1\leq k\leq o} \submultiset E_2$ of the same
size, some permutation~$\tau$ of~$[1;o]$, and some functions
$g_1\colon E_1\msminus J_1\to U_2$ and $g_2\colon E_2\msminus
J_2\to U_1$.
We~fix three disjoint subsets~$O$, $H_1$ and~$H_2$ of~$\Set'$ such that
\begin{xalignat*}1
\nu'(O)   &=\mset{ j^1_k \cup j^2_{\tau(k)} \mid 1\leq k\leq \size{J_1}} \\
\nu'(H_1) &=\mset{ e^1_k \cup g_1(e^1_k) \mid 1\leq k\leq \size{E_1}-\size{J_1}} \\
\nu'(H_2) &=\mset{ g_2(e^2_k) \cup e^2_k \mid 1\leq k\leq \size{E_2}-\size{J_1}} \\
\supp(\nu'(\Set'\setminus (O\cup H_1\cup H_2))) &\subseteq U_1\otimes U_2.
\end{xalignat*}
It~should be clear that from~$\nu'$ (hence also from~$\nu$), we~can extract two submarkings~$\nu_1$ and~$\nu_2$ such that
$\nu_1(O\cup H_1)=E_1$ and $\supp(\nu_1(\Set'\setminus (O\cup H_1)))\subseteq U_1$, and
$\nu_2(O\cup H_2)=E_2$ and $\supp(\nu_2(\Set'\setminus (O\cup H_2)))\subseteq U_2$.
This~proves that $\nu\models* \EUpair(E_1;U_1) \et \EUpair(E_2; U_2)$.
\end{proof}

As~a consequence, replacing~$\EUpair(E_1;U_1) \et \EUpair(E_2; U_2)$
with $C(\EUpair(E_1;U_1),\EUpair(E_2;U_2))$ in the transition function
of an \powAATA does not change the \kl{execution
trees}. Let~$\RAut$ be an \powAATA obtained from~$\QAut$ by replacing
all conjunctions $\EUpair(E_1;U_1) \et \EUpair(E_2; U_2)$ (in~no
specific order).
Then $\RAut$ is \kl{non-alternating}, and by Lemma~\ref{lemma-QtoR}:
\begin{proposition}\label{prop-QtoR}
The \kl{languages} accepted by the two \powAATAs~$\QAut$ and~$\RAut$ are equal.
\end{proposition}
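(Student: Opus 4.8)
The plan is to obtain Proposition~\ref{prop-QtoR} as a short corollary of Lemma~\ref{lemma-QtoR}, by reducing the equality of \kl{languages} to a purely local, node-by-node equivalence of the two transition functions. Writing $\RAut=\tuple{\QState,\{(\initstate,\initstate)\},\QTrans',\Accept_{\prio}}$, the first observation is that $\RAut$ and~$\QAut$ share the same state set, the same initial state, and --- decisively --- the same acceptance condition~$\Accept_{\prio}$: they differ \emph{only} through their transition functions $\QTrans'$ and~$\QTrans$. Moreover, by Definition~\ref{def-exectree2}, whether a \kl(pow){execution tree} $\ExTree'=\tuple{\tree,\exlab'}$ is \kl(powET){accepting} depends solely on the sequences of states of~$\Aut$ that \kl{appear} along its \kl{branches}, hence only on the labelling~$\exlab'$ and on the \kl{parity condition}~$\prio$ of~$\Aut$, and never on the transition function. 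Consequently a tree $\ExTree'=\tuple{\tree,\exlab'}$ has the same acceptance status whether read as a candidate \kl(pow){execution tree} of~$\QAut$ or of~$\RAut$, and it suffices to prove that $\QAut$ and~$\RAut$ admit \emph{exactly the same} \kl(pow){execution trees} over every input \kl{tree}. By Definition~\ref{def-exectree2} again, this reduces to showing that $\nu\models* \QTrans(P,\alp)$ if, and only if, $\nu\models* \QTrans'(P,\alp)$, for every state~$P\in\QState$, every letter~$\alp\in\Alp$, and every \kl{induced} \kl{marking}~$\nu$ (which, as in Definition~\ref{def-models*}, is a \kl{unitary marking} of the relevant successor set by~$2^{\State\times\State}$).

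The key enabling fact is that $\models*$ is \emph{compositional} over the Boolean structure of \EUconstrs. For a fixed~$\nu$, whether $\nu\models*\Psi$ is entirely determined by evaluating~$\Psi$ as a classical propositional formula once each \EUpr occurring in~$\Psi$ has been assigned the truth value ``$\nu\models*$ that \EUpr''; this is precisely the inductive clause defining~$\models*$ (inherited from~$\models+$), and it hinges on the point stressed after Definition~\ref{def-models+}, namely that each \EUpr may be fulfilled by its \emph{own} \kl{submarking} of~$\nu$. I would record this as a one-line auxiliary lemma, proved by structural induction on~$\Psi$: if a subformula of~$\Psi$ is replaced by a subformula that is $\models*$-equivalent for the current~$\nu$, then the value of $\nu\models*\Psi$ is unchanged. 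Two immediate consequences are that (i)~replacing a formula by any classically equivalent one --- in particular passing to disjunctive normal form, or re-distributing after a rewrite --- preserves $\nu\models*$ for every~$\nu$; and (ii)~by Lemma~\ref{lemma-QtoR} (applied with $\Set=\State\times\State$, so that $2^{\Set}=\QState$, and $\Set'$ the relevant successor set), replacing a conjunction $\EUpair(E_1;U_1)\et\EUpair(E_2;U_2)$ by $C(\EUpair(E_1;U_1),\EUpair(E_2;U_2))$ likewise preserves $\nu\models*$ for every~$\nu$.

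The argument then closes by induction on the number of rewriting steps leading from~$\QAut$ to~$\RAut$. Each $\QTrans(P,\alp)$ is transformed into $\QTrans'(P,\alp)$ by finitely many rewrites, each of which is either a classical Boolean re-normalisation or one application of the $C(\cdot,\cdot)$ substitution to a conjoined pair of \EUprs; by~(i) and~(ii) together with the substitution lemma, each such step preserves $\nu\models*$ of the whole formula for every~$\nu$, so composing them yields $\nu\models*\QTrans(P,\alp) \iff \nu\models*\QTrans'(P,\alp)$ for all~$P$,~$\alp$,~$\nu$. By the reduction of the first paragraph, $\QAut$ and~$\RAut$ therefore have identical sets of \kl(powET){accepting} \kl(pow){execution trees} over every input, whence $\Lang(\QAut)=\Lang(\RAut)$.

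The genuine mathematical content lives in Lemma~\ref{lemma-QtoR}, already established, so the only delicate point here is the lifting of that \emph{local} equivalence to the \emph{entire} transition formula. This lifting is sound exactly because~$\models*$ allocates a \kl{submarking} independently to each \EUpr, making an \EUpr's truth value under~$\nu$ context-free and the connectives genuinely classical; had satisfaction instead been defined through a single \kl{submarking} shared by all \EUprs, substitution of equivalent subformulas would be unsound and the proof would break. I would therefore isolate and prove the compositionality/substitution lemma before invoking Lemma~\ref{lemma-QtoR}. For completeness I would also note --- though it is not needed for the language equality itself --- that each $C(\cdot,\cdot)$-substitution strictly decreases the number of \EUprs appearing under a conjunction in a disjunctive-normal-form clause, so the rewriting terminates in a disjunction of \EUprs, i.e.\ a \kl{non-alternating} transition function, confirming that $\RAut$ is indeed non-alternating.
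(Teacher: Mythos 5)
Your proposal is correct and takes essentially the same route as the paper: the paper also obtains Proposition~\ref{prop-QtoR} as a direct consequence of Lemma~\ref{lemma-QtoR}, observing that replacing each conjunction $\EUpair(E_1;U_1)\et\EUpair(E_2;U_2)$ by $C(\EUpair(E_1;U_1),\EUpair(E_2;U_2))$ does not change the \kl(pow){execution trees}, and hence not the accepted \kl{languages}, since the two automata share states, initial state and acceptance condition. The only difference is that you spell out the compositionality/substitution argument (and the independence of acceptance from the transition function) that the paper leaves implicit.
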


Moreover, since~$\RAut$ is \kl{non-alternating}, it~only has to visit each
node of the input tree in one of the states given by the transition
function, so that both notions of \kl{execution trees} (with~$\models+$
and~$\models*$) coincide.

\paragraph{Size of~$\RAut$.} 
We~now evaluate the size
of~$\RAut=\tuple{2^{\State\times\State},\{(\initstate,\initstate)\},\RTrans
,\Accept_\prio}$.  In~order to evaluate the size of the transition
function~$\RTrans$ of~$\RAut$, we~first focus on the size of
$C(\EUpair(E_1;U_1),\EUpair(E_2;U_2))$.  For~this, we~define the size
of an
\EUpr~${\EUpair(E;U)}$ as the pair~${(\size E, \size U)}$. 
In~the following, a~set of
\EUprs is said to have a size at most~$(n,m)$ if all its \EUprs have
existential parts of size at most~$n$ and universal parts of size at
most~$m$.

Consider a conjunction $\EUpair(E_1;U_1)\et \EUpair(E_2;U_2)$ of two
\EUprs of size  $(n_1,m_1)$ and~$(n_2,m_2)$, respectively,
and assume w.l.o.g.~that ${n_1 \leq n_2}$. We~also assume that $m_1$ and~$m_2$ are positive.
Then the~formula ${C(\EUpair(E_1;U_1),\EUpair(E_2;U_2))}$~is a
disjunction of $N$~\EUprs of size at most~$(n_1+n_2,
m_1\cdot \allowbreak m_2)$, with:
\[
N \leq \sum_{l=0}^{n_1}
  {n_1 \choose l}\cdot{n_2 \choose l}\cdot l! \cdot m_2^{n_1-l} \cdot m_1^{n2-l}.
\]
This formula follows from the definition
of~$C(\EUpair(E_1;U_1),\EUpair(E_2;U_2))$: there~is one \EUpr
for~every possible size~$l$ of
the overlap of the existential parts (which cannot exceed~$n_1$),
every submultisets~$J_1 \submultiset E_1$ and $J_2\submultiset E_2$ of size~$l$, every bijection from~$J_1$ to~$J_2$ 
(so~as~to consider any possible combination in the overlap), and every
combination outside the overlap, between the remaining states of the
existential parts and the states of the universal parts.

This number~$N$ can then be overapproximated as follows:
\begin{xalignat*}1
N & \leq \biggl( \sum_{l=0}^{n_1} \; {n_1 \choose l}\cdot \frac{n_2!}{(n_2-l)! \cdot  l!}\cdot l! \biggl) \cdot m_2^{n_1} \cdot m_1^{n2} \\
  & \leq \biggl( \sum_{l=0}^{n_1} \; {n_1 \choose l}\cdot  n_2^{l} \biggl) \cdot m_2^{n_1} \cdot m_1^{n2} = (n_2+1)^{n_1} \cdot m_2^{n_1} \cdot m_1^{n2}
\end{xalignat*}

\begin{remark}\label{rk-five}
In~case $U_1$ and\slash or~$U_2$ are empty, the first
overapproximation of~$N$ simplifies. Assume for example that $U_2$ is
empty (which is consistent with our hypothesis that $n_1\leq
n_2$). As~noticed in Remark~\ref{rk-four}, we~must have $J_1=E_1$ in
that case, hence we only have to consider the case where $l=n_1$
(notice that for other values of~$l$, $m_2^{n_1-l}$~is~zero).  Then
$N\leq {n_2 \choose n_1} \cdot n_1!\cdot m_1^{n_2-n_1}\leq
n_2^{n_1}\cdot m_1^{n_2-n_2}$.

In~the sequel, we keep the previous bound $N\leq (n_2+1)^{n_1} \cdot
m_2^{n_1} \cdot m_1^{n2}$, and assume that $m_1\geq 1$ and $m_2\geq
1$, e.g. by letting~$m_1=\max(1,\size{U_1})$ (resp.~$m_2=1$) in case $U_1=\emptyset$
(resp.~$U_2=\emptyset$).
\end{remark}

We~now prove that any conjunction of $k$~\EUprs of sizes at
most~$(n,m)$ (where we assume $m\geq 1$) can be turned into disjunctions of at most
$\bigl((k-1)!\cdot (n+1)^{k-1}\cdot m^{k^2}\bigr)^n$ \EUprs, each of
size at~most~$(k\cdot n, m^k)$.  According
to our computation above, this result holds true for~$k=2$.

Consider a conjunction~$\mathcal C$ of~$k+1$ such \EUprs, assuming
that the result holds for up to~$k$ \EUprs. Then the conjunction of the
first $k$ \EUprs can be turned into a disjunction of at most
$((k-1)!\cdot (n+1)^{k-1}\cdot m^{k^2})^n$  
\EUprs of size at
most $(k\cdot n, m^k)$.  By~distributing the $(k+1)$-th conjunction over
this disjunction, we~obtain an expression of~$\calC$ as the
disjunction of at most ${((k-1)!\cdot (n+1)^{k-1}\cdot m^{k^2})^n}$
conjunctions of two \EUprs, of~sizes at~most
${(k\cdot n,m^k)}$ and~$(n,m)$ respectively.

We~apply our construction to each conjunction of two \EUprs. Each
such conjunction is then replaced with the disjunction of (at~most)
$(kn+1)^n\cdot  m^{kn}\cdot (m^k)^n$
\EUprs of size at~most~$((k+1)\cdot n, m^{k+1})$.
In~the~end, we~obtain an expression of~$\calC$ as a disjunction of at most 
$M$ \EUprs with
\begin{xalignat*}1
M &\leq \Bigl((k-1)!\cdot (n+1)^{k-1}\cdot m^{k^2}\Bigr)^n
\cdot (kn+1)^n\cdot  m^{kn}\cdot (m^k)^n\\
&\leq \bigl({k!}\cdot
(n+1)^{k}\cdot m^{(k+1)^2}\bigr)^n.
\end{xalignat*}

\medskip

We~now evaluate the size of the transition function~$\RTrans$: as
explained at the beginning of the present section, $\RTrans$~is
obtained from the transition function~$\QTrans$ of~$\QAut$ by first
putting each formula~$\QTrans(P,\alp)$ into disjunctive normal form,
as the disjunction of at most $2^{\size\State\cdot\sizeB\Trans}$
conjunctions of at~most $\size\State\cdot\sizeB\Trans$ \EUprs,
with \EUprs of size at most~$(\sizeE\Trans,\sizeU\Trans)$.

Applying our formula above (and assuming $\sizeU\Trans\geq 1$ as
explained in Remark~\ref{rk-five} above), we~get a disjunctive
expression for~$\QTrans(P,\alp)$ involving at most
\[
2^{\size\State\cdot\sizeB\Trans} \cdot
\Bigl(
  (\size\State\cdot\sizeB\Trans-1)! \cdot
  (\sizeE\Trans+1)^{\size\State\cdot\sizeB\Trans-1} \cdot
  (\sizeU\Trans)^{(\size\State\cdot\sizeB\Trans)^2}
\Bigr)^{\sizeE\Trans}
\]
\EUprs of size at most $(\size\State\cdot \sizeB\Trans\cdot \sizeE\Trans,
(\sizeU\Trans)^{\size\State\cdot\sizeB\Trans})$.
In~the~end:
\begin{proposition}\label{prop-simu-EUTA}
The languages of the original \AAPTA~$\Aut$ and of the resulting
\nAATA~$\RAut$ are the same. The~size of~$\RAut$ is at
most\footnote{We~omit the size of the acceptance condition of~$\RAut$
here as it is not a \kl{parity condition}.}
$(2^{\size\State^2},
2^{\QBBE} \cdot
\bigl[\QBBE! \cdot
  (\sizeE\Trans+1)^{\QBBE} \cdot
  (\max(\sizeU\Trans,1))^{\QBBE^2}\bigr]^{\sizeE\Trans},\allowbreak
\QBBE\cdot\sizeE\Trans,(\sizeU\Trans)^\QBBE,-)$
where $\QBBE=\size\State\cdot \sizeB\Trans$.
\end{proposition}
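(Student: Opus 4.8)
The plan is to read Proposition~\ref{prop-simu-EUTA} as the assembly of the chain of transformations $\Aut \leadsto \PAut \leadsto \QAut \leadsto \RAut$ built above, plus a careful propagation of the five size parameters through the final (conjunction-elimination) step. Almost all of the genuine work has already been done in the preceding propositions and in the size computation for $C(\cdot,\cdot)$; what remains is to stitch these together and do the bookkeeping.

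First I would settle the language equality purely by chaining the intermediate results. Proposition~\ref{prop-AtoP} gives $\Lang(\Aut)=\Lang(\PAut)$; Proposition~\ref{prop-PtoQ} shows that a tree is \kl{accepted} by~$\PAut$ exactly when it is accepted by the \powAATA~$\QAut$, i.e.\ $\Lang(\PAut)=\Lang(\QAut)$; and Proposition~\ref{prop-QtoR} (itself obtained from Lemma~\ref{lemma-QtoR} by rewriting every conjunction $\EUpair(E_1;U_1)\et\EUpair(E_2;U_2)$ as $C(\EUpair(E_1;U_1),\EUpair(E_2;U_2))$, in no particular order, until the transition formulas are conjunction-free) yields $\Lang(\QAut)=\Lang(\RAut)$. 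Composing the three equalities gives $\Lang(\Aut)=\Lang(\RAut)$. Since $\RAut$ is \kl{non-alternating}, the $\models+$ and $\models*$ notions of \kl{execution tree} coincide on it, so this language is unambiguous.

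Second, I would track the size components. The state set of $\RAut$ is that of $\QAut$, namely $2^{\State\times\State}$: conjunction elimination only rewrites transition formulas over $2^{\State\times\State}$ and introduces no new states, whence $\size{\State^s}\le 2^{\size\State^2}$. For the transition function, recall from the paragraph preceding the proposition that each $\QTrans(P,\alp)$, once put in disjunctive normal form, is a disjunction of at most $2^{\QBBE}$ conjunctions of at most $\QBBE$ \EUprs of size at most $(\sizeE\Trans,\sizeU\Trans)$, where $\QBBE=\size\State\cdot\sizeB\Trans$ (two pairs $(\state',\state)$ and $(\state'',\state)$ yield identical \EUprs, so the number of distinct \EUprs is bounded by $\size\State\cdot\sizeB\Trans$). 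Plugging $k=\QBBE$, $n=\sizeE\Trans$, $m=\sizeU\Trans$ into the iterated bound established above, each such $\QBBE$-conjunction turns into a disjunction of at most $\bigl((\QBBE-1)!\cdot(\sizeE\Trans+1)^{\QBBE-1}\cdot(\sizeU\Trans)^{\QBBE^2}\bigr)^{\sizeE\Trans}$ \EUprs, each of size at most $(\QBBE\cdot\sizeE\Trans,(\sizeU\Trans)^\QBBE)$. The last two quantities give directly $\sizeE{\Trans^s}\le\QBBE\cdot\sizeE\Trans$ and $\sizeU{\Trans^s}\le(\sizeU\Trans)^\QBBE$, while multiplying the per-conjunction count by the $2^{\QBBE}$ disjuncts coming from the DNF yields the stated bound on $\sizeB{\Trans^s}$. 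Finally, the acceptance condition of $\RAut$ is still $\Accept_\prio$, which is not a \kl{parity condition}, so its size is left unspecified (the $-$ entry).

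Accordingly, the only real obstacle is bookkeeping rather than a new idea: matching the parameters $(k,n,m)=(\QBBE,\sizeE\Trans,\sizeU\Trans)$ correctly, remembering that $\QBBE$ (not $\size\State^2$) bounds the number of conjuncts because $\QTrans$ and $\Pprio$ depend only on the second component of each pair, and not dropping the extra $2^{\QBBE}$ factor contributed by the passage to disjunctive normal form before conjunctions are removed.
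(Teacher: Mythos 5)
Your proposal is correct and takes essentially the same approach as the paper: the language equality is obtained by chaining Propositions~\ref{prop-AtoP}, \ref{prop-PtoQ} and~\ref{prop-QtoR}, and the size bound by combining the disjunctive-normal-form decomposition of each $\QTrans(P,\alp)$ (at most $2^{\QBBE}$ conjunctions of at most $\QBBE$ \EUprs, using the fact that pairs sharing their second component yield identical constraints) with the iterated conjunction-elimination bound instantiated at $(k,n,m)=(\QBBE,\sizeE\Trans,\sizeU\Trans)$. Nothing is missing.
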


\makeatletter
\setcounter{save@cptr}{\value{example}}
\expandafter\ifx\csname r@ex-simu\endcsname\relax
\def\@tmp{{1}{}{}{}{}}
\else
\edef\@tmp{\csname r@ex-simu\endcsname}
\fi
\setcounter{example}{\expandafter\@firstoffive\@tmp}
\addtocounter{example}{-1}

\def\mypairline{\@ifnextchar*{\@pairlinestar}{\@pairline}}
\def\@pairlinestar*#1{(\star,\ifnum#1=0\relax\initstate\else\state_{#1}\fi)}
\def\@pairline#1#2{(\ifnum#1=0\relax\initstate\else\state_{#1}\fi,\ifnum#2=0\relax\initstate\else\state_{#2}\fi)}
\makeatother

\begin{example}[contd]%
  Consider again  Example~\ref{ex-simu}.
  We~describe the corresponding automaton~$\RAut$. The~initial state still is $\{ \mypairline00\}$.
  In~the following, we~use~$\star$ to represent any of the two states~$\initstate$ and~$q_1$. 
  The~previous construction provides the following transition function:
\begin{xalignat*}1
  \RTrans(\{ \mypairline*0 \},a) &=
     \EUpair(\{ \mypairline00,\mypairline01\}\mapsto 1; \emptyset)  \\
  \RTrans(\{ \mypairline*1 \},a) &=
     \EUpair(\{ \mypairline11\}\mapsto 1; \emptyset)  \\
  \RTrans(\{ \mypairline*0 \},b) &=
     \EUpair(\{ \mypairline01\}\mapsto 1; \emptyset)  \\
  \RTrans(\{ \mypairline*1 \},b) &=
     \EUpair(\{ \mypairline10\}\mapsto 1; \emptyset) \\
  \RTrans(\{ \mypairline*0,\mypairline*1 \},a) &=
     \EUpair(\{ \mypairline00,\mypairline01,\mypairline11\}\mapsto 1; \emptyset)  \\
  \RTrans(\{ \mypairline*0,\mypairline*1 \},b) &=
     \EUpair(\{ \mypairline01,\mypairline10\}\mapsto 1; \emptyset)
\end{xalignat*}

Note that the transition function from $\{\mypairline00,\mypairline01,\mypairline11\}$ is the same as the one from~$\{\mypairline*0,\mypairline*1\}$. 
Note also that this transition function does not involve any disjunction because the
$U$-parts of the transition function of~$\Aut$ all are empty.

We then obtain the \kl(pow){execution trees} of~$\RAut$ over
the 1-branch trees~$a^\omega$
 and $a^3\cdot b^\omega$,
 as depicted on Fig.~\ref{fig-exalt2}. They~(fortunately)
 correspond to the ones depicted to the right of
 Fig.~\ref{fig-exalt}.
\begin{figure}[ht]
\centering
\begin{tikzpicture}
\begin{scope}
\def\mypair#1#2{(\ifnum#1=0\relax\initstate\else\state_{#1}\fi,
    \ifnum#2=0\relax\initstate\else\state_{#2}\fi)}
\draw (0,0) node (0q0) {$\{\mypair 00\}$};
\draw (0,-1) node (1q01) {$\{\mypair 00, \mypair 01\}$};
\draw (0,-2) node (2q01) {$\{\mypair 00, \mypair 01, \mypair 10\}$};
\draw (0,-3) node (3q01) {$\{\mypair 00, \mypair 01, \mypair 10\}$};
\draw (0,-4) node (4q01) {$\{\mypair 01, \mypair 10\}$};
\draw (0,-5) node (5q01) {$\{\mypair 01, \mypair 10\}$};
\draw[-latex'] (0q0) -- (1q01);
\draw[-latex'] (1q01) -- (2q01);
\draw[-latex'] (2q01) -- (3q01);
\draw[-latex'] (3q01) -- (4q01);
\draw[-latex'] (4q01) -- (5q01);
\draw[dashed] (5q01) -- +(-90:8mm);
\begin{scope}[xshift=-2.8cm]
 \draw (0,-.5) node {$a$};
 \draw (0,-1.5) node {$a$};
 \draw (0,-2.5) node {$a$};
 \draw (0,-3.5) node {$b$};
 \draw (0,-4.5) node {$b$};
\end{scope}
\end{scope}
\begin{scope}[xshift=-6.2cm]
\def\mypair#1#2{(\ifnum#1=0\relax\initstate\else\state_{#1}\fi,
    \ifnum#2=0\relax\initstate\else\state_{#2}\fi)}
\draw (0,0) node (0q0) {$\{\mypair 00\}$};
\draw (0,-1) node (1q01) {$\{\mypair 00, \mypair 01\}$};
\draw (0,-2) node (2q01) {$\{\mypair 00, \mypair 01, \mypair 10\}$};
\draw (0,-3) node (3q01) {$\{\mypair 00, \mypair 01, \mypair 10\}$};
\draw (0,-4) node (4q01) {$\{\mypair 00, \mypair 01, \mypair 10\}$};
\draw (0,-5) node (5q01) {$\{\mypair 00, \mypair 01, \mypair 10\}$};
\draw[-latex'] (0q0) -- (1q01);
\draw[-latex'] (1q01) -- (2q01);
\draw[-latex'] (2q01) -- (3q01);
\draw[-latex'] (3q01) -- (4q01);
\draw[-latex'] (4q01) -- (5q01);
\draw[dashed] (5q01) -- +(-90:8mm);
\begin{scope}[xshift=-2.8cm]
\draw (0,-.5) node {$a$};
\draw (0,-1.5) node {$a$};
\draw (0,-2.5) node {$a$};
\draw (0,-3.5) node {$a$};
\draw (0,-4.5) node {$a$};
\end{scope}
\end{scope}
\end{tikzpicture}
\caption{Execution tree of~$\RAut$ on~$a^\omega$ and $a^3\cdot b^\omega$}
\label{fig-exalt2}
\end{figure}
 The \kl(pow){execution tree} on the left is not \kl(powET){accepting}:
 the~\kl{branch}~$\initstate^\omega$ does not satisfies the \kl{parity
 condition}. But~in~the execution tree on the right,
 states~$\initstate$ and~$q_1$ alternate along any sequence appearing
 in the unique branch, which ensures that the \kl(pow){execution tree} is
 \kl(powET){accepting}.
\end{example}
\makeatletter
\setcounter{example}{\value{save@cptr}}
\makeatother

\subsubsection{Adapting the acceptance condition}\label{ssec-backtoparity}

It finally remains to turn the acceptance condition of~$\RAut$ into a
\kl{parity condition}. The~transformation is the same as
in~\cite{Zan12}: we~first build a non-deterministic \kl{parity word
  automaton}~$\WAut$ accepting all \kl{words} on the
alphabet~$2^{\State\times\State}$ that contain an infinite sequence of
states~$(r_i)_{i\in\bbN}$ of~$\Aut$ (in~the sense of
Def.~\ref{def-exectree2}) \emph{not satisfying} the \kl{parity
  condition} of~$\Aut$; we~then turn~it into a deterministic \kl{parity word
automaton}, take its complement, and run~it in parallel with~$\RAut$.

\smallskip
Let $\WAut=\tuple{\State\cup\{\initstate'\}, \initstate', \WTrans,
  \Wprio}$ where $\initstate'$ is a new state not in~$\State$,
$\WTrans(\initstate',L)=\OU_{(\state',\state)\in L} \state$ and
$\WTrans(\state',L)=\OU_{(\state',\state)\in L} \state$, and
$\Wprio(\state)=\prio(\state)+1$ (the~value of $\Wprio(\initstate')$
can be set arbitrarily since that state is visited only~once).
Intuitively, this~automaton guesses a sequence of
states~$(r_i)_{i\in\bbN}$ contained in the input
\kl{word}~$(L_i)_{i\in\bbN}$ on alphabet~$2^{\State\times\State}$ and
the \kl{parity condition} of~$\WAut$ ensures that this sequence does
not satisfy the \kl{parity condition} of~$\Aut$. Note that the number
of \kl{priorities} remains unchanged and equal to~$\priomax$.

From $\WAut$, we can build an equivalent \emph{deterministic} \kl{parity word
automaton}~$\WAut_d$.  For~this, we~first turn~$\WAut$ into a
non-deterministic B\"uchi automaton~$\WAut'$ with at~most
$\size\State\cdot\priomax+1$ states. This~can be achieved by
considering several copies of~$\WAut$: an~initial one, with no
accepting states, and for each even integer~$p$ less than or
equal~to~$\size\Wprio$, one~copy of~$\WAut$ involving only those states
with priority larger than or equal to~$p$, with exactly those states
of priority~$p$ being accepting (for the B\"uchi condition).

We~then apply the construction of~\cite{Pit07} to get a
deterministic \kl{parity word automaton} with $2\cdot
(\size\State\cdot\priomax+1)^{\size\State\cdot\priomax+1}\cdot
(\size\State\cdot\priomax+1)!$ states and at most $2\cdot
(\size\State\cdot\priomax+1)$~\kl{priorities}.
The~number of states can then be bounded by
$2^{1+2(\size\State\cdot\priomax+1)\cdot\log(\size\State\cdot\priomax+1)}$.

It~remains to complement~$\WAut_d$, in order to get an
automaton~$\MAut$ that recognises precisely all input \kl{words}
containing only sequences of states satisfying the \kl{parity condition}
of~$\Aut$: these are
precisely the \kl{branches} that~$\RAut$ has to
accept. Complementing~$\WAut_d$ is easy, as it consists in
incrementing its \kl{priorities} by~$1$
(leaving the number of \kl{priorities} unchanged),
and the resulting automaton (namely~$\MAut$) is still
deterministic. Let~$\Mprio$ be the \kl{priority function} of~$\MAut$.
The~number of states of~$\MAut$ is then bounded by
$2^{1+2(\size\State\cdot\priomax+1)\cdot\log(\size\State\cdot\priomax+1)}$
and the number of \kl{priorities} is at~most ${2\cdot
(\size\State\cdot\priomax+1)}$.

We~can
then run~$\RAut$ and~$\MAut$ in parallel, thereby obtaining a
\kl{non-alternating} \nAAPTA~$\NAut$ \kl{accepting} the same \kl{language} as~$\Aut$. 
The~construction is as follows: the states of~$\NAut$ are
pairs~$(\overline{\state}, \state)$ where $\overline\state$ is a state
of~$\RAut$ and~$\state$~is a state of~$\MAut$. The~transition
function~$\NTrans$ of~$\NAut$ is defined as follows: take a
state~$(\overline\state,\state)$ and a letter~$\alp$,
and write $\RTrans(\overline\state,\alp) =
\OU_{i\in I} \EUpair(E_i; U_i)$ for some set~$I$.
Then $\NTrans((\overline\state,\state),\alp)=\OU_{i\in
I}\EUpair(E'_i;U'_i)$, where each state~$(\overline\state')$ in~$E_i$ and
in~$U_i$ is replaced with
$(\overline\state',\delta_{\MAut}(\state,\overline\state'))$, so that
when $\RAut$ explores some \kl{successor node} in state~$\overline\state'$,
the~state of~$\MAut$ is updated accordingly.  By~letting
the~\kl{priority} of $(\overline{\state},\state)$ be that of~$\state$
in~$\MAut$, we~make $\MAut$~keep track of whether all the sequences
of state of~$\Aut$ that \kl{appear} along each \kl{branch} of the
\kl(pow){execution tree} of~$\RAut$ are
\kl{accepting} for the \kl{parity condition} of~$\Aut$.

The size of the state space of the product automaton~$\RAut \times \MAut$ is at most
$2^{\size\State^2} \cdot 2^{1+2(\size\State\cdot\priomax+1)\cdot\log(\size\State\cdot\priomax+1)}$,
which is in $\size\State^{O(\size\State^2)}$ (assuming $\size\State\geq 2$).
The~sizes of the transition function and of the \EUconstrs are the
same as those of~$\RAut$, and the number of priorities is the same as
for~$\MAut$.

Summarising our results:
\begin{theorem}\label{thm-simu}
Given an \AAPTA $\Aut=\tuple{\State,\initstate,\Trans,\omega}$, we can build an \nAAPTA~$\NAut$ 
recognising the same language. 
The~\kl(A){size} of~$\NAut$~is  bounded by 
$(
\size\State^{O(\size\State^2)},
2^{\QBBE} \cdot
\bigl[\QBBE! \cdot
  (\sizeE\Trans+1)^{\QBBE} \cdot
  (\max(\sizeU\Trans,1))^{\QBBE^2}\bigr]^{\sizeE\Trans},\allowbreak
\QBBE\cdot\sizeE\Trans, 
\sizeU\Trans^{\QBBE},
2(\size\State\cdot\priomax[\omega]+1))$,
where $\QBBE=\size\State\cdot \sizeB\Trans$.
\end{theorem}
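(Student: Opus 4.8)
The plan is to simply \emph{compose} the four transformations developed in this section and to propagate the size bounds through the chain, since each individual step has already been justified. Language equivalence is established piece by piece: Proposition~\ref{prop-AtoP} gives $\Lang(\Aut)=\Lang(\PAut)$; Proposition~\ref{prop-PtoQ} gives $\Lang(\PAut)=\Lang(\QAut)$, where $\QAut$ is read as a \powAATA equipped with the ad~hoc acceptance condition~$\Accept_{\prio}$; and Proposition~\ref{prop-QtoR} (via Lemma~\ref{lemma-QtoR}) gives $\Lang(\QAut)=\Lang(\RAut)$ while rewriting the transition function as a disjunction, making~$\RAut$ \kl{non-alternating}. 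Chaining these three equalities, $\RAut$~is a \kl{non-alternating} \powAATA accepting exactly~$\Lang(\Aut)$; the only thing still missing is to replace the condition~$\Accept_{\prio}$ by a genuine \kl{parity condition}, which is precisely what the product with~$\MAut$ achieves.

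For this last step I~would argue correctness of $\NAut=\RAut\times\MAut$ directly on \kl(pow){execution trees}. Because $\RAut$ is \kl{non-alternating}, an \kl(pow){execution tree} shares the structure of the input \kl{tree}, and each of its \kl{branches} is a \kl{word} over~$2^{\State\times\State}$; by Definition~\ref{def-exectree2}, such a \kl{branch} is \kl(powB){accepting} exactly when every sequence of states of~$\Aut$ that \kl{appears} in it satisfies the \kl{parity condition}~$\prio$. The deterministic automaton~$\MAut$ is engineered to recognise precisely these \kl{words}: $\WAut$~guesses a threading sequence \emph{violating}~$\prio$, it~is determinised into~$\WAut_d$ and then complemented into~$\MAut$, so that $\MAut$~accepts a \kl{word} iff \emph{all} its threading sequences are \kl{accepting}. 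Running $\MAut$ along every \kl{branch} simultaneously — updating the $\MAut$-component through $\delta_{\MAut}$ on each explored \kl{successor node}, and reading the \kl{priority} off that component — therefore makes $\NAut$ accept a \kl{tree} iff all \kl{branches} of some \kl(pow){execution tree} of~$\RAut$ are \kl(powB){accepting}, i.e.\ iff the \kl{tree} lies in~$\Lang(\RAut)=\Lang(\Aut)$.

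The size bounds then assemble mechanically. The state space is that of $\RAut\times\MAut$, namely $2^{\size\State^2}\cdot 2^{1+2(\size\State\cdot\priomax+1)\cdot\log(\size\State\cdot\priomax+1)}$, which is the first component of the claimed tuple and lies in $2^{O(\size\State^2\cdot\log\size\State)}$; the boolean, existential and universal sizes of the transition function are inherited verbatim from~$\RAut$ (Proposition~\ref{prop-simu-EUTA}), since substituting each state $\overline\state'$ by the pair $(\overline\state',\delta_{\MAut}(\state,\overline\state'))$ alters neither the shape of the \EUprs nor the surrounding boolean structure; and the number of \kl{priorities} is that of~$\MAut$, namely $2(\size\State\cdot\priomax+1)$. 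The one genuinely delicate point — the part I~would write most carefully — is the correctness of the $\MAut$ product: it~hinges on $\MAut$ being \emph{deterministic}, so that a single run is forced along each \kl{branch} and the choices are shared consistently on common prefixes of \kl{branches}. Were $\MAut$ merely non-deterministic, the product would allow independent guesses on sibling subtrees (the same obstruction already observed after Proposition~\ref{prop-autEx} for the ``all \kl{branches}'' construction), and a \kl{non-alternating} automaton could not realise a universal, for-all-\kl{branches} acceptance condition; this is exactly why determinising~$\WAut$ before complementation is indispensable.
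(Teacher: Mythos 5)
Your proposal is correct and follows essentially the same route as the paper: the theorem is indeed obtained by chaining Propositions~\ref{prop-AtoP}, \ref{prop-PtoQ} and~\ref{prop-QtoR} and then taking the product of~$\RAut$ with the determinised, complemented word automaton~$\MAut$, with the size tuple assembled exactly as you describe (states from $\RAut\times\MAut$, transition-function parameters inherited from Proposition~\ref{prop-simu-EUTA}, priorities from~$\MAut$). Your explicit justification of why determinism of~$\MAut$ is indispensable for the product to enforce a for-all-branches condition is a point the paper leaves largely implicit, but it is the same construction, not a different argument.
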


\begin{proof}
The correctness comes from previous propositions. 
The number of states of $\NAut$ can be estimated at $2^{1+\size\State^2+2(\size\State\cdot\priomax+1)\cdot\log(\size\State\cdot\priomax+1)}$.
Assuming $\priomax$ in $O(\size\State)$ and $\size\State\geq 2$, we get the upper bound.
\end{proof}

\makeatletter
\setcounter{save@cptr}{\value{example}}
\expandafter\ifx\csname r@ex-simu\endcsname\relax
\def\@tmp{{1}{}{}{}{}}
\else
\edef\@tmp{\csname r@ex-simu\endcsname}
\fi
\setcounter{example}{\expandafter\@firstoffive\@tmp}
\addtocounter{example}{-1}

\def\mypairline{\@ifnextchar*{\@pairlinestar}{\@pairline}}
\def\@pairlinestar*#1{(\star,\ifnum#1=0\relax\initstate\else\state_{#1}\fi)}
\def\@pairline#1#2{(\ifnum#1=0\relax\initstate\else\state_{#1}\fi,\ifnum#2=0\relax\initstate\else\state_{#2}\fi)}
\makeatother

\begin{example}[contd]
We apply the approach above to the automaton of Example~\ref{ex-simu}.
We build a deterministic automaton
$\calM=\tuple{\State',\initstate',\pi',\omega'}$ which recognises the
infinite words over the alphabet $2^{\State\times\State}$ that satisfy
the parity condition $\omega$ (in the sense of
Def.~\ref{def-exectree2}).
Note that we build~$\calM$ directly here, without using~$\WAut$,
because the parity condition we consider turns out to be equivalent to
a B\"uchi condition, which makes the construction simpler.
The~states of~$\MAut$  are pairs~$(s,s')$ with $s,s'\subseteq \State$:
$s\cup s'$ is the set of all possible last states of sequences of states that
\kl{appear} in the word~$w\in (2^{\State\times\State})^*$ that has been read by~$\MAut$;
the~states in~$s'$ have \emph{recently} visited an accepting state
(with priority~$0$), while those in~$s$ have~not.
More formally, when reading a letter $\sigma \in
2^{\State\times\State}$, the~transition function updates~$(s,s')$
into~$(t,t')$ by transforming each state~$q$ of~$s$ or~$s'$ into a
state~$q'$ in~$t$ or~$t'$, for each $(q,q')\in\sigma$. All accepting
states~$q'$ are placed in~$t'$; non-accepting states~$q'$ are placed
in~$t$ if they originate from a state~$q$ in~$s$, or if~$s$ was empty,
otherwise they are placed in~$t'$.  States of the form
$(\emptyset,s')$ are accepting. This corresponds to the classical
procedure to transform an alternating B\"uchi automaton into a
non-deterministic one.
 
In our case of Example~\ref{ex-simu}, we~even get a deterministic automaton:
\begin{itemize}
\item $\State'= 2^{\{\initstate\}}\times 2^{\{\initstate, \state_1 \}}$ and $\initstate'= (\{\initstate\},\emptyset)$;
\item $\omega'((\emptyset,s'))=0$ and $\omega'((\{\initstate\},s'))=1$ for any $s'\subseteq \State$;
\item Given $s\in 2^\State$ and $\sigma \in 2^{\State\times\State}$, we use $\sigma(s)$ to denote
$\{q' \mid {\exists (q,q') \in \sigma} \et {q\in s}\}$; we~then define
the transitions as follows:
\begin{xalignat*}1
\pi((\emptyset,s'),\sigma) & = (\sigma(s')\cap \{\initstate\},\sigma(s')\setminus \{\initstate\}) \\
\pi((\{\initstate\},s'),\sigma) & = (\sigma(\{\initstate\})\cap \{\initstate\},
\sigma(\{\initstate\})\setminus \{\initstate\} \cup \sigma(s'))
\end{xalignat*}

\end{itemize}

Now we can define $\NAut$ as the product $\RAut\times\calM$.  The
states are pairs $(\bar{q},q')$ wit $\bar{q}\in
2^{\State\times\State}$ and $q' \in 2^{\{\initstate\}}\times
2^\State$. The initial state is
$(\{\mypair00\},(\{\initstate\},\emptyset))$. We~simplify the
transition function by removing rejecting states and we present only
the reachable part (from the initial state) of the relation.  We~get:

\begin{xalignat*}1
\RTrans((\{ \mypair*0 \},(\{\initstate\},\emptyset)),a)  &=  \EUpair( (\{ \mypair00,\mypair01\},(\{\initstate\},\{q_1\}))\mapsto 1; \emptyset)  \\[1.2ex]
 \RTrans((\{ \mypair*0 \},(\{\initstate\},\emptyset)),b)  &=  \EUpair( (\{ \mypair01\},(\emptyset,\{q_1\}))\mapsto 1; \emptyset)  \\[1.2ex]\noalign{\allowbreak}
 \RTrans((\{ \mypair*1 \},(\emptyset,\{q_1\})),a)  &=  \EUpair( (\{ \mypair11\},(\emptyset,\{q_1\}))\mapsto 1; \emptyset)  \\[1.2ex]\noalign{\allowbreak}
 \RTrans((\{ \mypair*1 \},(\emptyset,\{q_1\})),b)  &=  \EUpair( (\{ \mypair10\},(\{\initstate\},\emptyset))\mapsto 1; \emptyset)  \\[1.2ex]\noalign{\allowbreak}
\RTrans((\{ \mypair*0,\mypair*1 \},(\{\initstate\},\{q_1\})),a)  &=  \EUpair((\{ \mypair00,\mypair01,\mypair11\},(\{\initstate\},\{q_1\}))\mapsto 1; \emptyset)   \\[1.2ex]\noalign{\allowbreak}
  \RTrans((\{ \mypair*0,\mypair*1 \},(\{\initstate\},\{ q_1\})),b)  &=  \EUpair((\{ \mypair01,\mypair10\},(\emptyset,\{\initstate,q_1\})) \mapsto 1; \emptyset)   \\[1.2ex]\noalign{\allowbreak}
  \RTrans((\{ \mypair*0,\mypair*1 \},(\emptyset,\{\initstate,q_1\})),a)  &=  \EUpair((\{ \mypair00, \mypair01,\mypair11\},(\{\initstate\},\{q_1\})) \mapsto 1; \emptyset)   \\[1.2ex]
  \RTrans((\{ \mypair*0,\mypair*1 \},(\emptyset,\{\initstate,q_1\})),a)  &=  \EUpair((\{ \mypair01,\mypair10\},(\{\initstate\},\{q_1\})) \mapsto 1; \emptyset) 
\end{xalignat*}

It remains to compare the \kl{execution trees} of~$\NAut$ over the
 word $a^\omega$ (which does not belong to the \kl{language} of~$\Aut$) and
 the word $a^3\cdot b^\omega$ (which is \kl{accepted}
 by~$\Aut$). Figure~\ref{fig-exalt3} displays both execution trees;
 we~observe that the \kl{execution tree} on the left is
 not \kl(ET){accepting} (assuming that it continues
 reading~$a^\omega$), while the \kl{execution tree} on the right
 is \kl(ET){accepting} (if it continues reading~$b^\omega$).
\end{example}
\begin{figure}[ht]
\centering
\def\RMstate#1#2{\genfrac{}{}{0pt}{0}{#1}{#2}}
\begin{tikzpicture}[yscale=1.4]
  \begin{scope}
\def\mypair#1#2{(\ifnum#1=0\relax\initstate\else\state_{#1}\fi,
    \ifnum#2=0\relax\initstate\else\state_{#2}\fi)}
\draw (0,0) node (0q0) {$\RMstate{\{\mypair 00\}}{(\{\initstate\},\emptyset)}$};
\draw (0,-1) node (1q01) {$\RMstate{\{\mypair 00, \mypair 01\}}{(\{\initstate\},\{\state_1\})}$};
\draw (0,-2) node (2q01) {$\RMstate{\{\mypair 00, \mypair 01, \mypair 10\}}{(\{\initstate\},\{\state_1\})}$};
\draw (0,-3) node (3q01) {$\RMstate{\{\mypair 00, \mypair 01, \mypair 10\}}{(\{\initstate\},\{\state_1\})}$};
\draw (0,-4) node (4q01) {$\RMstate{\{\mypair 01, \mypair 10\}}{(\emptyset,\{\initstate,\state_1\})}$};
\draw (0,-5) node (5q01) {$\RMstate{\{\mypair 01, \mypair 10\}}{(\{\initstate\},\{\state_1\})}$};
\draw (0,-6) node (6q01) {$\RMstate{\{\mypair 01, \mypair 10\}}{(\emptyset,\{\initstate,\state_1\})}$};
\draw[-latex'] (0q0) -- (1q01);
\draw[-latex'] (1q01) -- (2q01);
\draw[-latex'] (2q01) -- (3q01);
\draw[-latex'] (3q01) -- (4q01);
\draw[-latex'] (4q01) -- (5q01);
\draw[-latex'] (5q01) -- (6q01);
\draw[dashed] (6q01) -- +(-90:8mm);
\begin{scope}[xshift=-2.7cm]
 \draw (0,-.5) node {$a$};
 \draw (0,-1.5) node {$a$};
 \draw (0,-2.5) node {$a$};
 \draw (0,-3.5) node {$b$};
 \draw (0,-4.5) node {$b$};
 \draw (0,-5.5) node {$b$};
\end{scope}
\end{scope}
\begin{scope}[xshift=-6.2cm]
\def\mypair#1#2{(\ifnum#1=0\relax\initstate\else\state_{#1}\fi,
    \ifnum#2=0\relax\initstate\else\state_{#2}\fi)}
\draw (0,0) node (0q0) {$\RMstate{\{\mypair 00\}}{(\{\initstate\},\emptyset)}$};
\draw (0,-1) node (1q01) {$\RMstate{\{\mypair 00, \mypair 01\}}{(\{\initstate\},\{\state_1\})}$};
\draw (0,-2) node (2q01) {$\RMstate{\{\mypair 00, \mypair 01, \mypair 10\}}{(\{\initstate\},\{\state_1\})}$};
\draw (0,-3) node (3q01) {$\RMstate{\{\mypair 00, \mypair 01, \mypair 10\}}{(\{\initstate\},\{\state_1\})}$};
\draw (0,-4) node (4q01) {$\RMstate{\{\mypair 00, \mypair 01, \mypair 10\}}{(\{\initstate\},\{\state_1\})}$};
\draw (0,-5) node (5q01) {$\RMstate{\{\mypair 00, \mypair 01, \mypair 10\}}{(\{\initstate\},\{\state_1\})}$};
\draw (0,-6) node (6q01) {$\RMstate{\{\mypair 00, \mypair 01, \mypair 10\}}{(\{\initstate\},\{\state_1\})}$};
\draw[-latex'] (0q0) -- (1q01);
\draw[-latex'] (1q01) -- (2q01);
\draw[-latex'] (2q01) -- (3q01);
\draw[-latex'] (3q01) -- (4q01);
\draw[-latex'] (4q01) -- (5q01);
\draw[-latex'] (5q01) -- (6q01);
\draw[dashed] (6q01) -- +(-90:8mm);
\begin{scope}[xshift=-2.7cm]
\draw (0,-.5) node {$a$};
\draw (0,-1.5) node {$a$};
\draw (0,-2.5) node {$a$};
\draw (0,-3.5) node {$a$};
\draw (0,-4.5) node {$a$};
\draw (0,-5.5) node {$a$};
\end{scope}
\end{scope}
\end{tikzpicture}
\caption{Execution tree of~$\NAut=\RAut\times\MAut$
  on~$a^\omega$ and $a^3\cdot b^\omega$: each state of~$\NAut$ is a
pair made of one state of~$\RAut$ (which is a set of pairs of states
of~$\Aut$) and one state of~$\MAut$ (which is a pair of sets of states
of~$\Aut$)}
\label{fig-exalt3}
\end{figure}

\subsection{Summary of the complexity of operations on \AAPTAs}
\label{ssec-ops-concl}
\label{ssec-summary}
Table~\ref{tab-summary} gathers our results, giving the \kl(A){size}
of the resulting \AAPTAs depending on the \kl(A){size} of the \AAPTAs
given in input. 

\begin{table}[ht]
  \centering
  \def\arraystretch{1.1}
  \begin{tabular}{|l|r@{}l|}
    \hline
   intersection   & $\size{\State_\cap}, \size{\State_\cup}$ & $ {}\leq\size\State+\size{\State'}+1$  \\
    union  & $\sizeB{\Trans_\cap}, \sizeB{\Trans_\cup}$&${}\leq \sizeB\Trans+\sizeB{\Trans'}+1$ \\
   (Thm~\ref{thm-union})  & $\sizeE{\Trans_\cap}, \sizeE{\Trans_\cup}$&${}\leq \max(\sizeE\Trans,\sizeE{\Trans'})$ \\
   & $\sizeU{\Trans_\cap},\sizeU{\Trans_\cup}$&${}\leq \max(\sizeU\Trans,\sizeU{\Trans'})$ \\
          & $\priomax[\omega_\cap],\priomax[\omega_\cup] $&${}\leq \max(\priomax[\omega],\priomax[\omega'])+1$ \\
    \hline
    projection & $\size{\State_{\textsf{proj}}} $&${}\leq \size\State$ \\
    (Thm~\ref{thm-proj})  & $\sizeB{\Trans_{\textsf{proj}}} $&${}\leq \size{\Alp'}\cdot \sizeB\Trans$ \\
               & $\sizeE{\Trans_{\textsf{proj}}} $&${}\leq \sizeE\Trans$ \\
               & $\sizeU{\Trans_{\textsf{proj}}} $&${}\leq \sizeU\Trans$ \\
               & $\size{\omega_{\textsf{proj}}} $&$ {}\leq\priomax[\omega]$ \\
    \hline
    complement & $\size{\State^{c}} $&${}\in 
    O( \size\State^2 \cdot\sizeB\Trans\cdot \size\Alp\cdot  3^{\sizeE\Trans})$ \\
   (Thm~\ref{thm-compl})   & $\sizeB{\Trans^{c}} $&${}\in 
   O(\size\State\cdot (1+\sizeU\Trans) \cdot \sizeB\Trans \cdot (1+\sizeE\Trans^2\cdot 4^{\sizeE\Trans}))$ \\
    & $\sizeE{\Trans^{c}} $&${}\leq \sizeE\Trans+1$ \\
    & $\sizeU{\Trans^{c}} $&${}\leq \max(\sizeU\Trans,1)$ \\
    & $\priomax[\omega^{c}] $&$ {}\leq \priomax[\omega]+1$ \\
\hline
    simulation & $\size{\State^{s}} $& ${}\in  2^{O(\size\State^2\log(\size\State))}$ \\
   (Thm~\ref{thm-simu})   & $\sizeB{\Trans^{s}} $&${}\leq
2^{\QBBE} \cdot
\bigl[\QBBE! \cdot
  (\sizeE\Trans+1)^{\QBBE} \cdot
  (\max(\sizeU\Trans,1))^{\QBBE^2}\bigr]^{\sizeE\Trans}$ \\
 && \hfill with $\QBBE = \size\State\cdot\sizeB\Trans$ \\
               & $\sizeE{\Trans^{s}} $&${}\leq \size\State\cdot\sizeB\Trans\cdot\sizeE\Trans$\\  
               & $\sizeU{\Trans^{s}} $&${}\leq  \sizeU\Trans^{\size\State\cdot\sizeB\Trans}$ \\
               & $\size{\omega^{s}} $&$ {}\leq 2(\size\State\cdot\priomax[\omega]+1)$ \\
\hline
  \end{tabular}
  \caption{Bounds on the size of the automata obtained by our algorithms}
  \label{tab-summary}\label{tab-ops}
\end{table}

\section{Algorithms for \AATAs}

Given some \AATA $\Aut$, we are interested in two decision procedures:
deciding whether a \kl{regular} \kl{tree} belongs to $\Lang(\Aut)$ and
deciding whether $\Lang(\Aut)=\emptyset$. Both consist in building a
\kl{parity game} and deciding whether Player~$0$ has a \kl(S){winning}
\kl{strategy}. For this we use the following results:

\begin{proposition}[\cite{Lod21}]
\label{prop-compl-pg}
Solving a finite \kl{parity game} can be done in time
$O(n^d)$ or in time $n^{O(\log(d))}$, 
where $n$~is the number of states of the~game,
and $d$~is the number of \kl{priorities}.
\end{proposition}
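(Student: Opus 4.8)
The plan is to prove both complexity bounds by exhibiting algorithms, relying throughout on positional determinacy of \kl{parity games} (Prop.~\ref{prop-determined}), which guarantees that it suffices to search for \kl{memoryless} winning strategies and that the two winning regions partition the state space.

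For the $O(n^d)$ bound, I would use the recursive attractor-based algorithm of McNaughton and Zielonka. Let $p$ be the maximal \kl{priority} occurring in the game, and suppose first that $p$ is even (the odd case is symmetric, swapping the roles of the two players). Write $P$ for the set of states of \kl{priority}~$p$, and let $A$ be the attractor of Player~$0$ to~$P$, i.e.\ the set of states from which Player~$0$ can force the play to reach~$P$; this attractor is computed in time linear in the size of the game by a straightforward fixpoint. One then solves recursively the subgame induced by removing~$A$, which has one fewer \kl{priority}, and combines the two winning regions by computing a second attractor and, if necessary, recursing once more on a strictly smaller state set. Correctness is a direct consequence of positional determinacy together with the closure properties of attractors (the complement of an attractor is a trap, into which a player winning there can be confined). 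Since the recursion decreases either the number of \kl{priorities} or the number of states at each step, unfolding the resulting recurrence gives a running time in $O(n^d)$. (The small-progress-measure algorithm of Jurdziński yields the same, in fact a better, bound, and could be used instead.)

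For the quasipolynomial $n^{O(\log d)}$ bound, I would follow the succinct-progress-measure / universal-tree approach. The key object is a \emph{universal tree}: an ordered tree of height $\lceil d/2\rceil$ into which every ordered tree with at most~$n$ leaves and the same height embeds. One first shows that such a universal tree can be built with only $n^{O(\log d)}$ leaves---this is the combinatorial heart of the construction. A progress measure is then a labelling of the states by nodes of this tree satisfying a local monotonicity condition along each transition; by positional determinacy, the existence of a valid progress measure is equivalent to Player~$0$ winning, and the least such measure is computed by a monotone lifting procedure whose running time is polynomial in the product of the game size and the tree size. Since the tree has quasipolynomial size, so does the whole computation, yielding the bound. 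Equivalently, one may phrase the argument through a quasipolynomial-size deterministic safety \emph{separating automaton} that is composed with the game, reducing the parity game to a safety game of quasipolynomial size solvable in linear time.

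The $O(n^d)$ algorithm is classical and its verification is routine; the genuine obstacle is the quasipolynomial bound, and specifically the size analysis of the universal tree (equivalently, the succinctness of the progress-measure counters) together with the proof that the associated separating automaton is sound and complete for the parity condition. As both statements are established in the literature cited here as~\cite{Lod21}, I would, in the paper itself, appeal to those constructions rather than reproduce their full proofs.
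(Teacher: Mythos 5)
Your proposal is correct, and it ends exactly where the paper does: Proposition~\ref{prop-compl-pg} is stated as a known result imported from the literature (\cite{Lod21}), with no proof given in the paper itself, which is precisely the conclusion you reach in your final paragraph. Your sketches of the two underlying algorithms---the attractor-based recursion of McNaughton--Zielonka for the $O(n^d)$ bound, and the universal-tree/succinct-progress-measure construction for the $n^{O(\log(d))}$ bound---are accurate accounts of how these bounds are established in the cited reference.
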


Considering that~$d$ is usually small, we~will mainly use the former
result (namely~$O(n^d)$) in the sequel. Notice that better complexity
results have been obtained recently for parity games~\cite{CJKLS17},
but they would make complexity results even harder to read, without
significantly improving them.

\subsection{Membership checking}

Let $\calK=\tuple{V,E,\ell}$ be a \kl{Kripke structure}. Deciding
whether $\Tree_\calK \in \Lang(\Aut)$ is equivalent to deciding
whether Player~$0$ has a \kl(S){winning} \kl{strategy} in the
\kl{parity game}~$\GAK$ defined in Section~\ref{sec:sem-game}.
Remember that the number of states
of~$\GAK$ is in 
$O(\size{V}\cdot( \size\State\cdot\sizeB{\Trans}+\size{\State}^{\size{V}}))$
and the number of priorities is the same as~$\Aut$. 
Using Prop.~\ref{prop-compl-pg}:

\begin{theorem}
\label{theo-mc-aapta}
Deciding whether a \kl{regular} \kl{tree}~$\Tree_K$, defined by a
finite \kl{Kripke structure}~$\calK=\tuple{V,E,\ell}$, is accepted by
an \AAPTA $\Aut=\tuple{\State, \state_0,\Trans,\prio}$ can be
performed in time $O((\size
V\cdot(\size\State\cdot\sizeB\Trans+\size\State^{\size V}))^{\priomax})$.
\end{theorem}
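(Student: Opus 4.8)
The plan is to reduce membership to solving a single finite \kl{parity game} and then invoke the known complexity of parity-game solving. By assumption $\Tree_\calK$ is the \kl{computation tree} of the finite \kl{Kripke structure}~$\calK=\tuple{V,E,\ell}$ rooted at some vertex~$v_0$, hence it is \kl{regular}, and we may work with the finite game~$\GAK$ described in Section~\ref{sec:sem-game} rather than the infinite game~$\GAT$ over the full unfolding. First I would observe that $\GAT$ is exactly the unfolding of~$\GAK$: every position~$(n,\dots)$ of~$\GAT$ projects to the position~$(\last{v_0\cdot n},\dots)$ of~$\GAK$, this projection preserves transitions and \kl{priorities} along every \kl(G){path}, and two tree nodes sharing the same last vertex root identical subtrees. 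By memoryless determinacy of \kl{parity games} (Prop.~\ref{prop-determined}), from the position $(v_0,\initstate)$ of~$\GAK$ one of the two players has a \kl{memoryless} \kl(S){winning} \kl{strategy}; lifting it along the projection $\GAT\to\GAK$ yields a \kl(S){winning} \kl{strategy} for the same player in~$\GAT$. The winner from $(\troot,\initstate)$ in~$\GAT$ therefore coincides with the winner from $(v_0,\initstate)$ in~$\GAK$, and combining this with Prop.~\ref{prop-gamesem} gives: $\Tree_\calK\in\Lang(\Aut)$ if, and only~if, \Pl0 has a \kl(S){winning} \kl{strategy} from $(v_0,\initstate)$ in~$\GAK$.

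It then remains to bound the two parameters governing the cost of solving~$\GAK$. The number of states~$n$ of~$\GAK$ was computed in Section~\ref{sec:sem-game} to lie in $O(\size V\cdot(\size\State\cdot\sizeB\Trans+\size\State^{\size V}))$, and the game can be constructed within the same time bound. For the number~$d$ of \kl{priorities}, recall that $\gprio$ assigns to each \kl{main state} $(n,\state)$ the \kl{priority}~$\prio(\state)$, while all \kl{auxiliary states} receive one fixed larger value. Since every infinite \kl(G){path} of~$\GAK$ visits \kl{main states} infinitely often, the \kl{priority} of the \kl{auxiliary states} is never the least one occurring infinitely often, so it is irrelevant to the winning condition; we may therefore reassign it to the largest \kl{priority} already used by a \kl{main state}. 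The resulting game uses exactly the $\priomax$ \kl{priorities} of~$\Aut$, so that $d=\priomax$.

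Plugging $n=O(\size V\cdot(\size\State\cdot\sizeB\Trans+\size\State^{\size V}))$ and $d=\priomax$ into the $O(n^d)$ bound of Prop.~\ref{prop-compl-pg} yields a solving time of
\[
O\bigl((\size V\cdot(\size\State\cdot\sizeB\Trans+\size\State^{\size V}))^{\priomax}\bigr),
\]
which dominates the cost of constructing~$\GAK$ (for $\priomax\geq 1$) and gives the announced bound.

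The step I expect to require the most care is the transfer of correctness from the infinite game~$\GAT$ to its finite quotient~$\GAK$: Prop.~\ref{prop-gamesem} is stated over the computation tree, so one must argue explicitly—via the regularity of~$\Tree_\calK$ and memoryless determinacy—that folding the tree positions onto the vertices of~$\calK$ preserves the existence of a \kl(S){winning} \kl{strategy}. The minor subtlety of counting \kl{priorities} so that the exponent is exactly~$\priomax$ (rather than $\priomax+1$) is handled by the observation that the \kl{priority} of the \kl{auxiliary states} never affects the parity of an infinite \kl(G){path}.
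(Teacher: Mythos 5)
Your proof is correct and follows essentially the same route as the paper: reduce membership to deciding the winner of the finite game~$\GAK$ from Section~\ref{sec:sem-game}, use the state-count bound $O(\size V\cdot(\size\State\cdot\sizeB\Trans+\size\State^{\size V}))$ computed there, and apply Prop.~\ref{prop-compl-pg} with $d=\priomax$. The two points you elaborate---transferring correctness from the infinite game~$\GAT$ to its finite quotient via memoryless determinacy, and reassigning the auxiliary states' \kl{priority} so that the exponent is $\priomax$ rather than $\priomax+1$---are exactly the steps the paper asserts without proof (``the number of priorities is the same as~$\Aut$''), so your write-up is a faithful, slightly more detailed version of the paper's argument.
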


\subsection{Emptiness checking}

Before addressing emptiness checking, we~prove a side result proving
that, if the language of an \AAPTA is non-empty, then it~contains a
tree whose arity can be bounded. The technique used in the proof of
this \emph{sufficient-degree} theorem is a first step in the
correctness proof of our emptiness-checking algorithm.

\begin{theorem}\label{thm-sufficientdeg}
Let $\Aut=\tuple{\State, \state_0,\Trans,\prio}$ be an \nAAPTA. Then
$\Lang(\Aut)$~is non-empty if, and only~if, it~contains a tree whose
arity is at~most~$\sizeE{\Trans}$.
\end{theorem}

\begin{proof}
Take a tree~$\Tree=\tuple{\tree,\lab}$ accepted by~$\Aut$, and an
\kl(ET){accepting} \kl{execution tree} $\ExTree=\tuple{\tree',\exlab}$
in which $\tree'$ is (isomorphic~to) a subset of~$\tree$
(see~Remark~\ref{rk-nonalt}).

In~each node of~$\ExTree$, some~\EUpr $\EUpair(E;U)$ of the
(disjunctive) transition function is satisfied, and the execution tree
can be pruned in such a way that the \EUpr $\EUpair(E;\emptyset)$ is
satisfied in that node; notice that the \EUpr~$\EUpair(E;U)$ is still
satisfied, but we have just enough nodes for satisfying the $\textsf E$-part,
and no nodes satisfy the $\textsf U$-part.  We~end~up with a new
tree~$\ExTree'=\tuple{\tree'',\exlab'}$, where $\tree''$ has been
obtained from~$\tree'$ by pruning subtrees (hence it is also
isomorphic to a subset of~$\tree$), and whose arity is at
most~$\sizeE{\Trans}$.  This~tree~$\ExTree'$ is easily seen to be an
\kl(ET){accepting} \kl{execution tree} of~$\Aut$ on the input
tree~$\Tree'=\tuple{\tree'',\lab_{|\tree''}}$.
\end{proof}

Using the simulation theorem, we can get a similar result for
alternating automata:
\begin{corollary}
  Let $\Aut=\tuple{\State, \state_0,\Trans,\prio}$ be an \AAPTA. Then
  $\Lang(\Aut)$~is non-empty if, and only~if, it~contains a tree whose
  arity is at~most~$\size\State\cdot \sizeB{\Trans}\cdot
  \sizeE{\Trans}$.
\end{corollary}

We now address emptiness checking:

 \begin{theorem}
Let $\Aut=\tuple{\State, \state_0,\Trans,\prio}$ be an \nAAPTA.
Checking emptiness of $\Lang(\Aut)$ can be performed in time
$O((\size \State\cdot (1+\sizeB\Trans\cdot\size\Alp))^{\priomax})$.
\end{theorem}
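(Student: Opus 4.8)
The plan is to encode non-emptiness of $\Lang(\Aut)$ as the existence of a winning strategy for \Pl0 in a finite parity game $\GA$ that \Pl0 uses to \emph{build} an accepting tree rather than to check a fixed one. Recall from Remark~\ref{rk-nonalt} that, since $\Aut$ is non-alternating, any accepted tree admits an accepting execution tree with the same underlying tree structure, each node carrying a single state; moreover $\Trans(\state,\sigma)$ is a disjunction of \EUprs, so that satisfying it amounts to satisfying one of its disjuncts. Accordingly I take as \Pl0-positions the states $\state\in\State$, and as \Pl1-positions the triples $(\state,\sigma,\EUpair(E;U))$ where $\EUpair(E;U)$ is a disjunct of $\Trans(\state,\sigma)$. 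From a position $\state$, \Pl0 picks a letter $\sigma$ and a disjunct $\EUpair(E;U)$, moving to $(\state,\sigma,\EUpair(E;U))$; from there, \Pl1 picks any state $\state'\in\supp(E)$ and the play continues from $\state'$. The key simplification is that for emptiness the universal part $U$ and the arity are irrelevant: \Pl0 can always realise exactly $E$ as the multiset of successor states (trees have arbitrary arity), so the only states along which \Pl1 may continue are those in $\supp(E)$. When $E=\emptyset$ (in particular for a $\top$-disjunct, which permits a leaf), the position $(\state,\sigma,\EUpair(\emptyset;U))$ has no outgoing move, so \Pl1 is blocked and loses — reflecting that the corresponding branch of the execution tree is finite, hence accepting. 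A $\bot$-disjunct is simply never a useful choice for \Pl0 and can be made losing for her.

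For the priorities, I set $\gprio(\state)=\prio(\state)$ on \Pl0-positions and give each intermediate position $(\state,\sigma,\EUpair(E;U))$ the priority $\prio(\state)$ of its source state. Along any infinite play the automaton states visited form a branch $\state_0\state_1\state_2\cdots$ of the induced execution tree, and the play's priority sequence is $\prio(\state_0)\,\prio(\state_0)\,\prio(\state_1)\,\prio(\state_1)\cdots$, whose least priority seen infinitely often coincides with that of $\prio(\state_0)\,\prio(\state_1)\cdots$; thus the game's parity condition captures exactly the branch acceptance condition of $\Aut$, and the game uses only $\priomax$ priorities (avoiding any spurious extra priority). Counting positions, there are $\size\State$ \Pl0-positions and at most $\size\State\cdot\size\Alp\cdot\sizeB\Trans$ intermediate ones (the number of disjuncts of $\Trans(\state,\sigma)$ being bounded by $\sizeB\Trans$), so the number of states is $n\leq \size\State\cdot(1+\sizeB\Trans\cdot\size\Alp)$. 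Plugging $n$ and $d=\priomax$ into Prop.~\ref{prop-compl-pg} yields the announced bound $O((\size\State\cdot(1+\sizeB\Trans\cdot\size\Alp))^{\priomax})$.

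It remains to argue $\Lang(\Aut)\neq\emptyset$ iff \Pl0 wins $\GA$ from $\state_0$. If some tree is accepted, fix an accepting execution tree $\tuple{\tree,\exlab}$ (Remark~\ref{rk-nonalt}); \Pl0 follows it, at a node in state $\state$ choosing its label $\sigma$ and a disjunct $\EUpair(E;U)$ satisfied by the induced marking, so that every $\state'\in\supp(E)$ that \Pl1 may pick does label some successor — \Pl0 then proceeds into that subtree, and finite branches arise precisely at $E=\emptyset$ nodes where \Pl1 is blocked. Conversely, by positional determinacy (Prop.~\ref{prop-determined}) a winning \Pl0 may be taken memoryless, selecting a fixed $(\sigma_\state,\EUpair(E_\state;U_\state))$ at each $\state$; I unfold this into a tree by giving each node in state $\state$ the label $\sigma_\state$ and exactly $\MSsize{E_\state}$ successors, one in each state of $E_\state$. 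The induced unitary marking then has image $E_\state$, whence $\nu\models| \EUpair(E_\state;U_\state)$ and hence $\nu\models+\Trans(\state,\sigma_\state)$ at every node, so this is a genuine execution tree; its branches correspond to plays compatible with the winning strategy and its leaves to blocked \Pl1-positions, so all branches are accepting and the tree is accepted. The main thing to get right is this last verification: namely that the doubled-priority bookkeeping makes branch-acceptance and play-winning coincide, and that the handling of $E=\emptyset$ (and of $\top$/$\bot$ disjuncts) correctly matches finite and rejected branches, so that the two directions line up without an extra priority or spurious successors.
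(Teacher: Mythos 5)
Your proposal is correct and follows essentially the same route as the paper: reduce emptiness to a finite parity game in which \Pl0 chooses a letter and a disjunct $\EUpair(E;U)$ and \Pl1 answers with a state in $\supp(E)$, exploiting arbitrary arity to discard the universal parts and the multiplicities of $E$, then invoke positional determinacy and Prop.~\ref{prop-compl-pg} with $n\leq\size\State\cdot(1+\sizeB\Trans\cdot\size\Alp)$ and $d=\priomax$. The only differences are presentational: the paper factors the ``$U$ and multiplicities are irrelevant'' observation into a preprocessing step (replacing each $\EUpair(E;U)$ by $\EUpair(\supp(E);\emptyset)$) and assigns intermediate positions the maximal priority, whereas you fold that observation into the game's moves and give intermediate positions the source state's priority; both choices yield the same state count, the same number of priorities, and the same correctness argument.
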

 
\begin{proof}
Continuing the proof of Theorem~\ref{thm-sufficientdeg}, we~consider
the automaton~$\widehat\Aut$ obtained from~$\Aut$ by replacing
each \EUpr~$\EUpair(E;U)$ with the \EUpr~$\EUpair(E;\emptyset)$:
the~language of this automaton is non-empty if, and only~if, the
language of~$\Aut$ also is non-empty.

From $\widehat{\Aut}=\tuple{\State, \state_0,\Trans',\prio}$, we build
the parity game $\GA=\tuple{\GState_0\cup \GState_1,R,\gprio}$
where:
\begin{itemize}
\item $\GState_0$ is $Q$ and $\GState_1$ is the set of all \EU-pair
  $\EUpair(E;\emptyset)$ appearing in some $\Trans'(q,\sigma)$.
\item $R$ contains two kinds of edges:
  first, it~contains an edge $(q,\EUpair(E;\emptyset))$ if, and only~if,
  $\EUpair(E;\emptyset)$ occurs in the disjunction $\Trans'(q,\sigma)$
  for some $\sigma \in \Sigma$; second, it~contains an edge
  $(\EUpair(E;\emptyset),q)$ if, and only~if, $q \in E$.
\item $\prio'(q)=\prio(q)$ for all~$q\in Q$,
  and $\prio'(\EUpair(E;\emptyset))$ is set
  to the maximum value of~$\prio$ on~$\State$
  (in~order to have no effect on the
  acceptance of the~play).
\end{itemize}
In this game, in state~$q$, Player~$0$ has to select
an~\EUpr~$\EUpair(E;\emptyset)$ in $\Trans'(q,\alp)$ for
some~$\alp\in\Alp$, and from a node $\EUpair(E;\emptyset)$, Player~$1$
may select any state in~$E$ to continue the play. This~way, Player~$0$
builds a \Stree{\Alp} step-by-step.

It remains to show that there exists some \Stree{\Alp}~$\Tree$
\kl{accepted} by $\widehat{\Aut}$ if, and only~if, Player~$0$ has a
\kl(S){winning} \kl{strategy} in~$\GA$ from~$\state_0$. The~proof is
based on a direct correspondence between the \kl(ET){accepting}
\kl{execution tree} for~$\Tree$ and a \kl(S){winning} \kl{strategy}
for Player~$0$.  Indeed consider an \kl(ET){accepting} \kl{execution
  tree} for~$\Tree$: this~\kl{tree} has the same structure as~$\Tree$,
and it~associates with every node~$n$ of~$\Tree$ a state~$\state\in \State$
such that the successor nodes of~$n$ satisfy some
$\EUpair(E;\emptyset) \in \Trans'(\state,\alp)$. This~pair
$\EUpair(E;\emptyset)$ is precisely the move Player~$0$ should select
to win from~$\state$ (and whatever the choice of Player~$1$ in~$E$, Player~$0$
will be able to continue to select winning moves).

Conversely given a \kl(S){winning} \kl{strategy}
for Player~$0$ from a state~$\state$,
one can build an \kl{accepted} \kl{tree} level-by-level: any~move to some
$\EUpair(E;\emptyset)$ corresponds to some (possibly several) 
letter~$\alp\in\Alp$  s.t.~$\EUpair(E;\emptyset) \in \delta(q,\sigma)$);
this~fixes the arity of the current node to
$\size E$, and associates its $\size E$ successors with the states in~$E$.

\smallskip
The number of
states of~$\GA$ is bounded by $\size\State + \size\State \cdot
\sizeB\Trans \cdot \size\Sigma$;
the~number of \kl{priorities} is $\priomax[\prio]$. By~using
Prop.~\ref{prop-compl-pg}, we get a decision procedure in
$O((\size \State\cdot (1+\sizeB\Trans\cdot\size\Alp))^{\priomax})$.
\end{proof}

Emptiness checking for \AAPTA can then be decided by first using the
simulation theorem to get a non-alternating automaton (with an
exponential \mbox{blow-up}). Then:
\begin{corollary}\label{coro-algo-aeupta}
  Let $\Aut=\tuple{\State, \state_0,\Trans,\prio}$ be an \AAPTA.
  Checking emptiness of~$\Lang(\Aut)$ can be performed in
  time~$\bigl(\size\Alp\cdot
(\size\State\cdot(\sizeE\Trans+1))^{\size\State^2\sizeB\Trans^2\cdot\sizeE\Trans}\bigr)^{O(\size\State\cdot\size\prio)}$.
  
\end{corollary}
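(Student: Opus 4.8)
The plan is to chain the two results established just above. First I would invoke the \kl{simulation} construction of Theorem~\ref{thm-simu} to replace $\Aut$ by a language-equivalent \nAAPTA $\NAut$, and then run the emptiness procedure for \kl{non-alternating} automata (the theorem immediately preceding this corollary) on $\NAut$. Since each transformation preserves the accepted \kl{language}, we have $\Lang(\Aut)=\emptyset$ if, and only if, $\Lang(\NAut)=\emptyset$, so the total running time is essentially the time needed to solve the single \kl{parity game} associated with $\NAut$ via Proposition~\ref{prop-compl-pg}.

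The crucial observation is that the emptiness procedure for \nAAPTAs runs in time $O((\size{\State}\cdot(1+\sizeB\Trans\cdot\size\Alp))^{\priomax})$, which depends on the automaton only through its number of states, its Boolean size, the alphabet, and its number of priorities; it is insensitive to $\sizeE\Trans$ and $\sizeU\Trans$, since the game $\widehat\Aut$ built in that proof discards universal parts and collapses each existential part to its support. Hence I only need to substitute the parameters $\size{\State^s}$, $\sizeB{\Trans^s}$ and $\priomax[\omega^s]$ supplied by Theorem~\ref{thm-simu} (the alphabet being unchanged). Writing $\QBBE=\size\State\cdot\sizeB\Trans$ as before, these are $\size{\State^s}=2^{O(\size\State^2+\size\State\cdot\priomax\cdot\log(\size\State\cdot\priomax))}$, the number of priorities $\priomax[\omega^s]=2(\size\State\cdot\priomax+1)$, and a doubly-exponential Boolean size $\sizeB{\Trans^s}$ carrying the factorial factor $(\QBBE-1)!$ together with the powers $(\sizeE\Trans+1)^{\QBBE-1}$ and $(\sizeU\Trans)^{\QBBE^2}$.

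Because the target bound has the form $2^{(\cdot)}$, the final step is to take the logarithm of $(\size{\State^s}\cdot(1+\sizeB{\Trans^s}\cdot\size\Alp))^{\priomax[\omega^s]}$ and over-approximate. Concretely I would (i) normalise by Remark~\ref{remUsingleton} so that $\sizeU\Trans\le 1$, which kills the $(\sizeU\Trans)^{\QBBE^2}$ factor; (ii) bound $\log((\QBBE-1)!)=O(\QBBE\cdot\log\QBBE)$ by Stirling, so that $\log\sizeB{\Trans^s}=O(\sizeE\Trans\cdot\QBBE\cdot\log(\QBBE\cdot\sizeE\Trans))$; (iii) use $\priomax\le\size\State$ to write $\log\size{\State^s}=O(\size\State^2+\size\State\cdot\priomax\cdot\log\size\State)$; and (iv) multiply $\log\size{\State^s}+\log\sizeB{\Trans^s}+\log\size\Alp$ by $\priomax[\omega^s]=O(\size\State\cdot\priomax)$ and collect the dominant monomials, which yields the claimed exponent $O(\size\State^3\cdot\priomax\cdot(\log\size\State+\sizeB\Trans^2\cdot\sizeE\Trans\cdot\log\sizeE\Trans)+\size\State\cdot\priomax\cdot\log\size\Alp)$.

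I expect the main obstacle to be precisely this arithmetic bookkeeping in step (iv): one must decide which of the two large quantities dominates inside the logarithm — $\log\size{\State^s}$, coming from the $2^{\Theta(\size\State^2)}$ state blow-up of the powerset-with-ancestors construction, versus $\log\sizeB{\Trans^s}$, coming from distributing conjunctions into disjunctions of \EUprs — and then verify that the generous over-approximations (in particular splitting $\log(\QBBE\cdot\sizeE\Trans)$ into $\log\size\State+\log\sizeE\Trans$ and absorbing the residual $\log\sizeB\Trans$ into the polynomial prefactor $\size\State^3\cdot\priomax\cdot\sizeB\Trans^2$) really do land inside the stated $O(\cdot)$. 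Everything else is a purely mechanical composition of Theorem~\ref{thm-simu}, the \nAAPTA emptiness theorem, and Proposition~\ref{prop-compl-pg}.
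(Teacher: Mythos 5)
Your proposal matches the paper's own proof exactly: the paper derives this corollary precisely by applying the simulation theorem (Theorem~\ref{thm-simu}) to obtain a language-equivalent \nAAPTA and then running the non-alternating emptiness procedure (hence Proposition~\ref{prop-compl-pg}) on the resulting automaton, which is your chain. The paper gives no further detail, so your steps (i)--(iv) merely fill in the arithmetic it leaves implicit --- including the assumption, via Remark~\ref{remUsingleton}, that universal parts are singletons, which the stated bound's lack of any $\sizeU\Trans$ dependence also presupposes.
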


\begin{proof}
From the results of Table~\ref{tab-summary}, after removing
alternation, we~can bound
\begin{itemize}
\item $\size{\State^s}^{\size{\prio^s}}$ with $\size\State^{O(\size\State^3\cdot\size\prio)}$;
\item $\sizeB{\Trans^s}^{\size{\prio^s}}$ with $((\sizeE\Trans+1)\cdot\max(\sizeU\Trans,1))^{O(\size\State^3\cdot\size\prio\cdot \sizeB\Trans^2\cdot\sizeE\Trans)}$;
\item $\size\Alp^{\size{\prio^s}}$ with $\size\Alp^{O(\size\State\cdot\size\prio)}$.
\end{itemize}
Since $\sizeU\Trans \leq \size\State$, 
we~get the expected bound.
\end{proof}

\subsection{Summary of algorithm complexities for \AAPTAs}
\label{ssec-algo-concl}

Table~\ref{tab-summary-algos} gathers the complexity results proven above. 
Note that the complexity of the membership problem for~\nAAPTAs does
not depend and $\sizeB\Trans$, unlike for~\AAPTAs. Indeed,
the~corresponding game~$\GATk$ can be simplified when the transition
function is non-alternating: from~a state~$(n,\state)$, Player~$0$ may
directly choose a configuration~$(n,\nu_n)$ s.t.~$\nu_n$~fulfills
some \EUpr~$\EUpair(E_i;U_i)$ in~$\Trans(q,\lab(n))$, and then Player~$1$ plays
from that state~$(n,\nu_n)$. This~makes the game slightly smaller than
in the general case.

\begin{table}[ht]
  \centering
  \def\arraystretch{1.1}
  \begin{tabular}{|c|c|c|}
    \hline
     & \nAAPTA & \AAPTA \\
     & (non-alternating) & (alternating) \\
     \hline
   Membership checking   & \multirow{2}*{$O \bigl((\size
V\cdot \size\State^{\size V})^{\priomax}\bigr)$}  &  \multirow{2}*{$O\bigl((\size
V\cdot(\size\State\cdot\sizeB\Trans+\size\State^{\size V}))^{\priomax}\bigr)$}  \\
$\Tree_\calK \in \Lang(\Aut) ?$   &  &    \\
   \hline 
   Emptiness checking & \multirow{2}*{$O\bigl((\size \State\cdot (1+\sizeB\Trans\cdot\size\Alp))^{\priomax}\bigr)$} &  \multirow{2}*
   {$\bigl(\size\Alp\cdot
(\size\State\cdot(\sizeE\Trans+1))^{\size\State^2\sizeB\Trans^2\cdot\sizeE\Trans}\bigr)^{O(\size\State\cdot\size\prio)}$}\\  
   $ \Lang(\Aut)=\emptyset ?$ & & \\
   \hline
    \end{tabular}
  \caption{Complexity of the algorithms for \nAAPTAs and \AAPTAs}
  \label{tab-summary-algos}
  \label{tab-ops-algos}
\end{table}

\section{Application to \QCTL}
\label{sec-applitoqctl}

\QCTL extends the temporal logic \CTL with quantifications over atomic
propositions.
In this section, we~establish a tight link between \QCTL and
\EU-automata, from
which we obtain expressiveness and algorithmic results for~\QCTL.

\subsection{Syntax and (tree) semantics}
\label{sec-semqctl}

\begin{definition}
\label{def-QCTL}
The syntax of \intro*{\QCTL} over a finite set~$\AtP$ of atomic propositions
is defined by the following grammar:
\begin{xalignat*}1
\reintro*\QCTL\ni \phi,\psi  &\coloncolonequals  q \mid \neg\phi \mid \phi\ou\psi  
    \mid \EX \phi \mid \AX \phi \mid \Ex \phi \Until \psi \mid \All \phi \Until \psi \mid \exists p.\ \phi
\end{xalignat*}
where $q$ and~$p$ range over~$\AtP$.
The~\intro{size of a formula}~$\phi\in\QCTL$, denoted $\reintro*\Fsize\phi$, is the number of
steps needed to build~$\phi$.
\intro*\CTL~is the restriction of~\QCTL in which the rule $\exists p.\ \phi$ is
not allowed. 
\end{definition}

\QCTL formulas are evaluated over \kl{trees} (usually computation trees of finite \kl{Kripke
structures}). It~is worth noticing that there exist several semantics
for \QCTL in the literature~\cite{Kup95a,Fre06,LM14} and this leads to
important differences in term of complexity or expressiveness. Here we
consider the so-called \intro{tree semantics}: given a \QCTL
formula~$\phi$, a~\SDtree{2^{\AtP}}{\Dir}~$\Tree=\tuple{\tree,\lab}$,
and a node~$n$, we~write~${\Tree, n \intro*\models \phi}$ to denote
that~$\phi$~holds at node~$n$ in~$\Tree$, which~is defined inductively
as follows:
\[
\begin{array}{rcl}
\Tree, n \models p & \text{ iff } & p\in l(n) \\
\Tree,  n \models \neg\phi & \text{ iff } & \Tree,n \not\models \phi \\
\Tree, n \models \phi \ou \psi & \text{ iff } & \Tree, n
\models \phi \text{ or }  \Tree,  n \models \psi \\
\noalign{\pagebreak[2]}
\Tree, n \models \EX \phi & \text{ iff }& \exists d\in \Dir \text{ s.t. } 
   n\cdot d \in \tree  \text{ and }
   \Tree, n\cdot d \models \phi \\
\Tree, n \models \AX \phi & \text{ iff }& \forall d\in \Dir \text{ s.t. } 
   n\cdot d \in \tree\text{, we have } 
   \Tree, n\cdot d \models \phi \\
\noalign{\pagebreak[2]}
\Tree, n \models \Ex \phi \Until \psi & \text{ iff } &\exists w \in \Dir*
\text{ s.t. }
n\cdot w \in \tree \text{ and }
 \Tree, n\cdot w \models \psi 
\text{ and } \\
&&
\forall 0\leq j < \size w.\  %
  \Tree, n\cdot \prefix wj \models \phi\\
\Tree, n \models \All \phi \Until \psi & \text{ iff } &
    \forall w \in \Dir*\cup\Dir~.\ \text{if }
 n\cdot w \text{ is a \kl{branch} in } \tree \text{, then } \exists i \geq 0 . \\
 & &\Tree, n\cdot \prefix wi \models \psi 
\text{ and } \forall 0 \leq j < i\text{, we have }  \Tree, n\cdot \prefix wj \models \phi\\
\Tree, n \models \exists p.\ \phi & \text{ iff }& \exists \Tree'
\equiv_{2^{\AtP\backslash\{p\}}} \Tree \text{ s.t. } \Tree', n \models \phi,
\end{array}
\]
where, following the definition given at
Section~\ref{ssec-proj}, $\Tree' \equiv_{2^{\AtP\backslash\{p\}}} \Tree$
means that $\Tree$ and~$\Tree'$ are identical except for the labelling
with atomic proposition~$p$: formula~$\exists p.\ \phi$ intuitively
means that it is possible to modify the labelling of~$\Tree$ for
proposition~$p$ in such a way that $\phi$~holds.

\AP
Finally, for a \kl{Kripke
structure}~$\calK$ and one of its states~$v$, we~write $\calK,v\models\phi_s$
whenever $\Tree_{\calK,v},\emptyw_{\Tree_{\calK,v}}\models \phi_s$.
We~say that two formulas $\phi_1$ and~$\phi_2$ are \intro(F){equivalent} (denoted $\phi_1 \reintro*\equivF \phi_2$)
when their truth value are equal for every \Stree{2^{\AtP}}.

In~the sequel, we use standard abbreviations such as $\top \eqdef p\ou\neg p$,
$\bot \eqdef \non\top$, $\EF\phi \eqdef \Ex \top \Until \phi$, $\AF\phi \eqdef \All \top \Until \phi$, 
$\AG\phi \eqdef \neg \EF\neg \phi$, 
$\EG\phi \eqdef \neg \AF\neg \phi$ and $\forall p.\ \phi_s \eqdef \neg\exists p.\ \neg\phi_s$.

Quantification over atomic propositions increases the expressiveness
of~\CTL. For example, it~allows us to
count the number of \kl{successors}, as illustrated by  formula 
\[
\Ex_1\Next\phi = \EX\phi \et \non\exists p.\ \big(\EX(p\et \phi) \et
\EX(\non p \et\phi)\big)
\]
where we assume that $p$ does not appear in~$\phi$.  This formula
states that there is exactly one \kl{successor} satisfying~$\phi$: the
first part of the formula enforces the presence of at least one
\kl{successor} satisfying~$\phi$, and if there were two of them, then
labelling only one of them with~$p$ would falsify the second part of
the formula.  It~is well-known~\cite{HM85} that \CTL cannot express
such properties.

Generalising the idea above, \QCTL can also (succinctly) express that a
node has at most $2^k$ successors: this~is achieved by requiring the
existence of a labelling with $k$ atomic propositions~$(p_i)_{1\leq
  i\leq k}$ in such a way that no two successors have the same
labelling:
\begin{multline*}
  \chi_{k} = \exists (p_i)_{1\leq i\leq k}.\ \non\Bigl(\exists q,q'.\ \Ex_1\Next q
  \et \Ex_1\Next q' \et \non\EX(q\et q')\et{} \\
  \ET_{1\leq i\leq k}\!\!\EX(q\et p_i) \Leftrightarrow \EX(q'\et p_i)\Bigr).
\end{multline*}
The negation of this formula expresses the existence of \emph{at
least} $2^k+1$ successors. Using an extra atomic proposition for
isolating a single node, we~can get a formula of size linear in~$k$
expressing the existence of exactly $2^k$ successors. Similar ideas
can be used to succinctly express the existence of $2^k$ successors
satisfying some formula~$\phi$.

In order to give more precise results about \QCTL, 
we~introduce restricted fragments, depending on the nesting of
quantifiers.
Given two \QCTL  formulas~$\phi$ and~$(\psi_i)_i$, and atomic
propositions~$(p_i)_i$ that appear free in~$\phi$ (\ie, not as
quantified propositions), we~write $\phi[(p_i\to \psi_i)_i]$
(or~$\phi[(\psi_i)_i]$ when $(p_i)_i$ are understood from the context) for the
formula obtained from~$\phi$ by replacing each occurrence of~$p_i$
with~$\psi_i$. Given two sublogics~$L_1$ and~$L_2$ of~\QCTL, we~write
$L_1[L_2]=\{\phi[(\psi_i)_i] \mid \phi\in L_1,\ (\psi_i)_i\in L_2\}$.
We~then inductively define the following fragments:
\begin{itemize}
\item \AP $\reintro*{\EQCTL[0]}$ and $\reintro*{\AQCTL[0]}$ correspond to \CTL, and for~$k>0$,
$\intro*{\EQCTL[k]}$~is the set of formulas of the form
$\exists p_1.\ \exists p_2\ldots \exists p_n.\ \phi$
for~$\phi\in\AQCTL[k-1]$, and
$\intro*{\AQCTL[k]}$ is the set of formulas of the form
$\forall p_1.\ \forall p_2\ldots \forall p_n.\ \phi$
for~$\phi\in\EQCTL[k-1]$,
\item\AP $\reintro*{\QCTL[0]}$ is~\CTL,
  $\reintro*{\QCTL[1]}=\CTL{}[\EQCTL[1]]$, and for~$k>1$,
  $\intro*{\QCTL[k]}$ is the logic $\QCTL[1][\QCTL[k-1]]$.
\end{itemize}
Hence formulas in \EQCTL[k] and \AQCTL[k] are in prenex form, and
involve $k-1$ quantifier alternations (respectively starting with
existential and universal quantifiers); on the other hand, \QCTL[k]
counts the maximal number of nested blocks of quantifiers, allowing
boolean and \CTL operators between blocks.
An~easy induction shows that \EQCTL[k] and~\AQCTL[k] are syntactic fragments of~\QCTL[k].
As examples, it can be seen that formula $\Ex_1\Next\phi$ (for
$\phi\in\CTL$) is in \AQCTL[1], and that formula~$\chi_k$ is
in \QCTL[3],  but can easily be rewritten as a formula in~\EQCTL[3].

\subsection{From \QCTL to \AATA}

\subsubsection{From \CTL to tree automata}

  Any \CTL formula~$\phi$ can be turned into an \AAPTA~$\calA_\phi$
  accepting exactly the \kl{trees} where~$\phi$~holds.  One of the first
  such constructions is given in~\cite{BVW94,jacm47(2)-KVW};
  it~is based on \faTAutomata, but the
  construction has then been extended to arbitrary-arity trees
  using \BDAutomata~\cite{Wil99a}
  (see~Section~\ref{sec-relW}).

  Here we adapt the construction of~\cite{Wil99a} to our \EUconstrs
  in the transitions of the automaton.
  We~assume~w.l.o.g.~that negations in~$\phi$ may only appear at the
  level of atomic propositions; transforming a formula in such a negation-normal
  form may at~most double the size of the formula. The~automaton~$\calA_\phi =
  \tuple{\State_\phi,\initstate,\Trans_\phi,\prio_\phi}$ can then be
  defined as follows:
  \begin{itemize}
  \item $\State_\phi$ is the set of state subformulas
    (including~$\top$) of~$\phi$. In~order to~avoid confusion, for~each
    subformula~$\psi$, we~write~$\st\psi$ for the associated state
    in~$\State_\phi$. %
  \item the initial state~$\initstate$ is~$\st\phi$,
  \item given  $\st\psi \in \State_\phi$ and
    $\alp \in 2^{\AtP}$, we define
    $\Trans_\phi(\st\psi,\alp)$ inductively as follows:

    \begin{xalignat*}2
      \delta_\phi(\st\top,\sigma) &= \top &
      \delta_\phi(\st\bot,\sigma) &= \bot \\
      \delta_\phi(\st P,\sigma) &= \begin{cases} \top & \mbox{if } P\in \sigma \\ 
        \bot & \mbox{otherwise} \end{cases} &
      \quad\delta_\phi(\st{\non P},\sigma) &= \begin{cases}\bot&\mbox{if }P\not\in \sigma \\ 
        \top & \mbox{otherwise} \end{cases} \\\noalign{\allowbreak}
  \delta_\phi(\st{\psi_1\et\psi_2},\sigma) &\multicolumn{3}{l}{${}= \delta_\phi(\st{\psi_1},\sigma) \et \delta_\phi(\st{\psi_2},\sigma)$} 
\\[1.2ex]\noalign{\allowbreak}
\delta_\phi(\st{\psi_1\ou\psi_2},\sigma) &\multicolumn{3}{l}{${}= \delta_\phi(\st{\psi_1},\sigma) \ou \delta_\phi(\st{\psi_2},\sigma)$}
      \\[1.2ex]\noalign{\allowbreak}
      \delta_\phi(\st{\EX \psi},\sigma) &=  \EUpair(\st\psi\mapsto 1;\{\st\top\})  
      \\[1.2ex]\noalign{\allowbreak}
\delta_\phi(\st{\AX \psi},\sigma) &=  \EUpair(\emptyset;\{\st\psi\})
      \\[1.2ex]\noalign{\allowbreak}
      \delta_\phi(\st{\Ex \psi_1 \Until \psi_2},\sigma) &\multicolumn{3}{l}{${}=\delta_\phi(\st{\psi_2},\sigma) \ou
        \Bigl(\delta_\phi(\st{\psi_1},\sigma) \et  \EUpair(\st{\Ex\psi_1\Until\psi_2}\mapsto 1;\{\st{\top}\})  \Bigr)$} \\[1.2ex]\noalign{\allowbreak}
      \delta_\phi(\st{\Ex \psi_1 \WUntil \psi_2},\sigma)&\multicolumn{3}{l}{${}=\delta_\phi(\st{\psi_2},\sigma) \ou
        \Bigl(\delta_\phi(\st{\psi_1},\sigma) \et \EUpair(\st{\Ex\psi_1\WUntil\psi_2}\mapsto 1;\{\st\top\})  \Bigr)$}\\[1.2ex]\noalign{\allowbreak}
      \delta_\phi(\st{\All \psi_1 \Until \psi_2},\sigma)&\multicolumn{3}{l}{${}=\delta_\phi(\st{\psi_2},\sigma) \ou
        \Bigl(\delta_\phi(\st{\psi_1},\sigma) \et  \EUpair(\emptyset;\{\st{\All\psi_1\Until\psi_2}\}) \Bigr)$}\\[1.2ex]
      \delta_\phi(\st{\All \psi_1 \WUntil \psi_2},\sigma) &\multicolumn{3}{l}{${}=\delta_\phi(\st{\psi_2},\sigma) \ou
        \Bigl(\delta_\phi(\st{\psi_1},\sigma) \et \EUpair(\emptyset;\{\st{\All\psi_1\WUntil\psi_2}\}) \Bigr)$}
    \end{xalignat*}

  \item the acceptance condition is a \kl{parity condition} defined through
    the following \kl{priority} function:
    \begin{xalignat*}1
      \prio_\phi(\st{\Ex \psi_1 \Until \psi_2}) &= \prio_\phi(\st{\All  \psi_1
      \Until \psi_2}) = 1 \\
      \prio_\phi(\st{\Ex \psi_1 \WUntil \psi_2}) &= \prio_\phi(\st{\All  \psi_1
      \WUntil \psi_2}) = 0
    \end{xalignat*}
    The \kl{priority} for all other states~$\st{\psi}$ is irrelevant as
    they can only appear finitely many times along a~\kl{branch}.

  \end{itemize}

 Then we have the following theorem, proved by Wilke for his
 construction with \BDAutomata:
\begin{theorem}[\cite{Wil99a}]
\label{thm-ctl}
  For any \CTL formula~$\phi$, there exists an
    \AATA~$\calA_\phi$
  that \kl{accepts} exactly the \kl{trees} in which~$\phi$~holds.
  The~\kl(A){size} of this automaton is bounded by 
  $(O(\size\phi),O(\size\phi),\allowbreak 1, 1, 2)$.
\end{theorem}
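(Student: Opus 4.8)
The plan is to read the five size parameters off the construction and then to prove correctness by relating \kl(ET){accepting} \kl{execution trees} to the \kl{tree semantics} of \CTL. The size bounds are immediate. The states are the state subformulas of~$\phi$, so $\size{\State_\phi}=O(\Fsize\phi)$. A transition $\delta_\phi(\st\psi,\sigma)$ unfolds the boolean skeleton of~$\psi$ down to its atomic and temporal subformulas, so its boolean size is linear in~$\Fsize\phi$ and $\sizeB{\Trans_\phi}=O(\Fsize\phi)$. Every \EUpr produced is either $\EUpair(\st\psi\mapsto 1;\{\st\top\})$ or $\EUpair(\emptyset;\{\st\psi\})$, so its existential and universal parts have size at most~$1$, whence $\sizeE{\Trans_\phi}\le 1$ and $\sizeU{\Trans_\phi}\le 1$; and $\prio_\phi$ takes only the values~$0$ and~$1$, so the automaton uses~$2$ \kl{priorities}.

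For correctness I would prove the stronger statement that for every \SDtree{2^{\AtP}}{\Dir}~$\Tree=\tuple{\tree,\lab}$, every node~$n\in\tree$, and every state subformula~$\psi$ of~$\phi$, one has $\Tree,n\models\psi$ if, and only~if, $\Tree_n\in\calL(\calA_\phi,\st\psi)$, where $\calL(\calA_\phi,\st\psi)$ is the \kl{language} of~$\calA_\phi$ taken with initial state~$\st\psi$; the theorem is the instance $\psi=\phi$, $n=\troot$. I~would argue by induction on~$\psi$, using the \kl{game semantics} (Prop.~\ref{prop-gamesem}) and positional determinacy (Prop.~\ref{prop-determined}). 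The atomic, negated-atomic and boolean cases are \emph{local}: from~$\st\psi$ the transition reduces, at node~$n$ and without consulting any \kl{successor}, to a truth-table evaluation matching the induction hypothesis; as these states strictly decrease the formula structure they occur only finitely often along any \kl{branch}, so their \kl{priority} is irrelevant. The cases $\EX\psi$ and $\AX\psi$ follow from the meaning of~$\models+$ for $\EUpair(\st\psi\mapsto 1;\{\st\top\})$ and $\EUpair(\emptyset;\{\st\psi\})$, which encode $\Diamond\st\psi$ and $\Box\st\psi$ as in Section~\ref{sec-relW}, together with the induction hypothesis at the \kl{successors} of~$n$.

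The main obstacle is the fixpoint behaviour of the \emph{until} operators, where the \kl{parity condition} does the real work. For $\Ex\psi_1\Until\psi_2$ I~would reason in the game. If $\Tree,n\models\Ex\psi_1\Until\psi_2$, a witnessing \kl{branch} fulfils~$\psi_2$ after finitely many steps along which~$\psi_1$ holds; Player~$0$ plays at each such step the disjunct $\delta_\phi(\st{\psi_1},\sigma)\et\EUpair(\st{\Ex\psi_1\Until\psi_2}\mapsto 1;\{\st\top\})$, switches to the $\delta_\phi(\st{\psi_2},\sigma)$ disjunct at the witnessing node, and wins the resulting $\st{\psi_1}$- and $\st{\psi_2}$-subgames by induction; since $\st{\Ex\psi_1\Until\psi_2}$ is then visited only finitely often, every infinite \kl{branch} eventually avoids the odd \kl{priority}~$1$ and is \kl(B){accepting}. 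Conversely a \kl(S){winning} \kl{strategy} cannot loop forever in $\st{\Ex\psi_1\Until\psi_2}$, since \kl{priority}~$1$ is odd; hence along the selected \kl{branch} the $\st{\psi_2}$-disjunct is eventually taken after only $\st{\psi_1}$-disjuncts, which by induction yields a witness for $\Ex\psi_1\Until\psi_2$ at~$n$. The universal case $\All\psi_1\Until\psi_2$ is symmetric, with $\EUpair(\emptyset;\{\st{\All\psi_1\Until\psi_2}\})$ forcing the recursion on \emph{all} \kl{successors} and \kl{priority}~$1$ again forbidding an unfulfilled infinite \kl{branch}. The weak-until variants are identical except that their states carry the even \kl{priority}~$0$, so an infinite \kl{branch} staying in such a state is \kl(B){accepting}, matching the greatest-fixpoint reading in which~$\psi_1$ may hold forever. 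Finally, since $\calA_\phi$ uses only the \EUprs $\EUpair(q\mapsto 1;\{\st\top\})$ and $\EUpair(\emptyset;\{q\})$ encoding~$\Diamond$ and~$\Box$, it is exactly the \BDAutomaton of~\cite{Wil99a} for~$\phi$ in our notation, so this statement may equivalently be imported directly from~\cite{Wil99a}.
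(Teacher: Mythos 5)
Your proposal is correct, and it is in fact more self-contained than what the paper provides. The paper's treatment of this theorem consists of the explicit construction of~$\calA_\phi$ followed by an appeal to Wilke's correctness proof for \BDAutomata (the theorem is even labelled with the citation~\cite{Wil99a}, the construction being presented as an adaptation of Wilke's to \EUconstrs); your size analysis coincides with the paper's, including the observation, made in a remark after the theorem, that the boolean size of the transition function is linear in~$\size\phi$. Where you differ is that you additionally sketch a direct correctness argument: a structural induction on state subformulas, carried out in the \kl{game semantics} with positional determinacy, where the odd \kl{priority}~$1$ on until-states rules out infinitely postponed eventualities and the even \kl{priority}~$0$ on weak-until states permits them. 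This argument is sound and is essentially Wilke's proof transported to the present setting; its added value is that it explicitly verifies that the relation~$\models+$ restricted to \EUprs of the forms $\EUpair(q\mapsto 1;\{\st\top\})$ and $\EUpair(\emptyset;\{q\})$ really implements the $\Diamond$/$\Box$ semantics, which is the one step the paper leaves implicit when importing the result. Your closing observation---that $\calA_\phi$ lies in the \BDAutomata fragment identified in Section~\ref{sec-relW}, so the theorem can simply be inherited from~\cite{Wil99a}---is precisely the route the paper takes; either justification suffices, yours at the cost of length, the paper's at the cost of asking the reader to check the correspondence between the two transition formats.
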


\begin{remark}
Consider a \CTL formula $\phi$ and a \kl{Kripke structure}~$\calK$.
Using~automaton~$\calA_\phi$ (and~the~fact that it contains only
\EUconstrs with $\max(\sizeE\Trans,\allowbreak\sizeU\Trans)\leq 1$),
we~can build a \kl{parity game}~$\GAK$ whose size is in
${O(\size\calK\cdot\size \phi)}$ and that can be solved
(using~Prop.~\ref{prop-compl-pg}) in time $O\bigl((\size K\cdot\size
\phi)^2\bigr)$.  This~is not the optimal complexity for \CTL
model-checking; however, this~can be improved by using the fact that
the automaton~$\Aut_\vfi$ is \emph{weak}~\cite{MSS86,BVW94,VW08}.
This provides us with a \emph{weak} game~$\GAK$, which allows us to
get an algorithm running in $O(\size{\calK}\cdot\size{\phi})$, thereby
recovering the classical complexity for \CTL model checking.
\end{remark}

\begin{remark}
In Theorem~\ref{thm-ctl}, we can observe that the boolean size of the
transition function of $\Aut_\phi$ is linear in $\size\phi$. This is a
direct consequence of its definition.  While it~is constant for the
transition formulas from states of the
form~$\st{\Ex \psi_1 \Until \psi_2}$, it~might be linear 
e.g. for subformulas of the form $\psi = \psi_1 \et \ldots \psi_p$: the
transition function $\delta_\phi([\psi],\sigma)$ is then defined as
the conjunction of every $\delta_\phi([\psi_i],\sigma)$,
hence it has linear size.
\end{remark}

\subsubsection{A tree-automata construction for \QCTL formulas}

\label{sec-qctl2aut}

Combining the construction for \CTL and the operations over \AAPTA
allows us to extend the automata construction to \QCTL formulas.
The crucial point is the handling of quantifications.

Consider a \QCTL formula $\Phi$ where
the negations can only be followed by atomic propositions or $\exists
p.\psi$ subformulas, and a subformula~$\phi$ of~$\Phi$ of the form $\exists
p.\psi$, assuming that we have built an \AAPTA $\calA_\psi$ for~$\psi$.
If $\calA_\psi$ is \kl{non-alternating}, we~can use the \kl{projection}
operation on~$\calA_\psi$ (see~Section~\ref{ssec-proj})
and get an~\AAPTA for~$\phi$. Otherwise,
we~have to first turn~$\calA_\psi$ into a \kl{non-alternating}
automaton (with an exponential blow-up in the size of the automaton)
before using \kl{projection}.
Note that we can handle in one
step a block of existential quantifiers of the form $\exists p_1\cdots
\exists p_n.\psi$ (hence with a single exponential blow-up).
On~top~of~this, there may be negations in front of existential quantifiers,
which may require complementing the automaton.

Therefore the size of the resulting automaton will drastically depend
on the number of such nested blocks of existential quantifiers in the
\QCTL formula. Thus complexity results are stated for formulas
in~\QCTL[k], \EQCTL[k] and~\AQCTL[k].

\AP
In the following, we write $\intro*\EXP{k}{n}$ to denote the family of sets
of functions of one variable~$n$ defined inductively as follows:
$\reintro*\EXP{0}{n}$ is the set of functions bounded by a polynomial in~$n$,
and $\reintro*\EXP{(k+1)}{n}$ contains all functions~$f$ such that $f \in
O(2^{g})$ with $g \in \EXP{k}{n}$.

We can now formally state the result as follows:

\begin{theorem}\label{thm-aut-qnctl}
  Given a \QCTL[k] formula~$\phi$ over~$\AtP$ with $k>0$, we~can
  construct a \AAPTA $\Aut_\phi$ over $2^\AtP$ accepting exactly the
  trees satisfying~$\phi$. The~automaton~$\Aut_\phi$ has size
  $\tuple{\EXP{k}{\size\phi},\EXP{k}{\size\phi},\EXP{(k-1)}{\size\phi},
    1, \EXP{(k-1)}{\size\phi}}$.  If~additionally $\phi$ is
  in~\EQCTL[k], then $\Aut_\phi$ is non-alternating.
\end{theorem}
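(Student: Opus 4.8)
I would prove the statement by induction on~$k$, after first rewriting~$\phi$ in the negation normal form discussed just before this theorem, so that negations occur only in front of atomic propositions or in front of existentially-quantified subformulas; pushing negations inward uses only the dual temporal operators (namely $\AX$ and the weak-until modalities), all available in Theorem~\ref{thm-ctl}, and replaces $\neg\exists\vec p$ by a negated existential block, at the cost of at most doubling $\size\phi$. The construction then proceeds bottom-up and maintains the invariant that a subformula with $j$ nested quantifier blocks is recognised by an \AAPTA of size $(\EXP{j}{\size\phi},\EXP{j}{\size\phi},\EXP{(j-1)}{\size\phi},1,\EXP{(j-1)}{\size\phi})$ (reading level $-1$ as a constant), \kl{non-alternating} whenever the subformula lies in \EQCTL[j]. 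Two kinds of steps drive it: \emph{cheap} steps (boolean and \CTL operators), which leave the exponential level unchanged, and \emph{quantifier-block} steps, each costing exactly one exponential.

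\textbf{Cheap steps and base case.} The base case $j=0$ is Theorem~\ref{thm-ctl}. For the temporal/boolean layer I would use a mild generalisation of that construction: build the \CTL automaton with one state per state-subformula of the top \CTL shell, but whenever a leaf of the shell is a maximal quantified subformula~$\psi_i$ (treated as a fresh atomic proposition) plug in the already-built automaton~$\Aut_{\psi_i}$, declaring its transition from its initial state to be the transition of that leaf and adjoining its states and \kl{priorities}. Since $\Aut_{\psi_i}$ tests~$\psi_i$ exactly at the root of the subtree on which it is launched, the temporal transitions of the shell keep their intended meaning and the parity condition is checked branch-by-branch as in Theorem~\ref{thm-ctl}. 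This adds only $O(\size\phi)$ to the state count and boolean size, takes the maximum of the shell's parameters ($\sizeE\Trans$ and $\priomax$ being bounded by constants) and those of the~$\Aut_{\psi_i}$, and keeps $\sizeU\Trans\le 1$, so the invariant is preserved; union and intersection (Theorem~\ref{thm-inters}) can be used interchangeably for the boolean connectives.

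\textbf{Quantifier-block steps.} In negation normal form every block is $\exists\vec p.\chi$, possibly prefixed by one negation. For $\exists\vec p.\chi$, where $\Aut_\chi$ has level $j-1$, I would first apply \kl{simulation} (Theorem~\ref{thm-simu}) to obtain an equivalent \kl{non-alternating} automaton, then project out~$\vec p$ (Theorem~\ref{thm-proj}). Writing $\QBBE=\size\State\cdot\sizeB\Trans=\EXP{(j-1)}{\size\phi}$ for $\Aut_\chi$, simulation sends the state count and the boolean size to $2^{\EXP{(j-1)}{\size\phi}}=\EXP{j}{\size\phi}$, turns $\sizeE\Trans$ into $\QBBE\cdot\sizeE\Trans=\EXP{(j-1)}{\size\phi}$, turns $\sizeU\Trans$ into $\sizeU\Trans^{\QBBE}=1$ (using $\sizeU\Trans\le1$, preserved everywhere by Remark~\ref{remUsingleton}), and gives $2(\size\State\cdot\priomax+1)=\EXP{(j-1)}{\size\phi}$ priorities; projection then only multiplies the boolean size by $\size\Alp\le 2^{\size\phi}$ and preserves being \kl{non-alternating}. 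This is the \EQCTL[j] case. A negated block $\neg\exists\vec p.\chi$ is obtained by composing the above with one \kl{complementation} (Theorem~\ref{thm-compl}); because complementation raises $\sizeE\Trans$ and the priority count only by~$1$ and concentrates its blow-up in the factor $2^{\sizeE\Trans}$, it keeps every parameter at level~$j$ and merely reintroduces \kl{alternation}.

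\textbf{Conclusion and main difficulty.} Feeding the $\EXP{(k-1)}{\size\phi}$-sized automata supplied by the induction hypothesis into one cheap top-level shell (a \QCTL[1] formula) and paying a single exponential for the quantifier block(s) of that shell yields the announced size $(\EXP{k}{\size\phi},\EXP{k}{\size\phi},\EXP{(k-1)}{\size\phi},1,\EXP{(k-1)}{\size\phi})$. When $\phi\in\EQCTL[k]$ it is prenex of the form $\exists\vec p.\chi$ with no shell and no leading negation, so the last operation is the projection of a \kl{simulation} output and $\Aut_\phi$ is \kl{non-alternating}. The main obstacle I expect is the size bookkeeping rather than correctness: one must check at each composition that the single-exponential jumps caused by the $2^{\QBBE}$ term of simulation and the $2^{\sizeE\Trans}$ term of complementation land only on the state count and boolean size, while $\sizeE\Trans$, $\sizeU\Trans$ and the number of \kl{priorities} stay one exponential lower; maintaining $\sizeU\Trans\le1$ through Remark~\ref{remUsingleton} is exactly what collapses the $\sizeU\Trans^{\QBBE}$ factor and is essential to the bound.
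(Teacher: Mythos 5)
Your proposal is correct and follows essentially the same route as the paper's proof: bottom-up induction on the quantifier-block nesting, a \CTL-shell construction (Theorem~\ref{thm-ctl}) with the automata for maximal quantified subformulas plugged in as leaves, one \kl{simulation}-plus-\kl{projection} step per existential block (complementation when the block is negated), and the same five-parameter bookkeeping in which $\sizeE\Trans$, $\sizeU\Trans\leq 1$ and the number of \kl{priorities} stay one exponential below the state count and boolean size. The only cosmetic difference is that you phrase it as a single induction with a level invariant, whereas the paper runs an outer induction on~$k$ with an inner induction on the level~$k'$ of subformulas; the substance is identical.
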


\begin{proof}
We proceed by induction over $k$. We~prove 
the result for formulas in~\QCTL[k], showing along the way the property for formulas in~\EQCTL[k].
\begin{itemize}
\item if $\phi \in \QCTL[1]$, then $\phi$ is of the form
  $\Phi[(\psi_i)_{1\leq i\leq m}]$ where $\Phi$ is a \CTL formula and
  $(\psi_i)_{1\leq i\leq m}$ are \EQCTL[1] formulas. We handle each
  $\psi_i$ separately. Assume that $\psi_i = \exists
  p^i_1\ldots\exists p^i_{l_i}.\ \psi'_i$ with ${\psi'_i \in \CTL}$.
  From~Theorem~\ref{thm-ctl}, one~can build an \AAPTA $\Aut_{\psi'_i}$
  recognising the \kl{trees} satisfying~$\psi'_i$; moreover,
  $\smsize{\Aut_{\psi'_i}}$ is bounded by
  $\tuple{O(\size{\psi'_i}),O(\size{\psi'_i}),1,1,2}$.
  We~then apply the constructions
  of Section~\ref{sec-ops}, and the results summarised in
  Table~\ref{tab-ops}:
  \begin{itemize}
  \item we~can remove alternation and get
    an equivalent \nAAPTA~$\NAut_{\psi'_i}$ whose size is bounded by
    $\tuple{2^{O(\size{\psi'_i}^2\cdot \log(\size{\psi'_i}))},
    2^{O(\size{\psi'_i}^2\cdot \log(\size{\psi'_i}))}, O(\size{\psi'_i}^2) , 1, O(\size{\psi'_i})}$;
            
  \item applying projection (to~$\calN_{\psi'_i}$ and for
    atomic propositions~$p^i_1$ to~$p^i_{l_i}$, \ie\ with $\size{\Alp'}= 2^{l_i} \leq 2^{\size{\psi_i}}$), we~get an~\nAAPTA
    $\calB_{\psi_i}$
    recognising the models of~$\psi_i$, whose size
    is bounded by:
$\tuple{2^{O(\size{\psi_i}^2\cdot \log(\size{\psi_i}))},2^{O(\size{\psi_i}^2\cdot \log(\size{\psi_i}))}, O(\size{\psi_i}^2) , 1, O(\size{\psi_i})}$.
     
    This shows that for a formula~$\psi_i$ in~\EQCTL[1], we~can build
    a \kl{non-alternating} \nAAPTA of size
    $\tuple{\EXP{1}{\size\phi},\EXP{1}{\size\phi},\EXP{0}{\size\phi},
      1, \EXP{0}{\size\phi}}$.
    
  \item in case formula~$\psi_i$ is in the (direct) scope of a
    negation operator, we~complement~$\BAut_{\psi_i}$ into an
    \AAPTA~$\CAut_{\overline\psi_i}$ whose size is bounded by:
    \[\tuple{\size\Alp\cdot 2^{O(\size{\psi_i}^2\cdot \log(\size{\psi_i}))},
    2^{O(\size{\psi_i}^2\cdot \log(\size{\psi_i}))}, O(\size{\psi_i}^2) , 1,
      O(\size{\psi_i})}\]
      where $\size\Alp$ can be bounded by $2^{\size\phi}$;
  \item we build the final
    \AAPTA~$\Aut_\phi=\tuple{Q,\initstate,\Trans,\prio}$ using the
    construction of Theorem~\ref{thm-ctl}, in which we add the
    following rule to deal with subformulas~$\psi_i$ in~\EQCTL[1] (or~their
    negations):
    \begin{xalignat*}2
    \Trans(\psi_i,\sigma) & =   \Trans_{\psi_i}(\initstate^{\psi_i},\sigma)   &
    \Trans(\non\psi_i,\sigma) & =   \Trans_{\overline\psi_i}(\initstate^{\overline\psi_i},\sigma),
    \end{xalignat*}
    where $\initstate^{\psi_i}$ and $\initstate^{\overline\psi_i}$ are
    the initial states of~$\BAut_{\psi_i}$ and~$\CAut_{\psi_i}$,
    respectively.
    Our~automaton then has size at most
    $\tuple{\EXP{1}{\size\phi},\EXP{1}{\size\phi},\EXP{0}{\size\phi},
      1, \EXP{0}{\size\phi}}$.
which proves the base case of our result.
   \end{itemize}

\item if $\phi \in \QCTL[k]$ with $k>1$, we show that for every
  $\phi$-subformula $\psi \in \QCTL[k']$ with $1 \leq k' \leq k$,
  the~\AAPTA~$\Aut_\psi$ has size
  $\tuple{\EXP{k'}{\size\phi},\EXP{k'}{\size\phi},\EXP{(k'-1)}{\size\phi},
    1, \EXP{(k'-1)}{\size\phi}}$. Note that the size of the automaton
  depends on $\size\phi$ (and not on $\size\psi$) because the
  complement operation provides an automaton whose size depends
  on~$\size\Alp$, which can only be bounded by $2^{\size\phi}$.  We
  prove this result by induction over~$k'$. The~base case is similar
  to the previous case with $k=1$.

Now consider $k'>1$. Assume  $\psi \in \QCTL[{k'}]$~is of the
  form $\Psi[(\psi_i)_{1\leq i\leq m}]$, where $\Psi$~is a~\CTL
  formula and each $\psi_i$ is of the form $\exists p^i_1\ldots \exists
  p^i_{l_i}.\ \psi'_i$ with $\psi'_i\in \QCTL[{k'-1}]$.

  From the~induction hypothesis, we~can build \AAPTAs
  $\calA_{\psi'_i}=\tuple{\State_i,{\initstate}_i,\Trans_i,\prio_i}$ recognising the
  trees satisfying $\psi'_i$ for all~$i$, and whose size~$\tuple{s_i,b_i,e_i,1,p_i}$
  is bounded by
  $\tuple{{\EXP{(k'-1)}{\size\phi}},{\EXP{(k'-1)}{\size\phi}},
    {\EXP{(k'-2)}{\size\phi}},1,{\EXP{(k'-2)}{\size\phi}}}$.
  
    Applying the simulation theorem provides us with
    \nAAPTAs~$\NAut_{\psi'_i}$, each accepting the same language
    as~$\Aut_{\psi'_i}$, whose sizes are at most:
    \[\tuple{2^{O(s_i^2\cdot\log(s_i))}, 
    2^{s_i\cdot b_i}\cdot[s_i\cdot b_i \cdot(e_i+1)^{s_i \cdot b_i}]^{e_i\cdot s_i \cdot b_i},
      s_i\cdot b_i\cdot e_i, 1, 2\cdot(
      s_i\cdot p_i+1)}\]
      After projection, we~get
    \nAAPTAs~$\BAut_{\psi_i}$ for formulas~$\psi_i$ whose sizes are
    bounded by:
    \[ \tuple{2^{O(s_i^2\cdot\log(s_i))}, 2^{l_i}\cdot 2^{s_i\cdot b_i} \cdot [s_i\cdot b_i \cdot (e_i+1)^{s_i \cdot b_i}]^{s_i \cdot b_i \cdot e_i},
      s_i\cdot  b_i\cdot e_i, 1, 2\cdot (s_i\cdot p_i+1)}\]
    In~case~$\psi_i$ is
    used negatively in~$\phi$,
    we~compute the
    complement~$\CAut_{\overline\psi_i}$ of~$\BAut_{\psi_i}$, which is an \AAPTA
    of~size at~most:
 \[
\bigl(
  O(\size\Alp\cdot 2^{l_i} \cdot 2^{O(s_i^2\cdot b_i^2 \cdot e_i \cdot\log(s_i\cdot b_i (e_i+1)))}),
  2^{l_i} \cdot  2^{O(s_i^2\cdot b_i^2 \cdot e_i \cdot\log(s_i\cdot b_i (e_i+1)))}, 
  s_i\cdot b_i\cdot e_i,
  1,
  2\cdot s_i\cdot p_i+3
\bigr)
\]  
      and
      again $\size\Alp$ can be bounded by $2^{\size\phi}$.
      
    Finally, as~for the base case, we~apply the
    construction of Theorem~\ref{thm-ctl}, and  end up with an \AAPTA for formula~$\psi$,
    whose size is bounded by 
    $\tuple{{\EXP{k'}{\size\phi}},{\EXP{k'}{\size\phi}},
    {\EXP{(k'-1)}{\size\phi}},1,{\EXP{(k'-1)}{\size\phi}}}$.

    \smallskip

    This result applies to~$\phi$ itself: we~get that the size
    of~$\Aut_\phi$ is bounded by
    $\tuple{{\EXP{k}{\size\phi}}, \allowbreak {\EXP{k}{\size\phi}},
      {\EXP{(k-1)}{\size\phi}},1,{\EXP{(k-1)}{\size\phi}}}$, and if
    $\phi$ belongs to \EQCTL[k], we~get a non-alternating automaton as
    the last step is the projection operation (on a non-alternating
    automaton).  This concludes our proof.  \qed
    \end{itemize}
\def\qed{}
\end{proof}

Model-checking and satisfiability for \QCTL can be solved using
\EU-automata.  Using Corollary~\ref{coro-algo-aeupta}, Theorem~\ref{theo-mc-aapta}
and the hardness results of~\cite{LM14}, we~get:
\begin{theorem}
The satisfiability problem for \QCTL[k], \AQCTL[k] and \EQCTL[k+1] is
\EXPTIME[(k+1)]-complete. The~model-checking problem for \QCTL[k],
\AQCTL[k] and \EQCTL[k] is \EXPTIME[k]-complete.
\end{theorem}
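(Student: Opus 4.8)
The~plan is to derive the upper bounds from the automata construction of Theorem~\ref{thm-aut-qnctl} together with the membership and emptiness procedures of Theorem~\ref{theo-mc-aapta} and Corollary~\ref{coro-algo-aeupta}, and to import the matching lower bounds directly from~\cite{LM14}. The~two reductions I~would rely on are standard: under the \kl{tree semantics}, a formula~$\phi$ is satisfiable exactly when $\Lang(\Aut_\phi)\neq\emptyset$ (emptiness being witnessed by a \kl{regular} \kl{tree}, hence by a finite model), and a \kl{Kripke structure}~$\calK$ satisfies~$\phi$ exactly when its \kl{computation tree} belongs to~$\Lang(\Aut_\phi)$. Throughout I~would use the fact, recorded in the excerpt, that \AQCTL[k] and~\EQCTL[k] are syntactic fragments of~\QCTL[k].

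For~satisfiability, I~would first treat \QCTL[k] (which subsumes~\AQCTL[k]): Theorem~\ref{thm-aut-qnctl} builds an \AAPTA~$\Aut_\phi$ of \kl(A){size} $\tuple{\EXP{k}{\size\phi},\EXP{k}{\size\phi},\EXP{(k-1)}{\size\phi},1,\EXP{(k-1)}{\size\phi}}$, and I~would feed it to the emptiness procedure of Corollary~\ref{coro-algo-aeupta}. Substituting these parameters and $\size\Alp\leq 2^{\size\phi}$ into its running-time bound, and using that $\EXP{k}{\size\phi}$ absorbs products and cubes while $\log\EXP{k}{\size\phi}\in\EXP{(k-1)}{\size\phi}$, every term of the exponent stays within $\EXP{k}{\size\phi}$, so the procedure runs in $\EXP{(k+1)}{\size\phi}$. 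The~fragment~\EQCTL[k+1] needs a separate argument, since it~only sits inside~\QCTL[k+1] and a naive bound would cost $(k+2)$~exponentials. Here~I~would exploit that Theorem~\ref{thm-aut-qnctl} returns a \kl{non-alternating} $\Aut_\phi$, of \kl(A){size} essentially $\tuple{\EXP{(k+1)}{\size\phi},\EXP{(k+1)}{\size\phi},\EXP{k}{\size\phi},1,\EXP{k}{\size\phi}}$, and apply instead the dedicated emptiness test for \nAAPTAs, of running time $O((\size\State\cdot(1+\sizeB\Trans\cdot\size\Alp))^{\priomax})$, which requires \emph{no} \kl{alternation removal}. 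With a base in $\EXP{(k+1)}{\size\phi}$ and an exponent $\priomax\in\EXP{k}{\size\phi}$, the bound again collapses to $\EXP{(k+1)}{\size\phi}$.

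For~model checking I~would once more build~$\Aut_\phi$ (of \kl(A){size} $\EXP{k}{\size\phi}$, possibly \kl{alternating}) for $\phi\in\QCTL[k]$, thereby also covering~\AQCTL[k] and~\EQCTL[k], and then run the membership procedure of Theorem~\ref{theo-mc-aapta} \emph{directly} on the \kl{computation tree} of~$\calK$, without removing \kl{alternation}. Its running time $O((\size V\cdot(\size\State\cdot\sizeB\Trans+\size\State^{\size V}))^{\priomax})$ has a base in $\EXP{k}{\size\phi}$ (the factor $\size\State^{\size V}=2^{\size V\cdot\log\size\State}$ stays in $\EXP{k}{\size\phi}$ because $\size V$ is polynomial in the input), and raising it to the power $\priomax\in\EXP{(k-1)}{\size\phi}$ keeps the total time in $\EXP{k}{\size\phi}$, i.e.\ $k$-fold exponential.

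The~hard part is not any single construction but the careful bookkeeping on the towers of exponentials, and in particular pinpointing the two places where one \emph{a~priori} extra exponential is saved: for~\EQCTL[k+1] satisfiability, by~checking emptiness on the already-\kl{non-alternating} automaton rather than first simulating an \kl{alternating} one; and for model checking, by~evaluating the acceptance game on the \kl{alternating} automaton instead of making it \kl{non-alternating} first. Once these collapses are justified and the hardness results of~\cite{LM14} are invoked for the lower bounds, the completeness statements follow.
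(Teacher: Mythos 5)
Your proposal is correct and takes essentially the same route as the paper, whose entire proof is a one-line appeal to the automata construction (Theorem~\ref{thm-aut-qnctl}) combined with Corollary~\ref{coro-algo-aeupta}, Theorem~\ref{theo-mc-aapta}, and the hardness results of~\cite{LM14}. Your bookkeeping in fact supplies the one detail the paper glosses over: for \EQCTL[k+1] satisfiability, Corollary~\ref{coro-algo-aeupta} alone (whose bound absorbs the cost of \kl{alternation removal}) would only give \EXPTIME[(k+2)], and one must instead, exactly as you do, run the emptiness test for \kl{non-alternating} automata directly on the \nAAPTA that Theorem~\ref{thm-aut-qnctl} guarantees for prenex existential formulas.
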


\subsubsection{Extension to \QCTLs}

The automata construction for~\QCTL can be extended to \intro*\QCTLs,
the~extension of \CTLs with quantifications over atomic propositions. 
In~\intro*\CTLs, we~distinguish between state formulas~$\phi_s$, interpreted
over states, and path formulas~$\phi_p$, interpreted over infinite
paths. Informally, a~state formula corresponds to some boolean
combination of atomic propositions and formulas of the form $\Ex
\phi_p$ and $\All \phi_p$ (\ie\ path formulas prefixed by some path
quantifier), and path formulas are defined as~\LTL formulas with state
formulas appearing in place of atomic propositions. The~logic~\QCTLs
extends~\CTLs by allowing formulas of the form $\exists p.\ \phi_s$ as
state formulas\footnote{This precision is important: allowing such
quantifications inside a path formula changes the expressiveness of
the logics~\cite{LM14}.}.

As for \QCTL, we can define several fragments of~\QCTLs: \intro*{\QCTLs[k]}
contains formulas in which the maximum number of nested blocks of
quantifiers is at most~$k$. The~construction of~$\Aut_\phi$ for~${\phi
  \in \QCTL}$ follows the same steps as for \QCTL; the~main difference
is that we have to consider formulas of the form~$\Ex \phi_p$ where
$\phi_p$ is an \LTL formula: in that case, we have to first build a
word automaton to capture~$\phi_p$, and then use
Proposition~\ref{prop-autEx} to derive a tree automaton for $\Ex
\phi_p$. The~complexity is then higher.  We~start with this following
result for~\CTLs:

\begin{proposition}\label{autctls}
  Given a \CTLs formula~$\phi$ over~$\AtP$,
  we~can construct an \AAPTA $\calA_\phi$ over $2^\AtP$  accepting exactly the
  trees satisfying~$\phi$. The automaton
  $\calA_\phi$ has size $\tuple{\EXP{1}{\size\phi},\EXP{1}{\size\phi},2,1,3}$.
\end{proposition}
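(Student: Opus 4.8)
The plan is to follow the classical single-exponential translation of \CTLs into alternating tree automata, adapting it to \EU-transitions and to the operations of Section~\ref{sec-ops}. I would proceed by structural induction on the state formula~$\phi$, viewing it as a Boolean combination of \emph{atoms}, an atom being either an atomic proposition or a path-quantified formula $\Ex\psi_p$ or $\All\psi_p$. Boolean combinations of atoms are handled directly: conjunctions and disjunctions by Theorem~\ref{thm-inters}, negations by the complementation of Theorem~\ref{thm-compl}, and an atomic proposition by the one-state automaton of Theorem~\ref{thm-ctl}. Thus the heart of the construction is the treatment of a single path quantifier.

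For an existential path subformula $\Ex\psi_p$, I would record the maximal proper state subformulas $\theta_1,\dots,\theta_r$ occurring in~$\psi_p$ and view $\psi_p$ as an \LTL formula over the propositional atoms $\{\theta_1,\dots,\theta_r\}$. The standard \LTL-to-automata translation yields a nondeterministic \nAPWA $\calB_{\psi_p}$ over $2^{\{\theta_1,\dots,\theta_r\}}$, of size $2^{O(\size{\psi_p})}$ and with a B\"uchi (two-priority) condition, recognising the models of~$\psi_p$. Applying Proposition~\ref{prop-autEx} produces a tree automaton guessing a single branch along which $\calB_{\psi_p}$ accepts; at each node of that branch the letter fed to~$\calB_{\psi_p}$ is the guessed subset of the $\theta_j$ that hold there, and each guess is verified by conjunctively launching, at the current node, the sub-automaton for~$\theta_j$ (when guessed true) or for~$\non\theta_j$ (when guessed false), i.e.\ by inlining the corresponding transition \EUconstr so that the state subformulas are checked on the subtrees branching off the path. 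The branch move itself is a single \EUpr $\EUpair(\state'\mapsto 1;\{\st\top\})$, as in Proposition~\ref{prop-autEx}. A universal subformula $\All\psi_p$, and negations of atoms, are obtained dually: from the existential construction applied to~$\non\psi_p$, using universal moves $\EUpair(\emptyset;\{\state'\})$ (hence $\sizeU\leq 1$) and a co-B\"uchi obligation along every branch.

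The delicate point, and the one I expect to be the main obstacle, is to obtain the \emph{uniform} bounds $\sizeE\leq 2$, $\sizeU\leq 1$ and $\priomax\leq 3$ independently of the nesting depth of path quantifiers. A naive recursive build-then-complement scheme fails: each universal quantifier would invoke Theorem~\ref{thm-compl} afresh, and since complementation raises $\sizeE$ and the number of priorities by one each, nested universal quantifiers would make these parameters (and the index) grow unboundedly and would stack exponentials. To avoid this I would assemble a \emph{single} alternating automaton for~$\phi$, whose state space is a reachable subset of the product, over all path-quantified subformulas, of the state spaces of their word automata $\calB_{\psi_p}$; this keeps the state count single-exponential, since the exponents $O(\size{\psi_p})$ add up to $O(\size\phi)$, matching the required $\EXP{1}{\size\phi}$. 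Universal path quantifiers are then realised by \emph{dualising} the corresponding word automaton \emph{inside} this single automaton (co-B\"uchi condition, universal choices) rather than by iterated external complementation, so that all existential (B\"uchi) and universal (co-B\"uchi) obligations are carried by one global parity condition. Arranging that condition as a fixed index-$3$ parity condition (priority~$0$ on the accepting states of existential obligations, priority~$2$ on the rejecting states of universal ones, priority~$1$ otherwise) is exactly what yields $\priomax\leq 3$, one more than the index~$2$ of the \CTL automaton of Theorem~\ref{thm-ctl}; the bounds $\sizeE\leq 2$ and $\sizeU\leq 1$ follow from the shape of the transitions together with the single use of the complement bound of Theorem~\ref{thm-compl} at the interface between word automata and branch automata.

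Finally, correctness would be established by induction, matching the \CTLs tree semantics of Section~\ref{sec-semqctl} with acceptance of the automaton: at each atom the equivalence reduces to correctness of~$\calB_{\psi_p}$ (a branch satisfies~$\psi_p$ iff its $\theta$-labelling is accepted) together with Proposition~\ref{prop-autEx} and the inductive correctness of the sub-automata for the~$\theta_j$; for Boolean combinations it reduces to Theorems~\ref{thm-inters} and~\ref{thm-compl}. The most error-prone step of the correctness argument is verifying that the global index-$3$ parity condition simultaneously captures the conjunction of all B\"uchi and co-B\"uchi obligations along every branch, which I would check using the \kl{game-based semantics} of Section~\ref{sec-gamesem}.
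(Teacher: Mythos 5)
You handle existential path quantifiers exactly as the paper does (an \nAPWA for the path formula over its maximal state subformulas, turned into a branch-guessing tree automaton via Proposition~\ref{prop-autEx}, with guessed atoms verified by conjunctively launching sub-automata), but you take a genuinely different route for universal path quantifiers. The paper handles $\All \phi_p$ by one application of Theorem~\ref{thm-compl} to the \emph{skeleton} automaton for $\Ex\neg\phi_p$, i.e.\ to the automaton in which the maximal state subformulas of~$\phi_p$ are still treated as input letters; the inductively built sub-automata are substituted only afterwards, and negated atoms are obtained by quantifier duality ($\neg\Ex\psi\equiv\All\neg\psi$, $\neg\All\psi\equiv\Ex\neg\psi$), never by complementing an already-combined automaton. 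Hence every input to Theorem~\ref{thm-compl} has parameters $(\cdot\,,\cdot\,,1,1,2)$ and every output $(\cdot\,,\cdot\,,2,1,3)$, and nothing compounds across nesting: your claim that a per-quantifier ``build-then-complement'' scheme necessarily fails is incorrect --- it fails only in the variant where complementation is applied \emph{after} combination, which the paper's scheme avoids. Your alternative --- dualising the word automaton for $\neg\psi_p$ into a universal co-B\"uchi automaton and running it along every branch with conjunctive moves $\EUpair(\emptyset;\{q'\})$, inside a single global three-priority automaton --- is also sound and genuinely different: it is closer to the classical alternating-automata translation of \CTLs, bypasses Theorem~\ref{thm-compl} entirely, and in fact yields the better bound of existential width~$1$ (which also makes your appeal to ``a single use of the complement bound of Theorem~\ref{thm-compl}'' unnecessary; in the paper's route that theorem is precisely what produces the bound~$2$).

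One concrete point in your construction must be fixed: your parity assignment for the universal obligations is inverted. Under the paper's min-parity convention, the co-B\"uchi states of a universal component (those that may occur only \emph{finitely} often along each branch) must receive the odd priority~$1$, and the remaining states of that component priority~$2$. With your assignment (rejecting states priority~$2$, all other states priority~$1$), an execution-tree branch eventually trapped in rejecting states has minimal infinitely-repeated priority~$2$, hence would be wrongly accepted. This is exactly the step you identified as error-prone, and the game-semantics check you propose would indeed catch it; once the two priorities are swapped, your global condition (priority~$0$ for B\"uchi-accepting states of existential components, priorities~$1,2$ as above for universal components) correctly captures all obligations with $\priomax\leq 3$.
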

\begin{proof}
The key point is the treatment of formulas of the form $\Ex \phi_p$
with $\phi_p \in \LTL$. In that case, we build a \nAPWA
$\calA_{\phi_p}$ in a standard way, whose size is exponential in
$\size\phi_p$. Then by applying~Proposition~\ref{prop-autEx}, we get
an \nAAPTA corresponding to the formula $\Ex \phi_p$ whose size is
$\tuple{\EXP{1}{\size\phi},\EXP{1}{\size\phi},1,1,2}$. In case of
$\All \phi_p$ formula, we just have to add a complementation step
(Theorem~\ref{thm-compl}) and we get an \AAPTA whose size is
$\tuple{\EXP{1}{\size\phi},\EXP{1}{\size\phi},2,1,3}$.

Now consider a \CTLs formula: we apply the previous construction to
every subformula $\Ex \phi_p$ starting with the innermost
subformulas. Finally we get an \AAPTA that \emph{combines} the
different automata and it provides an \AAPTA whose size is
$\tuple{\EXP{1}{\size\phi},\EXP{1}{\size\phi},2,1,3}$.
\end{proof}

We can now state the construction  for \QCTLs:

\begin{theorem}\label{thm-aut-qnctls}
  Given a \QCTLs[k] formula~$\phi$ over~$\AtP$ with $k>0$, we~can
  construct an \AAPTA $\calA_\phi$ over $2^\AtP$ accepting exactly the
  trees satisfying~$\phi$. The automaton $\calA_\phi$ has size
  $\tuple{\EXP{(k+1)}{\size\phi},\EXP{(k+1)}{\size\phi},\EXP{k}{\size\phi},1,\EXP{k}{\size\phi}}$.
  \end{theorem}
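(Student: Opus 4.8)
The plan is to run the same induction on $k$ as in the proof of Theorem~\ref{thm-aut-qnctl}, replacing the \CTL base construction of Theorem~\ref{thm-ctl} by the \CTLs construction of Proposition~\ref{autctls}. The decisive observation is that \CTLs already costs one exponential---because of the \LTL-to-word-automaton step hidden inside Proposition~\ref{prop-autEx}, as used in Proposition~\ref{autctls}---whereas the \CTL base case was only polynomial. This single extra exponential in the innermost layer is exactly what shifts every bound of Theorem~\ref{thm-aut-qnctl} up by one level in the tower, turning each $\EXP{k}{\size\phi}$ into $\EXP{(k+1)}{\size\phi}$. The fragments \QCTLs[k] (and the derived \EQCTLs[k]) are defined in complete analogy with their unstarred counterparts, so the syntactic decomposition of formulas is identical.

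For the base case $k=1$, a \QCTLs[1] formula has the form $\Phi[(\psi_i)_i]$ with $\Phi\in\CTLs$ and each $\psi_i\in\EQCTLs[1]$, i.e.\ $\psi_i=\exists p^i_1\cdots\exists p^i_{l_i}.\ \psi'_i$ with $\psi'_i\in\CTLs$. First I would build $\calA_{\psi'_i}$ by Proposition~\ref{autctls}, of size $\tuple{\EXP{1}{\size\phi},\EXP{1}{\size\phi},2,1,3}$. Then I would apply the simulation theorem (Theorem~\ref{thm-simu}) to obtain an equivalent \nAAPTA, project it over $p^i_1,\dots,p^i_{l_i}$ (Theorem~\ref{thm-proj}) to get a \nAAPTA $\BAut_{\psi_i}$ recognising the models of $\psi_i$, and---whenever $\psi_i$ occurs directly under a negation---complement it (Theorem~\ref{thm-compl}) into an \AAPTA $\CAut_{\overline{\psi_i}}$. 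Finally I would assemble these sub-automata into a single \AAPTA for $\phi$ using exactly the combination mechanism already present in the \CTLs construction of Proposition~\ref{autctls}, just as the \CTL construction of Theorem~\ref{thm-ctl} is reused in the base case of Theorem~\ref{thm-aut-qnctl}.

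For the induction step ($k'>1$) I would take $\psi\in\QCTLs[k']$ of the form $\Psi[(\psi_i)_i]$ with $\Psi\in\CTLs$ and $\psi_i=\exists p^i_1\cdots\exists p^i_{l_i}.\ \psi'_i$ where $\psi'_i\in\QCTLs[{k'-1}]$. The induction hypothesis supplies $\calA_{\psi'_i}$ of size $\tuple{\EXP{k'}{\size\phi},\EXP{k'}{\size\phi},\EXP{(k'-1)}{\size\phi},1,\EXP{(k'-1)}{\size\phi}}$; successive application of simulation, projection, and (where needed) complementation---tracking the five parameters through the bounds of Table~\ref{tab-ops}---produces an automaton for each $\psi_i$ one exponential higher, and the \CTLs combination step keeps the overall bound at $\EXP{(k'+1)}{\size\phi}$. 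Applying this to $\phi$ itself then yields the claimed size $\tuple{\EXP{(k+1)}{\size\phi},\EXP{(k+1)}{\size\phi},\EXP{k}{\size\phi},1,\EXP{k}{\size\phi}}$.

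The main obstacle will be the careful propagation of the size parameters through the simulation construction, whose bound (Theorem~\ref{thm-simu}) is by far the most intricate, and checking that at each level the dominant term is absorbed into a \emph{single} additional exponential rather than accumulating. As in the \QCTL case, two subtleties must be watched: after complementation the number of states depends on $\size\Alp=2^{\size\phi}$ rather than on $\size{\psi_i}$, so all bounds must be phrased in terms of $\size\phi$; and one must confirm that the extra $\EXP{1}{\size\phi}$ factor introduced by the \CTLs base layer propagates consistently, so that the final height of the tower is $k+1$ and not $k$. None of these steps is conceptually new relative to Theorem~\ref{thm-aut-qnctl}; the work lies entirely in the bookkeeping of the exponential bounds.
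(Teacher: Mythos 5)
Your proposal is correct and follows essentially the same route as the paper's own proof: induction on the number of quantifier blocks, with Proposition~\ref{autctls} replacing the \CTL base case, then simulation, projection, and (under negations) complementation for each quantified subformula, and reassembly via the \CTLs context, with all bounds phrased in terms of $\size\phi$ because of the $\size\Alp = 2^{\size\phi}$ dependence in complementation. The two subtleties you flag (the alphabet-size issue and the single extra exponential contributed by the \CTLs base layer) are exactly the points the paper's bookkeeping handles, so nothing is missing.
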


\begin{proof}
\begin{itemize}
\item Consider $\phi \in$ \QCTLs[1]. Assume that negations occur only
  before existential quantifications over
  propositions. Thus~$\phi$~can be seen as a formula 
  $\Phi[(\psi_i)_{1\leq i\leq m}]$ where $\Phi$ is a \CTLs formula and
  $(\psi_i)_{1\leq i\leq m}$ are \EQCTLs[1] formulas, with 
  $\psi_i = \exists
  p^i_1\ldots\exists p^i_{l_i}.\ \psi'_i$ with ${\psi'_i \in \CTLs}$.
 
  We first build an \AAPTA $\calA_{\psi'_i}$  as explained
  in Prop.~\ref{autctls}. Then we can transform each of these
  automata into a \kl{non-alternating} automaton~$\calN_{\psi'_i}$ whose size
  is in
  $\tuple{\EXP{2}{\size\psi_i},\EXP{2}{\size\psi_i},\EXP{1}{\size\psi_i},1,\EXP{1}{\size\psi_i}}$
  (see~Theorem~\ref{thm-simu}).  We~can then apply the \kl{projection}
  operation over these automata to deal with the
  quantification~$\exists p^i_1\ldots\exists p^i_{l_i}$. Then we get the (non-alternating) automata $\calB_{\psi_i}$. 
  
  A~\kl{complementation} procedure is possibly applied (when a
  negation precedes the corresponding existential quantification in~$\Phi$).
  In~that case, we obtain an alternating automaton~$\CAut_{\overline\psi_i}$
  whose size admits the same bounds as above.
  Finally it remains
  to consider the \CTLs context~$\Phi$, which corresponds to an \AAPTA of
  size at most $\tuple{\EXP{1}{\size\Phi},\EXP{1}{\size\Phi},2,1,3}$; 
  combined with automata~$\CAut_{\overline\psi_i}$s and $\calB_{\psi_i}$s, we~get an \AAPTA whose
  size is in
  $\tuple{\EXP{2}{\size\phi},\EXP{2}{\size\phi},\EXP{1}{\size\phi},1,
    \EXP{1}{\size\phi}}$.

  \item Consider $\phi \in \QCTLs[k+1]$. Here again we assume that
    negations occur only before existential quantifications over
    propositions. As in the construction for \QCTL formula, we show
    that for every $\phi$-subformula $\psi \in \QCTLs[k']$ with $1
    \leq k' \leq k$, the automaton \AAPTA $\Aut_\psi$ has size
    $\tuple{\EXP{(k'+1)}{\size\phi},\allowbreak\EXP{(k'+1)}{\size\phi},\EXP{k'}{\size\phi},
      1, \EXP{k'}{\size\phi}}$. We prove it by induction over
    $k'$. The result holds for $k'=1$ (similar to the previous
    case). Now assume $1 < k' \leq k$. Consider a $\phi$-subformula
    $\psi \in \QCTLs[k']$. Then $\psi$ is of the form
    $\Psi[(\psi_i)_{1\leq i\leq m}]$ where $\Psi$ is a \CTLs formula
    and every $(\psi_i)$ is of the form $\exists p^i_1\ldots\exists
    p^i_{l_i}.\ \psi'_i$ with ${\psi'_i \in \QCTLs[k'-1]}$.
  
     By~induction hypothesis, we~can build an \AAPTA~$\calA_{\psi'_i}$
     for each formula~$\psi_i$, whose size is bounded by
     $\tuple{\EXP{k'}{\size{\phi}},\allowbreak
       \EXP{k'}{\size{\phi}},\allowbreak
       \EXP{(k'-1)}{\size{\phi}},1,\EXP{(k'-1)}{\size{\phi}}}$.
     Applying the \kl{simulation} theorem (Theorem~\ref{thm-simu}),
     we~get an \nAAPTA whose size is at most
     $\tuple{\EXP{(k'+1)}{\size{\phi}},\allowbreak
       \EXP{(k'+1)}{\size{\phi}},\allowbreak
       \EXP{k'}{\size{\phi}},\allowbreak
       1,\allowbreak\EXP{k'}{\size{\phi}}}$ We can then apply the
     \kl{projection} operation, possibly followed by a complementation
     operation which provides an automaton whose size is still bounded
     by $\tuple{\EXP{(k'+1)}{\size{\phi}},\allowbreak
       \EXP{(k'+1)}{\size{\phi}},\allowbreak
       \EXP{k'}{\size{\phi}},1,\EXP{k'}{\size{\phi}}}$.  Finally,
     it~remains to incorporate the \CTLs context~$\Psi$, which we
     perform as in the base case; we~finally get an \AAPTA whose size
     is in
     $\tuple{\EXP{(k'+1)}{\size\phi},\EXP{(k'+1)}{\size\phi},\EXP{k'}{\size\phi},1,\EXP{k'}{\size\phi}}$.
     This~concludes
     the proof of the inductive step of the intermediary result. And
     we can deduce that the automaton for $\phi$ is therefore in
     $\tuple{\EXP{(k+1)}{\size\phi},\allowbreak
       \EXP{(k+1)}{\size\phi},\EXP{k}{\size\phi},1,\EXP{k}{\size\phi}}$
     which concludes the proof.  \qed
\end{itemize}
Note also that this construction provides a  non-alternating automaton if $\phi$ belongs  to \EQCTLs[k]. 
\let\qed\relax
\end{proof}

As a direct consequence, we get decision procedures for Model-checking
and satisfiability for \QCTLs based on \EU-automata; again, lower
complexity bounds are obtained from~\cite{LM14}:
\begin{theorem}
  The satisfiability problem for \QCTLs[k], \AQCTLs[k], and \EQCTLs[k+1] is \EXPTIME[(k+2)]-complete.
  
  The~model-checking problem for \QCTLs[k], \AQCTLs[k], and \EQCTLs[k] is \EXPTIME[(k+1)]-complete.
\end{theorem}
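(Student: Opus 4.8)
The plan is to derive every upper bound from the automaton $\calA_\phi$ of Theorem~\ref{thm-aut-qnctls}, and to import the matching lower bounds from~\cite{LM14}. For satisfiability I~would use that $\calA_\phi$ \kl{accepts} exactly the \Strees{2^{\AtP}} satisfying~$\phi$, so that $\phi$ is satisfiable if, and only~if, $\Lang(\calA_\phi)\neq\emptyset$; by Propositions~\ref{prop-gamesem} and~\ref{prop-determined}, non-emptiness is witnessed by a \kl{memoryless} \kl(S){winning} \kl{strategy}, which yields a \kl{regular} model, i.e.\ the \kl{computation tree} of a finite \kl{Kripke structure}, so that the tree-semantics notion of satisfiability coincides with satisfiability over finite \kl{Kripke structures}. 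For model checking a finite \kl{Kripke structure}~$\calK$ against~$\phi$, I~would instead decide whether the \kl{computation tree}~$\Tree_\calK$ belongs to $\Lang(\calA_\phi)$, using the game-based membership procedure of Theorem~\ref{theo-mc-aapta}.

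For the satisfiability upper bounds I~would split on the fragment, along the \kl{alternating}/\kl{non-alternating} divide made explicit by Theorem~\ref{thm-aut-qnctls}. For \QCTLs[k] and \AQCTLs[k] (both contained in \QCTLs[k]) the automaton $\calA_\phi$ is \kl{alternating}, of \kl(A){size} $\tuple{\EXP{(k+1)}{\size\phi},\EXP{(k+1)}{\size\phi},\EXP{k}{\size\phi},1,\EXP{k}{\size\phi}}$; substituting these parameters into the emptiness bound of Corollary~\ref{coro-algo-aeupta} (recalling that $\size\Alp\leq 2^{\size\phi}$, so $\log\size\Alp=O(\size\phi)$) makes its exponent a sum of products of quantities all in $\EXP{(k+1)}{\size\phi}$, whence the running time lies in $\EXP{(k+2)}{\size\phi}$. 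For \EQCTLs[k+1] I~would use that it is contained in \QCTLs[k+1] and, being an \EQCTLs fragment, the construction returns a \kl{non-alternating} $\calA_\phi$, now of \kl(A){size} $(k+2)$-exponential; emptiness of an \nAAPTA is decidable in time $O((\size\State\cdot(1+\sizeB\Trans\cdot\size\Alp))^{\priomax})$, which for a $(k+2)$-exponential automaton with $\priomax\in\EXP{(k+1)}{\size\phi}$ again evaluates to $\EXP{(k+2)}{\size\phi}$ (the base is $(k+2)$-exponential, its logarithm $(k+1)$-exponential, and multiplying by $\priomax$ keeps the exponent $(k+1)$-exponential). All three satisfiability problems are thus in \EXPTIME[(k+2)].

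For the model-checking upper bounds, each of \QCTLs[k], \AQCTLs[k] and \EQCTLs[k] is contained in \QCTLs[k], so I~would build the $(k+1)$-exponential automaton $\calA_\phi$ and then the \kl{parity game}~$\GAK$ of Section~\ref{sec:sem-game}, solving it with Proposition~\ref{prop-compl-pg}. This game has $O(\size V\cdot(\size\State\cdot\sizeB\Trans+\size\State^{\size V}))$ states and $\priomax$ \kl{priorities}; since $\size\State,\sizeB\Trans\in\EXP{(k+1)}{\size\phi}$ while $\size V=\size\calK$ contributes only polynomially, the term $\size\State^{\size V}=2^{\size V\cdot\log\size\State}$ stays in $\EXP{(k+1)}{\size\phi+\size\calK}$, and so does the whole state count, while $\priomax\in\EXP{k}{\size\phi}$. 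As the logarithm of the state count is then $k$-exponential and the product of two $k$-exponential functions is again $k$-exponential, the $O(n^d)$ bound of Proposition~\ref{prop-compl-pg} yields a $(k+1)$-exponential running time, placing all three model-checking problems in \EXPTIME[(k+1)].

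Finally, I~would invoke the hardness results of~\cite{LM14}, which provide \EXPTIME[(k+2)]-hardness for the three satisfiability problems and \EXPTIME[(k+1)]-hardness for the three model-checking problems, matching the upper bounds and yielding completeness. The~only genuinely delicate aspect is arithmetical rather than conceptual: the argument depends on tracking the height of the exponential tower exactly, and in particular on the fact that the extra exponential incurred by \kl{alternation removal}—which is folded into the emptiness bound of Corollary~\ref{coro-algo-aeupta}—is \emph{saved} in the \EQCTLs[k+1] case precisely because its automaton is already \kl{non-alternating}. Ensuring that the \kl{alternating} fragments at level~$k$ and the purely existential fragment at level~$k+1$ all land in the same class is where the bookkeeping must be done carefully; everything else reduces to the automata operations and game-solving results established above.
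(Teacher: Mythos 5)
Your proposal is correct and follows essentially the same route as the paper, which derives this theorem as a direct consequence of the automata construction of Theorem~\ref{thm-aut-qnctls}, the emptiness procedures (Corollary~\ref{coro-algo-aeupta} for \kl{alternating} automata, the \nAAPTA emptiness bound for the \kl{non-alternating} case), the membership check of Theorem~\ref{theo-mc-aapta}, and the hardness results of~\cite{LM14}. Your bookkeeping is the intended one; in particular, the observation that \EQCTLs[k+1] lands in \EXPTIME[(k+2)] precisely because its automaton is already \kl{non-alternating} (so no simulation step, hence no extra exponential) is exactly the point on which the statement hinges.
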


\subsection{From \AAPTA to \QCTL}
\label{sec-aut2qctl}

In~this section,
we~use \kl{\EU tree automata} to derive expressiveness results:
we~turn an \nAAPTA  $\Aut =
(\State,q_0,\Trans,\prio)$ over~$2^{\AtP}$
into an \kl(F){equivalent} (over all \Strees{2^{\AtP}})
\QCTL formula~$\Phi_\calA$. 
Remember that for~\nAAPTA,
the~transition function~$\delta(q,\sigma)$ is a disjunction of~\EUprs. 

\begin{theorem}
  For any~$\Aut$ be an~\nAAPTA over~$2^{\AtP}$, we~can build an
  \EQCTL[2] formula~$\Phi_\calA$ such that, for any
  $2^{\AtP}$-labelled tree~$\calT$, it~holds $\calT \in \calL(\calA)$
  if, and only~if, $\calT,\epsilon \models \Phi_\calA$.  The~size of
  the formula $\Phi_\calA$ is in~ $O(\size\State \cdot \priomax +
  \size\State\cdot 2^{\size\AtP}\cdot \size\AtP \cdot
                   {\sizeB\Trans}\cdot\sizeE\Trans\cdot(\sizeE\Trans+\sizeU\Trans))$
\end{theorem}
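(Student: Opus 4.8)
The plan is to guess a run of~$\Aut$ by a labelling of the input tree, and then to assert, with a single quantifier alternation, that this run is both locally consistent and accepting on every branch. Since~$\Aut$ is non-alternating, Remark~\ref{rk-nonalt} lets us take an accepting execution tree with the same tree structure as the input, assigning one state of~$\State$ to each node. I~would encode such a labelling by introducing, for every state~$q\in\State$, a fresh atomic proposition~$x_q$, and quantifying existentially over the whole family $(x_q)_{q\in\State}$ at the outermost level. Under this guess the kernel has to express three things: that the labelling is a genuine $\State$-colouring (exactly one $x_q$ per node, with $x_{q_0}$ at the root), that at each node the guessed colours of the successors satisfy $\Trans(q,\sigma)$, and that the parity condition of~$\prio$ holds along every branch. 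The root and ``exactly one colour'' conditions are plain \CTL, placed directly in the kernel.

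For the transition condition I~would write, under an~$\AG$, a conjunction over all pairs $(q,\sigma)\in\State\times 2^{\size\AtP}$ of an implication whose premise $x_q\et\mathrm{lab}_\sigma$ identifies the state and the exact letter (spelling out~$\sigma$ costing~$\size\AtP$), and whose conclusion is the disjunction, over the \EUprs $\EUpair(E;U)$ occurring in $\Trans(q,\sigma)$, of a formula $\mathrm{Sat}(E;U)$ stating that the multiset of successor colours satisfies~$\EUpair(E;U)$. Unfolding the definition of satisfaction of an \EUpr, $\mathrm{Sat}(E;U)$ is the conjunction of a \CTL formula $\AX(\OU_{s\in\supp(E)\cup U}x_s)$ forcing every successor colour to lie in $\supp(E)\cup U$, of a constraint ``at least $E(s)$ successors coloured~$s$'' for each $s\in\supp(E)$, and of a constraint ``at most $E(s)$ successors coloured~$s$'' for each $s\in\supp(E)$ outside~$U$. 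These counting constraints are expressed with the propositional-quantification technique already used for $\Ex_1\Next\phi$ and~$\chi_k$ in Section~\ref{sec-semqctl}: an ``at least'' constraint is a purely existential guess of distinguishing labels followed by a \CTL check, whereas an ``at most'' constraint is the negation of such a guess, hence universal over \CTL.

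The delicate bookkeeping is to collapse all of this into the shape $\exists\vec p.\,\forall\vec q.\ \CTL$ required by \EQCTL[2]. I~would hoist every existential guess (the colouring $(x_q)_q$ together with the distinguishing labels used by the counting constraints) into a single outer existential block, and every universal guess into a single inner universal block. Pulling the existential guesses out of the surrounding $\AG$ is sound because the counting requirement at a node constrains only that node's successors, and the sibling-sets of distinct nodes are pairwise disjoint, so one global labelling witnesses all local requirements simultaneously; pulling the universal guesses out is unproblematic, since $\forall$ commutes with the conjunctions hidden in $\AG$ and~$\et$. This prenexing, and the verification that a single global labelling indeed suffices, is the step that needs the most care.

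The main conceptual obstacle is the acceptance condition, because ``every branch satisfies the parity condition of~$\prio$'' is a genuinely \CTLs-level property (it is not equivalent to any nesting of $\AF\AG$ and $\AG\AF$), yet it too must fit under the same single universal block. My~plan is to express it as a universal quantification over a branch certificate: a fresh proposition~$b$, universally quantified, for which a \CTL formula states ``if~$b$ marks an infinite branch issued from the root, then that branch is accepting.'' That~$b$ marks such a branch is forced in \CTL by requiring $b$ at the root, prefix-closure $\AG(\neg b\to\AX\neg b)$, the existence of a $b$-successor $\AG(b\to\EX b)$, and an at-most-one-$b$-successor constraint (itself universal, absorbed into the same block). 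Along this unique path, ``priority~$i$ infinitely often'' becomes $\AG(b\to\EF(b\et\pi_i))$ and ``priority~$j$ finitely often'' becomes $\EF(b\et\AG(b\to\neg\pi_j))$, where $\pi_i=\OU_{q:\,\prio(q)=i}x_q$, so the parity condition is the \CTL disjunction over even~$i$ of ``$\pi_i$ infinitely often and every $\pi_j$ with $j<i$ finitely often''. All the extra quantifiers introduced here are universal, so the whole acceptance part stays in \AQCTL[1] and contributes a term of order $\size\State\cdot\priomax$. Assembling the colouring, transition, and acceptance constraints under the outer~$\exists$ yields an \EQCTL[2] formula; tallying the contributions ($\size\State\cdot 2^{\size\AtP}$ state/letter pairs, each of cost $\size\AtP\cdot\sizeB\Trans\cdot\sizeE\Trans\cdot(\sizeE\Trans+\sizeU\Trans)$, plus $\size\State\cdot\priomax$ for acceptance) gives the stated size, and equivalence with membership in $\calL(\Aut)$ follows by reading each conjunct against Definition~\ref{def-semEU} and the parity semantics.
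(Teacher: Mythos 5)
Your first two steps (guessing the run with propositions $(x_q)_{q\in\State}$, and encoding each \EUpr $\EUpair(E;U)$ by ``at least $E(s)$''/``at most $E(s)$'' counting constraints on successor colours) are sound and are essentially a variant of what the paper does: the paper marks the witnesses of the $E$-part with propositions $p_1,\ldots,p_{\sizeE\Trans}$ and uses one universally quantified proposition to force these markers to be injective on siblings, which plays exactly the role of your ``at most'' constraints (when prenexing, you would also need to rename the universal propositions across the disjuncts over \EUprs, but that is routine). The genuine gap is in the acceptance condition. Your proposition~$b$ must mark a \emph{single} branch, and the ``at most one $b$-successor per node'' requirement is a restriction on the universally quantified~$b$, so it sits in the \emph{premise} of the implication $\mathrm{Branch}(b)\Rightarrow\mathrm{Acc}(b)$; a universal propositional quantifier in that position cannot be ``absorbed into the same block''. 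Concretely, writing the uniqueness test as $\forall r.\,\mathrm{Test}(b,r)$ with $\mathrm{Test}$ in \CTL, you have two options, and both fail: if you scope the quantifier together with $\forall b$, \ie\ write $\forall b\,\forall r.\,[(\mathrm{CTLconds}(b)\et\mathrm{Test}(b,r))\Rightarrow\mathrm{Acc}(b)]$, this is equivalent to $\forall b.\,[(\mathrm{CTLconds}(b)\et\exists r.\,\mathrm{Test}(b,r))\Rightarrow\mathrm{Acc}(b)]$, and $\exists r.\,\mathrm{Test}(b,r)$ is trivially true (label every node with~$r$), so the uniqueness constraint evaporates and $b$ ranges over arbitrary infinite prefix-closed sets; if instead you scope it over the premise only, then $[\forall r.\,\mathrm{Test}(b,r)]\Rightarrow\mathrm{Acc}(b)$ is equivalent to $\exists r.\,[\mathrm{Test}(b,r)\Rightarrow\mathrm{Acc}(b)]$, an existential nested below $\forall b$, and your formula lands in \EQCTL[3], not \EQCTL[2].

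The first horn is not a mere technicality, because your branch-style parity formulas are wrong once $b$ may mark a branching subtree. Take an \nAAPTA and an accepting execution whose $b$-marked part consists of two branches diverging at the root, one carrying constant priority~$0$ and the other constant priority~$2$ (both accepting). For $i=0$, the conjunct $\AG(b\impl\EF(b\et\pi_0))$ fails at nodes of the priority-$2$ branch, whose only $b$-descendants have priority~$2$; for $i=2$, the conjunct $\AG(b\impl\EF(b\et\pi_2))$ fails symmetrically on the priority-$0$ branch; so your disjunction over even~$i$ rejects an accepting run. This is precisely the difficulty the paper's construction is designed to avoid: it lets $b$ range over infinite \emph{subtrees} (so the premise $b\et\AG(b\impl\EX b)\et\AG(\non b\impl\AX\non b)$ is pure \CTL and no uniqueness is needed), and it detects a parity-violating branch with the asymmetric combination $\AG\AF(b\impl\alpha_{=d})\et\EF\EG(b\et\alpha_{\geq d})$, disjoined over odd~$d$, where $\alpha_{=d}$ (resp.~$\alpha_{\geq d}$) is the disjunction of the state-propositions of priority~$d$ (resp.~priority at least~$d$): the first conjunct forces $d$ to occur infinitely often on \emph{every} branch that stays in the $b$-subtree, the second exhibits one branch that stays in~$b$ with priorities eventually at least~$d$, and together they pin down a branch whose least infinitely-repeated priority is the odd value~$d$. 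You need this idea (or another of the same strength) to make the acceptance part fit under a single universal block; the rest of your proposal then goes through.
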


\begin{proof}
Let  $\State=\{q_i \mid 0 \leq i \leq n=\size{\State}-1\}$.
In~$\Phi_\calA$, we~use the set of fresh quantified atomic
propositions $\{b, q_0,\ldots,q_n,\penalty5000\relax
p_1,\ldots,p_{\sizeE\Trans },p\}$ in order to express the existence of
an \kl(ET){accepting} \kl{execution tree} of the automaton:
propositions in~$\{q_0,...,q_n\}$ will be used to label each node of
the input \kl{tree} with the name of the state visiting that node in
the \kl{execution tree}, while propositions
in~$\{p_1,...,p_{\sizeE\Trans}\}$ are used to distinguish between the
\kl{successors} involved in the verification of the $E$-part of
\EUconstrs; proposition~$b$ is used for expressing the \kl{acceptance}
condition and proposition~$p$ is used to ensure that no node has more
than one \kl{successor} labelled with the same $p_i$\footnote{For the
sake of clarity, we use two distinct propositions~$b$ and~$p$, but we
could have used the same one.}.
Our formula~$\Phi_{\calA}$ reads as follows:
\[
 \exists q_0\ldots \exists q_n.\
\exists p_1\ldots \exists p_{\sizeE\Trans }.\ \forall b.\ \forall p.\  (\Phi_p \et
\widetilde\Phi_{\calA}).
\]
In~this formula, $\Phi_p$ will be used to state that no nodes are
labelled with several~$p_i$s and no nodes have more than one
\kl{successor} labelled with~$p_i$ (for~any~$i$), while
$\widetilde\Phi_{\calA}$ will enforce that the labelled tree describes
an \kl(ET){accepting} \kl{execution tree} of~$\calA$ on~$\calT$.
Formally, $\Phi_p$~is defined as the following formula (remember that
$p$ is quantified universally):
\[
\Phi_p = \AG   \ET_{1\leq i \leq \sizeE\Trans}\Big[ (p_i \impl  \ET_{j\not=i} \non p_j) \et \Big( \AX (p_i \impl  p) \ou \AX (p_i \impl \non p)\Big) \Big]
\]
Formula~$\widetilde\Phi_{\calA}$ is defined as 
\[
\widetilde\Phi_{\calA} = 
  q_0 \et 
   \ET_{i=0}^{n} \AG \Bigl[q_i \Rightarrow \Big(    \neg\lambda_{Q\setminus\{q_i\}} \et
    \OU_{P \subseteq \AtP} \bigl(  
      \Gamma_{P}  
     \et \Psi_{\delta(q,P)}
     \bigr) \Bigr) \Bigr]  
  \et
  \Psi_{\prio},
\]
where, for any~set~$S$,
formula~$\lambda_S$~is the propositional formula $\OU_{q\in S} q$, and
for any $P\subseteq \AtP$, formula~$\Gamma_{P}$ is the propositional formula
$\ET_{p\in P} p \et \ET_{p'\in \AtP\setminus P} \non p'$
(note that the size of $\Gamma_{P}$ is in $O(\size\AtP)$).
We~now define formula~$\Psi_{\delta(q,P)}$, which encodes the
satisfaction of the transition function, and formula~$\Psi_{\prio}$,
which states that any infinite \kl{branch} (of the~\kl{tree})
labelled with~$b$ satisfies the \kl{parity condition}.

For the former, we write (remember that $\Aut$ is non-alternating):
  \[
  \Psi_{\delta(q,P)} =   \begin{cases} \top & \text{if}\; \delta(q,P)=\top \\
\bot & \text{if}\; \delta(q,P)=\bot \\
 {\displaystyle  \OU_{\EUpair(E;U) \in \delta(q,P)}  \!\!\!\!\!\! \Psi_{\EUpair(E;U)} } & \text{otherwise},\end{cases}
  \]  
  where $\Psi_{\EUpair(E;U)}$ encodes the constraint~$\EUpair(E;U)$
  of~$\calA$: writing~$E$ as the \kl{multiset} $\mset{
    E_1,...,E_{|E|}}$, where each $E_j$ belongs to~$\State$, we~let:
\[
\Psi_{\EUpair(E;U)} =  \ET_{j=1}^{\size{E}} \Bigl[ 
  \Ex \Next ( p_j \et E_j )
  \et  
  \AX \Big(
  (\bigwedge_{j=1}^{\size{E}} \neg p_j) \Rightarrow  \bigvee_{q\in U} q  
  \Big)  \Bigr]
\]
Remember that thanks to formula~$\Phi_p$, we~have ensured that no
nodes can be labelled with several~$p_i$s and no nodes can have
several \kl{successors} labelled with the same~$p_j$; thus
formula~$\Psi_{\EUpair(E;U)}$ ensures that all states in~$E$ label 
distinct successors, and nodes with no~$p_j$ are labelled with some
proposition~$q$ corresponding to a state in~$U$.

\smallskip
Formula $\Psi_{\prio}$ expresses the fact that in any infinite subtree
labelled with~$b$, there exists no infinite branches where the smallest
priority appearing infinitely many times is~odd.
This can be characterised as follows:
\begin{multline*}
\Psi_{\prio} =     \Big( b \et \AG (b \Rightarrow \EX b) \;\et\; \AG ((\non b) \Rightarrow \AX (\non b)) \Big)  \;\impl\;  \\
\non\!\!\OU_{\stackrel{0\leq d \leq \priomax}{\text{\tiny s.t. $d$ odd}}} \Bigl[ \AG \; \AF  \Big( b \Rightarrow (
\alpha_{=d}
) \Big) \et
\EF (\EG (b \et  \alpha_{\geq d})) 
\Bigr]\Big),
\end{multline*}
where $\alpha_{=d}$ is the formula $\bigvee_{\prio(q_i)=d} q_i$
characterising (atomic propositions corresponding~to) states having
\kl{priority}~$d$, and $\alpha_{\geq d}$ is the formula
$\bigvee_{\prio(q_i)\geq d} q_i$, identifying states with
\kl{priorities} greater than (or~equal~to)~$d$.  Note that the formula
to the left of the implication holds true if, and only~if,
proposition~$b$ labels exactly an infinite subtree from the current
node (subformula~$(b \impl \EX b)$ ensures infiniteness, and
subformula~$(\non b \impl \AX \non b)$ ensures that every $b$-node
is reachable from the current node via a $b$-path).
We will show below why $\Psi_{\prio}$ ensures the
satisfaction of the \kl{parity condition}.

The size of $\Psi_{\delta(q,P)}$ is in
$O(\sizeB\Trans\cdot\sizeE\Trans\cdot(\sizeE\Trans+\sizeU\Trans)$. The
size of $\Psi_{{\prio}}$ is in $O(\size\State \cdot\priomax)$.  The
size of $\Phi_p$ is in $O({\sizeE\Trans}^2)$. In~the~end, we~get that
the size of $\Phi_{\calA}$ is in $O(\priomax\cdot\size\State +
\size\State\cdot 2^{\size\AtP}\cdot \size\AtP \cdot
{\sizeB\Trans}\cdot\sizeE\Trans\cdot(\sizeE\Trans+\sizeU\Trans))$.
Finally, it~is easily seen that $\Phi_{\calA}$~belongs to~\EQCTL[2].

We~now prove:
\begin{lemma}
Let $\Aut = (\State,q_0,\Trans,\prio)$ be an \nAAPTA and
$\Phi_\calA$ be the \EQCTL[2]
formula defined above. For~any \Stree{2^{\AtP}}~$\calT$,
it~holds:
$\calT \in \calL(\calA)$ if, and only~if, $\calT,\troot \models
\Phi_\calA$.
\end{lemma}

\begin{proof}
Consider a \Stree{2^{\AtP}}~$\calT=\tuple{\tree,\lab}$ and
assume $\calT \in \calL(\calA)$. As~$\Aut$~is non-alternating, there
exists an \kl(ET){accepting}
\kl{execution tree} $\ExTree=\tuple{\tree,\exlab}$ of~$\Aut$
with the same structure~$t$ as~$\calT$.
Then any node~$n$ of~$\ExTree$ is such that $\exlab(n)=(n,q)$ for some
$q \in \State$.

We aim at showing that $\calT,\troot \models \Phi_\Aut$.  First,
we~can label~$\calT$ with~$q_i$s exactly as it is done in~$\ExTree$
with~$\exlab$.  For~each proposition~$p_j$, we~proceed as follows:
consider a node~$n$ labelled with~$(n,q)$; the~transition
function~$\delta(q,\lab(n))$ applies successfully over the subtree
rooted at~$n$ (as~$\calT \in \calL(\calA)$).
If~$\delta(q,\lab(n))=\top$, then $\Psi_{\delta(q,\lab(n))}$
is trivially satisfied. Otherwise there is some \EUpr
$\EUpair(E;U) \in \Trans(q,\lab(n))$ that is satisfied from~$n$
(since the~automaton is non-alternating) and there exist $\size{E}$
\kl{successors} of~$n$ that satisfy~$E$: writing~$E$ as the multiset
$\mset{E_1,\ldots,E_{\size{E}}}$, we~can associate a fixed \kl{successor} with
every~$E_j$. This provides the labelling for proposition~$p_j$ at
this level (we~know that $\size{E} \leq \sizeE\Trans$).  All~the
\kl{successors} that are not labelled with some~$p_j$ with $1\leq
j \leq \size{E}$, have to be accepted by a state in~$U$.
Note also that $\Phi_p$ is satisfied by $\calT,\varepsilon$.

It remains to verify that $\Psi_{{\prio}}$ is satisfied.  As $\ExTree$
is an \kl(ET){accepting} \kl{execution tree}, every infinite
\kl{branch} satisfies the \kl{parity condition}. Consider an
infinite subtree labelled with~$b$, and assume that there exists some
odd \kl{priority}~$d$ that appears infinitely many times along every
\kl{branch} of the $b$-subtree, and such that
along one of these branches,
eventually all \kl{priorities} are greater~than or equal to~$d$
(that is, $d$~is the least \kl{priority} along that \kl{branch});
this clearly
implies that this \kl{branch} violates the \kl{parity condition}, which
contradicts our initial assumption.

In conclusion, the chosen labelling makes the formula
$\widetilde{\Phi_\Aut}$ hold true.

\medskip
Conversely, assume $\calT,\troot \sat \Phi_\Aut$. Consider a labelling
for propositions~$q_i$s and~$p_j$s such that $\widetilde\Phi_{\calA}
\et \Phi_p$ holds true at the root for any valuation of~$p$ and~$b$.
This labelling associates exactly one state of the~automaton with
every node (thanks~to subformulas~$\lambda_{-}$).  Moreover, for every
node~$x$ labelled with proposition~$q$, at~least one
subformula~$\Psi_{\EUpair(E;U)}$ allowing to satisfy
$\Trans(q,\lab(x))$ is fulfilled (or~$\delta(q,\lab(x))=\top$, and the
result is ensured).  When a formula $\Psi_{\EUpair(E;U)}$ holds true
at a node~$x$, then there exist $\size{E}$ distinct \kl{successors}
of~$x$ that are labelled with the states in~$E$ (they are distinct
thanks to subformula~$\Phi_p$), and any other \kl{successor} is
labelled with some state in~$U$.  Therefore the satisfaction of
transitions of the automaton is locally ensured.

Finally, if~$\Psi_{{\prio}}$ is satisfied at the root of~$\calT$, then
for any infinite \kl{branch}, it is possible to label it with~$b$ and then
the formula on the right-hand side of the implication states that for
any odd \kl{priority}~$d$, either it appears a finite number of times along
the selected \kl{branch}, or it is not the smallest \kl{priority} along the
\kl{branch}: this ensures that the smallest infinitely-repeated \kl{priority} is
even, and then the \kl{branch} satisfies the \kl{parity condition}.  Therefore
all \kl{branches} are accepting, and the \kl{tree}~$\calT$ belongs
to~$\calL(\calA)$.  \qed[2]\let\qed\relax
\end{proof}
\let\qed\relax
\end{proof}

Combined with the \kl{simulation} theorem (Theorem~\ref{thm-simu}),
the~previous result provides the following corollary for alternating
automata:

\begin{corollary}
  For any~$\Aut$ be an~\AAPTA, we~can build an exponential-size  \EQCTL[2]
  formula~$\Phi_\calA$ such that, for any
  \Stree{2^{\AtP}}~$\calT$, it~holds $\calT \in \calL(\calA)$ if, and only~if,
$\calT,\troot \models \Phi_\calA$.
\end{corollary}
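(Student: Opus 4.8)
The plan is to compose two constructions that are already in hand. The preceding theorem gives exactly the desired \EQCTL[2] formula, but only for \emph{non-alternating} automata; hence the single new ingredient is to first make~$\Aut$ non-alternating by the \kl{alternation removal} construction, and then feed the resulting automaton into that theorem.

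Concretely, I would start from the given \AAPTA~$\Aut=\tuple{\State,\initstate,\Trans,\omega}$ and apply Theorem~\ref{thm-simu}, obtaining an \nAAPTA~$\NAut$ with $\Lang(\NAut)=\Lang(\Aut)$. Since $\NAut$ is non-alternating and runs over the same alphabet~$2^{\AtP}$, the preceding theorem applies to it directly and yields an \EQCTL[2] formula~$\Phi_{\NAut}$ such that $\calT\in\Lang(\NAut)$ if, and only~if, $\calT,\troot\models\Phi_{\NAut}$, for every \Stree{2^{\AtP}}~$\calT$. Setting $\Phi_\calA\eqdef\Phi_{\NAut}$ and chaining the two equivalences through the language equality $\Lang(\NAut)=\Lang(\Aut)$ gives precisely $\calT\in\Lang(\Aut)\iff\calT,\troot\models\Phi_\calA$, with $\Phi_\calA$ still an \EQCTL[2] formula. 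This settles correctness; note that the simulation leaves the input alphabet~$2^{\AtP}$ (hence $\size\AtP$) unchanged, a fact that will matter for the size estimate.

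The only point deserving explicit verification — and the main thing I would check — is that $\Phi_\calA$ is of \emph{single}-exponential size. By Theorem~\ref{thm-simu} every component of the \kl(A){size} of~$\NAut$ (its number of states, boolean size, and universal-part size) is at most single-exponential in the size of~$\Aut$. The preceding theorem bounds $\Fsize{\Phi_{\NAut}}$ by a \emph{polynomial} in the size parameters of~$\NAut$, multiplied by the single factor~$2^{\size\AtP}$ arising from the enumeration over $P\subseteq\AtP$. Substituting the exponential bounds for~$\NAut$ into this polynomial produces a product of finitely many single exponentials in~$\size\Aut$, which is again a single exponential; and because $\size\AtP$ is untouched by the alternation removal, the $2^{\size\AtP}$ factor does not compound with the simulation blow-up. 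Hence $\Phi_\calA$ has exponential size, as claimed. There is no genuine obstacle beyond checking that the two blow-ups multiply, rather than nest, into one exponential.
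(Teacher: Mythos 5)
Your proposal is correct and follows exactly the paper's intended argument: the corollary is obtained by composing the alternation-removal construction (Theorem~\ref{thm-simu}) with the preceding \nAAPTA-to-\EQCTL[2] theorem, chaining the language equality through both equivalences. Your explicit check that the single-exponential blow-ups multiply rather than nest (and that the $2^{\size\AtP}$ factor is unaffected by simulation) spells out the size bound the paper leaves implicit, but it is the same proof.
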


This result combined with the automata construction of
Section~\ref{sec-qctl2aut} allows us to prove important properties
about the expressive power of~\QCTL.

\subsection{Results about \QCTL\ expressiveness}

\label{sec-qctl-expr}

A~logic~$\calL$ is said to be \intro{at least as expressive} as a
logic~$\calL'$ over a class~$\calM$ of models, which we denote by
$\calL\mathrel{\reintro*\alae{\calM}} \calL'$ (omitting to mention~$\calM$ if it
is clear from the context), whenever for any
formula~$\phi'\in \calL'$, there is a $\phi\in\calL$ such that $\phi$
and $\phi'$ are equivalent over~$\calM$. Both logics~$\calL$
and~$\calL'$ are \intro{equally expressive},
denoted~$\calL \mathrel{\reintro*\eqex{\calM}} \calL'$, when $\calL\succeq\calL'$ and
$\calL'\succeq \calL$; finally, $\calL$~is \intro{strictly more expressive}
than~$\calL'$, written $\calL\mathrel{\reintro*\smex\calM} \calL'$, if
$\calL\succeq\calL'$ and $\calL'\not\succeq \calL$. 
We use $\calL \caplog \calL'$ to denote the fragment of~$\calL\cup\calL'$
containing formulas for which there are equivalent formulas in both $\calL$ and~$\calL'$. 

Combining the construction of Section~\ref{sec-qctl2aut}, turning a \QCTL
formula into an equivalent \AAPTA, and the construction of the previous section,
turning an \AAPTA into an equivalent \EQCTL[2] formula,
we~get that,
in~terms of expressive power, the~hierarchy \QCTL[k] collapses at level~$2$:
\begin{theorem}
  \QCTLs, \QCTL,  \EQCTL[2] and \AQCTL[2] are \kl{equally expressive}.
  Any~formula in~\QCTL[k] can be translated into an equivalent formula
  in~\EQCTL[2] whose size is bounded by~$\EXP{(k+1)}{\size\phi}$.
\end{theorem}

\begin{proof}
Given a \QCTLs formula~$\Phi$, one can build an \AAPTA $\Aut_\Phi$
which recognises the
\Strees{2^{\AtP_\Phi}}
satisfying~$\Phi$, where $\AtP_\Phi$ denotes the set of atomic propositions occurring in~$\Phi$.
 This~automaton~$\Aut_\Phi$ can
then be transformed into a \kl{non-alternating} \nAAPTA~$\calN_\Phi$, from
which we can build a formula~$\Phi_\calN$ belonging
to~\EQCTL[2]. By~construction, we~have $\Phi \equivF \Phi_\calN$ (over any
\Stree{2^{\AtP}}).
The~same holds for~$\non\Phi$, and the negation of the resulting
\EQCTL[2] formula belongs to \AQCTL[2] and is \kl(F){equivalent} to~$\Phi$.

\smallskip

The size of automaton~$\Aut_\Phi$ is in
$\tuple{\EXP{k}{\size\phi},\EXP{k}{\size\phi},\EXP{({k-1})}{\size\phi},
  1,\allowbreak \EXP{({k-1})}{\size\phi}}$.  The~non-alternating
automaton~$\calN_\Phi$ then has size at most
$\tuple{\EXP{({k+1})}{\size\phi},\EXP{({k+1})}{\size\phi},\EXP{k}{\size\phi},
  1, \EXP{k}{\size\phi}}$,
and the \EQCTL[2] formula~$\Phi_\calN$ is in~$\EXP{({k+1})}{\size\phi}$.
\end{proof}

Note that our complexity results about the satisfiability of \EQCTL[2]
and \QCTL[k] entail that any translation procedure to get such an
\EQCTL[2] formula has time complexity at least
$\EXP{(k-1)}{\size\Phi}$.

\smallskip
We~then have $\QCTLs \eqex{} \QCTL \eqex{} \EQCTL[2] \caplog \AQCTL[2]$. 
But there is a difference between \QCTL[2], \QCTL[1] and \CTL.
The following theorem summarises our expressiveness results:

\begin{theorem}
  In terms of their relative expressiveness, the fragments of~\QCTL
  satisfy the following relations:
  \par\bgroup\centering
    \begin{tikzpicture}[yscale=.6,xscale=1.2]
      \path (1.4,0) node (QCTL) {\QCTLs};
      \path (-1,0) node (Q2cap) {$\EQCTL[2]\caplog\AQCTL[2]$};
      \path (-3.5,0) node (Q1) {\QCTL[1]};
      \path (-5.2,.7) node (EQ1) {\EQCTL[1]};
      \path (-5.2,-.7) node (AQ1) {\AQCTL[1]};
      \path (-7.5,0) node (Q1cap) {$\EQCTL[1]\caplog\AQCTL[1]$};
      \path (-9.7,0) node (CTL) {\CTL};
      \path (Q2cap) -- (QCTL) node[midway,sloped] {$\eqex{}$};
      \path (Q2cap) -- (Q1) node[midway,sloped] {$\prec$};
      \path (Q1) -- (EQ1) node[midway,sloped] {$\prec$};
      \path (Q1) -- (AQ1) node[midway,sloped] {$\prec$};
      \path (Q1cap) -- (EQ1) node[midway,sloped] {$\prec$};
      \path (Q1cap) -- (AQ1) node[midway,sloped] {$\prec$};
      \path (Q1cap) -- (CTL) node[midway,sloped] {$\prec$};
    \end{tikzpicture}
    \par\egroup
\end{theorem}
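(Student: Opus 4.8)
The plan is to read the diagram as a list of containment relations ($\alae{}$), each of which we establish syntactically or from the collapse theorem just proved, together with a matching list of strictness claims, and then to supply a separating formula for each strict step. Linearising the two incomparable middle nodes, the statement bundles the chain
\[
\CTL \prec \EQCTL[1]\caplog\AQCTL[1] \prec \EQCTL[1],\ \AQCTL[1] \prec \QCTL[1] \prec \EQCTL[2]\caplog\AQCTL[2] \eqex{} \QCTLs,
\]
where $\EQCTL[1]\caplog\AQCTL[1]$ feeds (strictly) into both $\EQCTL[1]$ and $\AQCTL[1]$, which in turn both feed (strictly) into $\QCTL[1]$. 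I would treat the containment half and the strictness half separately.

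The containments are essentially free. A \CTL formula is an \EQCTL[1] and an \AQCTL[1] formula with an empty quantifier block, so $\CTL$ is contained in $\EQCTL[1]\caplog\AQCTL[1]$; by definition of $\caplog$ that fragment is contained in each of $\EQCTL[1]$ and $\AQCTL[1]$; and since (as noted in Section~\ref{sec-semqctl}) $\EQCTL[1]$ and $\AQCTL[1]$ are syntactic fragments of $\QCTL[1]$, the next inclusions hold too. Finally $\QCTL[1]\subseteq\QCTL$, and the collapse theorem gives $\QCTL \eqex{} \EQCTL[2] \eqex{} \AQCTL[2]$; as $\EQCTL[2]$ and $\AQCTL[2]$ are equally expressive, $\EQCTL[2]\caplog\AQCTL[2]$ coincides with each of them, whence the top node is equally expressive to $\QCTLs$. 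So all the $\alae{}$ relations come for free, and the content is in the strictness (inexpressibility) part, some of which is already in~\cite{LM14}.

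For the strictness steps I would use a small stock of separating formulas together with two inexpressibility principles. The workhorse examples are ``the root has at least two successors'', writable existentially as $\exists p.\,(\EX p \et \EX \neg p) \in \EQCTL[1]$, and its negation ``the root has at most one successor'' $\in \AQCTL[1]$. Neither is a \CTL property, since \CTL is invariant under bisimulation --- it cannot distinguish a tree from one in which a successor subtree is duplicated --- which is the classical fact of~\cite{HM85}; this already separates $\CTL$ from $\EQCTL[1]$ and from $\AQCTL[1]$. To place these on the exact edges I would show ``at least two successors'' $\notin \AQCTL[1]$ and, dually, ``at most one successor'' $\notin \EQCTL[1]$: existential propositional quantification over \CTL is monotone under adding a fresh bisimilar copy of a successor, while a lower-count property is not, and symmetrically for universal quantification; the clean packaging is an Ehrenfeucht--Fra\"\i ss\'e / composition argument on families of trees differing only in the number of bisimilar copies of a subtree. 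These two non-membership facts give at once $\EQCTL[1]\caplog\AQCTL[1]\prec\EQCTL[1]$, $\EQCTL[1]\caplog\AQCTL[1]\prec\AQCTL[1]$, and (since $\EQCTL[1],\AQCTL[1]\subseteq\QCTL[1]$) also $\EQCTL[1]\prec\QCTL[1]$ and $\AQCTL[1]\prec\QCTL[1]$, each witnessed by a formula living in the dual fragment. For the remaining two edges I would exhibit, for $\CTL\prec\EQCTL[1]\caplog\AQCTL[1]$, a bisimulation-sensitive property admitting \emph{both} an existential and a universal level-$1$ form (note $\EQCTL[1]\caplog\AQCTL[1]$ is closed under negation and Boolean combinations, so the task reduces to producing one genuinely two-sided counting property, the subtle point being that lower counts are naturally existential and upper counts naturally universal); and for the top strict edge $\QCTL[1]\prec\EQCTL[2]\caplog\AQCTL[2]$ a genuine two-alternation property of $\QCTL$ (equivalently of \MSO, via~\cite{LM14}) that no single block of propositional quantifiers surrounded by \CTL operators can express.

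The hard part is entirely in these lower bounds, and especially in the last edge. The automata machinery of this paper is geared to \emph{upper} bounds: it translates formulas to automata and back (Theorem~\ref{thm-aut-qnctl} and the automaton-to-\EQCTL[2] construction, with \kl{simulation} via Theorem~\ref{thm-simu}) and thereby yields the collapse at level~$2$, but it says nothing about why a property should \emph{fail} to be expressible with fewer quantifier alternations. Establishing $\QCTL[1]\prec\EQCTL[2]\caplog\AQCTL[2]$ therefore requires a genuine separation technique for propositional-quantifier alternation --- a game or composition argument on a parametrised family of trees on which two quantifier blocks distinguish members that one block, even with \CTL glue, provably cannot. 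This, rather than the routine containments or the bisimulation argument against \CTL, is where the real work lies; I expect the $\EQCTL[1]$-versus-$\AQCTL[1]$ incomparability and this level-$1$-versus-level-$2$ separation to be the two delicate points, the latter being the main obstacle.
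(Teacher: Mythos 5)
Your containments and your treatment of the four middle edges match the paper: the paper's separating formula is exactly your $\exists p.\ (\EX p \et \EX\non p)$, its non-expressibility in \AQCTL[1] is proved by the same duplicate-a-bisimilar-copy argument you sketch, and closure of $\EQCTL[1]\caplog\AQCTL[1]$ under negation then spreads that single separation onto the neighbouring edges. But your plan for the edge $\CTL \prec \EQCTL[1]\caplog\AQCTL[1]$ is self-defeating, and the obstruction is precisely the monotonicity you invoke elsewhere. Any \EQCTL[1] property is preserved when a subtree is replaced by two identical copies (copy the witnessing labelling onto both copies; the resulting labelled trees are bisimilar, so the \CTL matrix cannot tell them apart), and dually any \AQCTL[1] property is reflected by this operation; hence every property in $\EQCTL[1]\caplog\AQCTL[1]$ is invariant under duplication of subtrees, and no counting property --- no ``genuinely two-sided counting property,'' in your words --- can lie in the intersection. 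The paper separates this edge with a bisimulation-\emph{invariant} property outside \CTL, namely Wolper's ``every node at even depth is labelled with~$p$'': it lies in \EQCTL[1] as $\exists q.\ (q \et \AG(q \Leftrightarrow \AX\non q)) \et \AG(q\Rightarrow p)$ and in \AQCTL[1] as $\forall q.\ ((q \et \AG(q \Leftrightarrow \AX\non q)) \Rightarrow \AG(q\Rightarrow p))$, the point being that the even-depth labelling~$q$ is \emph{unique}, so the existential and universal readings coincide.

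The second gap is the top strict edge $\QCTL[1]\prec\EQCTL[2]\caplog\AQCTL[2]$, which you rightly call the main obstacle but leave entirely open; it is where the bulk of the paper's proof lives. The paper exhibits the \EQCTL[2] formula $\lambda = \exists p.\ \forall q.\ [\EX(p \et (\AX q \ou \AX\non q)) \et \EX(\non p \et (\AX q \ou \AX\non q))]$, stating that at least two successors of the root have arity one, together with two families of trees: $\calT_k$, whose root has one arity-one child~$r$ and $k$ arity-two children $t_1,\dots,t_k$, and $\calT'_k$, identical except for a second arity-one child. Then $\lambda$ distinguishes $\calT_k$ from $\calT'_k$, while $\calT_k$ and $\calT'_k$ satisfy the same \QCTL[1] formulas of size at most~$k$. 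The agreement proof is a concrete counting/relabelling argument rather than a generic game: a size-$k$ formula $\exists(p_i)_i.\ \psi$ in \EQCTL[1] has at most~$k$ path subformulas $\Ex\zeta_i$ or $\All\zeta_i$ in~$\psi$, so when transferring a witnessing labelling from $\calT'_k$ to $\calT_k$ some arity-two child $t_j$ serves as witness for no $\zeta_i$ and its two branches can be relabelled to impersonate the missing second arity-one branch, restoring bisimilarity; the result is then lifted from \EQCTL[1] to \QCTL[1] by replacing maximal \EQCTL[1] subformulas with fresh atomic propositions and using bisimulation-invariance of~\CTL. Without this family (or an equivalent one) and its size-parametrised argument, the central strict edge of the theorem remains unproved.
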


\begin{proof}

We first prove that \EQCTL[2] is \kl{strictly more expressive} than~\QCTL[1].
By~duality, this extends to~\AQCTL[2].
We~already proved that \QCTL, hence also~\QCTL[1], can be translated
in~\EQCTL[2] and in~\AQCTL[2].
We~exhibit an \EQCTL[2] formula that \QCTL[1] cannot express,
namely:
\[
\lambda = \exists p.\ \forall q.\ [
  \EX(p \et (\AX q \ou \AX\non q) 
  \et
  \EX(\non p \et (\AX q \ou \AX\non q) 
].
\]
It~specifies that there exist at least two (immediate) \kl{successors}
whose arity is~$1$.
Consider the \kl{trees}~$\calT_k$ and~$\calT_k'$ depicted at Fig.~\ref{fig-expr}.
We~prove that~$\lambda$ holds in~$\calT'_k$, but fails to hold
in~$\calT_k$:
in~$\calT'_k$, take the $p$-labelling where only~$r'_1$ is labelled
with~$p$: then for any $q$-labelling, $r'_1$ satisfies $p \et (\AX
q \ou \AX\non q)$ and $r'_2$ satisfies $\non p \et (\AX q \ou \AX\non
q)$, so that $\lambda$ holds in~$\calT'_k$. Now, take~any
$p$-labelling of~$\calT_k$, and the $q$-labelling in which exactly one
of the \kl{successors} of each node~$t_i$ is labelled with~$q$. Then none
of the states~$t_i$ can be used to satisfy any of the two conjuncts
of~$\lambda$, and $r$~alone can't satisfy both. Hence $\calT_k$ does
not satisfy~$\lambda$.

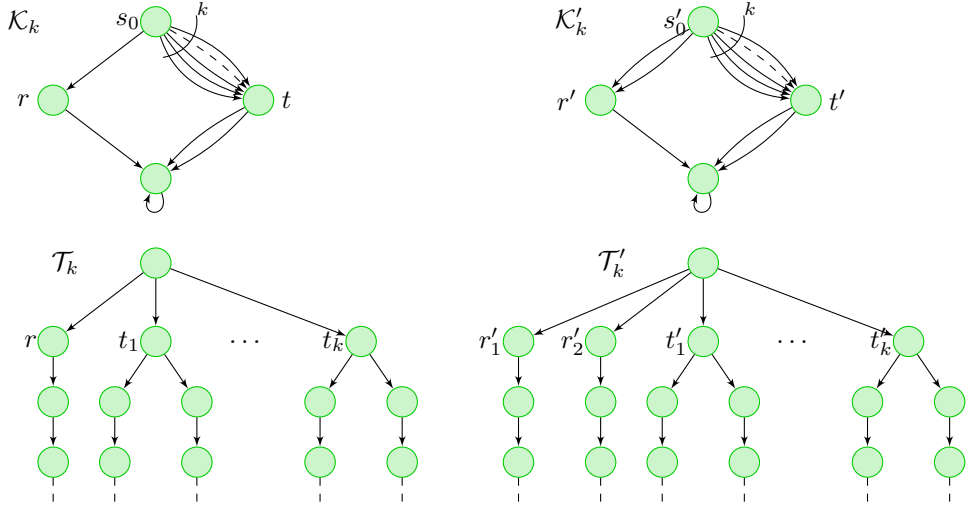
\begin{figure}[tb]
\centering
\begin{tikzpicture}[inner sep=0pt,yscale=.8,xscale=.9]
\pgfarrowsdeclarecombine{latex2'}{latex2'}%
{latex'}{latex'}{latex'}{latex'}

\begin{scope}[yshift=4cm]
\begin{scope}
\draw(0,1.3) node[moyrond,vert] (s) {} node[left=15mm]
  {$\calK_{k}$} node[left=2mm]{$s_0$}; 
\draw(-1.5,0) node[moyrond,vert] (r) {} node[left=3mm] {$r$};
\draw(1.5,0) node[moyrond,vert] (t1) {} node[right=3mm] {$t$};
\draw (0,-1.3) node[moyrond,vert] (aux1) {};
\everymath{\scriptstyle}
\draw[-latex'] (s) -- (r);
\foreach \i in {-34,-20,-8,20} 
  {\draw (s) edge[-latex',bend left=\i] (t1);}
\draw[thin] ($(s)+(-80:6mm)$) arc[thin,start angle=-80,end angle=10,radius=6mm] node[above right] {$k$};
\draw (s) edge[-latex',dashed,bend left=8] (t1);
\draw[-latex'] (r) -- (aux1);
\draw[-latex'] (t1) edge[-latex',bend left=10] (aux1);
\draw[-latex'] (t1) edge[-latex',bend left=-10] (aux1);
\draw[-latex'] (aux1) edge[out=-70,in=-110, looseness=7] (aux1);
\end{scope}

\begin{scope}[xshift=8cm]
\draw(0,1.3) node[moyrond,vert] (s) {} node[left=15mm]
  {$\calK'_{k}$} node[left=2mm]{$s'_0$}; 
\draw(-1.5,0) node[moyrond,vert] (r) {} node[left=3mm] {$r'$};
\draw(1.5,0) node[moyrond,vert] (t1) {} node[right=3mm] {$t'$};
\draw (0,-1.3) node[moyrond,vert] (aux1) {};
\everymath{\scriptstyle}
\draw[-latex'] (s) edge[-latex',bend left=10] (r);
\draw[-latex'] (s) edge[-latex',bend left=-10] (r);
\foreach \i in {-34,-20,-8,20}
  {\draw (s) edge[-latex',bend left=\i] (t1);}
\draw (s) edge[-latex',dashed,bend left=8] (t1);
\draw[thin] ($(s)+(-80:6mm)$) arc[thin,start angle=-80,end angle=10,radius=6mm] node[above right] {$k$};
\draw[-latex'] (r) -- (aux1);
\draw[-latex'] (t1) edge[-latex',bend left=10] (aux1);
\draw[-latex'] (t1) edge[-latex',bend left=-10] (aux1);
\draw[-latex'] (aux1) edge[out=-70,in=-110, looseness=7] (aux1);
\end{scope}
\end{scope}

\draw(0,1.3) node[moyrond,vert] (s) {} node[left=10mm] {$\calT_{k}$}; %
\draw(-1.5,0) node[moyrond,vert] (r) {} node[left=2mm] {$r$};
\draw(0,0) node[moyrond,vert] (t1) {} node[left=2mm] {$t_1$};
\draw(3,0) node[moyrond,vert] (tp) {} node[left=2mm] {$t_k$};
\draw (-1.5,-1) node[moyrond,vert] (a1) {};
\draw (-0.6,-1) node[moyrond,vert] (a2) {};
\draw (0.6,-1) node[moyrond,vert] (a3) {};
\draw (2.4,-1) node[moyrond,vert] (a4) {};
\draw (3.6,-1) node[moyrond,vert] (a5) {};
\draw[-latex'] (s) -- (r);
\draw[-latex'] (s) -- (t1);
\draw[-latex'] (s) -- (tp);
\draw[-latex'] (r) -- (a1);
\draw[-latex'] (t1) -- (a2);
\draw[-latex'] (t1) -- (a3);
\draw[-latex'] (tp) -- (a4);
\draw[-latex'] (tp) -- (a5);
\draw (-1.5,-2) node[moyrond,vert] (aux1) {};
\draw (-0.6,-2) node[moyrond,vert] (aux2) {};
\draw (0.6,-2) node[moyrond,vert] (aux3) {};
\draw (2.4,-2) node[moyrond,vert] (aux4) {};
\draw (3.6,-2) node[moyrond,vert] (aux5) {};
\draw[-latex'] (a1) -- (aux1);
\draw[-latex'] (a2) -- (aux2);
\draw[-latex'] (a3) -- (aux3);
\draw[-latex'] (a4) -- (aux4);
\draw[-latex'] (a5) -- (aux5);
\draw (1.3,0) node(points){$\ldots$};
\foreach \n in {aux1,aux2,aux3,aux4,aux5}
  {\draw[dashed] (\n.-90) -- +(0,-.4);}

\begin{scope}[xshift=8cm]
\draw(0,1.3) node[moyrond,vert] (s) {} node[left=10mm] {$\calT'_{k}$};%
\draw(-2.7,0) node[moyrond,vert] (r) {} node[left=2mm] {$r_1'$};
\draw(-1.5,0) node[moyrond,vert] (u) {} node[left=2mm] {$r_2'$};
\draw(0,0) node[moyrond,vert] (t1) {} node[left=2mm] {$t'_1$};
\draw(3,0) node[moyrond,vert] (tp) {} node[left=2mm] {$t'_k$};
\draw (-2.7,-1) node[moyrond,vert] (a0) {};
\draw (-1.5,-1) node[moyrond,vert] (a1) {};
\draw (-0.6,-1) node[moyrond,vert] (a2) {};
\draw (0.6,-1) node[moyrond,vert] (a3) {};
\draw (2.4,-1) node[moyrond,vert] (a4) {};
\draw (3.6,-1) node[moyrond,vert] (a5) {};
\draw[-latex'] (s) -- (r);
\draw[-latex'] (s) -- (u);
\draw[-latex'] (s) -- (t1);
\draw[-latex'] (s) -- (tp);
\draw[-latex'] (r) -- (a0);
\draw[-latex'] (u) -- (a1);
\draw[-latex'] (t1) -- (a2);
\draw[-latex'] (t1) -- (a3);
\draw[-latex'] (tp) -- (a4);
\draw[-latex'] (tp) -- (a5);
\draw (-2.7,-2) node[moyrond,vert] (aux0) {};
\draw (-1.5,-2) node[moyrond,vert] (aux1) {};
\draw (-0.6,-2) node[moyrond,vert] (aux2) {};
\draw (0.6,-2) node[moyrond,vert] (aux3) {};
\draw (2.4,-2) node[moyrond,vert] (aux4) {};
\draw (3.6,-2) node[moyrond,vert] (aux5) {};
\draw[-latex'] (a0) -- (aux0);
\draw[-latex'] (a1) -- (aux1);
\draw[-latex'] (a2) -- (aux2);
\draw[-latex'] (a3) -- (aux3);
\draw[-latex'] (a4) -- (aux4);
\draw[-latex'] (a5) -- (aux5);
\draw (1.3,0) node(points){$\ldots$};
\foreach \n in {aux0,aux1,aux2,aux3,aux4,aux5}
  {\draw[dashed] (\n.-90) -- +(0,-.4);}
\end{scope}
\end{tikzpicture}
\caption{Two (families of) Kripke structures and their computation trees}
\label{fig-expr}
\end{figure}

We now prove that $\calT_k$ and~$\calT'_k$ satisfy the same \QCTL[1]
formulas of size at most~$k$. For convenience, we~define a binary
relation~$\calR_k$ between states of~$\calT_k$ and states
of~$\calT'_k$, by letting
$\calR_k=\{(s_0,s'_0),(r,r'_1),(r,r'_2)\} \cup
\{(t_i,t'_j)\mid 1\leq i,j\leq k\}$.
The~proof then proceeds in two steps:
\begin{itemize}
\item we~first prove that all states in relation by~$\calR_k$ satisfy
  the same \EQCTL[1] formulas of size at most~$k$;
\item we~then prove that $s_0$ and~$s'_0$ satisfy the same \QCTL[1]
  formulas of size at most~$k$.
\end{itemize}

For the first step: the~result is straightforward (as~the~subtrees are
the~same) for all pairs of states in~$\calR_k$ but~$(s_0,s'_0)$.  Take
a formula~$\phi$ in~\EQCTL[1]. If~$\phi$~is in~\CTL, the~result is
clear as the subtrees are bisimilar. We~thus consider the case where
$\phi=\exists (p_i)_i.\ \psi$, where $\psi$ can be written as a
boolean combination of at~most~$k$ \CTL formulas of the
form~$\Ex\zeta_i$ or~$\All\zeta_i$.

Assume $\calT_k,s_0\models\phi$, and consider a labelling~$\ell$
of~$\calT_k$ with atomic propositions~$(p_i)_i$ witnessing this
fact. Consider the labelling~$\ell'$ of~$\calT'_k$ where the subtree
under each~$t'_j$ is labelled in the same way as~$\ell$ labels the
subtree under the corresponding~$t_j$, and the subtrees under~$r'_1$
and~$r'_2$ are labelled in the same way as the subtree
under~$r$. The~labelled trees~$\calT_k$ and~$\calT'_k$ are then
bisimilar; since~$\psi$~holds in~$\calT_k$ with labelling~$\ell$,
it~also holds in~$\calT'_k$ with labelling~$\ell'$.

Conversely, assume that $\calT'_k,s_0\models\phi$, and take a
labelling~$\ell'$ witnessing this. Consider a first labelling~$\ell$
of~$\calT_k$ in which the subtree under each~$t_j$ is labelled in the
same way as the subtree under~$t'_j$, and the subtree under~$r$ is
labelled in the same way as the subtree under~$r_1$. All subformulas
of~$\psi$ of the form~$\All\zeta_i$ that hold true at~$s'_0$
in~$\calT_k$ labelled with~$\ell'$ also hold true at~$s_0$
in~$\calT_k$ with~$\ell$, since the paths in the latter are paths in
the former. Similarly, all subformulas of~$\psi$ of the
form~$\Ex\zeta_i$ that hold true at~$s'_0$ in~$\calT'_k$ under~$\ell'$
also hold true at~$s_0$ in~$\calT_k$ under~$\ell$, except for those
that are witnessed by the path through~$r'_2$, which is the only path
that has no counterpart in~$\calT_k$ under labelling~$\ell$. However,
since there are at most~$k$ such subformulas, at least one path
in~$\calT_k$, say one going through~$t_k$, is not used to fulfill any
of the $\zeta_i$ subformulas. We~then update the labelling~$\ell$ by
labelling \emph{both} \kl{branches} under~$t_k$ in the same way as~$\ell'$
labels the subtree under~$r'_2$. This~way, the subtrees under~$r'_2$
and under~$t_k$ are bisimilar, hence $\ell$ now fulfill all the
required subformulas, so that~$\phi$ also holds in~$s_0$.

\smallskip

The second step of the proof is easy: take a formula~$\phi$
in~$\QCTL[1]$ of size at most~$k$. By~definition of~$\QCTL[1]$, it~can
be written as $\phi[(\psi_i)_i]$, where $\psi_i$ are $\EQCTL[1]$
formulas. We~label the nodes of~$\calT_k$ and~$\calT'_k$ with new
atomic propositions~$(p_i)_i$, in such a way that node~$n$ is labelled
with~$p_i$ if, and only~if, it~satisfies~$\psi_i$. Since all $\psi_i$
have size at most~$k$, thanks to the result of the first step, the
resulting labelled \kl{trees} are bisimilar.
Hence they satisfy the same
\CTL formulas, in particular they both do or both don't
satisfy~$\phi$, which concludes our proof.

\medskip

It~remains to settle the relative expressiveness of~\EQCTL[1],
\AQCTL[1], and \QCTL[1]. We~first show $\EQCTL[1] \caplog \AQCTL[1]
\prec \EQCTL[1]$. For this, it is sufficient to provide an \EQCTL[1]
formula~$\phi$ such that no \AQCTL[1] formulas are \kl(F){equivalent}
to~$\vfi$. Consider $\vfi = \exists p.\ (\EX p \et \EX \non p)$, which
characterises all nodes having at least two \kl{successors}.  Now,~consider
an \AQCTL[1] formula $\psi = \forall p_1\ldots p_n.\ \widetilde{\psi}$
with $\widetilde{\psi}\in \CTL$, and assume that $\psi$~is \kl(F){equivalent}
to~$\vfi$. Let~$\calK_1 = (\{s_0,s_1\},\{(s_0,s_1),(s_1,s_1)\},\emptyset)$ and
$\calK_2 =
(\{s_0,s_1,s_2\},\{(s_0,s_1),(s_0,s_2),\allowbreak (s_1,s_1),\allowbreak (s_2,s_2)\},\emptyset)$ be
two \kl{Kripke structures} such that $s_0$ in $\calK_1$ (resp.\ $s_0$ in
$\calK_2$) has one \kl{successor} (resp.~two \kl{successors}). Therefore
${\calK_1,s_0 \not\sat \vfi}$ and ${\calK_2,s_0 \sat \vfi}$.

If $\psi$ is \kl(F){equivalent} to~$\vfi$, then $\calK_2,s_0 \sat \forall
p_1\ldots p_n.\ \widetilde{\psi}$. Therefore, for any labelling of the
tree~$\calT_{\calK_2,s_0}$ with propositions~$p_1$ to~$p_n$, the~\CTL
formula~$\widetilde{\psi}$ is satisfied. This~is in particular true of
the labellings that label both \kl{branches} of~$\calT_{\calK_2,s_0}$ in the
same way; since \CTL cannot distinguish between bisimilar structures,
we~deduce that $\calK_1,s_0 \sat \forall p_1\ldots
p_n.\ \widetilde{\psi}$; this~contradicts the hypothesis that ${\vfi \equivF \psi}$.
Since $\EQCTL[1] \caplog \AQCTL[1]$ is closed under negation, we~also get 
$\EQCTL[1] \caplog \AQCTL[1] \prec \AQCTL[1]$.

Finally, we~notice that $\EQCTL[1]\caplog\AQCTL[1]$ strictly
contains~\CTL: indeed, consider the property~$\textsf{even}(p)$, which
characterises all trees in which all nodes at even depth are labelled
with some atomic proposition~$p$ (and~in which all nodes at odd depth
may or may not be labelled with~$p$). It~is well-known that such a
property cannot be expressed in~\CTL~\cite{Wol83}. We~now express~it
in both~\EQCTL[1] and~\AQCTL[1]:
\begin{itemize}
\item in~\EQCTL[1], we~first label all nodes at even depth with a new
  atomic proposition~$q$, and require that all nodes labelled with~$q$
  must also be labelled with~$p$:
  \[
  \exists q.\ (q \et \AG(q \Leftrightarrow \AX\non q)) \et \AG
  (q\Rightarrow p);
  \]
\item in~\AQCTL[1], we~write that any labelling that labels exactly all
  nodes at even depth with~$q$ (there~is a unique such labelling) 
  is such that all
  nodes labelled with~$q$ are also labelled with~$p$:
  \[
  \forall q.\ (q \et \AG(q \Leftrightarrow \AX\non q)) \Rightarrow
  \AG(q\Rightarrow p).\tag*{\qedsymbol}
  \]
\end{itemize}
\let\qed\relax
\end{proof}

\section{Application to \MSO}

\label{sec-mso}

We~first briefly review \intro{Monadic Second-Order Logic} (\reintro*\MSO)
over finite or infinite \kl{trees}.  We~use constant monadic
predicates~$\Pred{a}$ for ${a\in \AtP}$ and a relation~$\Edg$
for the immediate successor relation in a \Stree{2^\AtP}
$\Tree=\tuple{\tree,\lab}$.

\MSO is built with first-order (or individual) variables for vertices
(denoted with lowercase letters~$x,y,...$), and monadic second-order
variables for sets of vertices (denoted with uppercase
letters~$X,Y,...$).  Atomic formulas are of the form $x=y$,
$\Edg(x,y)$ (to represent the immediate successor relation), $x<y$ (the transitive closure of $\Edg$), $x\in X$, and~$\Pred{a}(x)$. General \MSO formulas are
constructed from those atomic formulas using the boolean connectives and the
first- and second-order quantifiers~$\exists^1$ and~$\exists^2$, which
we both denote with~$\exists$ in the sequel as long as this is not
ambiguous.  We~write $\phi(x_1,...,x_n,X_1,...,X_k)$ to state that
$x_1,...,x_n$ and $X_1,...,X_k$ may appear free (\ie,~not within the
scope of a quantifier) in~$\phi$. A~closed formula is a formula that contains no free
variables.

We~use the standard semantics for~\MSO over trees: given a tree~$\Tree$, a
sequence of nodes~$s_1$ to~$s_n$, and a sequence of sets of
nodes~$S_1$ to~$S_k$, we~write $\Tree,s_1,...,s_n,S_1,...,S_k \models
\phi(x_1,...,x_n,X_1,...,X_k)$ to indicate that $\phi$~holds on~$\Tree$
when variables~$x_1$ to~$x_n$ in~$\phi$ are replaced with~$s_1$ to~$s_n$, 
and variables~$X_1$ to~$X_k$ are replaced with~$S_1$ to~$S_k$.
As~an example, the closed formula
\[
\forall x.\ \bigl(P_a(x) \Rightarrow
\bigl[\exists X.\  (x\in X \et
\forall y.\ (y\in X \Rightarrow \exists z.\ (z\in X \et \Edg(y,z))))
\bigr]
\bigr)
\]
holds true in a tree whenever any node labelled with~$a$ belongs to an
infinite branch.
In~the~sequel, we~may group quantifiers into \emph{blocks of
quantifiers}, writing e.g. $\exists \{x_1,x_2,X_1,x_3\}. \phi$ in
place of~$\exists x_1. \exists x_2. \exists X_1.  \exists x_3.\ \phi$.
We~may also restrict first-order quantification, writing e.g.
$\exists x\in X.\ \phi(x)$ in place of~$\exists x.\ (x\in X \et \phi(x))$,
and
$\forall y\in Y.\ \phi(y)$ in place of $\forall y.\ (y\in Y\Rightarrow \phi(y))$.

\medskip

In the following we consider an \MSO formula $\Phi$ in negated-normal
form (\ie,~containing no universal quantifiers, and~in~which negations
may only be followed by an existential quantifier or an atomic formula
of~the form $\Edg(x,y)$, $x<y$, $x=y$, $\Pred{a}(x)$ or $x \in X$).
Using De~Morgan's laws and the fact that $\forall x.\ \phi$ is
equivalent to~$\neg\exists x.\ \neg\phi$, any \MSO formula can be
turned into that form.  Even if it means renaming variables, we~assume
that every quantifier bounds a different variable.

\medskip

In the next section, for any given \MSO formula~$\Phi$, we~present the
construction of an \AAPTA~$\Aut_\Phi$accepting exactly the models
of~$\Phi$. As~for~\QCTL, existential quantification in the formula
will be encoded using projection, hence it requires alternation
removal. Since this operation involves an exponential blow-up, it~will
be crucial to group existential quantifiers together as much as
possible. This will be possible as long as no negation appears
inbetween.

In~order to formalise this, we~define three families of sets of \MSO formulas,
namely $(\Sigma_k)_{k\in\bbN}$, $(\Pi_k)_{k\in\bbN}$ and
$(\Delta_k)_{k\in\bbN}$, which we will use to define a notion of
\emph{quantifier alternation}, mixing first- and second-order. This~will be
a crucial parameter for our construction.
All the sets are closed under positive boolean combinations, and:
\begin{itemize}
\item at level~$0$, the sets are defined as follows:
  \begin{itemize}
  \item $\Sigma_0$ uses atomic formulas and
    there negations as base formulas, and is closed under non-negated existential
    quantification and positive boolean combinations;
  \item $\Pi_0$ uses negations of formulas in~$\Sigma_0$ as base
    formulas, and is closed under positive boolean combinations;
  \item $\Delta_0$ contains $\Sigma_0\cup\Pi_0$ as base formulas, and
    is closed under positive boolean combinations.
  \end{itemize}
\item the sets at level~$k+1$ are defined from those at level~$k$ as follows:
  \begin{itemize}
  \item $\Sigma_{k+1}$ contains all formulas in~$\Delta_k$ as base
    formulas, and is closed under non-negated existential
    quantification and positive boolean combinations;
  \item $\Pi_{k+1}$ contains negations of formulas in~$\Sigma_{k+1}$ as base formulas,
    and is closed under positive boolean combinations;
  \item $\Delta_{k+1}$ contains $\Sigma_{k+1}\cup\Pi_{k+1}$ as base formulas, and
    is closed under positive boolean combinations.
  \end{itemize}
\end{itemize}
Any \MSO formula in negated-normal form is (syntactically)
in~$\Delta_k$ for some~$k$; we~call \emph{quantifier alternation}
of~$\phi$ the smallest~$k$ such that $\phi\in\Delta_k$. Using
De~Morgan's laws and the four equivalences given below, we~easily
obtain the following lemma:
\begin{lemma}
Any formula~$\phi$ in~$\Delta_k$ can be rewritten as a positive
boolean combination of subformulas in prenex-normal
form,~\ie,~of~subformulas of the form $\exists \calQ_1.\neg \exists
\calQ_2.\neg\exists\calQ_3\ldots \neg\exists\calQ_p.\ \phi$ (and~their
negations) where each~$\calQ_i$ is a block mixing first- and
second-order variables, and where $p\leq k+1$.
\end{lemma}

The~equivalences needed to rewrite formulas in prenex form are:
\begin{xalignat*}2 
(\vfi \et \exists \calQ.\psi)  & \equiv  \exists\calQ.(\vfi \et \psi) &
(\vfi \et \non  \exists \calQ.\psi)  & \equiv  \non \exists\calQ.(\non \vfi \ou \psi) \\
(\vfi \ou \exists \calQ.\psi)  & \equiv  \exists\calQ.(\vfi \ou \psi)  & 
(\vfi \ou \non \exists \calQ.\psi)  & \equiv  \non \exists\calQ.(\non \vfi \et \psi) 
\end{xalignat*}
Notice that the size of the resulting formula is linear in the size of
the original formula.

For example, we have:
\begin{xalignat*}1
\hbox to 1.2cm{$\exists X_1.\ \Big[ (\exists x_1 .\ \psi_1) \et \Big(\exists X_2. (\exists X_3.\ \psi_2 \et \non \exists x_2. (\psi_3 \et \exists X_4.\ \psi_4))\Big) \Big] \et \non \exists x_5.\ \psi_5$}\\
{} \equiv {}& \exists \{X_1,x_1,X_2,X_3\}.\ \Big[ \psi_1 \et   \psi_2 \et \non \exists \{x_2,X_4\}.\  \Big(\psi_3 \et \psi_4 \Big) \Big] \et \non \exists x_5.\ \psi_5\\
{} \equiv{} & \exists \{X_1,x_1,X_2,X_3\}.\  \non \exists \{x_2,X_4\} .\ \Big[ \non \Big(\psi_1 \et  \psi_2\Big) \ou  \Big(\psi_3 \et \psi_4 \Big) \Big] \et \non \exists x_5.\ \psi_5 .
\end{xalignat*}
In this example, the resulting formula contains two nested existential
blocks separated with a negation, so it is in~$\Delta_1$ (its~quantifier alternation is
one).

\medskip
More details about \MSO can be found e.g.~in~\cite{Tho97b}.
In~\cite{LM14}, it is proved that 
\MSO and \QCTL are \kl{equally expressive} over trees.
This could be used to define translations between \MSO
and \kl{\EU-automata}, but we prefer direct, more efficient
constructions, which we develop below.

\subsection{From \MSO to \AAPTA}

In this section, given a closed formula $\Phi \in \MSO$ in negated normal form and where the existential quantifications have been grouped together as explained above, we~build
an \AATA $\Aut_\Phi$ such that $\Lang(\Aut_\Phi)$
is the set of all trees satisfying~$\Phi$.
Actually, for any (non-closed) \MSO formula $\phi(x_1,...,x_n,X_1,...,X_k)$,
we~build an automaton~$\Aut_\phi$ such that,
for any nodes~$s_1$ to~$s_n$ and any sets~$S_1$ to~$S_k$,
it~holds
$\Tree,s_1,...,s_n,S_1,...,S_k \models
\phi(x_1,...,x_n,X_1,...,X_k)$ if,
and only~if,
the~\Stree{2^{\AtP \cup \{x_1,...,x_n,X_1,...,X_k\}}}~$\Tree'$,
obtained from~$\Tree$ by labelling any node~$t$ with~$x_i$ if~$t=s_i$, and with~$X_j$ if~$t\in S_j$, 
belongs to $\Lang(\Aut_\phi)$.

\bigskip

The automaton $\Aut_\phi$ is built inductively on the structure
of~$\phi$.  Handling boolean connectives $\et$ and $\ou$ is achieved with
the corresponding operations of~\AATA. For~quantifications~$\exists
X.\ \psi$ or $\exists x\ .\psi$, we~use the \kl{projection} operation
(after having built an equivalent \kl{non-alternating} automaton
for~$\Aut_\psi$) exactly as for the \QCTL formulas~$\exists p.\ \psi$.
Before applying the simulation construction, we~add a verification step to ensure that every proposition
corresponding to a first-order variable labels exactly one node in
the tree; we will describe this construction below.  First we describe
several types of automata to deal with atomic \MSO formulas and their
negations. In~each~case, we~use a state~$\state_\top$ that accepts any
tree. We~have:
\begin{itemize} 
\item If $\phi$ is  $\Edg(x,y)$: 
We define the automaton $\Aut_{E}=\tuple{\{\state_E,\state'_E\}, \state_E,\delta_E,\prio_E}$  as follows:
\[
\delta_E(\state_E,\sigma) = \begin{cases}  \EUpair(
  \mset{\state'_E}
  ;\{\state_\top\}) & \mbox{if } x \in \sigma  \\
\EUpair(
  \mset{\state_E}
  ;\{\state_\top\}) & \mbox{otherwise}  \end{cases}
\quad \quad 
\delta_E(\state'_E,\sigma) = \begin{cases}  \top & \mbox{if } y \in \sigma  \\
\bot & \mbox{otherwise}  \end{cases}
\]
with $\prio_E(\state_E)=\prio_E(\state'_E)=1$. 
\item If $\phi$ is  $\non\Edg(x,y)$: We define the automaton $\Aut_{\bar{E}}=\tuple{\{\state_{\bar{E}},\state'_{\bar{E}}\}, \state_{\bar{E}},\delta_{\bar{E}},\prio_{\bar{E}}}$  as follows:
\[
\delta_{\bar{E}}(\state_{\bar{E}},\sigma) = \begin{cases}  \EUpair(\emptyset;\{\state'_{\bar{E}}\}) & \mbox{if } x \in \sigma  \\
\EUpair(%
  \mset{\state_{\bar E}};\{\state_\top\}) & \mbox{otherwise}  \end{cases}
\quad \quad 
\delta_{\bar{E}}(\state'_{\bar{E}},\sigma) = \begin{cases}  \bot & \mbox{if } y \in \sigma  \\
\top & \mbox{otherwise}  \end{cases}
\]
with $\prio_{\bar{E}}(\state_{\bar{E}})=\prio_{\bar{E}}(\state'_{\bar{E}})=1$.
\item If $\phi$ is  $x<y$: 
We define the automaton $\Aut_{<}=\tuple{\{\state_<,\state'_<\}, \state_<,\delta_<,\prio_<}$  as follows:
\[
\delta_<(\state_<,\sigma) = \begin{cases}  \EUpair(%
  \mset{\state'_<};\{\state_\top\}) & \mbox{if } x \in \sigma  \\
\EUpair(%
  \mset{\state_<};\{\state_\top\}) & \mbox{otherwise}  \end{cases}
\quad \quad 
\delta_E(\state'_<,\sigma) = \begin{cases}  \top & \mbox{if } y \in \sigma\\
\hbox to 6mm{$\EUpair(%
\mset{\state'_<};\{\state_\top\})$} \\ & \mbox{otherwise}
\end{cases}
\]
with $\prio_<(\state_<)=\prio_<(\state'_<)=1$. 
\item If $\phi$ is  $\non(x<y)$: We define the automaton $\Aut_{\not{<}}=\tuple{\{\state_{\not{<}},\state'_{\not{<}}\}, \state_{\not{<}},\delta_{\not{<}},\prio_{\not{<}}}$  as follows:
\[
\delta_{\not{<}}(\state_{\not{<}},\sigma) = \begin{cases}  \EUpair(\emptyset;\{\state'_{\not{<}}\}) & \mbox{if } x \in \sigma  \\
\EUpair(%
  \mset{\state_{\not<}};\{\state_\top\}) & \mbox{otherwise}  \end{cases}
\quad \quad 
\delta_{\not{<}}(\state'_{\not{<}},\sigma) = \begin{cases}  \bot & \mbox{if } y \in \sigma  \\
\hbox to 6mm{$\EUpair(\emptyset;\{\state'_{\not{<}}\})$} \\ & \mbox{otherwise}  \end{cases}
\]
with $\prio_{\not{<}}(\state_{\not{<}})=1$ and $\prio_{\not{<}}(\state'_{\not{<}})=0$.

\item If $\phi$ is $x=y$: 
We define the automaton $\Aut_{=}=\tuple{\{\state_=\}, \state_=,\delta_=,\prio_=}$  as follows:
\[
\delta_=(\state_=,\sigma)  = \begin{cases}  \top &  \mbox{if}\: x,y \in \sigma  \\
\bot  & \mbox{if}\: (x \in \sigma \et y \not\in \sigma) \ou (x \not\in \sigma \et y \in \sigma) \\
\EUpair(%
\mset{\state_=};\{\state_\top\}) &  \mbox{otherwise}  \end{cases}
\]
with $\prio_=(\state_=)=1$.
\item If $\phi$ is $\non(x=y)$:  
We define the automaton $\Aut_{\neq}=\tuple{\{\state_{\neq}\}, \state_{\neq},\delta_{\neq},\prio_{\neq}}$  as follows:
\[
\delta_{\neq}(\state_{\neq},\sigma)  = \begin{cases}  \bot &  \mbox{if}\: x,y \in \sigma  \\
\top  & \mbox{if}\: (x \in \sigma \et y \not\in \sigma) \ou (x \not\in \sigma \et y \in \sigma) \\
\EUpair(%
\mset{\state_{\neq}};\{\state_\top\}) &  \mbox{otherwise}  \end{cases}
\]
with $\prio_{\neq}(\state_{\neq})=1$.
\item if  $\phi$ is $\Pred{a}(x)$: We define the automaton $\Aut_{a}=\tuple{\{\state_a\}, \state_a,\delta_a,\prio_a}$  as follows:
\[
\delta_a(\state_a,\sigma)  = \begin{cases}  \top & \mbox{if}\: x,a \in \sigma \\
\bot & \mbox{if}\: x \in \sigma \et a \not\in \sigma \\
\EUpair(%
\mset{\state_a};\{\state_\top\}) &  \mbox{otherwise}  \end{cases}
\]
with $\prio_a(\state_a)=1$. 
\item if  
$\phi$ is $\non\Pred{a}(x)$: We define the automaton $\Aut_{\bar{a}}=\tuple{\{\state_{\bar{a}}\}, \state_{\bar{a}},\delta_{\bar{a}},\prio_{\bar{a}}}$  as follows:
\[
\delta_{\bar{a}}(\state_{\bar{a}},\sigma)  = \begin{cases}  \top & \mbox{if}\: x \in \sigma  \et a \not\in \sigma\\
\bot & \mbox{if}\: x,a \in \sigma  \\
\EUpair(%
\mset{\state_{\bar a}};\{\state_\top\}) &  \mbox{otherwise}  \end{cases}
\]
with $\prio_{\bar{a}}(\state_{\bar{a}})=1$. 
\end{itemize}

The correctness of the constructions for $\Edg$ is stated as follows:
given a $2^{\AtP\cup\{x,y\}}$-labeled tree $\calT=(t,l)$ such that
there exists exactly one node $n\in t$ (resp.\ $n'\in t$) such that
$x\in l(n)$ (resp.~${y \in l(n')}$), we~have $\calT \in \calL(\Aut_E)$
if, and only~if, $\calT,n,n'\sat \Edg(x,y)$.  We~proceed in a similar
way for the other cases. The~correctness proofs are straightforward.

Now we can follow exactly the same steps as for \QCTL formulas. Let
$\Phi$ be an \MSO formula without any (first-order or second-order)
quantifier.  The previous automata constructions can then be composed
with union and intersection operations in order to get an \AAPTA whose
size is bounded by $\tuple{O(\size\Phi),O(\size\Phi),O(1),1,O(1)}$.

Consider a formula $\Phi=\exists \calV.\phi$ where $\calV$ is a set of
variables $\{x_1,\ldots x_m\} \cup \{X_1 \ldots X_p\}$ (where every
$x_j$ is a first-order variable and every $X_j$ is a second-order
variable) and $\phi$ is an \MSO formula without any quantifier. As
explained above, one can build an \AAPTA $\Aut_\phi$ for $\phi$. We
can also combine $\Aut_\phi$ with (a conjunction~of) automata
$\Aut_{x_j}$, for every $1 \leq j \leq m$, to~ensure that the letter
$x_j$ labels exactly one node. Such an automaton $\Aut_{x}$ is then
defined as $\Aut_{x}=\tuple{\{\state_{x}, \state_{\bar{x}}\},
  \state_{x},\delta_{x},\prio_{x}}$ with:
\[
\delta_x(\state_x,\sigma)  = \begin{cases}  \EUpair(\emptyset;\{\state_{\bar{x}}\}) &  \mbox{if}\: x \in \sigma  \\
\EUpair(\state_x \mapsto 1;\{\state_{\bar{x}} \}) & \mbox{otherwise} \end{cases}
\quad 
\delta_x(\state_{\bar{x}},\sigma)  = \begin{cases} \bot & \mbox{if}\: x \in \sigma  \\
 \EUpair(\emptyset;\{\state_{\bar{x}}\}) &   \mbox{otherwise} \end{cases}
\]
with $\prio_x(\state_x)=1$ and $\prio_x(\state_{\bar{x}})=0$.

This provides an \AAPTA $\Aut_{\phi'}$ whose size is bounded by
$\tuple{O(\size\phi),O(\size\phi),1,1,2}$ and which recognises
precisely the trees satisfying $\phi$ and where every first-order
variable $x_i$ labels exactly one node in the tree.
Applying the \kl{simulation} theorem (Theorem~\ref{thm-simu}), we~get
an \nAAPTA whose size is bounded by
$\tuple{\EXP{1}{\size{\phi}},\allowbreak
  \EXP{1}{\size{\phi}},\allowbreak \EXP{0}{\size{\phi}},\allowbreak
  1,\allowbreak\EXP{0}{\size{\phi}}}$. It~remains to use the
projection to get a (non-alternating) automaton which recognises
precisely the infinite trees satisfying the formula $\Phi=\exists
\calV.\phi$.
Note that dealing with the case $\Phi=\non\exists
\calV.\phi$ requires the addition of a complement operation, but this preserves the bound on the size of the final automaton, which still is $\tuple{\EXP{1}{\size{\phi}},\allowbreak
  \EXP{1}{\size{\phi}},\allowbreak \EXP{0}{\size{\phi}},\allowbreak
  1,\allowbreak\EXP{0}{\size{\phi}}}$.
    
 As for \QCTL and \QCTLs, one can extend the previous construction for
 any \MSO formula $\phi$. 
 An~important point is that we do not distinguish
 between first-order and second-order quantifiers: both quantifiers
 are treated in the same way, via the projection operation; in~both
 cases, each quantifier alternation
 induces an exponential blow-up,  due to the simulation step. 
 By~proceeding exactly as for~\QCTL, and with the specific treatment
 of first-order quantifiers as explained above, we~get:
\begin{theorem}
\label{th-mso-aut}
Given a formula in~$\Delta_k$ for some~$k\geq 0$, 
we~can construct an \mbox{\AAPTA}~$\Aut_\phi$ over~$2^\AtP$  accepting exactly the
  trees satisfying~$\phi$. 
  The~automaton~$\Aut_\phi$ has size $\tuple{\EXP{(k+1)}{\size\phi},\allowbreak\EXP{(k+1)}{\size\phi},\allowbreak\EXP{k}{\size\phi}, 1, \EXP{k}{\size\phi}}$.
\end{theorem}
 As a corollary, we get the following results about decision procedures for \MSO via \AAPTA construction:

\begin{corollary}
\label{mso-compl}
Let~$\phi$ be a formula in~$\Delta_k$.
The~satisfiability problem for~$\phi$ is in~\EXPTIME[(k+2)]. 
The~model-checking problem  for~$\phi$ over finite Kripke structures is in~\EXPTIME[(k+1)].
\end{corollary}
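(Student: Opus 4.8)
The plan is to establish each of the two statements by matching upper and lower bounds, following exactly the scheme already used for the \QCTL complexity theorem: the upper bounds come from the automata construction of Theorem~\ref{th-mso-aut} fed into the algorithmic results on \AATAs, while the lower bounds are inherited from the hardness results of~\cite{LM14}. Throughout I assume $k\geq 1$, as in Theorem~\ref{th-mso-aut}.

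For the satisfiability upper bound, I would first build from the given \MSO formula~$\phi$ (with at most~$k$ quantifier alternations) the \AAPTA~$\Aut_\phi$ of Theorem~\ref{th-mso-aut}; by construction, $\phi$ is satisfiable if, and only~if, $\Lang(\Aut_\phi)\neq\emptyset$, so it remains to run the emptiness procedure of Corollary~\ref{coro-algo-aeupta} on~$\Aut_\phi$. The work is then to substitute the size parameters $\size\State=\sizeB\Trans=\EXP{k}{\size\phi}$, $\sizeE\Trans=\priomax=\EXP{(k-1)}{\size\phi}$ and $\sizeU\Trans=1$, together with $\size\Alp\leq 2^{\size\phi}$, into the running time of Corollary~\ref{coro-algo-aeupta}, and to check that its exponent stays $k$-fold exponential: every factor appearing there is at most a $k$-fold exponential in~$\size\phi$ (and each logarithm drops exactly one level), so the $2^{(\cdot)}$ in front yields precisely a $(k+1)$-fold exponential. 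This places satisfiability in \EXPTIME[(k+1)].

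For the model-checking upper bound, I would again build~$\Aut_\phi$ and then decide whether the \kl{regular} tree~$\Tree_\calK$ defined by the input \kl{Kripke structure}~$\calK=\tuple{V,E,\ell}$ belongs to~$\Lang(\Aut_\phi)$, using the membership procedure of Theorem~\ref{theo-mc-aapta}. Substituting the same parameters into its running time $O((\size V\cdot(\size\State\cdot\sizeB\Trans+\size\State^{\size V}))^{\priomax})$, the dominant term is $\size\State^{\size V\cdot\priomax}$; taking logarithms gives $\size V\cdot\priomax\cdot\log\size\State=\size V\cdot\EXP{(k-1)}{\size\phi}$, so the running time is $(\EXP{k}{\size\phi})^{\size V}$, which is $k$-fold exponential in $\size\phi+\size\calK$. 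This places model checking in \EXPTIME[k].

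The lower bounds are the delicate part, and I expect the alternation bookkeeping to be the main obstacle. Matching hardness is obtained by reduction from the corresponding \QCTL problems (which are \EXPTIME[(k+1)]-complete for satisfiability and \EXPTIME[k]-complete for model checking by~\cite{LM14}), using the fact that \QCTL embeds into \MSO. The subtlety is that, since our \MSO alternation count does not distinguish first- from second-order quantifiers, one cannot translate each temporal modality independently: an unbounded nesting of $\Ex\cdot\Until\cdot$ and $\All\cdot\Until\cdot$, encoded naively through explicit quantification over \kl{branches}, would inflate the alternation depth and spoil the bound. I would therefore rely on the alternation-aware correspondence between \QCTL and \MSO of~\cite{LM14}, which keeps the number of alternations within the required range so that the hardness of \QCTL[k] and \EQCTL[k+1] transfers to \MSO formulas with at most~$k$ alternations; closure of the relevant fragments under negation then yields both the satisfiability and the model-checking lower bounds.
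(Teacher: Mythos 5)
Your proposal matches the paper's (implicit) proof: the paper presents this statement as an immediate corollary of Theorem~\ref{th-mso-aut}, with the upper bounds obtained exactly as you describe — substituting the size parameters of $\Aut_\phi$ into the emptiness procedure of Corollary~\ref{coro-algo-aeupta} and the membership procedure of Theorem~\ref{theo-mc-aapta}, and checking that the exponent arithmetic stays at the right level of the \EXP{k}{\cdot} hierarchy — and the matching lower bounds inherited from the hardness results of~\cite{LM14}. Your additional care about alternation bookkeeping in the hardness transfer (the fact that a naive \QCTL-to-\MSO translation of the temporal modalities would inflate the quantifier-alternation count) addresses a genuine subtlety that the paper leaves entirely to the citation, so your write-up is, if anything, more explicit than the paper's own.
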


\subsection{From \AAPTA to \MSO}
Expressing acceptance of some tree~$\Tree$ by some~\nAAPTA~$\Aut$
as an \MSO formula
is based on the same techniques as the ones we used for~\QCTL:
an~existential (second-order) quantification is used to label every
node of~$\Tree$ with a unique state of~$\Aut$; the~rest of the formula
checks that the root is labeled with the initial state, that  the (non-alternating) transition function is fulfilled
locally at any node, and that for every infinite \kl{branch}  (which we
encode using second-order quantification), the parity condition is fulfilled.

Consider an \nAAPTA $\Aut=\tuple{\State,\state_0,\Trans,\prio}$ over
$\Alp=2 ^\AtP$.  Let~$Q$ be $\{\state_0,\ldots,\state_{n}\}$ and
let~$D$ be the set of \kl{priorities} in~$\prio$ (in~the following,
we~use~$D_{\geq k}$ to denote the subset of priorities greater than or
equal to~$k$).  Formally, we~define~$\Phi_\Aut$ as:
\begin{multline*}
\exists Q_0 \ldots Q_n.\ \exists x_\varepsilon.\
\Bigl( 
\Phi_{\text{\tiny root}}(x_\varepsilon) \et 
(x_\varepsilon \in Q_{0}) \et \Phi_{\Trans} \: \et \: \forall B . \
\bigl[ \Br(B)  \Rightarrow {}  \Phi_{\text{\tiny parity}}(B) \bigr] \Bigr)
\end{multline*}
 
In this formula, quantification over~$Q_0$ to~$Q_n$ is used to label
the nodes of~$\Tree$ with states of~$\Aut$, and quantification
over~$x_\varepsilon$ is used to characterise the \kl{root} of~$\Tree$;
subformula~$\Phi_{\Trans}$~ensures that each node is labelled with
exactly one state of~$\Aut$, and that the transition function is
satisfied; the last part states that every infinite branch satisfies the parity condition. 
We have $\Phi_{\text{\tiny root}}(x)$ defined as  $\non (\exists y.\ \Edg(y,x))$.

Consistency w.r.t.~the transition function is expressed as
follows:
\begin{multline*}
\Phi_{\Trans} = \forall x.\ \Bigl[ \ET_{0\leq i \leq n} \Bigl(x \in Q_i \:\impl\:  \Big(
   \ET_{j\not=i} x \not\in Q_j \et {} \\
  \OU_{R \in 2^\AtP}
    \bigl[\Phi_{R}(x) \et \Trans(\state_i,R)\stackrel{\text{\tiny?}}{\not=}\bot \et
  \bigl(\Trans(\state_i,R)\stackrel{\text{\tiny?}}=\top \ou{} \\
  \OU_{\EUpair(E_j;U_j)\in\Trans(\state_i,R)}
  \Phi_{\EUpair(E_j;U_j)}(x)\bigr)\bigr] \Big) \Bigr) \Bigr]
\end{multline*}
where the subformulas $\Trans(\state_i,R)\stackrel{\text{\tiny?}}{\not=}\bot$ and
$\Trans(\state_i,R)\stackrel{\text{\tiny?}}=\top$ on the second 
line are just replaced by~$\top$ for all~$R$ for which the
(in)equality holds, and with~$\bot$ otherwise.
For a subset $R \subseteq\AtP$, the formula $\Phi_{R}(x)$
specifies that $x$ is labelled exactly with the propositions
in~$R$: $\Phi_{R}(x)
= \ET_{p\in R} \Pred{p}(x) \et \ET_{p \in\AtP\setminus R} \non \Pred{p}(x)$.

The formula $ \Phi_{\EUpair(E;U)}(x)$ requires that the successors of
the node labelled with~$x$ satisfy the \EUprs $\EUpair(E;U)$ and it is
defined as follows:
\begin{multline*}
\Phi_{\EUpair(\mset{r_1,\ldots,r_k};\{s_1,\ldots,s_m\})}(x) = 
\exists x_1 \ldots x_k. \\
\Bigl(\ET_{1\leq i \leq k} \bigl(\Edg(x,x_i) \et \ET_{\substack{1\leq j\leq k\\ j\not=i}} x_i\not=x_j
\et x_i \in Q_{r_i} 
\bigr)
\et \\
 \forall z.\ \Bigl[ \bigl(\Edg(x,z) \et  \ET_{1\leq i \leq k} z \neq x_i \bigr) \impl \bigl( \OU_{1\leq i \leq m} z \in Q_{s_i} %
 \bigr)\Bigr]\Bigr)
\end{multline*}
where $Q_{r_i}$ (resp.\ $Q_{s_i}$) denotes the second-order variable associated with the state $r_i$ (resp.\ $s_i$). 

The formula~$\Br(B)$, stating that the set of nodes labelled with~$B$
forms an infinite \kl{branch} from the root $x_\varepsilon$, 
can be written as
\begin{multline*}
  \Br(B) =  (x_\varepsilon \in B) \et \forall x \in B.\ \Bigl( \exists y \in B. \Edg(x,y) \Bigr) \et {}\\
  \forall x,y,z \in B.\
  \Big( (\Edg(x,y) \et 
  \Edg(x,z)) \:\impl\: y=z \Big)  \et {}\\
  \forall x,y. \Big(x \not\in B \et \Edg(x,y) \:\impl\: y \not\in B\Big)
\end{multline*}
Finally, $\Phi_{\text{\tiny parity}}(B)$, stating that the least priority occurring infinitely many times along $B$ is even, is defined as follows:
\begin{multline*}
  \Phi_{\text{\tiny parity}}(B) =  \exists y \in B. \Big( \ET_{0\leq i \leq n} y \not\in Q_i \Big) \ou \Bigl( 
  \OU_{\stackrel{d \in D}{\mbox{\tiny $d$ even}}} \: \OU_{ \stackrel{0 \leq i \leq n}{\mbox{\tiny $\prio(q_i)=d$}}} y\in Q_i   \: \et \\
    \et 
   \forall z \in B. \Bigl[ z > y  \:\impl\:  
  \Bigl( \!\!\! \OU_{ \stackrel{0 \leq j \leq n}{\mbox{\tiny $\prio(q_i)\geq d$}}} \!\!\!\!\! z\in Q_j \Bigr) \et 
  \Bigl( \exists t \in B. t > z \:\et\: \!\!\! \OU_{ \stackrel{0 \leq k \leq n}{\mbox{\tiny $\prio(q_k)=d$}}}\!\!\!\!\! t \in Q_k\Bigr)\Bigr]
  \Bigr) 
   \end{multline*}
In this formula, $y$ represents a position labelled with a state such that
(a)~it~is~not labelled with any~$Q_i$ (hence~it~belongs to an accepted subtree whose root has a transition function equals to $\top$)
or (b)~it~is labelled with some $q_i$ with an even priority~$d$ such that (1)~every successor (along~$B$) is labelled with a state whose priority is  a greater than (or~equal~to)~$d$ and (2)~there~are infinitely many states along~$B$ with priority~$d$.

\medskip

Finally we can observe that the size of $\Phi_\Aut$ is in
$O(\size\State\cdot(\size\State+2^{\size\AtP}\cdot(\size\AtP+\sizeB\Trans\cdot(\sizeE\Trans+\sizeU\Trans)))+\priomax\cdot\size\State)$,
from which we can deduce that $\size\Phi_\Aut$ is in
$O({\size\State}^2\cdot2^{\size\AtP}\cdot(\size\AtP+\sizeB\Trans\cdot(\sizeE\Trans+\sizeU\Trans)))$.
Formula~$\Phi_\Aut$ contains four alternations of (first-~or
second-order) quantifiers. Note also that it contains two blocks of
second-order quantifiers, and there is only one alternation of second-order
quantifiers.
It~follows:
\begin{theorem}
  Any  closed
  formula in~$\Delta_k$ (with~$k>0$)
  can be translated into an equivalent
  closed formula in~$\Delta_4$ with at most
  one
  alternation of second-order quantifiers. The~size
  of the resulting formula can be bounded by~$\EXP{(k+2)}{\size\phi}$.
\end{theorem}

\begin{proof}
From $\phi$, we can build an \AAPTA $\Aut_\phi$ over $\Sigma$ whose
size is in
$\tuple{\EXP{({k+1})}{\size\phi},\allowbreak\EXP{(k+1)}{\size\phi},\EXP{k}{\size\phi},
  1, \EXP{k}{\size\phi}}$. Here the alphabet $\Sigma$ is
$2^{\AtP_\phi}$ where $\AtP_\phi$ is the set of monadic predicates
occurring in $\phi$ (and then $\size{\AtP_\phi} \leq \size\phi$).

Applying the simulation theorem provides us with an
\nAAPTA~$\NAut_{\phi}$ whose size is
$\tuple{\EXP{(k+2)}{\size\phi},\EXP{(k+2)}{\size\phi},\EXP{(k+1)}{\size\phi},
  1, \EXP{(k+1)}{\size\phi}}$. It~remains to build~$\Phi_{\NAut_{\phi}}$
as above to get the result.
\end{proof}

\section{Conclusion}

We have introduced a new class of symmetric tree automata (\AAPTA) for
 trees of arbitrary branching degrees. We showed that these automata
 have exactly the same expressive power as the temporal logics~\QCTL
 and~\QCTLs, and as the logic~\MSO: given a formula $\Phi$ in those
 formalisms, the set of infinite trees satisfying~$\Phi$ can be
 defined as the language of some automaton~$\calA_\Phi$, and
 conversely for any \AAPTA $\calA$ one can build a formula
 $\Phi_\calA$ whose models are precisely~$\calL(\calA)$.

In order to prove those results, we have developed 
algorithms for manipulating our \AAPTA, and have carefully studied
their complexities. This has allowed us to obtain decision procedures
for satisfiability and model checking for \QCTLs and its fragments
whose complexities match the lower-bound established in previous
papers~\cite{LM14}. It~also allowed us to obtain an effective translation
from \QCTL to \EQCTL[2], and similarly,
from \MSO to its fragment with only two second-order-quantifier alternations.

\newcommand{\etalchar}[1]{$^{#1}$}

\end{document}